\newtheorem{theorem}{Theorem}
\newtheorem{definition}{Definition}
\definecolor{Gray}{gray}{0.93}
\definecolor{Gray}{gray}{0.85}
\begin{document}
\title{Correction of Pooling Matrix Mis-specifications in Compressed Sensing Based Group Testing}

\author{Shuvayan Banerjee, Radhendushka Srivastava, James Saunderson and Ajit Rajwade ~\IEEEmembership{Senior Member,~IEEE}
\thanks{Shuvayan Banerjee is with the IITB-Monash Research Academy. Radhendushka Srivastava is with the Department of Mathematics, IITB. James Saunderson is with the Department of Electrical and Computer Systems Engineering, Monash University. Ajit Rajwade is with the Department of Computer Science and Engineering, IITB.}}


\maketitle
\begin{abstract}
Compressed sensing, which involves the reconstruction of sparse signals from an under-determined linear system, has been recently used to solve problems in group testing. In a public health context, group testing aims to determine the health status values of $p$ subjects from $n\ll p$ pooled tests, where a pool is defined as a mixture of small, equal-volume portions of the samples of a subset of subjects. This approach saves on the number of tests administered in pandemics or other resource-constrained scenarios. In practical group testing in time-constrained situations, a technician can inadvertently make a small number of errors during pool preparation, which leads to errors in the pooling matrix, which we term `model mismatch errors' (MMEs). This poses difficulties while determining health status values of the participating subjects from the results on $n \ll p$ pooled tests. In this paper, we present an 
algorithm to \emph{correct} the MMEs in the pooled tests directly from the pooled results and the available (inaccurate) pooling matrix. Our approach then reconstructs the signal vector from the corrected pooling matrix, in order to determine the health status of the subjects. We further provide theoretical guarantees for the correction of the MMEs and the reconstruction error from the corrected pooling matrix. We also provide several supporting numerical results. 
\end{abstract}

\section{Introduction}

\subsection{Group Testing }
Group testing (GT) originated with the seminal work of Dorfman~\cite{Dorfman1943} as an efficient strategy to identify a small number of defective (infected or diseased) individuals in a large population using far fewer tests than individual testing. In the classical formulation (e.g.,~\cite{Atia2012,Chan2011,Dorfman1943}), a population of size $p$ is represented by a binary vector $\boldsymbol{\beta}^*$, where each entry indicates whether the corresponding individual is defective or healthy. Testing is performed on pooled subsets of samples, whose compositions are encoded by a binary pooling matrix $\boldsymbol{B} \in \{0,1\}^{n \times p}$. Each row $\boldsymbol{b}_{i,.}$ of the pooling matrix specifies the set of samples (to be precise, small equal-volume portions of each sample from the set) included in pool $i$, and the outcome vector $\boldsymbol{z} \in \{0,1\}^n$ records whether each pool tests positive or negative.

The forward model in this classical setting is governed by Boolean operations: a pool tests positive if and only if at least one defective sample is present. This is commonly expressed as $\boldsymbol{z} = \mathfrak{N}(\boldsymbol{B\beta}^*),$
where $\mathfrak{N}$ denotes a noise process that flips bits in the Boolean outcomes. This Boolean formulation highlights the inherently nonlinear and combinatorial structure of traditional group testing.

\noindent\textbf{Adaptive and Non-Adaptive GT:}
GT algorithms can broadly be classified into adaptive and non-adaptive schemes. Adaptive algorithms~\cite{Dorfman1943,Heiderzadeh2021,Hwang1972} proceed in multiple stages, using outcomes of earlier stages to design subsequent pools. Non-adaptive algorithms~\cite{Shental2020,Ghosh2021,gilbert2008group,Bharadwaja2022}, on the other hand, fix all pools in advance and conduct all tests simultaneously. While adaptive designs can offer stronger theoretical guarantees, non-adaptive strategies are highly attractive in time-critical or resource-limited scenarios, such as large-scale population screening during pandemics~\cite{Larremore2021,Ghosh2021}. In this work, we focus exclusively on non-adaptive designs.

\noindent\textbf{Compressed Sensing:}
Compressed sensing (CS) is a foundational framework in signal processing that enables the recovery of sparse, high-dimensional signals from a relatively small number of linear measurements. Formally, the aim is to reconstruct a signal $\boldsymbol{\beta}^* \in \mathbb{R}^p$ from an underdetermined linear system
\begin{equation}\label{eq:forward_model_gt}
    \boldsymbol{z} = \boldsymbol{B\beta}^* + \boldsymbol{\tilde{\eta}},
\end{equation}
where $\boldsymbol{B} \in \mathbb{R}^{n \times p}$ is the sensing matrix, $\boldsymbol{z} \in \mathbb{R}^n$ is the measurement vector, and $n \ll p$. Accurate recovery is possible if $\boldsymbol{\beta}^*$ is sparse and $\boldsymbol{B}$ satisfies structural conditions such as the restricted isometry property or the restricted null space property~\cite{Candes2008}.

\noindent\textbf{Compressed Sensing Based Group Testing:}
Recent developments in diagnostic technologies have motivated the transition from binary pooled results and binary health status values to real-valued pooled results and real-valued health status values which quantify information about infection/defect \emph{levels}. In such quantitative GT systems~\cite{saeedi2022group,Heiderzadeh2021,Ghosh2021,gilbert2008group}, each individual is associated with a nonnegative real value (e.g., viral load), and each pooled measurement equals the aggregate viral load of the individuals included in the pool. This replaces Boolean arithmetic with a linear model over the reals, bringing GT into close alignment with the framework of compressed sensing (CS).

In real-valued quantitative GT, the unknown vector $\boldsymbol{\beta}^* \in \mathbb{R}^p$ contains the viral loads or infection levels of the $p$ individuals, and the pooling matrix $\boldsymbol{B} \in \{0,1\}^{n \times p}$ acts as a binary sensing matrix. The observation model becomes $\boldsymbol{z} = \boldsymbol{B\beta}^* + \boldsymbol{\tilde{\eta}},$
where $\boldsymbol{z} \in \mathbb{R}^n$ is the vector of infection levels in each pool and $\boldsymbol{\tilde{\eta}}$ denotes additive noise. The typically low prevalence of infections naturally implies that $\boldsymbol{\beta}^*$ is sparse, a structure that can be exploited by compressed sensing (CS) techniques for reliable recovery~\cite{Ghosh2021,Bharadwaja2022}. This formulation has led to a growing body of work leveraging CS for reconstructing real-valued infection levels from pooled tests~\cite{gilbert2008group,Ghosh2021,saeedi2022group,Heiderzadeh2021,bah2024compressed}. The sparsity of $\boldsymbol{\beta^*}$ and the choice of a binary sensing matrix $\boldsymbol{B}$ both align well with CS assumptions. The CS formulation enables principled reconstruction algorithms for estimating not only which individuals are infected but also their infection \emph{levels}. In the context of biological testing modalities such as real-time polymerase chain reaction (RT-PCR) or digital drop PCR (DDPCR), the noise in $\boldsymbol{z}$ respectively follows a multiplicative lognormal \cite{Ghosh2021} or Poisson model \cite{majumdar2017poisson}. In both cases, these noise models can be expressed in an additive signal-dependent form. The algorithms presented in this work apply to both these models, as will be discussed later in the paper.

\subsection{Errors in Group Membership Specifications}\label{sec:GMSE}
Most GT or CS algorithms assume that the pooling/sensing matrix $\boldsymbol{B}$ is known accurately. However a technician may make errors while implementing the pooling procedure~\cite{Fenichel2021, Zabeti2021, Grobe2021}.

\noindent \textbf{Permutation Errors:} If the results of two different pools are mistakenly exchanged (swapped) by the technician, it is termed a permutation error. Here, the technician mistakenly swaps the results of two groups. For example, the technician unknowingly records $z_{i_1} = \boldsymbol{b}_{i_2,.} \boldsymbol{\beta}^* + \tilde{\eta}_{i_2}$ (where $\boldsymbol{b}_{i_2,.}$ is the $i_2$th row of $\boldsymbol{B}$) and $z_{i_2} = \boldsymbol{b}_{i_1,.} \boldsymbol{\beta}^* + \tilde{\eta}_{i_1}$, although the original intention was to record $z_{i_1} = \boldsymbol{b}_{i_1,.} \boldsymbol{\beta}^* + \tilde{\eta}_{i_1}$ and $z_{i_2} = \boldsymbol{b}_{i_2,.} \boldsymbol{\beta}^* + \tilde{\eta}_{i_2}$.

\noindent \textbf{Bitflips:} Consider the case where due to errors in mixing of the samples, the pools are generated using an \emph{unknown} matrix $\widetilde{\boldsymbol{B}}$ instead of the pre-specified and \emph{known} matrix $\boldsymbol{B}$. The elements of matrices $\widetilde{\boldsymbol{B}}$ and $\boldsymbol{B}$ are equal everywhere except for the mis-specified samples in each pool. 
We refer to these errors in group membership specifications as \textbf{bit-flips}. For example, suppose that the $i^{\text{th}}$ pool is specified to consist of samples $j_1,j_2,j_3 \in [p]$. But due to errors during pool creation, the $i^{\text{th}}$ pool is generated using samples $j_1,j_2,j_3'$ where $j_3 \ne j_3'$. In this specific instance, $b_{i,j_3} \ne \tilde{b}_{i,j_3} $ and $b_{i,j_3'} \ne \tilde{b}_{i,j_3'}$. We present two different types of bit-flips, though our techniques are applicable to other types as well:
\\
\noindent\textit{Single Switch Model} (\textsf{SSM}):
In this model, a bit-flipped pool contains exactly one bit-flip at a randomly chosen index. Given the $i$th pool ($i \in [n]$),  exactly one of the following two scenarios can occur given the bit-flip model \textsf{SSM}: (1) some $j$th sample that was intended to be included in the pool (as defined in $\boldsymbol{B}$) is excluded, or (2) some $j$th sample that was not intended to be part of the pool (as defined in $\boldsymbol{B}$) is included. These two cases lead to the following structure of the $i$th row of $\boldsymbol{\widetilde{B}}$, and in both cases the choice of $j \in [p]$ is random uniform: Case 1: $\tilde{b}_{ij} = 0$ but $b_{ij} = 1$, Case 2: $\tilde{b}_{ij} = 1$ but $b_{ij}=0$.
\\
\noindent\textit{Adjacent Switch Model} (\textsf{ASM}):
In this model, a bit-flipped pool contains bit-flips at two adjacent indices. Suppose the $i$th pool contains bit-flips. Then under the \textsf{ASM} scheme, either (1) the $j$th sample that was not intended to be in the pool is included and the $j'$th sample where $j' \triangleq \textrm{mod}(j+1,p)$ that was intended to be in the pool is excluded, or (2) the $j$th sample that is intended to be in the pool is excluded and the $j'$th sample where $j' \triangleq \textrm{mod}(j+1,p)$ that is not intended to be in the pool, is included at random. This leads to the following structure of the $i$th row of $\boldsymbol{\tilde{B}}$, and in both cases the choice of $j$ is uniform random:
Case 1: $\tilde{b}_{ij'} =0, \tilde{b}_{ij}=1 \ \mbox{and} \ b_{ij'} =1,b_{ij}=0$, Case 2: $\tilde{b}_{ij'} =1, \tilde{b}_{ij}=0 \ \mbox{and} \ b_{ij'} =0,b_{ij}=1$.

The forward model in the presence of MMEs (both bit-flips and permutations) is expressed as follows:
\begin{equation}\label{eq:model_non_centered}
\boldsymbol{z}=\widetilde{\boldsymbol{B}}\boldsymbol{\beta}^*+\boldsymbol{\tilde{\eta}} = \boldsymbol{B\beta^*} +\boldsymbol{\tilde{\delta}}^* + \boldsymbol{\tilde{\eta}}.
\end{equation}
Here, $\boldsymbol{\tilde{\delta}^*} \triangleq (\boldsymbol{B}-\boldsymbol{\widetilde{B}})\boldsymbol{\beta}^*$ is a large and possibly-signal dependent error which may arise due to bit-flips or permutations. We collectively term such errors in the sensing matrix \textbf{model mismatch errors} (MMEs). If the technician is fairly competent, then $\boldsymbol{\tilde{\delta}}^*$ is a sparse vector, i.e. $\|\boldsymbol{{\tilde{\delta}}}^*\|_0=r<n$. 

\subsection{Existing Methods for Detecting/ Correcting MMEs}
In general, there is very little literature on GT/CS given bit-flips in the pooling/sensing matrix. In our previous work \cite{Banerjee24}, we focused on \emph{identifying} pooled measurements with bit-flips simultaneously with robust signal estimation via a hypothesis testing mechanism, without attempting to \emph{correct} the bit-flipped measurements. 
In \cite{BanerjeeICASSP}, the hypothesis test procedure was extended to the correction of permutation errors. The method proposed in this paper is superior to the one in \cite{BanerjeeICASSP}. This is particularly true for \textsf{ASM}
or \textsf{SSM} errors, where the magnitude of the error in the measurement due to the presence of an MME may not be very large (cf. \eqref{eq:fm_delta_indiv}), leading to larger 
Type-II error (false negative) in the hypothesis testing procedure.
 
The work in \cite{Cheragchi2011} focuses on binary group testing where the pooling matrix undergoes perturbations expressed by a probabilistic model of one-to-zero bit-flips. Here again, there is no emphasis on correction of the measurements, zero-to-one bitflips are not considered, and no quantitative information is considered. There are some prior works in the CS literature which deal with perturbations in the sensing matrix \cite{Parker2011,Zhu2011,Ince2014,Aldroubi2012,Herman2010,Fosson2020,herman2010mixed}. However, the perturbations considered in these papers are expressed as a dense matrix with bounded values for every entry \cite{Zhu2011,Ince2014,Parker2011,Aldroubi2012,Herman2010,Fosson2020,herman2010mixed}. In contrast, our work specifically considers a sparse set of errors in the sensing/pooling matrix. Furthermore, the aforementioned techniques focus on robust signal estimation, without any attempt to correct the sensing matrix errors.

Handling permutation errors in GT falls under the area of `unlabeled sensing' (ULS) or `shuffled regression' \cite{unnikrishnan2018unlabeled}, which is the class of regression problems where the correspondence between the entries of the measurement vector ($\{z_i\}_{i=1}^n$) and the rows of the pooling matrix ($\{\boldsymbol{b}_{i,.}\}_{i=1}^n$) is completely or partially lost. With few exceptions such as \cite{peng2021homomorphic}, ULS has been mostly studied in the over-sampled regime where $n > p$ \cite{unnikrishnan2018unlabeled,slawski2019linear}. 
GT with some pooled results permuted is a special case of ULS, with the specialty that (\textit{i}) the measurements are compressive ($n < p$), (\textit{ii}) the signal of interest $\boldsymbol{\beta}^*$ is sparse, and (\textit{iii}) the permutations are also sparse. Sparse permutations in ULS have been considered before in works such as \cite{slawski2019linear,slawski2019sparse,akrout2025unlabeled}, via a robust subspace estimation framework applicable when the linear measurements of a \emph{collection} of signals are available simultaneously, all acquired by a \emph{common} sensing matrix and subject to a \emph{common} permutation matrix. Thus the problem solved in these works is applicable to a different setting than ours, since we here consider the pooled measurements of a \emph{single} sparse signal. Moreover, the model of a \emph{common} permutation matrix for different signals is not applicable to GT, as technicians will typically make different errors while preparing pools for different sets of $p$ subjects. Compressive recovery of sparse signals with sparse permutations is considered in \cite{sresthunlabelled2025}, but with a small number of perfectly known correspondences available as side-information. Such side-information is not available in the GT problem at hand. For the problem of binary GT with sparse permutation errors, an integer linear programming approach is presented in \cite{Zabeti2021}. In contrast, we present a more efficient convex optimization approach that is applicable in the real-valued setting as opposed to the binary setting.

\subsection{Contributions of this paper}
In this work, we present a technique to \emph{correct} different types of MMEs in the pooling matrix. In other words, we design a technique to estimate $\widetilde{\boldsymbol{B}}$ directly from $\boldsymbol{z}, \boldsymbol{B}$. This technique requires a step to identify pooled measurements containing MMEs. This step is accomplished via a hypothesis testing procedure based on the concept of debiasing a robust form of the \textsc{Lasso} estimator, based on our previous work in \cite{Banerjee24}, although other techniques can also be employed for MME identification. The correction technique is based on the residuals of the Robust \textsc{Lasso} estimator. 

We also provide theoretical guarantees to show that our method reduces reconstruction errors following MME correction. Additionally, we present empirical results showing improved signal reconstruction outcomes after MME correction across different models for MMEs: bit-flips of different types, as well as permutation errors. We further propose a variant that performs the correction via multiple estimation stages to ensure thorough correction of MMEs, for which we provide intuitive potential functions as stopping criteria. We empirically show that the value of these potential functions reduces with each stage of correction.

The component of our approach that performs MME detection requires a carefully constructed `debiasing matrix' corresponding to the available pooling matrix $\boldsymbol{B}$. The debiasing matrix can be interpreted as an approximate form of the inverse of the covariance matrix of the rows of the pooling matrix. The debiasing matrix is defined as the solution of an optimization problem that is, in general, expensive to solve. We show, however, that in our setting the exact solution of this optimization problem can be obtained in closed form with inexpensive computations. This is particularly important to speed up the multi-stage variant of our algorithm. 

\noindent\textbf{Notations:} Throughout this paper, $\boldsymbol{I_n}$ denotes the identity matrix of size $n \times n$, and $\boldsymbol{e_i}$ denotes its $i$th column. We denote $[n] := \{1,2,\cdots,n\}$ for $n \in \mathbb{Z}_{+}$. Given a matrix $\boldsymbol{A}$, its $i^\text{th}$ row is $\boldsymbol{a_{i.}}$, its $j^\text{th}$ column is  $\boldsymbol{a_{.j}}$ and the $(i,j)^{\text{th}}$ element is  $a_{ij}$. For any vector $\boldsymbol{z} \in \mathbb{R}^n$ and index set $S \subseteq [n]$, we define $\boldsymbol{z}_S \in \mathbb{R}^n$ such that $\forall i \in S, (z_S)_i = z_i$ and $\forall i \notin S, (z_S)_i = 0$. $S^c$ denotes the complement of set $S$ and $|S|$ is the cardinality of $S$. We define the entrywise $l_\infty$ norm of a matrix $\boldsymbol{A}$ as $|\boldsymbol{A}|_\infty \triangleq \underset{i,j}{\max} |a_{ij}|$. 

\section{Techniques for Correction of MMEs in Compressed Sensing based Group Testing}
\subsection{Problem Setup}
We describe the model setup employed in this work. 
Let the signal vector (or defect status vector) be $\boldsymbol{\beta}^* \in \mathbb{R}^p$, and consider a pooling matrix $\boldsymbol{B} \in \mathbb{R}^{n \times p}$ with i.i.d. Bernoulli entries, $\boldsymbol{B}_{ij} \sim \text{Bernoulli}(\theta)$ for $\theta \in (0,1)$. Such random Bernoulli designs have been extensively studied in GT \cite{bharadwaja2025probably} as well as in CS \cite{duarte2008single}; in particular, $\theta = 0.5$ has been shown to be optimal in some CS applications~\cite{hunt2018data}. 

The signal $\boldsymbol{\beta}^*$ is assumed sparse, with at most $s \ll p$ non-zero distinct entries. The additive noise $\boldsymbol{\tilde{\eta}} \in \mathbb{R}^n$ in~\eqref{eq:forward_model_gt} is modeled as i.i.d. $\mathcal{N}(0,\tilde{\sigma}^2)$ with known variance $\tilde{\sigma}^2$. If $\tilde{\sigma}$ is unknown, methods to estimate $\tilde{\sigma}$ can be incorporated in our framework; in fact, we empirically show that the correction algorithms in this paper work well even for the multiplicative lognormal noise model of RT-PCR. The \textsc{Lasso} estimator for $\boldsymbol{\beta}^*$ is
\begin{equation}
\boldsymbol{\hat{\beta}} = \arg \min_{\boldsymbol{\beta}} \frac{1}{2n}\|\boldsymbol{z}-\boldsymbol{B\beta}\|^2_2 + \lambda \|\boldsymbol{\beta}\|_1,
\label{eq:lasso_z}
\end{equation}
which achieves consistency under appropriate $\lambda>0$ when $\boldsymbol{B}$ satisfies the Restricted Eigenvalue Condition (REC)~\cite[Chapter 11]{THW2015}.\footnote{REC: For $\psi \geq 1$ and $S \subseteq [p]$, define $C_\psi(S) \triangleq \{\boldsymbol{\Delta} \in \mathbb{R}^n : \|\boldsymbol{\Delta}_{S^c}\|_1 \leq \psi \|\boldsymbol{\Delta}_S\|_1\}$. $\boldsymbol{B}$ satisfies REC if $\exists \gamma>0$ such that $\tfrac{1}{n}\|\boldsymbol{B\Delta}\|_2^2 \geq \gamma \|\boldsymbol{\Delta}\|_2^2$ for all $\boldsymbol{\Delta}\in C_\psi(S)$. The parameter $\gamma$ is the restricted eigenvalue constant. $\boldsymbol{\Delta}$ is the error vector, i.e. the difference between the true signal vector and its estimate via (say) the LASSO.} 
We will now describe the forward model in the presence of MMEs.

\noindent\textbf{MME formulation on the Pooling Matrix $\boldsymbol{B}$:}
We consider mis-specified pooling arising from model mismatch errors (MMEs). 
Let $\boldsymbol{\widetilde{B}}$ denote the erroneous and unknown pooling matrix 
actually used during pooling, where technician mistakes introduce deviations 
from the intended design. We represent this as $\boldsymbol{\widetilde{B}} = \boldsymbol{B} + \boldsymbol{\Delta B},$
with $\boldsymbol{\Delta B}$ capturing the MMEs.
These errors induce signal-dependent perturbations in the measurements through the interaction of the true signal $\boldsymbol{\beta}^*$ with the mismatched rows of $\boldsymbol{\widetilde{B}}$. For each $i \in [n]$, the mismatch introduces a signal dependent MME of the form $\widetilde{\delta}_i^* \triangleq (\boldsymbol{\Delta b}_{i,.}) \boldsymbol{\beta}^*=(\boldsymbol{ \tilde{b}}_{i,.}-\boldsymbol{ {b}}_{i,.})\boldsymbol{\beta}^*.$
Thus, the measurement model becomes
\begin{equation} \label{eq:fm_delta_indiv}
    z_i = (\boldsymbol{b}_{i,.}) \boldsymbol{\beta}^* +\widetilde{\delta}_i^* + \eta_i,
\end{equation}
or in vector form as in \eqref{eq:model_non_centered}.
We assume that the MME vector 
$\boldsymbol{\widetilde{\delta}}^* \in \mathbb{R}^n$ is sparse, with $\tilde{r} \triangleq \|\boldsymbol{\widetilde{\delta}}^*\|_0 \ll n,$
reflecting practical scenarios in which only a small number of pooling 
operations incur MMEs given a competent technician.

In order to estimate $\boldsymbol{\beta^*}$ and $\boldsymbol{\widetilde{\delta}^*}$ from $\boldsymbol{z}$ and $\boldsymbol{B}$, one can use a Robust version of the \textsc{Lasso} estimator provided in \cite{Nguyen2013}. This estimator called the Robust \textsc{Lasso} (\textsc{Rl}) given as follows:
\begin{eqnarray}
\label{eq:robustLasso}
\begin{pmatrix}
\boldsymbol{\hat{\beta}_{\lambda_1}} \\
\boldsymbol{\hat{\delta}_{\lambda_2}}
\end{pmatrix} \!=\! \arg\min_{\boldsymbol{\beta},\boldsymbol{\widetilde{\delta}}} \frac{1}{2n}\!\left\|\boldsymbol{z}\!-\!\boldsymbol{B\beta}\!-\!\boldsymbol{\widetilde{\delta}}\right\|^2_2 \!+\! \lambda_1 \|\boldsymbol{\beta}\|_1 \!+\! \lambda_2 \left\|\boldsymbol{\widetilde{\delta}}\right\|_1 \label{eq:lasso_delta}
\end{eqnarray}
where $\lambda_1, \lambda_2$ are appropriately chosen regularization parameters.

\subsection{Algorithm: Correction of MMEs}\label{sec:Corr_Perm}
In this subsection, we present a procedure for correcting the MMEs that occur in the pooling matrix $\boldsymbol{B}$ under a specified MME model (see Sec.~\ref{sec:GMSE}). Since any correction method must begin by identifying the rows of $\boldsymbol{B}$ that contain MMEs, we first employ the detection mechanism introduced in Sec.~\ref{subsec:detect}--though other methods for MME detection (eg: based on thresholding the estimates of $\boldsymbol{\tilde{\delta}^*}$) can also be used. Throughout this paper, we denote by $\mathcal{J}$ the set of all rows of $\boldsymbol{B}$ that are flagged as MME-affected by this identification method. For consistency, we employ the approach outlined in Sec.~\ref{subsec:detect}.

Once the set $\mathcal{J}$ has been determined, and given the data $(\boldsymbol{z},\boldsymbol{B})$, we apply an algorithm (see Alg.~\ref{alg:gen_corr}) that corrects MMEs in $\boldsymbol{B}$ according to the assumed MME model. Following the correction step, we then re-estimate $\boldsymbol{\beta}^*$ using Robust \textsc{Lasso}, applied to the measurement vector $\boldsymbol{z}$ together with the corrected pooling matrix $\boldsymbol{\hat{B}}$. In order to develop this approach in more detail, we begin by defining the notions of \emph{effective} and \emph{correctable} MMEs.

\begin{definition}[Effective MMEs]\label{def:effective_mmes}
Given the measurement vector $\boldsymbol{z}$ and pooling matrix $\boldsymbol{B}$, the $i^{\text{th}}$ measurement $z_i$ is said to contain an \emph{effective} MME if $\tilde{\delta}^*_i = (\widetilde{\boldsymbol{b}}_{i\cdot} - \boldsymbol{b}_{i\cdot})\boldsymbol{\beta}^* \neq 0,$
for $i \in [n]$ as defined in \eqref{eq:fm_delta_indiv}. \hfill $\blacksquare$
\end{definition}

An MME may fail to be effective, for example, when bit-flips occur only at indices corresponding to zero-valued components of $\boldsymbol{\beta}^*$, or when permutation-induced pooling results satisfy $\widetilde{\boldsymbol{b}}_{i\cdot}\boldsymbol{\beta}^* = \boldsymbol{b}_{i\cdot}\boldsymbol{\beta}^*$.

\begin{definition}[Model-based Correctable Effective MMEs]\label{def:correctable_MME}
For $i \in [n]$, let $\mathcal{C}_i$ denote the set of all model-based perturbed versions of the $i^{\text{th}}$ row of $\boldsymbol{B}$. The $i^{\text{th}}$ measurement is said to have a \emph{correctable effective} MME if the values
$(\bar{\boldsymbol{b}}_{i\cdot} - \boldsymbol{b}_{i\cdot})\boldsymbol{\beta}^*
$ are distinct for all $\bar{\boldsymbol{b}}_{i\cdot} \in \mathcal{C}_i$. \hfill $\blacksquare$
\end{definition}
Different MME models are formally defined in Sec.~\ref{sec:GMSE}. The choice of model determines the construction of $\mathcal{C}_i$ in Definition~\ref{def:correctable_MME}. Importantly, these definitions are independent of any particular correction algorithm. If the $i^{\text{th}}$ effective MME is not correctable, this occurs due to aliasing when for two distinct elements $\bar{\boldsymbol{b}}_{i\cdot}, \bar{\boldsymbol{d}}_{i\cdot} \in \mathcal{C}_i$,
we have $(\bar{\boldsymbol{b}}_{i\cdot} - \boldsymbol{b}_{i\cdot})\boldsymbol{\beta}^*
=
(\bar{\boldsymbol{d}}_{i\cdot} - \boldsymbol{b}_{i\cdot})\boldsymbol{\beta}^*$. 
As a consequence, no algorithm can resolve such cases. Note that the unknown true row $\widetilde{\boldsymbol{b}}_{i\cdot}$ always belongs to $\mathcal{C}_i$ by construction. In Sec.~S.I of the supplemental, we specify $\mathcal{C}_i$ for each MME model in Sec.~\ref{sec:GMSE}.

Ideally, a detection mechanism should identify all MME-affected rows almost surely. In practice, such complete accuracy is rarely achievable; even a strong detector typically identifies these rows only with high probability. We therefore introduce two closely related correction strategies corresponding to these two realistic regimes.

\subsubsection{Case 1: All MME-affected rows are known}
In this scenario, we consider the setup in which either the detection mechanism is capable of identifying all MME-affected rows in $\mathcal{J}$ almost surely, or an oracle is assumed to provide the locations of all MME-affected rows in advance. Therefore, we know in advance that the set $\mathcal{J}^c$ is a clean set of measurements without any MMEs. 
Hence, we can utilize the \textsc{Lasso} estimate on the clean set $(\boldsymbol{z}_{\mathcal{J}^c},\boldsymbol{B}_{\mathcal{J}^c})$ to correct for the given MMEs. However the REC property has been proved for matrices with entries drawn i.i.d. from zero-mean distributions such as zero-mean Gaussian or the Rademacher distribution \cite{raskutti2010restricted}. Since the entries of $\boldsymbol{B}$ have a non-zero mean, some centering transformations have to be applied in order to obtain a mean-zero sensing matrix for which REC-based analysis could be used. 

\noindent \textbf{Centering operation:} Recall the measurement model $\boldsymbol{z} = \boldsymbol{B}\boldsymbol{\beta}^* + \widetilde{\boldsymbol{\delta}}^* + \tilde{\boldsymbol{\eta}}.$
To obtain a mean-zero sensing design, we apply the centering transformation for all $i \in \{1,2,\ldots,\lfloor n/2 \rfloor\}$
\begin{align}\label{eq:center_y}
y_i &= \frac{z_i - z_{\lfloor n/2 \rfloor+i}}{2\theta(1-\theta)}, \quad 
\boldsymbol{a}_{i\cdot} = \frac{\boldsymbol{b}_{i\cdot} - \boldsymbol{b}_{\lfloor n/2 \rfloor+i\cdot}}{2\theta(1-\theta)}, \\ \delta^*_i &=\frac{\tilde{\delta}^*_i-\tilde{\delta}^*_{\lfloor n/2 \rfloor+i}}{2\theta(1-\theta)},  \quad \eta_i =\frac{\tilde{\eta}^*_i-\tilde{\eta}^*_{\lfloor n/2 \rfloor+i}}{2\theta(1-\theta)},
\end{align}
where, $\lfloor n/2 \rfloor$ indicates the smallest integer less than or equals to $n/2$.
This yields the equivalent forward model
\begin{equation}\label{eq:cent_model}
\boldsymbol{y} = \boldsymbol{A}\boldsymbol{\beta}^* + \boldsymbol{\delta}^* + \boldsymbol{\eta}.
\end{equation}
 In this representation, 
 the error vector $\boldsymbol{\delta}^*$ will be sparse if $\boldsymbol{\tilde{\delta}^*}$ is sparse, with $r = \|\boldsymbol{\delta}^*\|_0 \ll \lfloor n/2 \rfloor$, reflecting the presence of only a small number of pooling errors. Note that, for any $i \in \mathcal{J}^c$, $\delta^*_i = 0$.
The centering operation reduces the effective number of measurements for the centered forward model given in \eqref{eq:cent_model} to $n'=\lfloor n/2 \rfloor$.
We require it mainly for theoretical performance bounds, since the existing theoretical results for estimators such as the \textsc{Lasso} require matrices whose entries have mean zero. The centering operation is employed in practical numerical experiments for detection of measurements with bit-flips, but is not required for correction of bit-flips.

Detection and the consequent correction of MMEs requires that the MMEs be \emph{effective} in the centered measurement model $(\boldsymbol{y},\boldsymbol{A})$.
However, an MME that is both \emph{effective} and \emph{correctable} in the non-centered measurements $(\boldsymbol{z},\boldsymbol{B})$ need not remain effective after the centering operation in \eqref{eq:center_y}.
To illustrate this phenomenon, consider an index $i \in [n']$ such that both measurements $z_i$ and $z_{n'+i}$ contain effective and correctable MMEs, i.e., $\tilde{\delta}^*_i \neq 0$ and $\tilde{\delta}^*_{n'+i} \neq 0$. It is possible that
the corresponding centered measurement $y_i = \frac{z_i - z_{n'+i}}{2\theta(1-\theta)}$
does not contain an effective MME, namely $\delta^*_i = 0$.
By the centering transformation in \eqref{eq:center_y}, such cancellation occurs when $\tilde{\delta}^*_i = \tilde{\delta}^*_{n'+i}$, which is equivalent to $ (\widetilde{\boldsymbol{b}}_{i\cdot} - \boldsymbol{b}_{i\cdot}) \boldsymbol{\beta}^*
    =
    (\widetilde{\boldsymbol{b}}_{n'+i\cdot} - \boldsymbol{b}_{n'+i\cdot}) \boldsymbol{\beta}^* .$
In this case, the mismatch contributions cancel exactly under centering, rendering the MME ineffective in the centered model.

Therefore, we require the assumption that the MMEs in the non-centered measurements $(\boldsymbol{z},\boldsymbol{B})$ are distinct, i.e., 
\begin{itemize}
    \item \textbf{A0 (Distinctness of MMEs):} For any two distinct indices $i_1, i_2 \in [n]$ with $i_1 \neq i_2$, the corresponding MME values are different; that is, $\tilde{\delta}^*_{i_1} \neq \tilde{\delta}^*_{i_2}$.
\end{itemize}
This assumption ensures that each MME-affected row can be uniquely identified and corrected. However, in scenarios where this distinctness assumption is violated, we develop an algorithmic procedure to handle non-distinct MMEs, as detailed in Sec.~\ref{sec:mult_Corr}.

To correct for the MMEs in this setting, we employ the Robust \textsc{Lasso} estimator given in \eqref{eq:lasso_delta}, denoted by $\boldsymbol{\hat{\beta}}_{\mathcal{J}^c}$, computed using the centered, MME-free subset of the measurements $(\boldsymbol{y}_{\mathcal{J}^c}, \boldsymbol{A}_{\mathcal{J}^c})$. This estimator then serves as the basis for correcting the MMEs present in the remaining rows.
In this setting, for any $i \in \mathcal{J}$, to correct an \emph{effective} and \emph{correctable} MME in row $\boldsymbol{b}_{i\cdot}$, we compute a distance measure $d(z_i,\bar{\boldsymbol{b}}_{i\cdot}\boldsymbol{\hat{\beta}}_{\mathcal{J}^c})$
for every $\bar{\boldsymbol{b}}_{i\cdot} \in \mathcal{C}_i$. We then set
\begin{equation}\label{eq:prox_meas}
    \hat{\boldsymbol{b}}_{i\cdot}
    \triangleq
    \arg\min_{\bar{\boldsymbol{b}}_{i\cdot} \in \mathcal{C}_i}
    d(z_i,\bar{\boldsymbol{b}}_{i\cdot}\boldsymbol{\hat{\beta}}_{\mathcal{J}^c}).
\end{equation}
The row $\boldsymbol{b}_{i\cdot}$ is then replaced by the corrected version $\hat{\boldsymbol{b}}_{i\cdot}$ for each $i \in \mathcal{J}$.

\noindent\textbf{Absolute Prediction Error (APE):}  
For $i \in \mathcal{J}$, we define the Absolute Prediction Error as
\begin{equation}\label{eq:APE}
    d(z_i, \bar{\boldsymbol{b}}_{i\cdot}\boldsymbol{\hat{\beta}}_{\mathcal{J}^c})
    = |z_i - \bar{\boldsymbol{b}}_{i\cdot}\boldsymbol{\hat{\beta}}_{\mathcal{J}^c}|.
\end{equation}

\subsubsection{Case 2: MME-affected rows are unknown}
We now consider the challenging setting in which the detection mechanism is capable of identifying the MME-affected rows only with high probability. Consequently, we cannot guarantee almost surely that the set $\mathcal{J}$ contains all MME-affected rows. As a result, the \textsc{Lasso} estimate obtained using $(\boldsymbol{y}_{\mathcal{J}^c}, \boldsymbol{A}_{\mathcal{J}^c})$ may be unreliable (where $\boldsymbol{y}, \boldsymbol{A}$ are as defined in \eqref{eq:center_y}), since the index set $\mathcal{J}^c$ may still contain MME-affected rows. In such cases, we instead employ the Robust \textsc{Lasso} estimator \cite{Nguyen2013}, introduced in Sec.~\ref{sec:odrlt}, and compute $\boldsymbol{\hat{\beta}}_{\mathcal{J}^c}$ using the pair $(\boldsymbol{y}_{\mathcal{J}^c}, \boldsymbol{A}_{\mathcal{J}^c})$. Given $\boldsymbol{\hat{\beta}}_{\mathcal{J}^c}$, the rest of the correction algorithm for this scenario follows the same steps as in Case 1 as described in \eqref{eq:prox_meas} and \eqref{eq:APE}.

In Theorem~\ref{th:minval_error}, we formally establish that the solution $\hat{\boldsymbol{b}}_{i\cdot}$ obtained via \eqref{eq:prox_meas} with the APE metric equals the true row $\widetilde{\boldsymbol{b}}_{i\cdot}$ with high probability in the case $\boldsymbol{\eta}=\boldsymbol{0}$, provided that the detection mechanism identifies all \emph{effective} MMEs. We also show that this conclusion is stable under zero mean i.i.d. Gaussian noise with variance $\sigma^2$.

\noindent \textbf{General correction algorithm.}
We now present the pseudo-code for the general correction algorithm in Alg.~\ref{alg:gen_corr}.

\begin{algorithm}[h!]
\caption{General Correction Algorithm for Model Mismatch Error}
\begin{algorithmic}[1]
\ENSURE Measurement vector $\boldsymbol{z}$, sensing matrix $\boldsymbol{B}$, MME index set $\mathcal{J}$, \textsc{Lasso} estimate $\boldsymbol{\hat{\beta}}_{\mathcal{J}^c}$.
\STATE $\boldsymbol{\hat{B}} = \boldsymbol{B}$.
\FOR{$i \in \mathcal{J}$}
    \STATE Construct the perturbation set $\mathcal{C}_i$ based on the MME model\footnotemark for row $\boldsymbol{b}_{i\cdot}$.
    \STATE Set $\hat{\boldsymbol{b}}_{i\cdot}
    =
    \arg\min_{\bar{\boldsymbol{b}}_{i\cdot} \in \mathcal{C}_i}
    d(z_i,\bar{\boldsymbol{b}}_{i\cdot} \boldsymbol{\hat{\beta}}_{\mathcal{J}^c})$, where
    $\boldsymbol{\hat{\beta}}_{\mathcal{J}^c}$ is the Robust \textsc{Lasso} estimator on $(\boldsymbol{y}_{\mathcal{J}^c},\boldsymbol{A}_{\mathcal{J}^c})$ given in \eqref{eq:lasso_delta}.
\ENDFOR
\STATE \textbf{return} $\boldsymbol{\hat{B}}$.
\end{algorithmic}
\label{alg:gen_corr}
\end{algorithm}
\footnotetext{see Sec.~S.I of the supplemental}

We emphasize that MME correction refers specifically to estimating the underlying \emph{pooling matrix} $\boldsymbol{\widetilde{B}}$ from $(\boldsymbol{B},\boldsymbol{z})$ under a given MME model. The corrected matrix is denoted by $\boldsymbol{\hat{B}}$. Our correction procedure does \emph{not} modify the measurement vector $\boldsymbol{z}$.

\subsection{Identification of MMEs using \textsc{Odrlt} }\label{sec:odrlt}

We briefly summarize the MME identification procedure introduced in \cite{Banerjee24}, termed as the \emph{Optimal Debiased Robust Lasso Test} (\textsc{Odrlt})--note that one can also explore other techniques from the robust regression literature for MME identification. The \textsc{Odrlt} method operates on the centered pooling model constructed from paired Bernoulli($\theta$) rows of the original pooling matrix. 
 Recall that the forward centered model is given by $\boldsymbol{y} = \boldsymbol{A\beta}^* + \boldsymbol{\delta}^* + \boldsymbol{\eta}.$

To obtain initial estimates of $(\boldsymbol{\beta}^*, \boldsymbol{\delta}^*)$, we use the Robust Lasso estimator of \cite{Nguyen2013} given in \eqref{eq:lasso_delta}. Since such estimators are biased, \cite{Banerjee24} constructs a debiased estimator for the MME vector. Let $\hat{\boldsymbol{\beta}}_{\lambda_1}$ and $\hat{\boldsymbol{\delta}}_{\lambda_2}$ denote the preliminary Robust \textsc{Lasso} estimates obtained from the centered measurements $(\boldsymbol{y},\boldsymbol{A})$. 
The debiased MME estimator is defined as
\begin{equation}\label{eq:deb_delta}
\hat{\boldsymbol{\delta}}_{W}
=
\hat{\boldsymbol{\delta}}_{\lambda_2}
+
\left( \boldsymbol{I}_{n'} - \frac{1}{n'}\boldsymbol{W A}^{\top} \right)
\left( \boldsymbol{y} - \boldsymbol{A\hat{\beta}}_{\lambda_1} - \hat{\boldsymbol{\delta}}_{\lambda_2} \right),
\end{equation}
where $\boldsymbol{W}$ is obtained from the convex optimization program given in Algorithm~2 of \cite{Banerjee24} (also described in \eqref{eq:opt_W}) and $n'=\lfloor n/2 \rfloor$. Theorem~4.1 therein provides the asymptotic variance expression associated with each component of $\hat{\boldsymbol{\delta}}_{W}$, enabling statistical inference on the entries of $\boldsymbol{\delta}^*$.

The \textsc{Odrlt} determines whether measurement $i$ contains an MME by testing $H_0: \delta_i^* = 0 
\,\text{versus}\,
H_1: \delta_i^* \neq 0.$
We reject $H_0$ against $H_1$ at $\alpha \%$ level of significance if,
\begin{equation}\label{eq:test_stat}
\frac{\,|\hat{\delta}_{W_i}|\,}{\sqrt{\Sigma_{\delta_{ii}}}} > \tau_{\alpha/2},
\end{equation}
where $\boldsymbol{\Sigma}_{\delta}=\sigma^2\left(\boldsymbol{I_n}-\frac{1}{n'}\boldsymbol{AW}^{\top}\right)\left(\boldsymbol{I_n}-\frac{1}{n'}\boldsymbol{AW}^{\top}\right)^{\top}$, and where $\tau_{\alpha/2}$ is the upper $(\alpha/2)^{\text{th}}$ quantile of a standard normal random variable. Under $H_0$, the test statistic is asymptotically standard normal, yielding the p-value $ 1 - \Phi\left(\frac{\,|\hat{\delta}_{W_i}|\,}{\sqrt{\Sigma_{\delta_{ii}}}}\right)$ where, $\Phi(.)$ is the cdf of the standard normal distribution.
All indices $i$ with p-value $ < \alpha$ form the detected MME set $\mathcal{J}$, with $\hat{r} = |\mathcal{J}| \le r_U$ as guaranteed by the detection procedure described in Sec.~\ref{subsec:detect}. Since $r_U \ll n'$, the complement $\mathcal{J}^{c}$ retains a sufficiently large number of uncorrupted measurements for accurate recovery of $\boldsymbol{\beta}^*$.

\subsection{Algorithm: Discarding MMEs}\label{subsec:detect}
We now discuss a technique to estimate $\boldsymbol{\beta}^*$ after discarding the set of measurements detected to have MMEs by using a specific detection technique. We refer to this estimate of $\boldsymbol{\beta}^*$ as `MME Rejection'(\textsc{Mmer}). 
The method of MME detection considered is \textsc{Odrlt}, presented in Sec.~\ref{sec:odrlt}.
For developing our theory, we assume knowledge of an upper bound $r_U$ on the number of MMEs $r$. Such an upper bound is reasonable to assume given a competent technician. 

Recall that \textsc{Odrlt}, operates on the centered measurements $(\boldsymbol{y}, \boldsymbol{A})$. 
Let $\mathcal{J} \subseteq [n']$ denote the set of indices of the centered measurements identified as corrupted under the chosen detection rule, and let $\hat{r} := |\mathcal{J}|$. Since each centered measurement $y_i$ is constructed from a known pair of rows of the original pooling matrix $\boldsymbol{B}$ (cf. \eqref{eq:center_y}), the identification of $\mathcal{J}$ immediately determines the corresponding $2\hat{r}$ rows of $\boldsymbol{B}$ that have been identified as containing MMEs by \textsc{Odrlt}. This mapping is exact because the pairing structure used in forming $\boldsymbol{A}$ is deterministic and known \textit{a priori}.

Note that the chosen identification technique in Sec.~\ref{sec:odrlt} does not guarantee the detection of all the measurements with effective MMEs (see Definition~\ref{def:effective_mmes}), due to measurement noise $\boldsymbol{\eta}$ which causes error in MME detection. Due to this, we perform the detection method in an iterative manner to obtain $\mathcal{J}$ as summarized in Alg.~\ref{alg:mme_detection}. Due to these errors, we obtain estimates of $\boldsymbol{\beta}^*$ and $\boldsymbol{\delta}^*_{\mathcal{J}^c}$ based on a measurement subset, i.e. $\boldsymbol{y}_{\mathcal{J}^c}$ and $\boldsymbol{A}_{\mathcal{J}^c}$, using the robust \textsc{Lasso} estimator from \eqref{eq:robustLasso}. 

\begin{algorithm}[h]
\caption{Detection of Measurements with Model Mismatch Errors (MMEs)}
\label{alg:mme_detection}
\begin{algorithmic}[1]
\STATE \textbf{Input:}  $\boldsymbol{y}$ (centered measurement vector), $\boldsymbol{A}$ (centered sensing matrix), $r_U$ (upper bound on the number of MMEs)
\STATE Set $i \gets 1$
\STATE $\mathcal{J}_1\triangleq$ Set of indices of $\boldsymbol{y}$ detected with MMEs with input $(\boldsymbol{y},\boldsymbol{A})$ (see Sec.~\ref{sec:odrlt})
\STATE Assign $\mathcal{J}=\mathcal{J}_1$
\WHILE{True}
    \IF{$|\mathcal{J}| = r_U$}
        \STATE \textbf{break} \COMMENT{Maximum possible MMEs reached}
    \ELSIF{$|\mathcal{J}| > r_U$}
        \STATE Retain indices of the $r_U$ strongest MMEs in $\mathcal{J}$ as per a criterion (See Sec.~\ref{sec:odrlt})
        \STATE \textbf{break}
    \ENDIF
    \STATE $\mathcal{J}_{i+1}\triangleq$ Set of indices of $\boldsymbol{y}$ detected to have MMEs via debiased robust \textsc{Lasso} (see Sec.~\ref{sec:odrlt}) with only measurements from   
$(\boldsymbol{y}_{\mathcal{J}^c},\boldsymbol{A}_{\mathcal{J}^c})$ as input.
        \IF{$\mathcal{J}_{i+1} = \emptyset$}
        \STATE \textbf{break} \COMMENT{No additional MMEs detected} 
    \ENDIF
    \STATE Set $\mathcal{J} \gets \bigcup_{k=1}^{i+1} \mathcal{J}_k$
     \STATE $i \gets i + 1$
\ENDWHILE
\STATE \textbf{Output:} Set $\mathcal{J}$
\end{algorithmic}
\end{algorithm}

\subsection{Multiple stages of correction}\label{sec:mult_Corr}
We have observed that after the first stage of correction using \textbf{APE}, 
a small number of MMEs may remain uncorrected and a small number of new MMEs may be falsely reported (respectively due to Type-I and Type-II error of the \textsc{Odrlt}). 
Hence we resort to multiple stages of correction as described in Alg.~\ref{alg:mult_cape}. 

We obtain the final Robust \textsc{Lasso} estimate after the last stage of correction, 
i.e., after the stopping criterion is met. We refer to this method as 
``Correction using \textbf{APE}'' (\textsc{Cape}). 
We define the stopping function $f_{ape}$ of \textsc{Cape} for the correction algorithm 
Alg.~\ref{alg:gen_corr} based on the sum of the squares of the 
\textbf{APE} distance measures:
\begin{equation}\label{eq:halt_est}
    f_{ape} \triangleq 
    \frac{1}{n}\|\boldsymbol{z} - \boldsymbol{\hat{B}\hat{\beta}_{\lambda}}\|_2^2,
\end{equation}
where $\boldsymbol{\hat{\beta}_{\lambda}}$ is the \textsc{Lasso} estimate on $\boldsymbol{z}$ and $\boldsymbol{\hat{B}}$ is the corrected pooling matrix in the last stage of Alg.~\ref{alg:mult_cape}.
The criterion $f_{ape}$ is motivated by the fact that in Alg.~\ref{alg:gen_corr}, using \textbf{APE} as the distance measure, we expect the sum of squares of the prediction errors to reduce. We will later show empirically that $f_{ape}$ across successive iterations in Alg.~\ref{alg:mult_cape} converges to a minimum value.

We stop the iterations if $|f_{ape}(k)-f_{ape}(k-1)| \leq \epsilon$, where $f_{ape}(k)$ is the value 
in the $k^{\text{th}}$ iteration of $f_{ape}$, 
and $\epsilon>0$ is a pre-defined stopping threshold.

\begin{algorithm}
\caption{Multi-stage Correction}
\label{alg:mult_cape}
\begin{algorithmic}[1]
\STATE \textbf{Input:} Stage index $k \gets 1$, stopping function $f_{ape}$ 
(see \eqref{eq:halt_est}), stopping threshold $\epsilon > 0$, 
and the corrected pooling matrix $\boldsymbol{\hat{B}}$ after the first stage.
\REPEAT
    \STATE Randomly shuffle the $n$ rows of $\boldsymbol{B}$ \emph{along with} the corresponding entries in $\boldsymbol{z}$ and perform centering on the shuffled rows using \eqref{eq:center_y}.
    \STATE Use the Robust \textsc{Lasso} to obtain estimates 
    $\boldsymbol{\hat{\beta}_{\lambda_1}}$ and 
    $\boldsymbol{\hat{{\delta}}_{\lambda_2}}$ 
    of $\boldsymbol{\beta}^*$ and $\boldsymbol{{\delta}}^*$ respectively, 
    using the centered version of the measurement vector $\boldsymbol{z}$ and corrected pooling matrix 
    $\boldsymbol{\hat{B}}$ (see~\eqref{eq:prox_meas}).
    \STATE Debias the Robust \textsc{Lasso} estimate to obtain 
    $\boldsymbol{\hat{{\delta}}_W}$.
    \STATE Update the set of measurement indices with MMEs, $\mathcal{J}$, 
    using the \textsc{Odrlt} approach (Sec.~\ref{sec:odrlt}) on 
    $\boldsymbol{\hat{{\delta}}_W}$.
    \STATE Perform correction based on the updated $\mathcal{J}$ 
    using Alg.~\ref{alg:gen_corr}.
    \STATE Compute the stopping function value $f_{ape}(k)$ (\textit{e.g.} using \eqref{eq:halt_est}).
    \STATE $k \gets k + 1$
\UNTIL{$|f_{ape}(k) - f_{ape}(k-1)| > \epsilon$}
\STATE \textbf{Output:} Final Robust \textsc{Lasso} estimate $\boldsymbol{\beta}^*$ 
using \eqref{eq:robustLasso} after the last stage of correction.
\end{algorithmic}
\end{algorithm}
Recall that detection and correction of MMEs in the centered model $(\boldsymbol{y}, \boldsymbol{A})$ requires that their mismatch contributions remain \emph{effective} after centering. However, effective MMEs in the non-centered measurements $(\boldsymbol{z}, \boldsymbol{B})$ may cancel out each other under the centering transformation, particularly when paired measurements satisfy $\tilde{\delta}^*_i = \tilde{\delta}^*_{n'+i}$, where $n'=\lfloor n/2 \rfloor$. To avoid such cancellations, we previously imposed Assumption \textbf{A0}, requiring all non-centered MMEs to be distinct. 
However, Step~2 of the multi-stage correction algorithm in Alg.~\ref{alg:mult_cape} across multiple stages of correction allows us to relax this assumption and correct MMEs even when \textbf{A0} does not hold. Specifically, before each centering operation, the $n$ measurements are randomly shuffled, so that the pairing $(z_i, z_{n'+i})$ used to construct the centered measurement $y_i$ changes independently across iterations. As a result, even if two effective and correctable MMEs satisfy $\tilde{\delta}^*_i = \tilde{\delta}^*_{n'+i}$ in a given stage leading to cancellation under centering and rendering $\delta^*_i = 0$, the probability that the same two corrupted measurements are paired again in the next stage (or subsequent stages) is negligible. In this way, the multi-stage correction algorithm reduces the inherent ambiguity introduced by centering and guarantees that \emph{effective} MMEs are eventually identified and corrected, provided a sufficient number of stages are performed.

\section{Theoretical Results}
\subsection{Stability of MME correction}
In this subsection, we will determine the probability that the correction algorithm Alg.~\ref{alg:gen_corr} using the distance measure \textbf{APE}, actually succeeds in correcting an MME. We further determine the probability, that the correction algorithm does not wrongly introduce an MME in a measurement that do not contain any MMEs.
This is equivalent to showing that (\textit{i}) for any $i \in \mathcal{J}$ such that $\tilde{\delta}^*_i \ne 0$, we would have $\boldsymbol{\hat{b}_{i.}}=\boldsymbol{\tilde{b}_{i.}}$ and that, (\textit{ii}) for any $i \in \mathcal{J}$ such that $\tilde{\delta}^*_i =0$, we have $\boldsymbol{\hat{b}_{i.}}=\boldsymbol{{b}_{i.}}$, where $\boldsymbol{\hat{b}_{i.}}$ is the corrected row as defined in \eqref{eq:prox_meas}. 

We presented the concept of \emph{effective} and \emph{correctable} MMEs in Defns.~\ref{def:effective_mmes} and \ref{def:correctable_MME}. However, note that in our model, the measurement vector $\boldsymbol{z}$ is corrupted by the additive noise vector $\boldsymbol{\eta}$ whose elements are drawn i.i.d.\ from $\mathcal{N}(0,\sigma^2)$. Hence, we need to establish the notion of \emph{effective} and \emph{correctable} MMEs in the presence of such noise. Since $\eta_i$ is Gaussian, we know that $|\eta_i| \leq k\sigma$ with probability at least $1 - e^{-k^2/2}$, which is large for any $k \geq 3$ (see Lemma S.2 of the supplemental). Therefore, we present the following assumptions:

\begin{itemize}\label{assumptions}
    \item \textbf{A1:} For any $i \in [n]$, if the $i^{\text{th}}$ measurement has an \emph{effective} MME as per Defn.~\ref{def:effective_mmes}, then for some $k \geq 3$,  
    \[
    |(\boldsymbol{\tilde{b}}_{i.}-\boldsymbol{b}_{i.})\boldsymbol{\beta}^*| \geq 2(k+1)\sigma.
    \]

    \item \textbf{A2:} For any $i \in [n]$, if the $i^{\text{th}}$ measurement has a correctable MME, then for any $\bar{\boldsymbol{b}}_{i.} \in \mathcal{C}_i$ and for some $k \geq 3$ 
    \[
    |(\boldsymbol{\bar{b}}_{i.}-\boldsymbol{b}_{i.})\boldsymbol{\beta}^*
      -(\boldsymbol{\tilde{b}}_{i.}-\boldsymbol{b}_{i.})\boldsymbol{\beta}^*|
      = |(\boldsymbol{\bar{b}}_{i.}-\boldsymbol{\tilde{b}}_{i.})\boldsymbol{\beta}^*|
      \geq 2(k+1)\sigma.
    \]
\end{itemize}

These two assumptions present notions of effective and correctable MMEs in the presence of noise $\boldsymbol{\eta}$ in $\boldsymbol{z}$ and therefore impose stronger conditions on $\boldsymbol{z}$ than those in Defns.~\ref{def:effective_mmes} and \ref{def:correctable_MME}.

Since our correction algorithm for Case 1 uses the \textsc{Lasso} as its foundation, we require the number of MMEs to be typically small, as presented in assumption \textbf{A3} below:

\begin{itemize}
    \item \textbf{A3:} The number of MMEs given by $r$ is bounded w.r.t.\ $n$, i.e., $r \leq r_U$ where $r_U = \zeta n' \leq \zeta \frac{n}{2}$ for some $\zeta \in (0,1)$.
\end{itemize}

The assumptions \textbf{A1}, \textbf{A2}, and \textbf{A3} are essential for proving the following theorem.

\begin{theorem}\label{th:minval_error}
   Let $\boldsymbol{B} \in \mathbb{R}^{n \times p}$ be the known pooling matrix with i.i.d.\ entries drawn from a Bernoulli($\theta$) distribution, and let $\boldsymbol{\tilde{B}}$ denote an unknown corrupted version of $\boldsymbol{B}$ due to model mismatch errors (MMEs). For each $i \in [n]$, let $\boldsymbol{b}_{i.}$ and $\boldsymbol{\tilde{b}}_{i.}$ respectively denote the $i^{\text{th}}$ rows of $\boldsymbol{B}$ and $\boldsymbol{\tilde{B}}$, and let $\boldsymbol{\bar{b}}_{i.}$ denote a model-based perturbation of $\boldsymbol{b}_{i.}$. Define $\mathcal{C}_i$ to be the collection of allowable model-based perturbations of $\boldsymbol{b}_{i.}$. 
Let $\hat{\boldsymbol{b}}_{i.}$ be the corrected row obtained via Alg.~\ref{alg:gen_corr} using the \textbf{APE} criterion, and let $\boldsymbol{\hat{\beta}}_{\mathcal{J}^c}$ be the \textsc{Lasso} estimate of $\boldsymbol{\beta}^*$ based on $(\boldsymbol{y}_{\mathcal{J}^c},\boldsymbol{A}_{\mathcal{J}^c})$. Under assumptions \textbf{A1}, \textbf{A2}, \textbf{A3}, if
$n \geq \frac{C^2 s^2}{1-\zeta}\log p$,
then the following statements are true for any $k > 0$:

\begin{enumerate}[\label=(1)]
    \item For any $i \in \mathcal{J}$ such that $\delta^*_i \neq 0$,
    \begin{equation}\label{eq:correct_prob}
    P\left(\hat{\boldsymbol{b}}_{i.} = \boldsymbol{\tilde{b}}_{i.}\right)
    \geq 1 - 3(e^{-k^2/2} + 1/p^2).
    \end{equation}

    \item For any $i \in \mathcal{J}$ such that $\delta^*_i = 0$,
    \begin{equation}\label{eq:misdetect_prob}
    P\left(\hat{\boldsymbol{b}}_{i.} = \boldsymbol{b}_{i.}\right)
    \geq 1 - 2(e^{-k^2/2} + 1/p^2).
    \end{equation}
\end{enumerate}

\hfill{$\blacksquare$}
\end{theorem}

\textbf{Remarks:}
\begin{enumerate}
    \item In the noiseless scenario, i.e., when $\boldsymbol{\eta} = \boldsymbol{0}$, result (1) holds with probability at least  
$    1 - 3/p^2$,
    and result (2) holds with probability at least  
$1 - 2/p^2$.     Thus, an MME can be corrected with high probability asymptotically.

    \item The constant $k>0$ defines a probabilistic range for the additive noise in $\boldsymbol{\eta}$. For instance, $k = 3$ corresponds approximately to a $99.7\%$ confidence interval for Gaussian noise, meaning $\eta_i$ lies in $[-3\sigma,3\sigma]$ with probability $0.997$.

    \item Apart from those defined in Sec.~\ref{sec:GMSE}, Theorem~\ref{th:minval_error} holds for any MME model admitting a finite set $\mathcal{C}_i$ of allowable model-based perturbations.

    \item This result extends naturally to matrices drawn from any bounded distribution, up to modified constant factors in the sample-size requirement.  

    \item This result extends naturally to the scenario in which the index set $\mathcal{J}$ does not contain all rows affected by MMEs. 
In this regime, the correction algorithm is applied using the Robust \textsc{Lasso} estimator, rather than the standard \textsc{Lasso} estimator computed on the reduced measurements $(\boldsymbol{y}_{\mathcal{J}^c}, \boldsymbol{A}_{\mathcal{J}^c})$. 
This modification allows us to characterize the performance of the correction procedure even when $\mathcal{J}$ fails to capture all MME-corrupted rows.
The analysis follows along similar lines, with the only difference being that the error bounds associated with the Robust \textsc{Lasso} estimator replace those of the standard \textsc{Lasso}. 
As a consequence, two changes arise: 
(\textit{i}) the required sample complexity increases to $n \geq \frac{48^2 \kappa^{-4}}{1-\zeta}(s+r)^2 \log p,$
and 
(\textit{ii}) the lower bounds on the success probabilities in \eqref{eq:correct_prob} and \eqref{eq:misdetect_prob} are modified to $1 - 4\left(e^{-k^2 / 2} + \frac{1}{p} + \frac{1}{n}\right)$ and $1 - 2\left(e^{-k^2 / 2} + \frac{1}{p} + \frac{1}{n}\right)$
respectively.
\end{enumerate}

\subsection{Reconstruction error in MME Rejection}
Recall that $\mathcal{J}$ is the set of all measurements detected to contains MMEs, as per Alg.~\ref{alg:mme_detection}.
Let $\boldsymbol{A}_{\mathcal{J}^c}$ be the sub-matrix of $\boldsymbol{A}$ containing only the rows from $\mathcal{J}^c$. Since the elements of $\boldsymbol{A}$ are drawn i.i.d. from the centered version of the Bernoulli($\theta$) (henceforth denoted by $CB(\theta)$), i.e., the elements of $\boldsymbol{A}$ obey the following probability mass function:
\begin{eqnarray*}
    P(a_{ij}=-1) &=&\theta(1-\theta)= P(a_{ij}=1), \\  P(a_{ij}=0)&=&\theta^2 + (1-\theta)^2.   
    \end{eqnarray*} 
 Clearly, the rows of the matrix  $\boldsymbol{A}_{\mathcal{J}^c}$ are also i.i.d. $CB(\theta)$. As Alg.~\ref{alg:mme_detection} constrains $\hat{r} < r_U$, we can assume that the cardinality of $\hat{r} \triangleq |\mathcal{J}| < n$ if we choose $r_U < n$.

Note that from Theorem~1 of \cite{Banerjee24}, we have the error bounds on the Robust \textsc{Lasso} estimate for any sensing matrix $\boldsymbol{A}$ that satisfies an extended version of the Restricted Eigenvalue Condition (EREC). Furthermore, it is shown in Lemma~6 of \cite{Banerjee24}, that any sensing matrix $\boldsymbol{A}$ with i.i.d. entries obtained from a distribution with mean $0$, variance $\leq 1$ and defined on a bounded domain $[-h,h]$ where $h > 0$, satisfies the EREC with high probability. Clearly $CB(\theta)$ satisfies these conditions. Therefore from Theorem 1 of \cite{Banerjee24}, we have the following error bound on the Robust \textsc{Lasso} estimate given $\lambda_1 \triangleq \frac{4\sigma\sqrt{\log p}}{\sqrt{n'-\hat{r}}}, \lambda_2 \triangleq \frac{4\sigma\sqrt{\log n'}}{{n'-\hat{r}}}$:
\begin{equation}\label{eq:RL_estimate}
P\left(\left\|\boldsymbol{\hat{\beta}_{\lambda_1}}\!-\!\boldsymbol{\beta}^*\right\|_1 \!\leq \!48\kappa^{-2} (s+r)\sigma\sqrt{\frac{{\log(p)}}{{n'-\hat{r}}}}\right)\! \geq\! 1\!-\!\left(\frac{1}{p}\!+\!\frac{1}{n'}\right),
\end{equation}
where $\kappa>0$ is the Extended Restricted Eigenvalue constant of $\boldsymbol{A}$ and $n'=\lfloor n/2 \rfloor$.

\section{Fast Debiasing of the Robust \textsc{Lasso} estimator}
The multi-stage correction algorithm given in Alg.~\ref{alg:mult_cape} iteratively makes updates to (say) row $\boldsymbol{a}_{i.}$ of the pooling matrix $\boldsymbol{A}$, based on the \textbf{APE} criterion in \eqref{eq:APE}. The algorithm requires computation of the debiased estimate  $\boldsymbol{\hat{\delta}_{W}}$, which requires computation of the debiasing matrix $\boldsymbol{W}$ that corresponds to $\boldsymbol{\hat{A}}$, the \emph{corrected} version of $\boldsymbol{A}$. The algorithm involves the following convex optimization procedure (see \cite{Banerjee24} for the derivation of this procedure):
\begin{eqnarray}\label{eq:opt_W}
 \underset{\boldsymbol{W}}{\textrm{min}}& &\quad 
\sum_{j=1}^p\boldsymbol{w_{.j}}^{\top}\boldsymbol{w_{.j}} \\ \nonumber
\textrm{s.t.} && 
\mathsf{C0} : \boldsymbol{w_{.j}}^{\top}\boldsymbol{w_{.j}}/n \leq 1+\sqrt{\frac{\log p}{n}} \ \forall \ j \in [p] 
\\ && \nonumber \mathsf{C1} : \left| \left(\boldsymbol{I_p}-\frac1n\boldsymbol{W^{\top}A}\right)\right|_\infty \leq \mu_1=  2 \sqrt{\frac{2 \log p}{n}} , \\ \nonumber && \mathsf{C2} : \left|\frac{1}{p}\left(\boldsymbol{I_n}-\frac{1}{n}\boldsymbol{A}\boldsymbol{W}^{\top}\right) \boldsymbol{A}\right|_\infty \leq  4\sqrt{\frac{\log 2np}{np}}+\frac{1}{n}, \\ \nonumber && \mathsf{C3} : \left|\left(\frac{\boldsymbol{AW^{\top}}}{p}-\boldsymbol{I_n}\right)\right|_{\infty} \leq \mu_3= 2 \sqrt{\frac{2 \log n}{p}}.
\end{eqnarray}
At first glance, it would appear that an iterative algorithm is necessary for obtaining $\boldsymbol{W}$, which would be very computationally expensive, especially because $\boldsymbol{W}$ would need to be computed afresh in every stage of the multi-stage procedure in Alg.~\ref{alg:mult_cape}. We now show that for $CB(\theta)$ matrices with $n<p$, this convex optimization problem admits an exact, closed-form optimal solution. 
\begin{theorem}\label{th:opt_W}
Let $\boldsymbol{A}$ be an $n \times p$ $CB(\theta)$ matrix.
Define $\tau(\boldsymbol{A}) := \max_{i\neq j} \frac{|\boldsymbol{a}_{i.}^\top \boldsymbol{a}_{j.}|}{\|\boldsymbol{a}_{i.}\|_2^2}$.
Then the optimal solution of the optimization problem in \eqref{eq:opt_W} is with high probability,
\begin{equation}\label{eq:exact_RL}
    \boldsymbol{w}_{i.}=\frac{p(1-\mu_3)}{\|\boldsymbol{a}_{i.}\|_2^2}\boldsymbol{a}_{i.}, \quad \forall \, i \in [n],
\end{equation}
 where,
 $\mu_3=2h^2\sqrt{\tfrac{2\log(n)}{p}}$ and $h=1/2\theta(1-\theta)$. 
\hfill{$\blacksquare$}
\end{theorem}
The proof of Theorem~\ref{th:opt_W} is provided in the supplementary material.

\noindent \textbf{Remarks:}
\begin{enumerate}
    \item The exact closed-form solution of the optimisation problem \eqref{eq:opt_W} given in Theorem~\ref{th:opt_W} accelerates multi-stage correction schemes by avoiding repeated numerical optimization. This is because given $\boldsymbol{A}$, one can directly implement the optimal solution of Alg. 2 of \cite{Banerjee24} in the form~\eqref{eq:exact_RL} for all $i \in [n]$.
    \item The condition $\tau/(1+\tau) \leq \mu_3 \leq 1$ is necessary and sufficient for the closed-form expression in~\eqref{eq:exact_RL} to be optimal for~\eqref{eq:opt_W}. This is similar to the condition $\rho/(1+\rho) \leq \mu \leq 1$ for the exact solution of the optimization problem in Theorem~1 of \cite{Banerjee25FD}. 
\end{enumerate}
In \cite{Banerjee25FD}, we have previously provided an exact closed form solution for the debiasing matrix used for debiasing the estimates of the \textsc{Lasso} (and not the robust \textsc{Lasso}). Theorem~\ref{th:opt_W} obtains the exact closed-form solution for the debiasing matrix $\boldsymbol{W}$ from \eqref{eq:opt_W} for \emph{robust} \textsc{Lasso} estimates, given $CB(\theta)$ pooling matrices. Note that the debiasing procedures for the robust \textsc{Lasso} (\eqref{eq:opt_W}) and the \textsc{Lasso} (Alg. 1 of \cite{Javanmard2014}) are very different from each other.

\section{Experimental Results}
We now present simulation results demonstrating the effectiveness of the proposed MME-correction framework.

\noindent\textbf{Data Generation:}  
Sparse signals $\boldsymbol{\beta}^* \in \mathbb{R}^{200}$ were generated with non-zero entries drawn i.i.d.\ from $U(100,1000)$ and placed at random locations, satisfying Assumptions~\textbf{A1}--\textbf{A2}. The pooling matrix $\boldsymbol{B}$ was sampled from a Bernoulli($\theta$) model for $\theta \in \{0.1,0.3,0.5\}$. Noisy measurements were produced as $\boldsymbol{z}=\boldsymbol{B\beta}^* + \boldsymbol{\eta},\qquad 
\sigma \triangleq f_\sigma \frac{1}{n}\sum_{i=1}^n | \boldsymbol{b}_{i.}\boldsymbol{\beta}^* |,$
with $\boldsymbol{\eta} \sim \mathcal{N}(0,\sigma^2 \boldsymbol{I_n})$.  
Effective MMEs were introduced adversarially by flipping bits in randomly chosen rows among the first $n$ rows, restricted to the non-zero indices of $\boldsymbol{\beta}^*$. This yielded the perturbed matrix $\boldsymbol{\tilde{B}}$ for all three MME models.  
For \textsc{Odrlt}, $(\boldsymbol{z},\boldsymbol{B})$ were centered as in~\eqref{eq:center_y} to form $(\boldsymbol{y},\boldsymbol{A})$, used only for detection.

\noindent\textbf{Experimental Setup:} 
Let $f_{sp}=s/p$ and $f_{adv}=r/n$ denote sparsity and corruption fractions. Four different simulations were conducted as per the following settings:
\begin{itemize}
    \item \textsf{EA}: $f_{\text{adv}} \in [0.02{:}0.02{:}0.1]$; $n=80$, $f_{sp}=0.05$, $f_\sigma=0.01$.
    \item \textsf{EB}: $n \in [50{:}10{:}90]$; $f_{\text{adv}}=0.02$, $f_{sp}=0.05$, $f_\sigma=0.01$.
    \item \textsf{EC}: $f_\sigma \in [0.01{:}0.01{:}0.05]$; $n=80$, $f_{\text{adv}}=0.01$, $f_{sp}=0.05$.
    \item \textsf{ED}: $f_{sp} \in [0.01{:}0.03{:}0.09]$; $n=80$, $f_{\text{adv}}=0.01$, $f_\sigma=0.01$.
\end{itemize}
Each configuration was repeated over 25 trials. The sensing matrix $\boldsymbol{A}$ was fixed across trials in \textsf{EA}, \textsf{EC}, and \textsf{ED}, but regenerated in \textsf{EB} since $n$ varied. The signal $\boldsymbol{\beta}^*$ was fixed except in \textsf{ED}, where varying sparsity required re-sampling. 
Throughout these experiments, we choose $r_U=0.2n$ to ensure that $r \leq r_U$ for all levels of $f_{adv}$.

\noindent\textbf{Selection of Regularization Parameters:}  
A two-stage procedure was used. A coarse grid search over $\log(\lambda_1)$, $\log(\lambda_2) \in [1,7]$ (step size $0.25$) was first screened using the Lilliefors test at the $1\%$ level, retaining pairs for which at least $70\%$ of standardized debiased coordinates were approximately Gaussian. Among these, the final pair minimized the average 10-fold cross-validation error computed using robust LASSO fits.

\noindent\textbf{Implementation Platform:}  
All estimators were implemented in \texttt{MATLAB} using \texttt{CVX} with \texttt{SDPT3}.

\noindent\textbf{Evaluation Criteria:}  
Detection performance was quantified using sensitivity and specificity at a $5\%$ significance level. Let $\boldsymbol{\hat{b}}_\beta \in \{0,1\}^p$ denote hypothesis-test decisions on coordinates of $\boldsymbol{\hat{\beta}}_W$. A coordinate was declared defective if its $p$-value fell below the threshold. Sensitivity and specificity were defined as $\text{sensitivity}=\frac{\#\text{true defectives}}{\#\text{true defectives}+\#\text{false non-defectives}},$ and $\text{specificity}=\frac{\#\text{true non-defectives}}{\#\text{true non-defectives}+\#\text{false defectives}}.$
Reconstruction accuracy was evaluated using the relative RMSE given by $\text{RRMSE}=\frac{\|\boldsymbol{\beta}^*-\hat{\boldsymbol{\beta}}\|_2}{\|\boldsymbol{\beta}^*\|_2}.$

\subsection{Performance of the Correction Algorithm under MMEs and Additive Noise}
We evaluate the performance of \textsc{Cape} for correcting \textsf{SSM} errors where $\boldsymbol{\eta}$ follows $\mathcal{N}(0,\sigma^2)$ and compare it with three baselines: (\textit{i}) Robust \textsc{Lasso} (\textsc{Rl}) from~\eqref{eq:RL_estimate}~\cite{Nguyen2013}, (\textit{ii}) \textsc{Odrlt}~\cite{Banerjee24} without correction, and (\textit{iii}) the MME rejection scheme \textsc{Mmer} (Sec.~\ref{subsec:detect}). For \textsc{Cape}, the estimate of $\boldsymbol{\beta}^*$ is obtained using the corrected sensing matrix $\boldsymbol{\hat{A}}$. All comparisons are performed for design matrices whose rows are sampled i.i.d.\ from $CB(\theta)$.

Fig.~\ref{fig:Sens_spec_ssm_0.1} and~\ref{fig:Sens_spec_ssm_0.5} present the sensitivity and specificity values (averaged over 25 noise realizations) for all four experiment settings \textsf{EE}, \textsf{EB}, \textsf{EC}, and \textsf{ED}, for $\theta \in \{0.1,0.5\}$. For the same configurations, the corresponding RRMSE comparisons (for \textsc{Cape}, \textsc{Rl} and \textsc{Mmer}) are shown in Fig.~\ref{fig:Rmse_ssm_0.1} and~\ref{fig:Rmse_ssm_0.5}. We omit \textsc{Odrlt} from the RRMSE comparisons as it produces only binary estimates. Across all experiment classes, \textsc{Cape} consistently outperforms the baselines. The improvement over \textsc{Mmer} is most pronounced for small $n$, larger corruption levels, and higher noise. A similar set of results are shown in the supplemental material for \textsf{ASM} and permutation errors, where again \textsc{Cape} outperforms all baselines.


\begin{figure*}[t]
   \centering
    \includegraphics[scale=0.19]{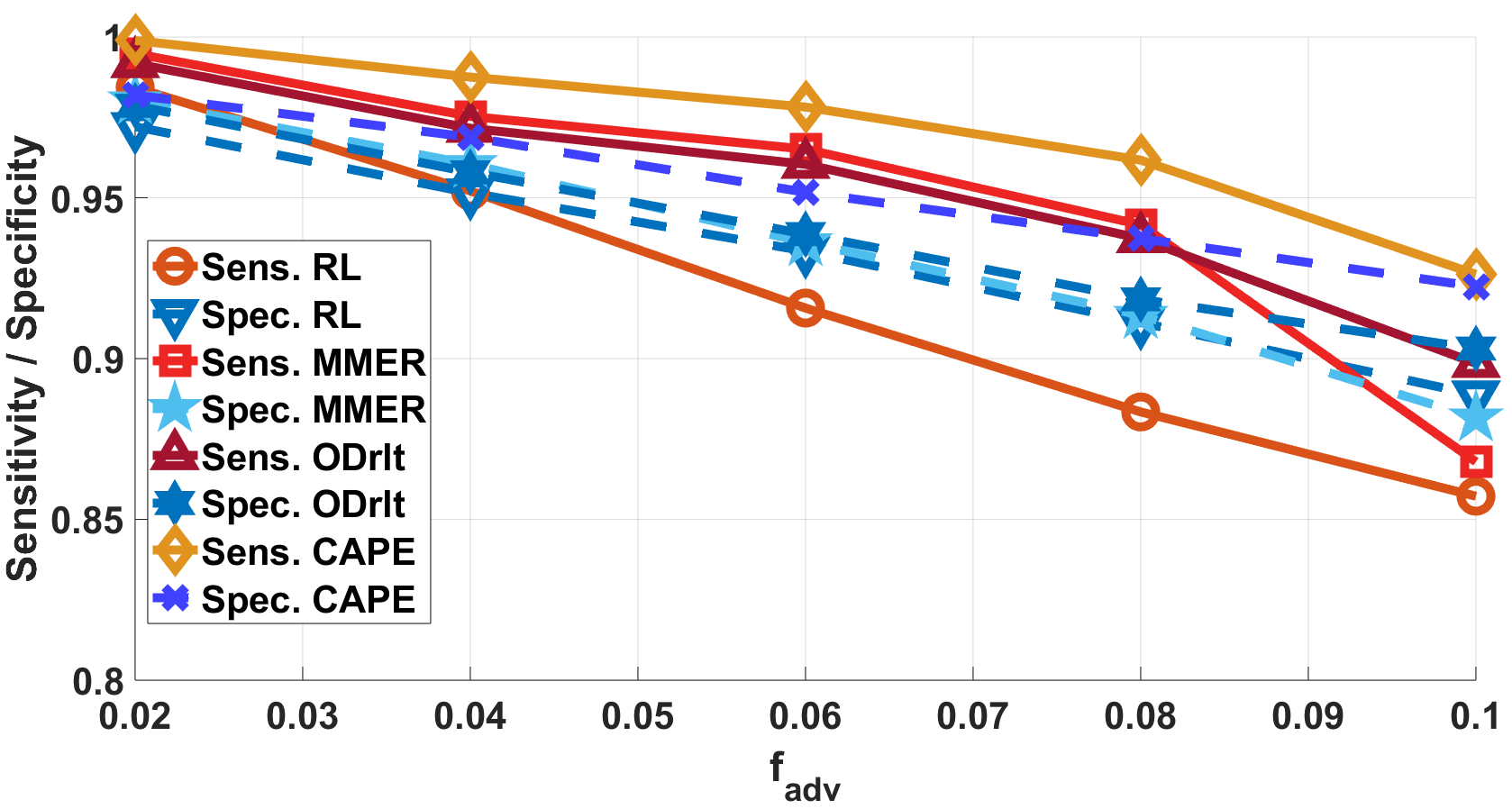}
    \includegraphics[scale=0.19]{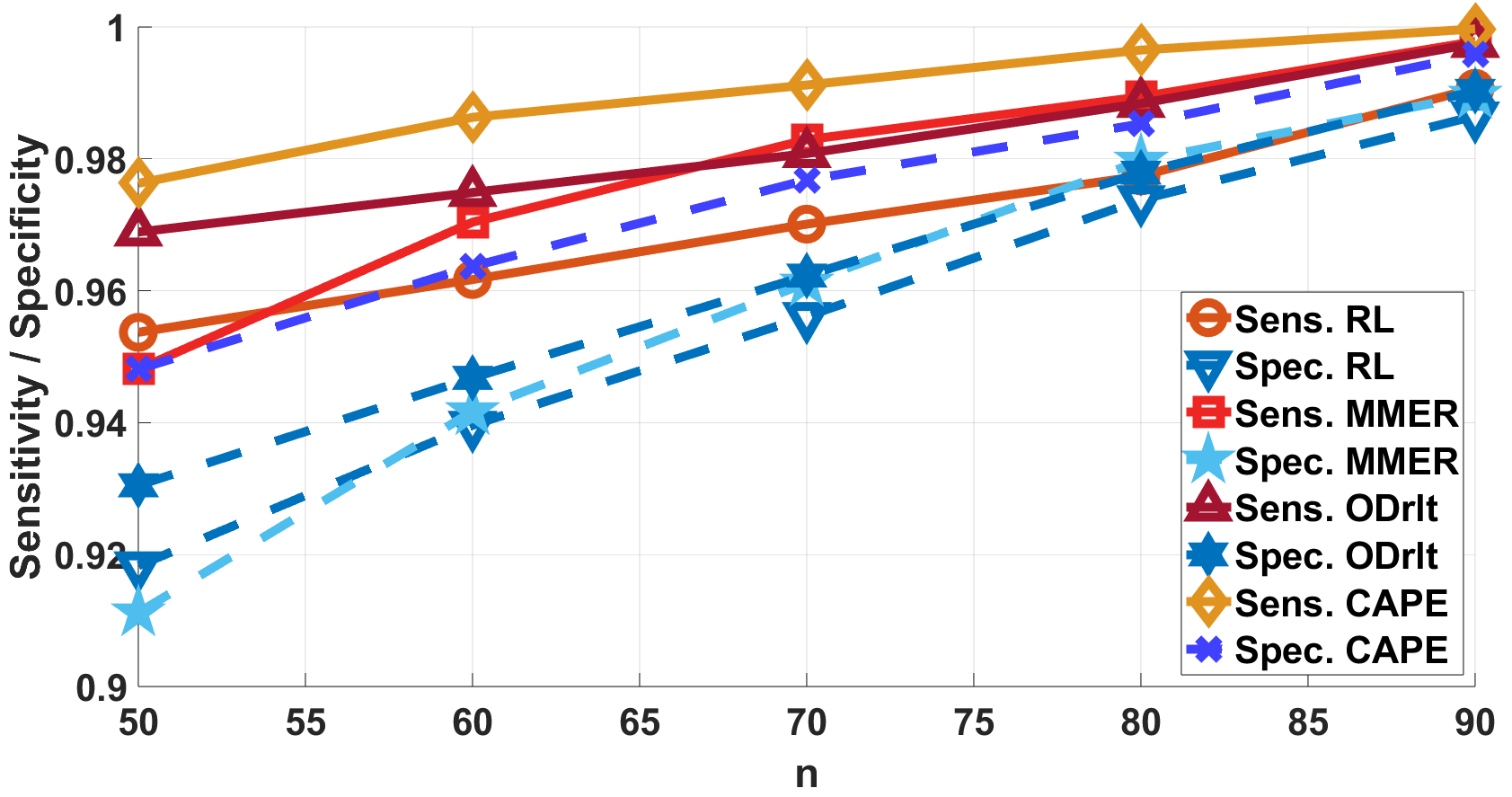}\\
    \includegraphics[scale=0.19]{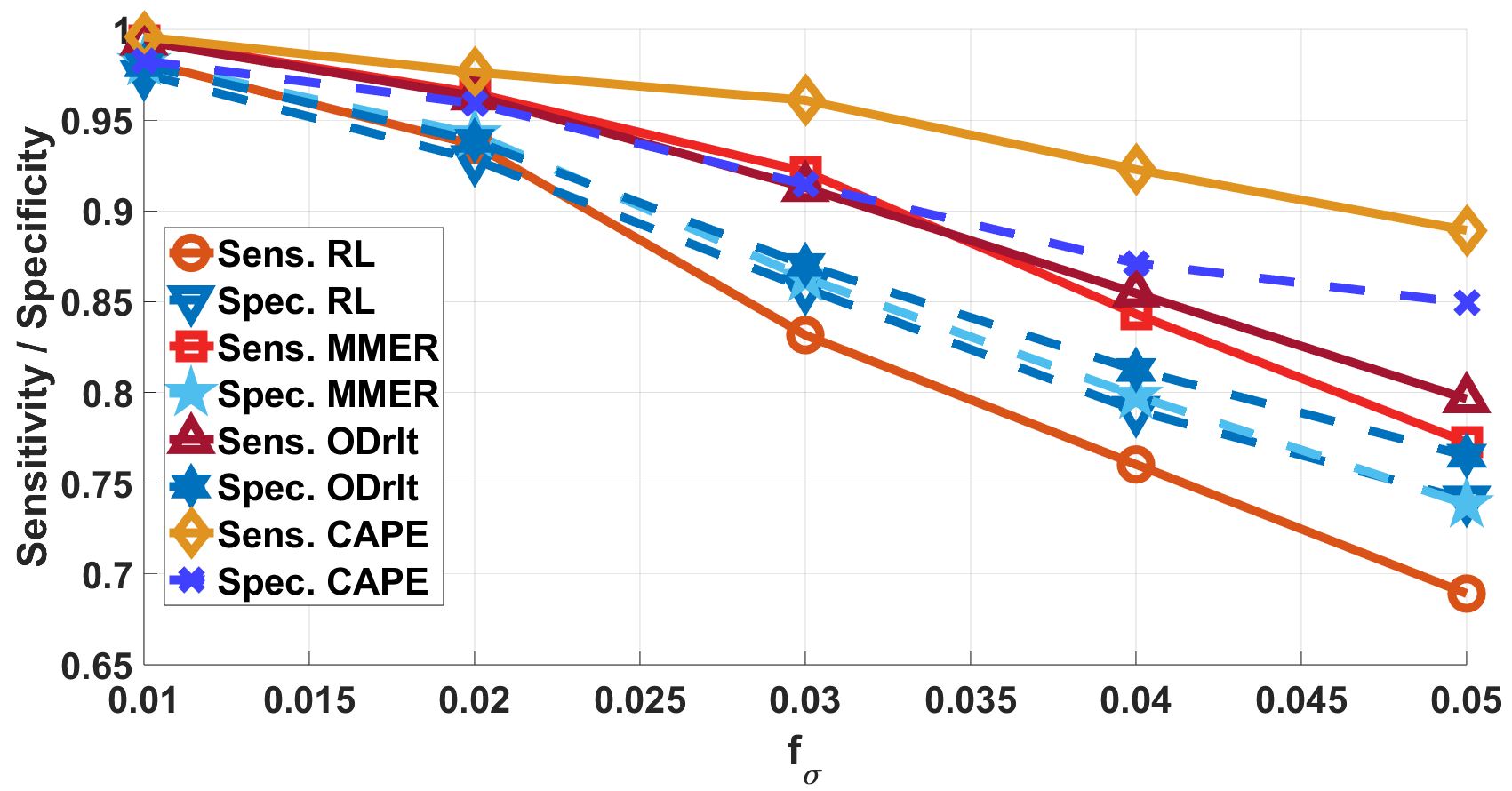}
    \includegraphics[scale=0.19]{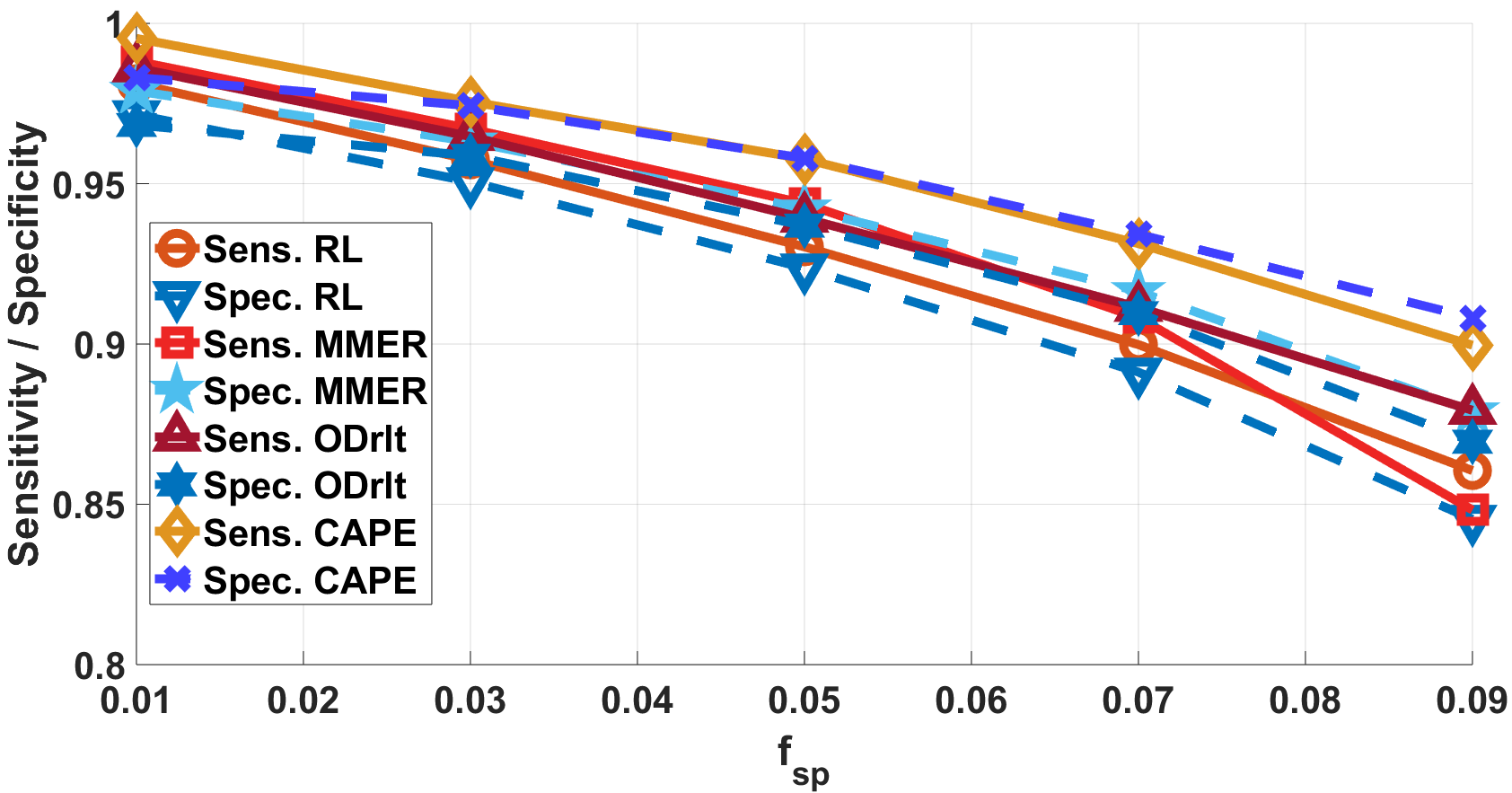}
    \caption{Average sensitivity and specificity (100 noise realizations; fixed $\boldsymbol{\beta}^*$, $\boldsymbol{A}$, $\boldsymbol{\delta}^*$) for detecting non-zero entries of $\boldsymbol{\beta}^*$ using \textsc{Cape}, RL, \textsc{Mmer}, and \textsc{Odrlt} in the presence of \textsf{SSM} MMEs. $\boldsymbol{A}$ is $CB(0.1)$. Panels correspond to (\textsf{EA}), (\textsf{EB}), (\textsf{EC}), (\textsf{ED}).}
    \label{fig:Sens_spec_ssm_0.1}
\end{figure*}

\begin{figure*}[t]
   \centering
    \includegraphics[scale=0.19]{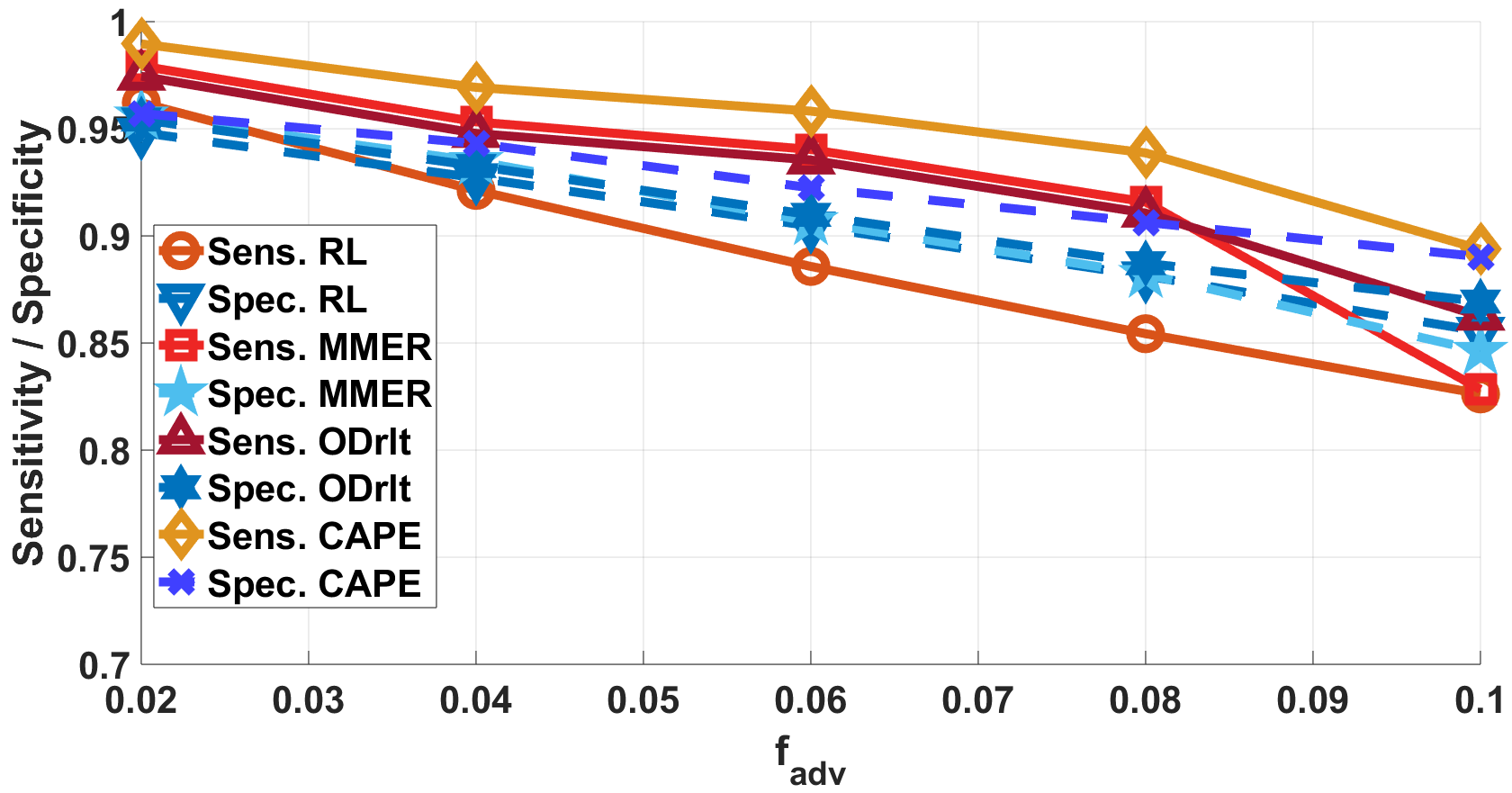}
    \includegraphics[scale=0.19]{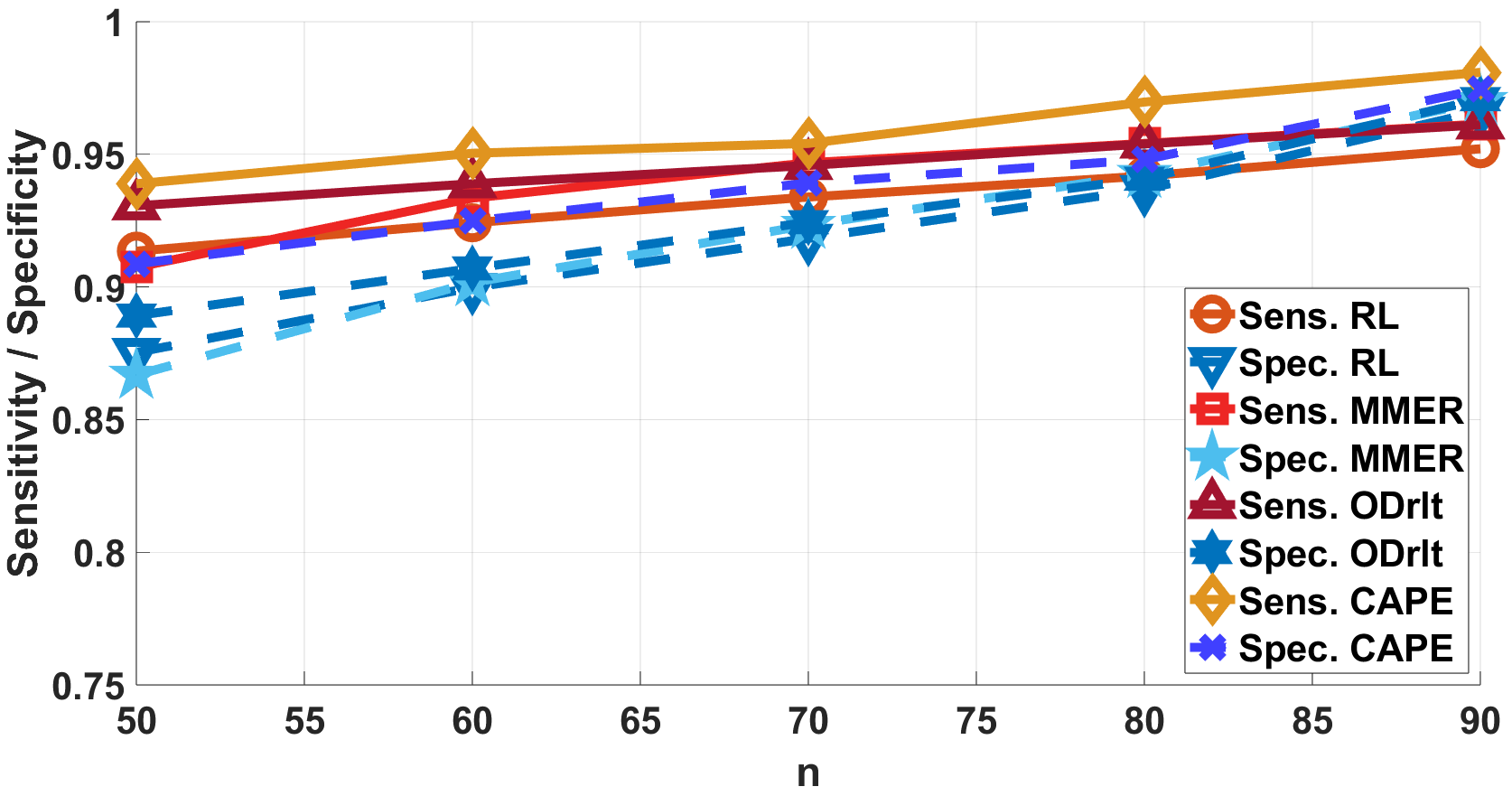}\\
    \includegraphics[scale=0.19]{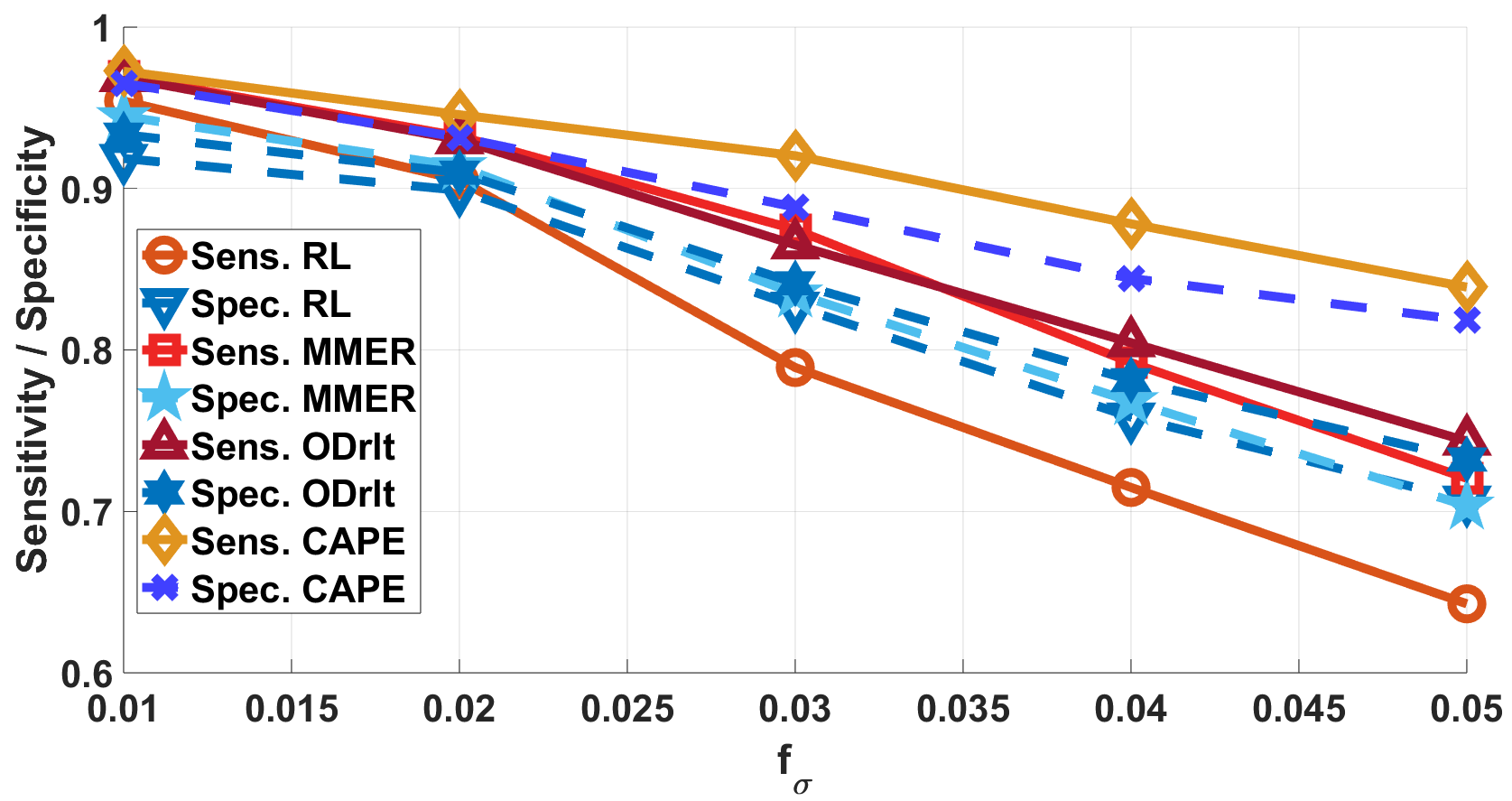}
    \includegraphics[scale=0.19]{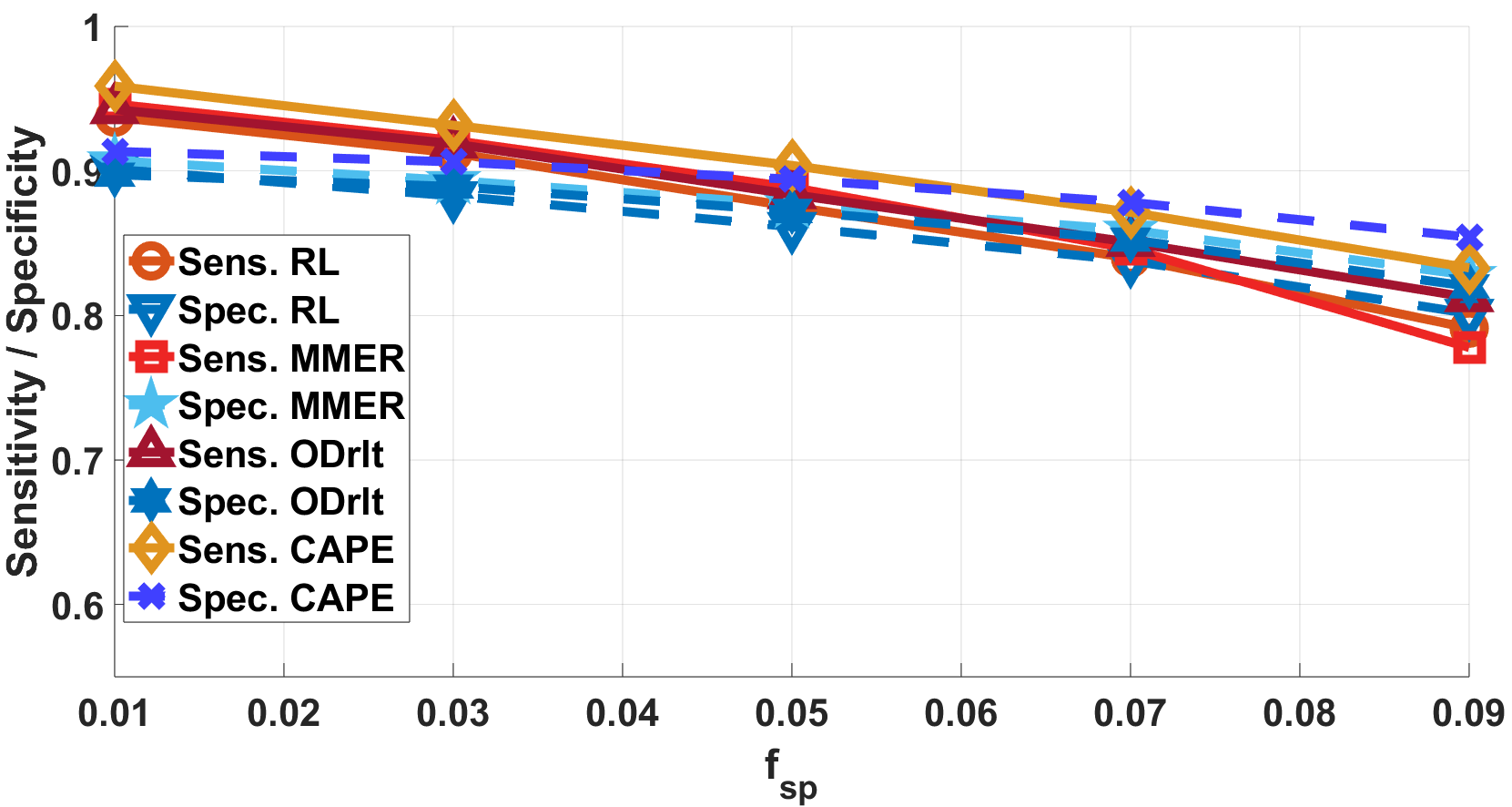}
    \caption{Average sensitivity and specificity (100 noise realizations; fixed $\boldsymbol{\beta}^*$, $\boldsymbol{A}$, $\boldsymbol{\delta}^*$) for detecting non-zero entries of $\boldsymbol{\beta}^*$ using \textsc{Cape}, RL, \textsc{Mmer}, and \textsc{Odrlt} under \textsf{SSM} MMEs. $\boldsymbol{A}$ is $CB(0.5)$. Panels correspond to (\textsf{EA}), (\textsf{EB}), (\textsf{EC}), (\textsf{ED}).}
    \label{fig:Sens_spec_ssm_0.5}
\end{figure*}

\begin{figure*}[t]
   \centering
    \includegraphics[scale=0.19]{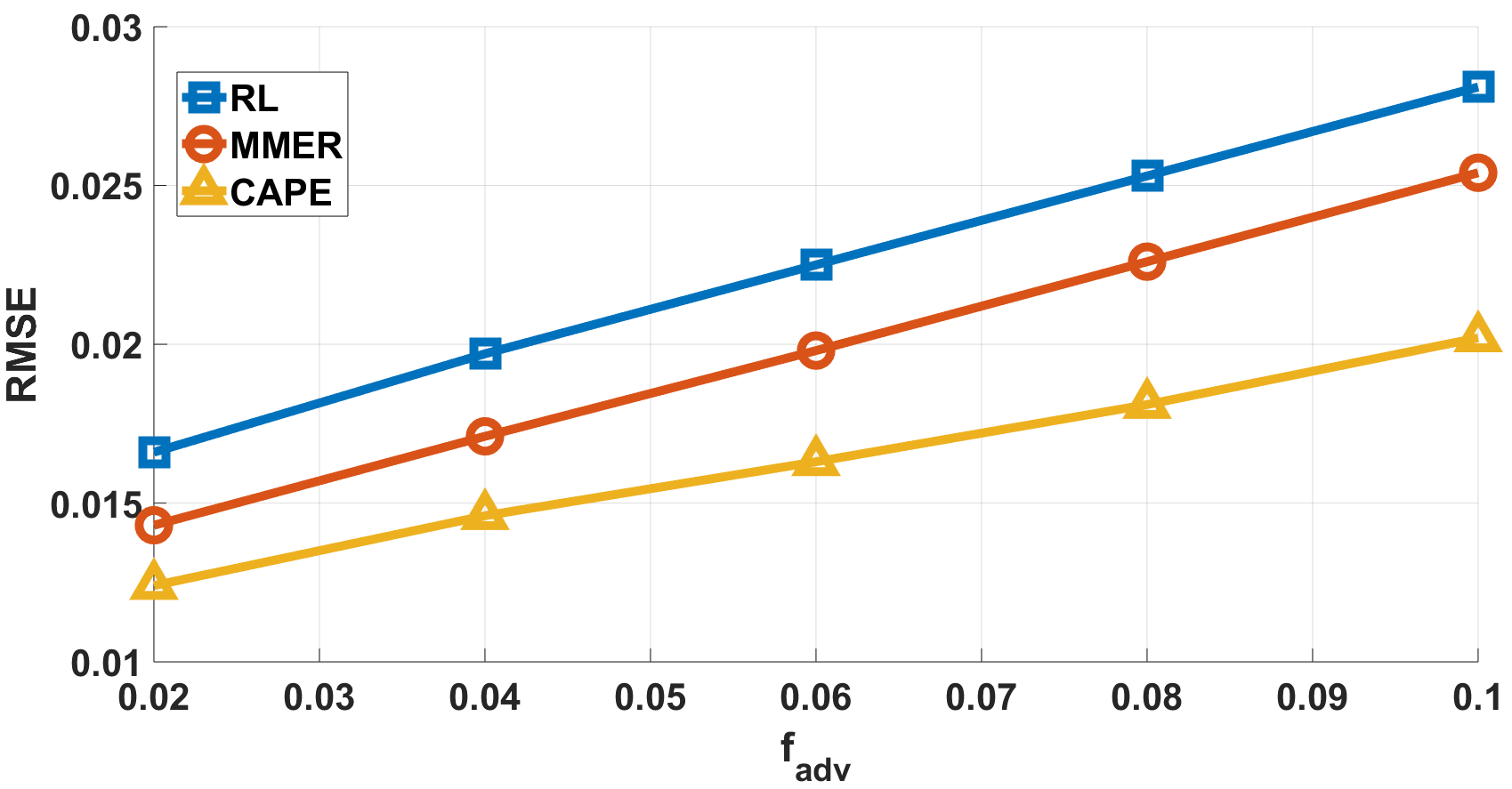}
    \includegraphics[scale=0.19]{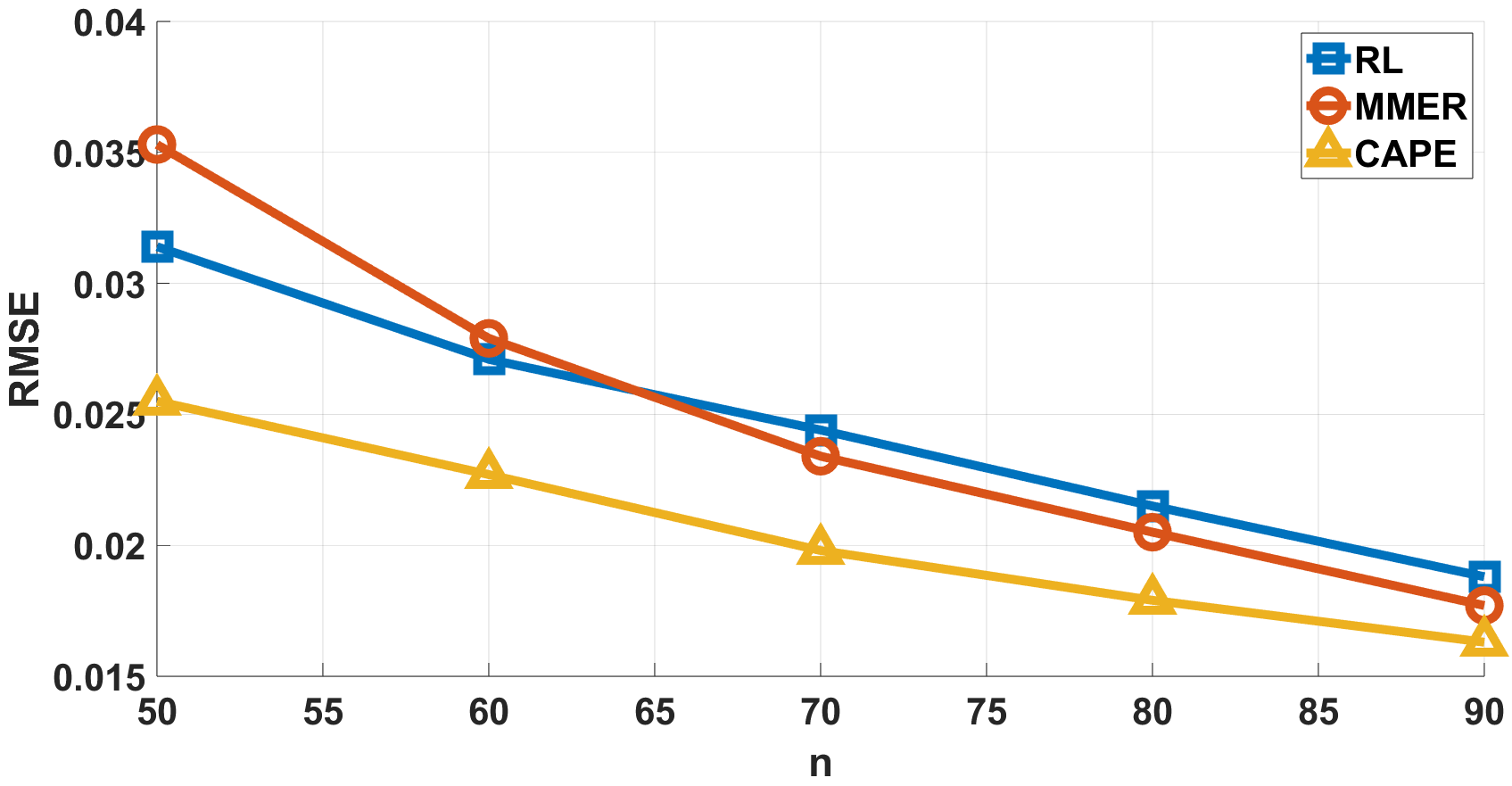}\\
    \includegraphics[scale=0.19]{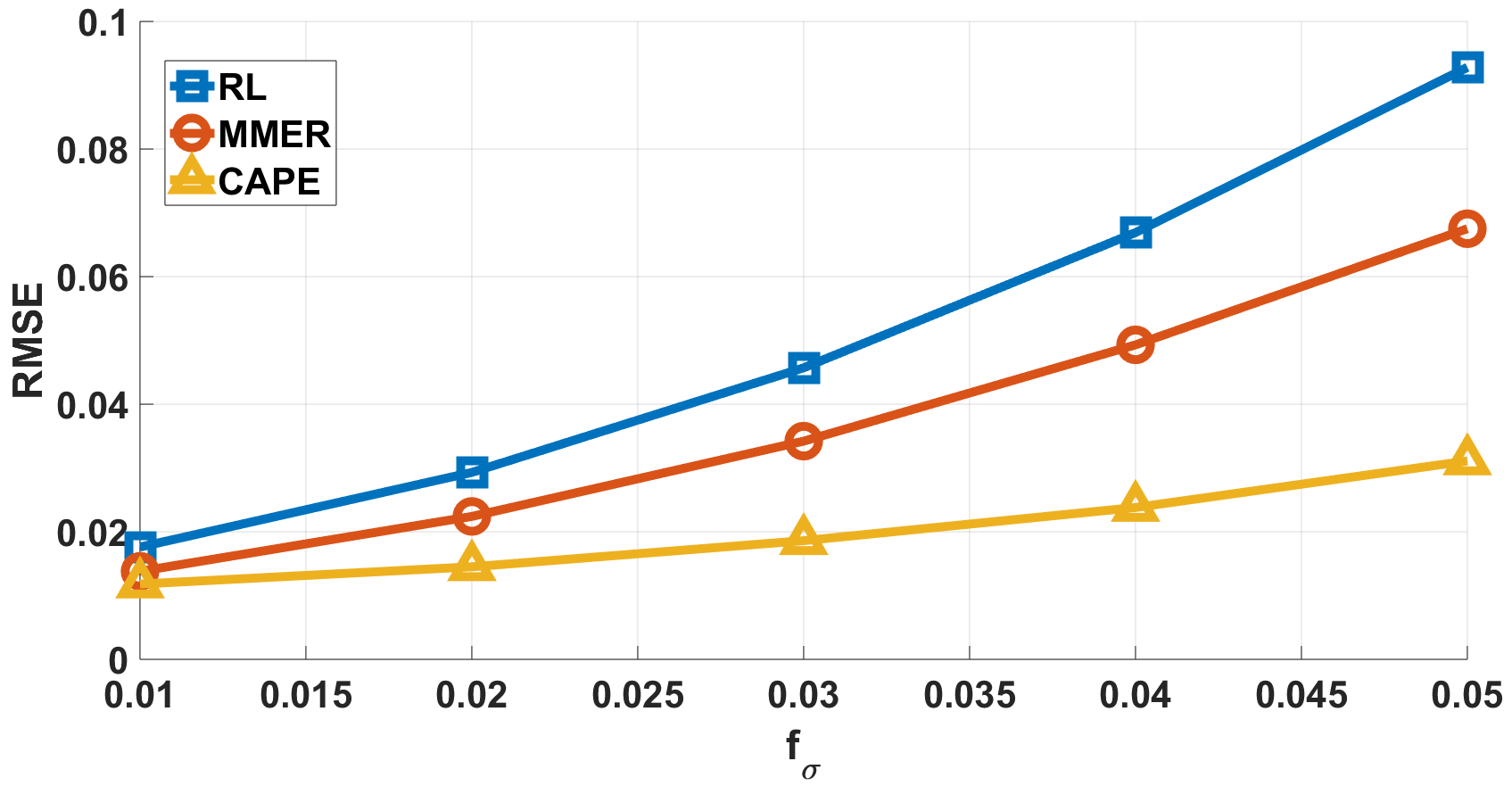}
    \includegraphics[scale=0.19]{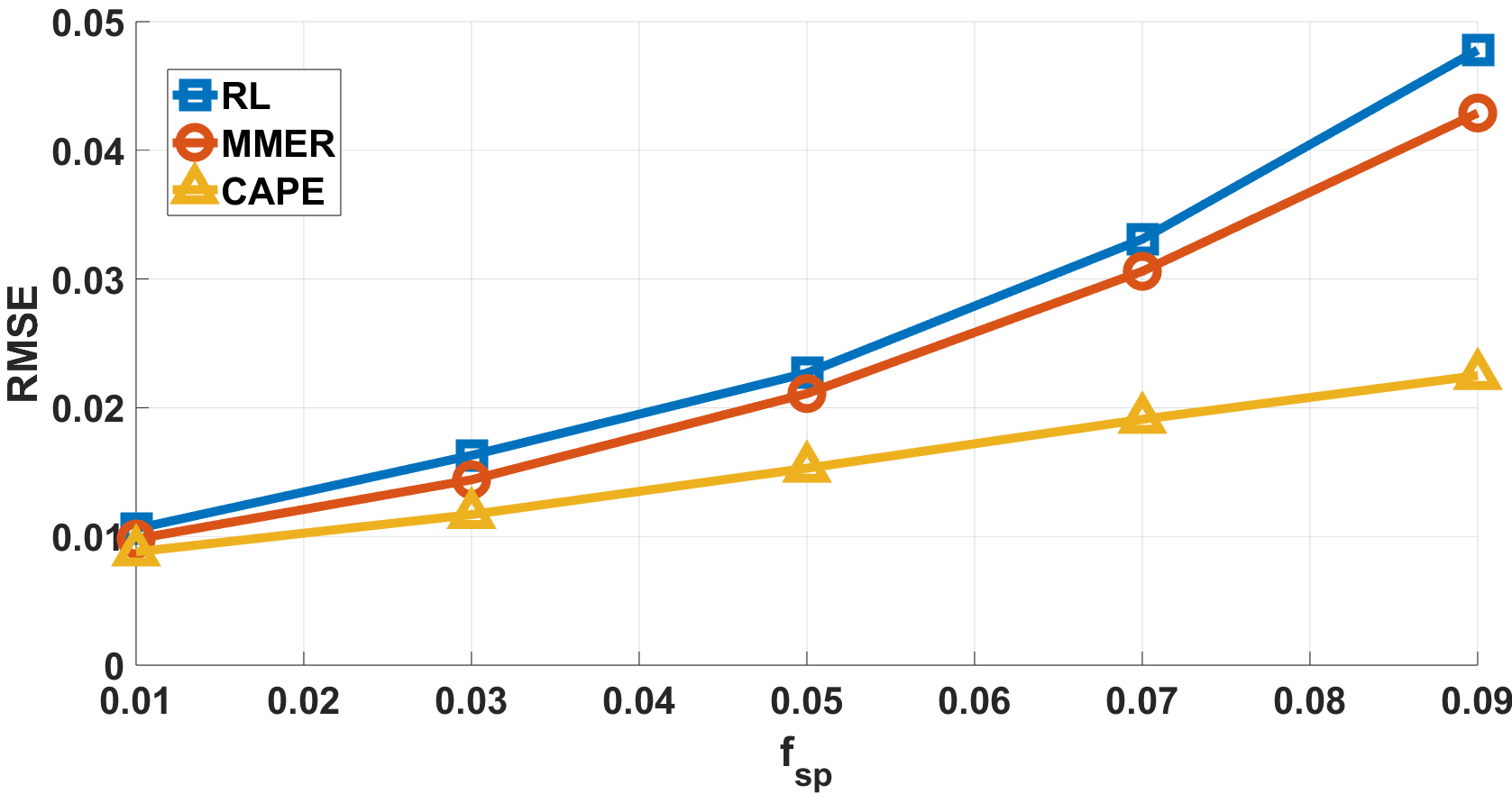}
    \caption{Average RRMSE (100 noise realizations) under \textsf{SSM} mismatches for $\theta=0.1$. Methods: \textsc{Cape}, RL, \textsc{Mmer}. Panels correspond to (\textsf{EA}), (\textsf{EB}), (\textsf{EC}), (\textsf{ED}).}
    \label{fig:Rmse_ssm_0.1}
\end{figure*}

\begin{figure*}[t]
   \centering
    \includegraphics[scale=0.19]{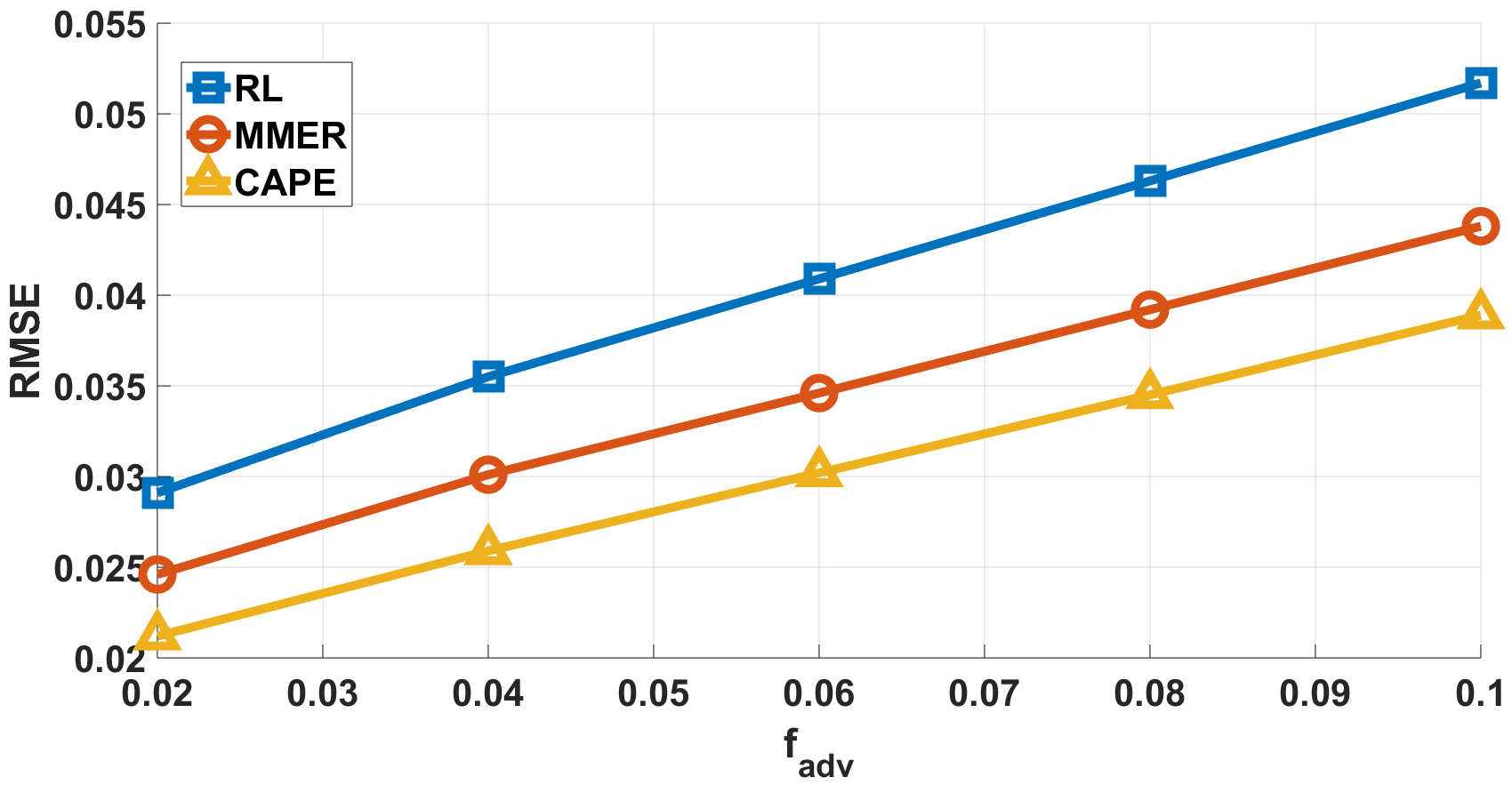}
    \includegraphics[scale=0.19]{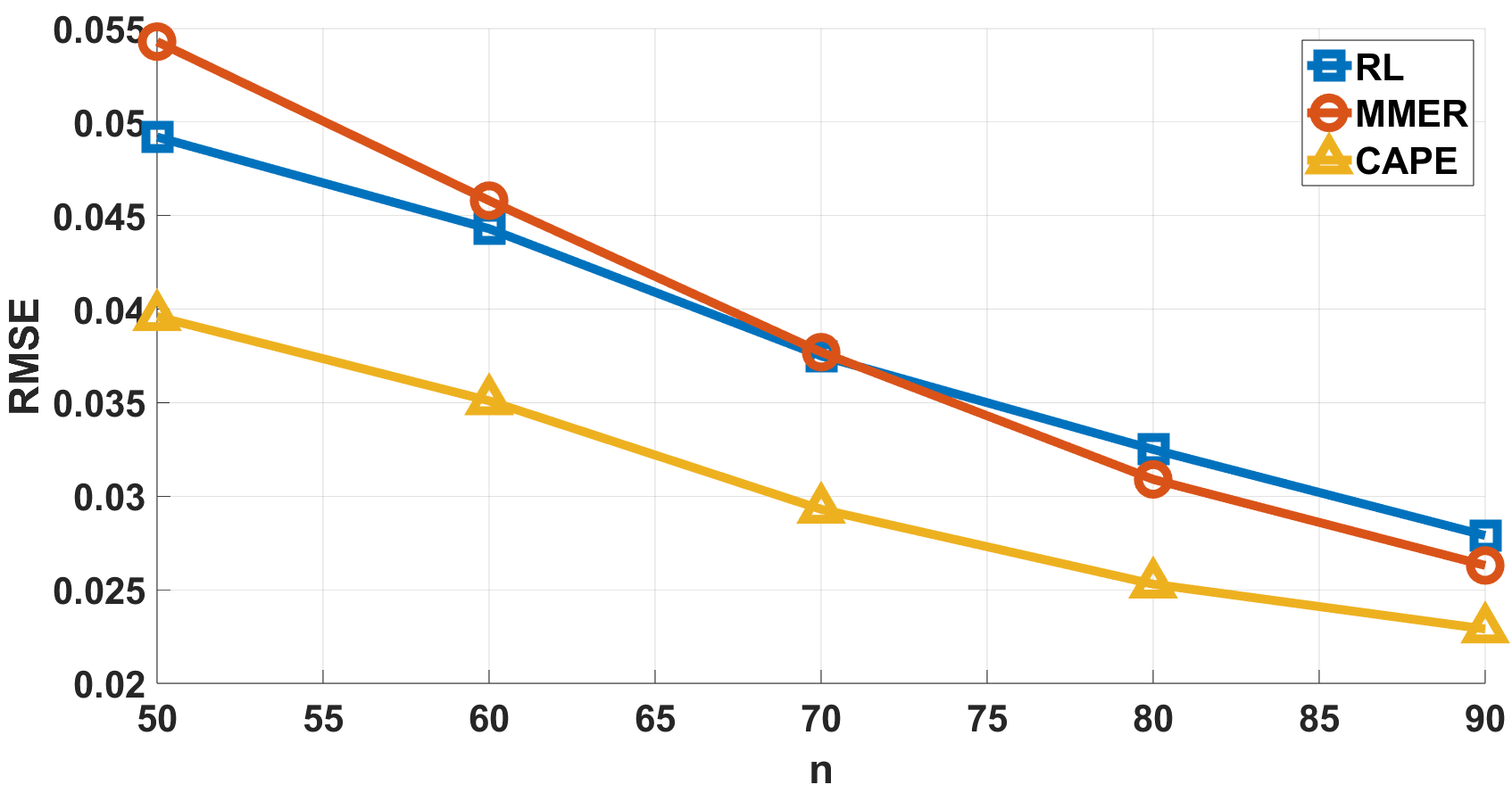}\\
    \includegraphics[scale=0.19]{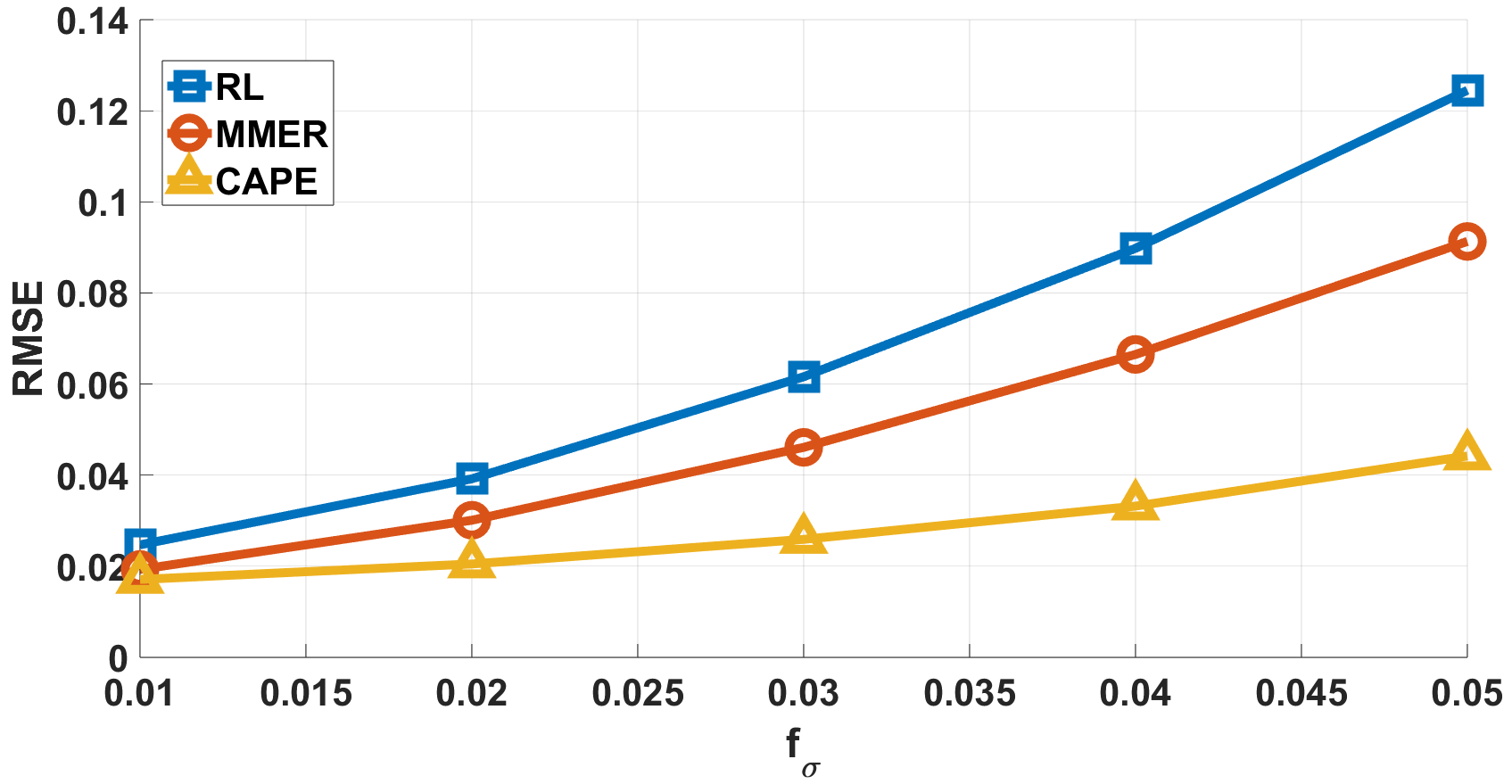}
    \includegraphics[scale=0.19]{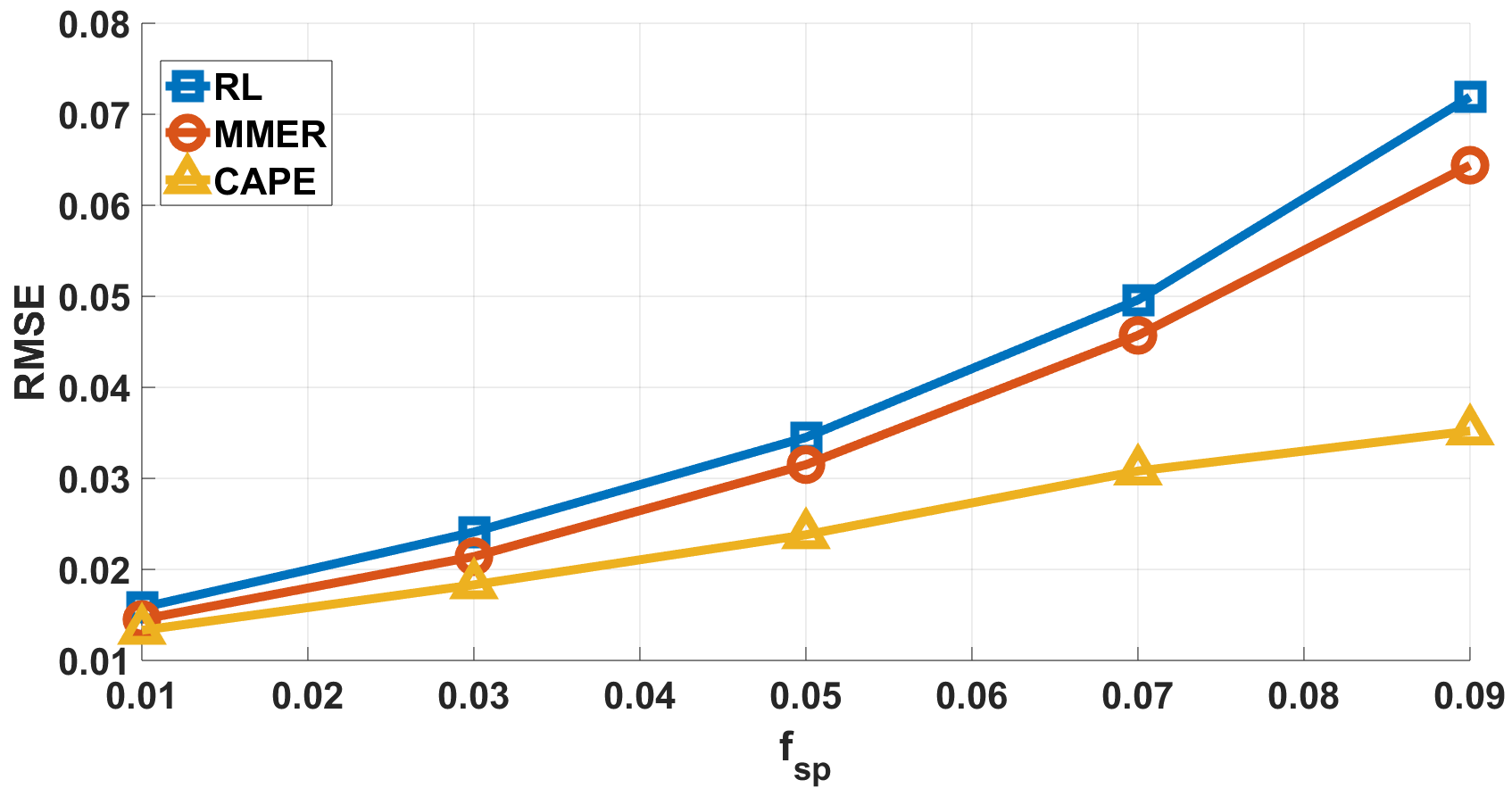}
    \caption{Average RRMSE (100 noise realizations) under \textsf{SSM} mismatches for $\theta=0.5$. Methods: \textsc{Cape}, RL, \textsc{Mmer}. Panels correspond to (\textsf{EA}), (\textsf{EB}), (\textsf{EC}), (\textsf{ED}).}
    \label{fig:Rmse_ssm_0.5}
\end{figure*}

\subsection{Performance Under Log-normal (Gaussian-Approximated) PCR Noise Model}
\begin{table*}[t]
\centering
\caption{Comparison of \textsc{Rl}, \textsc{Mmer} and \textsc{Cape} in the presence of log-normal noise and permutation errors under varying noise standard deviation. The fixed parameters are $n=400, s=10, r=8$.}
\renewcommand{\arraystretch}{1.15}
\setlength{\tabcolsep}{6pt}
\begin{tabular}{|c|ccc|ccc|ccc|}
\hline
$f_{\sigma}$ &
RMSE$_\mathrm{RL}$ & RMSE$_\mathrm{MMER}$ & RMSE$_\mathrm{CAPE}$ &
Sens$_\mathrm{RL}$ & Sens$_\mathrm{MMER}$ & Sens$_\mathrm{CAPE}$ &
Spec$_\mathrm{RL}$ & Spec$_\mathrm{MMER}$ & Spec$_\mathrm{CAPE}$ \\
\hline
0.002 & 0.0269 & 0.0223 & 0.0178 & 0.9814 & 0.9996 & 1.0000 & 0.9918 & 1.0000 & 1.0000 \\
0.004 & 0.0312 & 0.0279 & 0.0207 & 0.9722 & 0.9829 & 0.9945 & 0.9779 & 0.9902 & 0.9956 \\
0.006 & 0.0378 & 0.0308 & 0.0248 & 0.9498 & 0.9695 & 0.9872 & 0.9502 & 0.9725 & 0.9838 \\
0.008 & 0.0509 & 0.0411 & 0.0315 & 0.9017 & 0.9331 & 0.9583 & 0.9268 & 0.9498 & 0.9621 \\
0.010 & 0.0782 & 0.0575 & 0.0398 & 0.8437 & 0.8948 & 0.9194 & 0.8764 & 0.9073 & 0.9235 \\
\hline
\end{tabular}
\label{tab:log_noise}
\end{table*}

In this section, we evaluate the performance of three estimation--correction strategies for sparse permutation errors in pooled testing under multiplicative lognormal measurement noise: 
(i) the Robust Lasso (\textsc{Rl}), 
(ii) the Model-Mismatch Error Rejection (\textsc{Mmer}) algorithm, and 
(iii) the proposed \textsc{Cape} correction algorithm.
The measurement model and noise characteristics follow the \textit{Tapestry} framework for RT-PCR based pooled testing~\cite{Ghosh2021}.

For each positive pool $j$, the clean viral load is given by $z_j = A_j x,$ and the noisy measurement produced by the RT-PCR machine is $y_j = z_j (1+q)^{e_j},$ with $e_j \sim \mathcal{N}(0,\sigma^2),$
as derived in Eqn.~(7) of \cite{Ghosh2021}. 
Here $q \in (0,1)$ represents the per-cycle replication factor of cDNA, and $e_j$ denotes the discrepancy between the true cycle threshold $\tau_j$ and the observed threshold $t_j$, where $\tau_j = t_j + e_j$ and $e_j$ is modeled as $\mathcal{N}(0,\tilde{\sigma}^2)$. In addition to this physically derived noise model, our setting includes sparse permutation errors of sparsity $r$ in the pooling matrix, representing mislabeled or misplaced sample contributions.

We first study reconstruction performance as a function of the noise standard deviation $f_{\mathrm{sig}}$. The pooling parameters are fixed as $n = 400, s = 10, r = 8$ and $r_U=0.2 n$. In Table ~\ref{tab:log_noise}, we see that \textsc{Cape} consistently obtains the lowest RMSE and the highest sensitivity and specificity across all noise levels. 
\textsc{Mmer} comes second with \textsc{Rl} being the worst. 
All methods deteriorate as noise grows, but the degradation is steepest for \textsc{Rl}, indicating greater sensitivity to log-normal noise and permutation errors. \textsc{Cape} maintains its advantage throughout the tested regime. For $f_{\sigma}=0.01$, due to log-normal noise, the first-stage detection procedure rejects a substantial number of measurements, resulting in $|\mathcal{J}| > r_U$ due to Type-I errors. Consequently, some MMEs remain undetected and uncorrected at this stage. Nonetheless, the vast majority of these missed MMEs are successfully identified and corrected in subsequent stages, demonstrating the robustness of \textsc{Cape}.

\begin{table*}[t]
\centering
\caption{Comparison of \textsc{Rl}, \textsc{Mmer} and \textsc{Cape} in the presence of log-normal noise and permutation errors under varying number of measurements $n$. The fixed parameters are $f_{\sigma}=0.01, s=10, r=8$.}
\renewcommand{\arraystretch}{1.15}
\setlength{\tabcolsep}{6pt}
\begin{tabular}{|c|ccc|ccc|ccc|}
\hline
$n$ &
RMSE$_\mathrm{RL}$ & RMSE$_\mathrm{MMER}$ & RMSE$_\mathrm{CAPE}$ &
Sens$_\mathrm{RL}$ & Sens$_\mathrm{MMER}$ & Sens$_\mathrm{CAPE}$ &
Spec$_\mathrm{RL}$ & Spec$_\mathrm{MMER}$ & Spec$_\mathrm{CAPE}$ \\
\hline
250 & 0.1143 & 0.0989 & 0.0892 & 0.8255 & 0.8463 & 0.8812 & 0.8579 & 0.8942 & 0.9195 \\
300 & 0.0941 & 0.0778 & 0.0512 & 0.8922 & 0.9084 & 0.9236 & 0.9119 & 0.9388 & 0.9544 \\
350 & 0.0592 & 0.0403 & 0.0294 & 0.9317 & 0.9503 & 0.9618 & 0.9515 & 0.9649 & 0.9795 \\
400 & 0.0332 & 0.0234 & 0.0178 & 0.9685 & 0.9788 & 0.9872 & 0.9821 & 0.9885 & 0.9924 \\
450 & 0.0252 & 0.0156 & 0.0114 & 0.9906 & 0.9974 & 1.0000 & 0.9986 & 1.0000 & 1.0000\\
\hline
\end{tabular}
\label{tab:log_n}
\end{table*}

Next, we investigate performance under varying number of pools $n$, while keeping $s=10, r=8,$ and $f_{\sigma}=0.01$ fixed.
From Table.\ref{tab:log_n}, increasing the number of measurements significantly improves performance for all estimators. 
\textsc{Cape} benefits the most from additional redundancy, achieving perfect sensitivity and specificity for $n \ge 400$. 
\textsc{Mmer} consistently improves upon RL by removing permutation affected rows before reconstruction. 
The performance gap between \textsc{Cape} and the other algorithms is largest in the moderate measurement regime ($n = 250$--$350$), where explicit MME correction is especially beneficial even in the presence of log-normal noise.

\subsection{Convergence of Stopping functions}
Here, we show that the values of the stopping function given in \eqref{eq:halt_est} based on the \textbf{APE} distance measure converge after multiple stages of correction--see of Alg.~\ref{alg:gen_corr}. We computed the natural logarithm of the function $f_{ape}$ as defined in \eqref{eq:APE} for varying number of permutation errors $r \in \{0,2,4,6,8,10\}$, keeping $\boldsymbol{z}, \boldsymbol{B}, \boldsymbol{\eta}$ fixed. In each case, we computed the stopping function value for $10$ iterations, at which convergence was observed. The other experimental parameters were set to $p=200,n=80,f_{sp}=0.05,f_{\sigma}=0.01$. 
\begin{figure}
    \centering
    \includegraphics[scale=0.2]{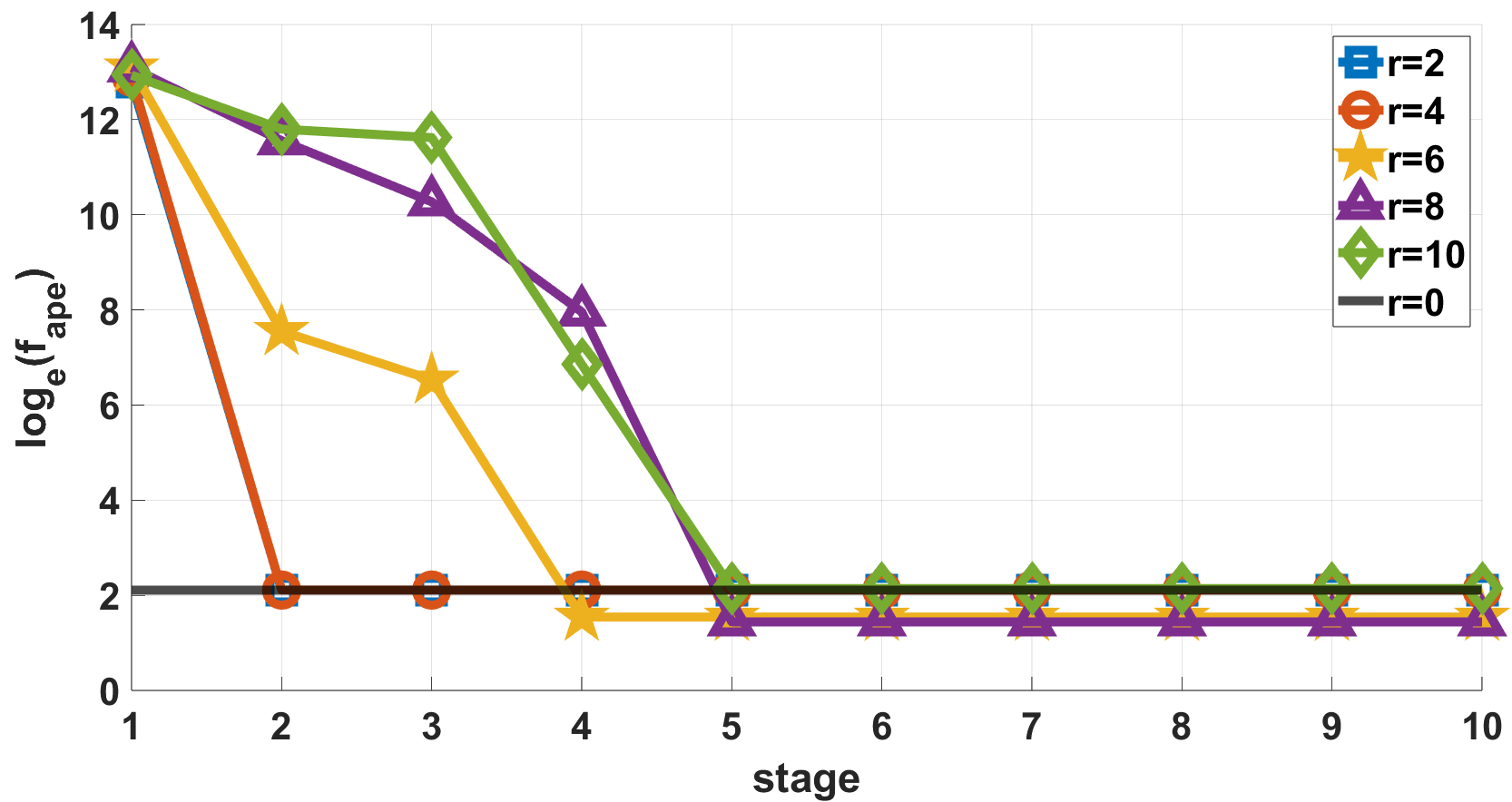}
    \caption{Convergence of Stopping functions: Convergence of stopping function $f_{ape}$ for correction algorithm \textsc{Cape} for permutations errors. Each line in the plot represents the respective function values at all $10$ stages of correction for different numbers of permutation errors given by  $r \in \{0,2,4,6,8,10\}$. The fixed parameters are $p=100,n=80,f_{sp}=0.05,f_{\sigma}=0.01$.}
    \label{fig:conv_func}
\end{figure}
Fig.~\ref{fig:conv_func} shows that for \textsc{Cape}, the stopping function $f_{ape}$ converges for every $r$ to a value close to $f_{ape}$ for $r=0$. This implies that the correction algorithm fixes all or most of the permutation errors. Furthermore, the permutations that miscorrected do not affect the model adversely by the end of stage $10$.

\subsection{Correction of MMEs in multiple stages}
We now empirically study the behavior of \textsc{Cape} across multiple stages of correction as described in Sec.~\ref{sec:mult_Corr}. In this experiment for both the algorithms, the setup was as follows: $p=200, n=100, s=5, r=4, f_{\sigma}=0.01$. After each stage of correction, we reported the average number of measurements correctly detected to have MMEs ($N_{ET}$), the average number of measurements incorrectly detected to have MMEs ($N_{EF}$) and the average actual number of MMEs remaining ($N_E$). In each case, the average was computed over 20 independent noise runs, for a fixed signal $\boldsymbol{\beta}^*$, a fixed pooling matrix $\boldsymbol{B}$ and a fixed set of induced MMEs. We performed this experiment separately for \textbf{SSM}, \textbf{ASM} and Permutation errors. In Table \ref{tab:multiplestages}, we see that by the fourth stage, \textsc{Cape} corrects all $4$ errors in all three cases. We also observe that $N_E, N_{EF}, N_{ET}$ decrease rapidly after the first stage of correction and steadily in subsequent iterations.

\begin{table}
\centering
\begin{tabular}{|c|c|c|c|c|}
\hline
MME type & Stage & $N_E$ & $N_{EF}$ & $N_{ET}$ \\
\hline
Perm. & 1 & 4   & 3.6 & 1.6 \\
Perm. & 2 & 0.4 & 0.2 & 0.4 \\
Perm. & 3 & 0.2 & 0.2 & 0.2 \\
Perm. & 4 & 0   & 0   & 0.2 \\
\hline
\textsf{SSM} & 1 & 4   & 3.6 & 1.6 \\
\textsf{SSM} & 2 & 0.4 & 0.4 & 0.1 \\
\textsf{SSM} & 3 & 0 & 0 & 0.1 \\
\hline
\textsf{ASM} & 1 & 4   & 3.6 & 1.6 \\
\textsf{ASM} & 2 & 0.5 & 0.5 & 0.3 \\
\textsf{ASM} & 3 & 0 & 0 & 0.2 \\
\hline
\end{tabular}
\caption{Avg. \#MMEs ($N_P$), avg. \#MMEs detected correctly ($N_{PT}$) and incorrectly ($N_{PF}$), after each stage of correction for \textsc{Cape} for $r=4$ MMEs of different types. The average is computed over 20 independent noise runs.}
\label{tab:multiplestages}
\end{table}

\subsection{Performance of \textsc{Cape} under Variation in MME Numbers $r$ with a Matched Cutoff Parameter $r_U$}
We now compare the performance of \textsc{Cape} with that of \textsc{Mmer} and the Robust \textsc{Lasso} for different numbers of MMEs given by $r = f_{adv}n$ for all three different MME models keeping $n = 80, p = 200, f_{sp} = 0.05, f_{\sigma} = 0.01$ fixed and with elements of $B$ drawn from Bernoulli($0.1$). This experiment is similar to the setting \textsf{EA} defined in the beginning of this section. But in setting \textsf{EA}, for the detection stage in Alg.~\ref{alg:mme_detection}, we set $r_{U} := \zeta n$ with fixed $\zeta = 0.2$ for all levels $f_{adv}$. However, in this experiment, we set $\zeta := f_{adv} =\{0.02,0.04,0.06,\ldots,0.3\}$ due to which $r_U = r$. Thus, here our aim is to investigate the behavior of both the detection algorithm (\textsc{Mmer}) and the correction method (\textsc{Cape}) when the cut-off parameter $r_U$ matches the true MME count $r$. Note that the detection step may fail to identify all $r$ corrupted rows due to the Type-I error of the \textsc{Odrlt} tests and the because the cutoff on $|\mathcal{J}|$ now equals $r$. Thus this experiment assesses the robustness of \textsc{Cape} in the presence of such initial detection for all three MME models. 

For the correction algorithm, we compute the RRMSE of the Robust \textsc{Lasso} estimate of $\boldsymbol{\beta}^*$ using $\boldsymbol{z}$ and the corrected pooling matrix $\boldsymbol{\hat{B}}$.
In Fig.~\ref{fig:Rmse_zeta}, we plot $\ln (\text{RRMSE})$ (averaged over 25 independent noise runs) for all four estimators.
For all three different MME models, we see that until about $r \approx 0.2 n$, \textsc{Cape} performs the best followed by \textsc{Mmer} and \textsc{Rl}. This implies the correction algorithm works well till $r=0.2n$. The regime beyond that is not practical as one does not expect the technician to make so many mistakes.
\begin{figure}
   \centering
    \includegraphics[scale=0.19]{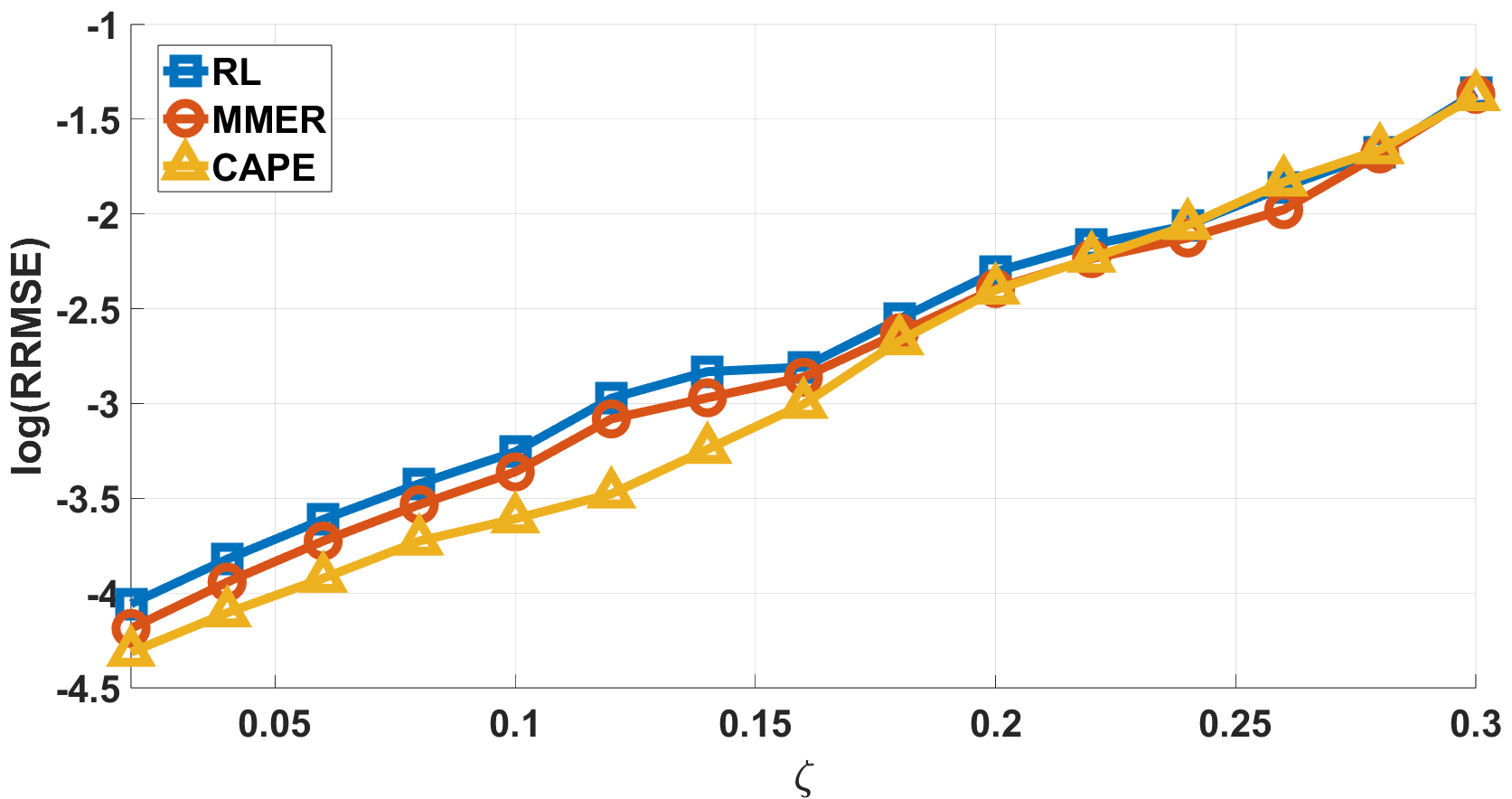}
    \includegraphics[scale=0.19]{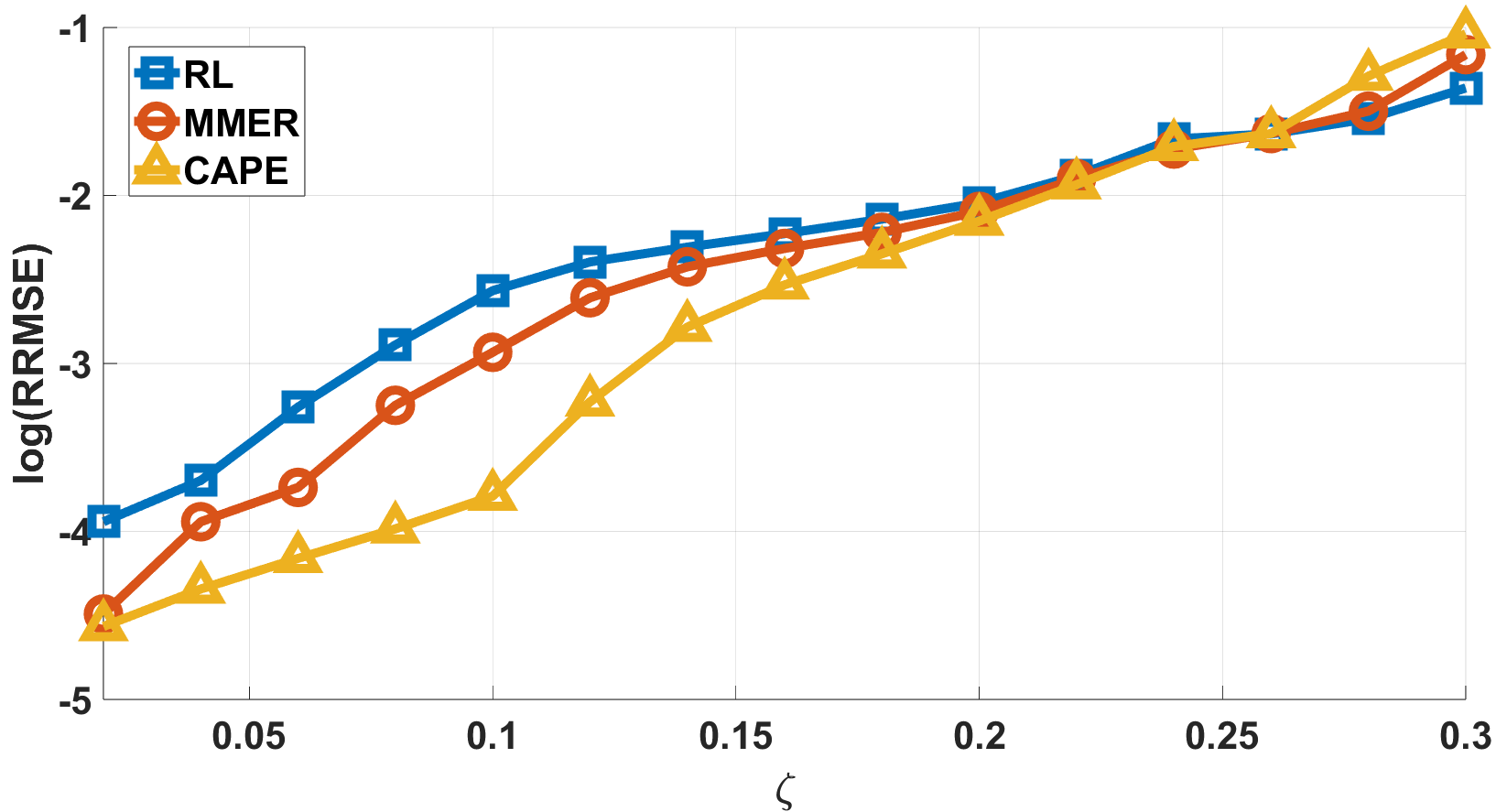}\\
    \includegraphics[scale=0.19]{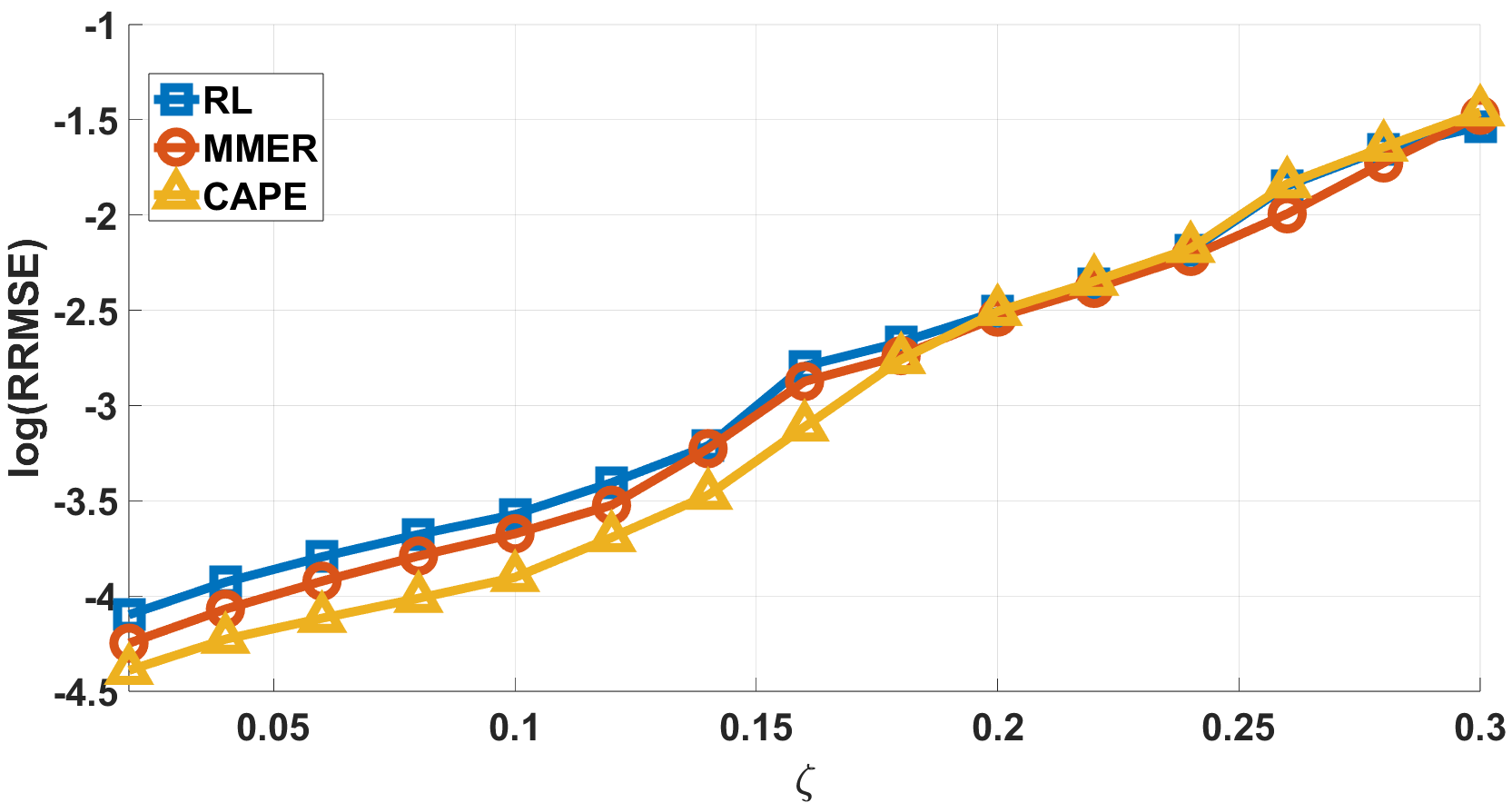}
    \caption{Average $\ln (\text{RRMSE})$ plots (over 100 independent noise runs keeping $\boldsymbol{\beta}^*$, $\boldsymbol{A}$ and $\boldsymbol{\delta}^*$ fixed) over varying $\zeta=r/n=\{0.02,0.04,\ldots,0.3\}$ for detecting defective samples (i.e., non-zero values of $\boldsymbol{\beta}^*$) using
    \textsc{Cape}, Robust \textsc{Lasso} (\textsc{Rl}) and \textsc{Mmer} in the presence of \textsf{SSM} MMEs(top), \textsf{ASM} MMEs(middle) and Permutation error (bottom).
    The design matrix $\boldsymbol{A}$ is has rows i.i.d. from Centered Bernoulli($0.1$). }  
    \label{fig:Rmse_zeta}    
\end{figure}

\section{Conclusions}
\label{sec:conclusions}
We have presented a novel technique for correction of model mismatch errors in pooled measurements in real-valued quantitative group testing, with theoretical guarantees for the quality of correction. It is interesting to note that when the number of MMEs is not too large, the measurements (even under noise and MMEs) contain enough information to allow detection as well as correction of pooled measurements with errors. In particular, neither the detection nor the correction of MMEs requires creation of new pools or any form of re-testing. We have worked with three different models for the MMEs: \textsf{SSM}, \textsf{ASM} and permutations. The \textsf{ASM} model can be easily extended to the case where bit-flips occur in not just adjacent pairs of locations, but any possible pair of locations in a given row of the pooling matrix. In principle, our technique is general enough to be adapted to a large ensemble of other MME models as well as to the case when MMEs of different models co-occur. If the precise model for MMEs is known only partially, the proposed method will still work, albeit at significantly higher computational cost due to the potentially large size of the model perturbation sets $\{\mathcal{C}_i\}_{i=1}^n$. A complete study of this scenario is left as future work. Our work here focuses on random matrices. Deterministic matrices have proved useful in group testing (e.g., Kirkman triples in \cite{Ghosh2021}, matrices such as those in \cite{devore2007deterministic}) and are also much sparser than random generated matrices. MMEs in deterministic matrices will inherently give rise to model perturbation sets $\{\mathcal{C}_i\}_{i=1}^n$ that are relatively smaller. However there is no guarantee that the corrected matrix $\boldsymbol{\hat{A}}$ will satisfy the theoretical conditions such as RIP, REC, etc. for successful signal reconstruction. Extending this work to handle such deterministic matrices is an interesting direction for future work.

\bibliographystyle{plain}
\bibliography{refs}
\end{document}


\title{Supplementary Material: ``Correction of Pooling Matrix Mis-specifications in Compressed Sensing Based Group Testing''}

\author{Shuvayan Banerjee, Radhendushka Srivastava, James Saunderson and Ajit Rajwade ~\IEEEmembership{Senior Member,~IEEE}}

\maketitle

\section{MME models and Construction of Perturbation set $\mathcal{C}_i$}\label{sec:MME_model_cent}

Let $\boldsymbol{B} \in \{0,1\}^{n \times p}$ be the binary pooling matrix, with elements drawn i.i.d. from $\text{Bernoulli}(\theta)$, where $0 < \theta < 1$. Let $n'=\lfloor n/2 \rfloor$.  
Define the centered matrix $\boldsymbol{A} \in \{-1,0,1\}^{n' \times p}$ as for all $i \in \{1,2,\ldots,\lfloor n/2 \rfloor\}$,
\begin{equation}
\boldsymbol{a}_{i\cdot} = \frac{\boldsymbol{b}_{i\cdot} - \boldsymbol{b}_{\lfloor n/2 \rfloor+i\cdot}}{2\theta(1-\theta)}.
\end{equation}  
For each error model, let $\boldsymbol{B}$ denote the binary pooling matrix after the occurrence of measurement matrix errors (MMEs), and let $\boldsymbol{A}$ denote its centered version obtained using the same transformation.
We define $\mathcal{C}_i$ as the set of all \emph{allowable} perturbations of the row $\boldsymbol{b}_{i\cdot}$ under the given error model. 
\subsubsection{\textbf{SSM (Single-Switch Model)}}
Under the Single-Switch Model (SSM), exactly one entry in a row of the binary pooling matrix is flipped. Specifically, for a given row index $i \in \{1,\dots,n\}$, one coordinate $j \in \{1,\dots,p\}$ is chosen uniformly at random, and the corresponding entry undergoes either an exclusion error $b_{ij}=1 \to 0$ or an inclusion error $b_{ij}=0 \to 1$. Let $\bar{\boldsymbol{b}}_{i\cdot}$ denote the perturbed version of $\boldsymbol{b}_{i\cdot}$. The set of allowable perturbations of $\boldsymbol{b}_{i\cdot}$ under the SSM is therefore given by
\begin{equation}
\mathcal{C}_i^{\mathrm{SSM}}
=
\bigl\{
\bar{\boldsymbol{b}}_{i\cdot} \in \{0,1\}^p :
\|\bar{\boldsymbol{b}}_{i\cdot} - \boldsymbol{b}_{i\cdot}\|_0 = 1
\bigr\}.
\end{equation}

\subsubsection{\textbf{ASM (Adjacent-Switch Model)}}
In the Adjacent-Switch Model (ASM), an error consists of swapping the values of two adjacent coordinates within a row of $\boldsymbol{B}$. More precisely, for a given row $i$, an index $j \in \{1,\dots,p\}$ is selected uniformly at random, and the pair of entries at positions $(j,j')$, where $j' = (j \bmod p) + 1$, are exchanged. This results in one of the following two cases, each occurring with equal probability: $(b_{ij}, b_{ij'}) = (0,1) \to (1,0), (b_{ij}, b_{ij'}) = (1,0) \to (0,1)$. Accordingly, the set of allowable perturbations of $\boldsymbol{b}_{i\cdot}$ under the ASM is defined as follows (with $j' = (j \bmod p) + 1$): 
\begin{eqnarray}
\mathcal{C}_i^{\mathrm{ASM}}
=
\bigl\{
\bar{\boldsymbol{b}}_{i\cdot} \in \{0,1\}^p :
\exists\, j \in [p] \text{ s.t. }
\|\bar{\boldsymbol{b}}_{i\cdot} - \boldsymbol{b}_{i\cdot}\|_0 = 2, \nonumber \\
(\bar{b}_{ij}, \bar{b}_{ij'}) \in \{(1,0),(0,1)\}
\bigr\}.
\end{eqnarray}

\subsubsection{\textbf{Permutation Errors}}
In the permutation error model, two distinct measurements are interchanged due to a labeling or handling error. Concretely, this corresponds to swapping two rows $i_1$ and $i_2$ of the binary pooling matrix $\boldsymbol{B}$. As a result, the row appearing in position $i$ after the error may be any one of the original rows of $\boldsymbol{B}$. Thus, the set of allowable perturbations for the $i^{\text{th}}$ row under permutation errors is
\begin{equation}
\mathcal{C}_i^{\mathrm{Perm}}
=
\bigl\{
\boldsymbol{b}_{k\cdot} : k \in \{1,\dots,n\}
\bigr\}.
\end{equation}

We now provide an algorithm to create the set $\mathcal{C}_i$ given a row $\boldsymbol{b}_{i.}$ and a specific MME model. The algorithm returns a matrix $\boldsymbol{P}$ which consists

\begin{algorithm}[h]
\caption{Generation of the set of all valid Perturbations Under SSM, ASM, and PERM}
\label{alg:perturbations}
\begin{algorithmic}[1]
\REQUIRE Binary row $\boldsymbol{b} = (b_1,\dots,b_p) \in \{0,1\}^p$ (row $i$ of $\boldsymbol{B}$), model $\in \{\textsf{SSM}, \textsf{ASM}, \textsf{PERM}\}$, index set $\mathcal{J} \subseteq [n']$ (used only for PERM)
\ENSURE Matrix $\mathcal{P}$ of all valid perturbations of $\boldsymbol{b}$
\STATE $\mathcal{P} \gets \emptyset$
\IF{model $=$ \textsf{SSM}\COMMENT{Single-Switch Model}} 
    \FOR{$j = 1$ to $p$}
        \STATE $\tilde{\boldsymbol{b}} \gets \boldsymbol{b}$
        \STATE $\tilde{b}_j \gets 1 - b_j$  \COMMENT{Flip bit $j$}
        \STATE Append $\tilde{\boldsymbol{b}}$ to $\mathcal{P}$
    \ENDFOR
\ELSIF{model $=$ \textsf{ASM} \COMMENT{Adjacent-Switch Model}} 
    \FOR{$j = 1$ to $p$}
        \STATE $j' \gets (j \bmod p) + 1$ \COMMENT{Circular adjacency}
        \IF{$b_j = 0$ \textbf{and} $b_{j'} = 1$ \COMMENT{Case 1: $(0,1)\rightarrow(1,0)$}} 
            \STATE $\tilde{\boldsymbol{b}} \gets \boldsymbol{b}$
            \STATE $\tilde{b}_j \gets 1,\ \tilde{b}_{j'} \gets 0$
            \STATE Append $\tilde{\boldsymbol{b}}$ to $\mathcal{P}$
        \ENDIF
        \IF{$b_j = 1$ \textbf{and} $b_{j'} = 0$  \COMMENT{Case 2: $(1,0)\rightarrow(0,1)$}}
            \STATE $\tilde{\boldsymbol{b}} \gets \boldsymbol{b}$
            \STATE $\tilde{b}_j \gets 0,\ \tilde{b}_{j'} \gets 1$
            \STATE Append $\tilde{\boldsymbol{b}}$ to $\mathcal{P}$
        \ENDIF
    \ENDFOR
\ELSIF{model $=$ \textsf{PERM}\COMMENT{Permutation Error Model}} 
    \FOR{$k \in \mathcal{J}$ such that $k \neq i$}
        \STATE Append row $\boldsymbol{b}_{k\cdot}$ of $\boldsymbol{B}$ to $\mathcal{P}$
    \ENDFOR
\ENDIF
\STATE \RETURN$\mathcal{P}$
\end{algorithmic}
\end{algorithm}

\section{Proof of Theoretical Results}

\subsection{Proof of Theorem~1:}
\noindent\textbf{Proof of Result (1), For any $i \in \mathcal{J}$ such that $\tilde{\delta}^*_i \ne 0$:}  
Recall that $\boldsymbol{\tilde{b}_{i.}}$ is the $i^{\text{th}}$ row of $\boldsymbol{\tilde{B}}$ corresponding to the $i^{\text{th}}$ pool without any MMEs. Alg. 1 from the main paper using APE corrects the MME in the $i^{\text{th}}$ pool if it selects $\boldsymbol{\hat{b}_{i.}}$ to be equal to $\boldsymbol{\tilde{b}_{i.}}$ from $\mathcal{C}_i$.

Recall that $\hat{\boldsymbol{b}}_{i.} \triangleq \arg\min_{\boldsymbol{\bar{b}_{i.}} \in \mathcal{C}_i} d(z_i,\boldsymbol{\bar{b}_{i.}}, \boldsymbol{\hat{\beta}_{\mathcal{J}^c}})$ with $d(z_i,\boldsymbol{\bar{b}_{i.}}, \boldsymbol{\hat{\beta}_{\mathcal{J}^c}})=|z_i-\boldsymbol{\bar{b}_{i.}}\boldsymbol{\hat{\beta}_{\mathcal{J}^c}}|$. To show $\boldsymbol{\hat{b}_{i.}}=\boldsymbol{\tilde{b}_{i.}}$, it suffices to show that $d(z_i,\boldsymbol{\tilde{b}_{i.}}, \boldsymbol{\hat{\beta}_{\mathcal{J}^c}}) \leq d(z_i,\boldsymbol{\bar{b}_{i.}}, \boldsymbol{\hat{\beta}_{\mathcal{J}^c}})$ for all $\boldsymbol{\bar{b}_{i.}} \in \mathcal{C}_i$.  

Using $z_i = \boldsymbol{\tilde{b}}_{i,.}\boldsymbol{\beta}^* + \eta_i$ and the definition of $\tilde{\delta}^*_i$, we expand:

\begin{eqnarray}
   \nonumber |z_i-\boldsymbol{\tilde{b}_{i.}}\boldsymbol{\hat{\beta}_{\mathcal{J}^c}}| &=& |\eta_i+\boldsymbol{\tilde{b}}_{i.}\boldsymbol{\beta}^*-\boldsymbol{\tilde{b}_{i.}}\boldsymbol{\hat{\beta}_{\mathcal{J}^c}}| \\ &=& |\eta_i + \boldsymbol{\tilde{b}}_{i.}(\boldsymbol{\beta}^*-\boldsymbol{\hat{\beta}_{\mathcal{J}^c}})| \label{eq:val_nobf} \\
   \nonumber |z_i - \boldsymbol{{b}}_{i.}\boldsymbol{\hat{\beta}_{\mathcal{J}^c}}| &=& |\eta_i +\boldsymbol{{b}}_{i.}\boldsymbol{\beta}^*+(\boldsymbol{\tilde{b}}_{i.}-\boldsymbol{b}_{i.})\boldsymbol{\beta}^* - \boldsymbol{{b}}_{i.}\boldsymbol{\hat{\beta}_{\mathcal{J}^c}}|  \nonumber \\ &=& |\eta_i\! +\!(\boldsymbol{\tilde{b}}_{i.}-\boldsymbol{b}_{i.})\boldsymbol{\beta}^*\! +\! \boldsymbol{{b}}_{i.}(\boldsymbol{\beta}^*\!-\!\boldsymbol{\hat{\beta}_{\mathcal{J}^c}})| \label{eq:val_truebf} \\
   \nonumber |z_i - \boldsymbol{\bar{b}}_{i.}\boldsymbol{\hat{\beta}_{\mathcal{J}^c}}| &=& |\eta_i+\boldsymbol{\tilde{b}}_{i.}\boldsymbol{\beta}^*- \boldsymbol{\bar{b}}_{i.}\boldsymbol{\hat{\beta}_{\mathcal{J}^c}}| \nonumber \\ &=& \! |\eta_i \!+\!(\boldsymbol{\tilde{b}}_{i.}\!-\!\boldsymbol{\bar{b}}_{i.})\boldsymbol{\beta}^*\! +\! \boldsymbol{\bar{b}}_{i.}(\boldsymbol{\beta}^*\!-\!\boldsymbol{\hat{\beta}_{\mathcal{J}^c}})| \label{eq:val_indubf}
\end{eqnarray}

Note that, in this case, $\mathcal{J}$ has all MME-affected rows almost surely. Let $\boldsymbol{\hat{\beta}_{\mathcal{J}^c}}$ be the \textsc{Lasso} estimate obtained from the set of $n'-\hat{r}$ clean (i.e. without MME) centered measurements $(\boldsymbol{y}_{\mathcal{J}^c}, \boldsymbol{A}_{\mathcal{J}^c})$ with $n'=\lfloor n/2 \rfloor$. Standard results for the \textsc{Lasso}, as given in Theorem 11.1(b) of \cite{THW2015}, yield:
\begin{equation}
P\left(\|\boldsymbol{\hat{\beta}_{\mathcal{J}^c}}-\boldsymbol{\beta}^*\|_1 \leq C s\sigma\sqrt{\frac{\log p}{n'-\hat{r}}}\right) \geq 1-\frac{1}{p^2}.
\end{equation}
for some constant $C>0$. Under assumption \textbf{A3} and given the technique for MME detection from the main paper, we have $\hat{r}\leq r_U := \zeta n' \leq \zeta \frac{n}{2}$, $0<\zeta<1$. With $n \geq 2\frac{C^2 s^2}{1-\zeta}\log p$, we have $n'-\hat{r} \geq n(1-\zeta)/2 \geq C^2 s^2\log p$, ensuring:
\begin{equation}
Cs\sigma\sqrt{\frac{\log p}{n'-\hat{r}}} \leq \sigma.
\end{equation}
Thus, under $n' \geq \frac{2 C^2 s^2}{1-\zeta}\log p$:
\begin{equation}\label{eq:bound_beta_sigma}
P\left(\|\boldsymbol{\hat{\beta}_{\mathcal{J}^c}}-\boldsymbol{\beta}^*\|_1 \leq \sigma\right) \geq 1-\left(\frac{1}{p^2}\right).
\end{equation}
This implies $|\hat{\beta}_j- \beta^*_j| \leq \sigma$ for all $j\in[p]$ with probability at least $1-\left(\frac{1}{p^2}\right)$.

Let $\boldsymbol{u} \in \mathbb{R}^p$ have i.i.d. entries from $\text{Bernoulli}(\theta)-\theta$ (centered). Since $\|\boldsymbol{u}\|_\infty \leq 1$, we have with probability $\geq 1-\left(\frac{1}{p^2}\right)$:
\begin{equation}\label{eq:cent_bnd_pred}
|\boldsymbol{u}^{\top}(\boldsymbol{\beta}^*-\boldsymbol{\hat{\beta}_{\mathcal{J}^c}})| \leq \|\boldsymbol{\beta}^*-\boldsymbol{\hat{\beta}_{\mathcal{J}^c}}\|_1 \leq \sigma.
\end{equation}
Since $\eta_i \sim N(0,\sigma^2)$, Lemma~\ref{eq:exponential_normal} gives $P(|\eta_i| \leq k\sigma)\geq 1-e^{-k^2/2}$. By triangle inequality and union bound:
\begin{equation}\label{eq:base_bound}
P\left(|\eta_i+ \boldsymbol{\tilde{b}}_{i.}(\boldsymbol{\beta}^*-\boldsymbol{\hat{\beta}_{\mathcal{J}^c}})| \leq (k+1)\sigma\right)\geq 1-\left(e^{-k^2/2}+\frac{1}{p^2}\right).
\end{equation}
From \eqref{eq:val_nobf}:
\begin{equation}\label{eq:val_no_bound}
P\left(|z_i-\boldsymbol{\tilde{b}_{i.}}\boldsymbol{\hat{\beta}_{\mathcal{J}^c}}| \leq (k+1)\sigma\right)\geq 1-\left(e^{-k^2 / 2}+\frac{1}{p^2}\right).
\end{equation}

Under \textbf{A1}, $|(\boldsymbol{\tilde{b}}_{i.}-\boldsymbol{b}_{i.})\boldsymbol{\beta}^*| \geq 2(k+1)\sigma$, $k>0$. Using reverse triangle inequality with \eqref{eq:base_bound} and \eqref{eq:cent_bnd_pred}, with probability $1-\left(e^{-k^2 / 2}+\frac{1}{p^2}\right)$:
\begin{eqnarray}\label{eq:bf_bound}
\nonumber & &|\eta_i  + \boldsymbol{{b}}_{i.}(\boldsymbol{\beta}^*-\boldsymbol{\hat{\beta}_{\mathcal{J}^c}})+(\boldsymbol{\tilde{b}}_{i.}-\boldsymbol{b}_{i.})\boldsymbol{\beta}^*| \\
\nonumber &\geq& \left||(\boldsymbol{\tilde{b}}_{i.}-\boldsymbol{b}_{i.})\boldsymbol{\beta}^*|-|\eta_i  + \boldsymbol{{b}}_{i.}(\boldsymbol{\beta}^*-\boldsymbol{\hat{\beta}_{\mathcal{J}^c}})|\right| \\
&\geq& 2(k+1)\sigma - (k+1)\sigma = (k+1)\sigma .
\end{eqnarray}
Thus:
\begin{eqnarray}\label{eq:bf1_bound}
P\left(|\eta_i  + \boldsymbol{{b}}_{i.}(\boldsymbol{\beta}^*-\boldsymbol{\hat{\beta}_{\mathcal{J}^c}})+(\boldsymbol{\tilde{b}}_{i.}-\boldsymbol{b}_{i.})\boldsymbol{\beta}^*| \geq  (k+1)\sigma\right) \nonumber \\ \geq 1-\left(e^{-k^2 / 2}+\frac{1}{p^2}\right).
\end{eqnarray}
From \eqref{eq:val_truebf}:
\begin{equation}\label{eq:val_true_bound}
P\left( |z_i - \boldsymbol{{b}}_{i.}\boldsymbol{\hat{\beta}_{\mathcal{J}^c}} | \geq (k+1)\sigma\right)   \geq 1-\left(e^{-k^2 / 2}+\frac{1}{p^2}\right).
\end{equation}
Combining \eqref{eq:val_no_bound} and \eqref{eq:val_true_bound}: $|z_i - \boldsymbol{{b}}_{i.}\boldsymbol{\hat{\beta}_{\mathcal{J}^c}} | \geq (k+1)\sigma \geq |z_i-\boldsymbol{\tilde{b}_{i.}}\boldsymbol{\hat{\beta}_{\mathcal{J}^c}}|$ with high probability. By union bound:
\begin{eqnarray}\label{eq:truebf_nobf}
\nonumber  P\Big(|z_i - \boldsymbol{{b}}_{i.}\boldsymbol{\hat{\beta}_{\mathcal{J}^c}} | \geq  |z_i-\boldsymbol{\tilde{b}_{i.}}\boldsymbol{\hat{\beta}_{\mathcal{J}^c}}| \Big) \geq 1-2\left(e^{-k^2 / 2}+\frac{1}{p^2}\right).
\end{eqnarray}
Thus $|z_i-\boldsymbol{\tilde{b}_{i.}}\boldsymbol{\hat{\beta}_{\mathcal{J}^c}}| \leq |z_i - \boldsymbol{{b}}_{i.}\boldsymbol{\hat{\beta}_{\mathcal{J}^c}} |$ with high probability.

For all $\boldsymbol{\bar{b}_{i.}} \in \mathcal{C}_i- \{\boldsymbol{b}_{i,.},\boldsymbol{\tilde{b}}_{i,.}\}$, using \textbf{A1} and \eqref{eq:cent_bnd_pred}, with probability $\geq 1-1/p-1/n$:
\begin{equation}\label{eq:max_bound_b_hat}
\underset{\boldsymbol{\bar{b}_{i.}}\in \mathcal{C}_i}{\max}|\boldsymbol{\bar{b}}_{i.}(\boldsymbol{\beta}^*-\boldsymbol{\hat{\beta}_{\mathcal{J}^c})}|\leq \|\boldsymbol{\beta}^*-\boldsymbol{\hat{\beta}_{\mathcal{J}^c}}\|_1 \leq \sigma.
\end{equation}
By triangle inequality on \eqref{eq:max_bound_b_hat} and $P(|\eta_i| \leq k\sigma)\geq 1-e^{-k^2 / 2}$:
\begin{eqnarray}\label{eq:joint_small}
\nonumber & & |\eta_i + \boldsymbol{\bar{b}}_{i.}(\boldsymbol{\beta}^*-\boldsymbol{\hat{\beta}_{\mathcal{J}^c}})| \leq |\eta_i|+\underset{\boldsymbol{\bar{b}_{i.}}\in \mathcal{C}_i}{\max}|\boldsymbol{\bar{b}}_{i.}(\boldsymbol{\beta}^*-\boldsymbol{\hat{\beta})}| \\
 &\leq& (k+1)\sigma \ \forall \boldsymbol{\bar{b}}_{i.} \in \mathcal{C}_i.
\end{eqnarray}
This holds with probability $\geq 1- \left(e^{-k^2 / 2}+\frac{1}{p^2}\right)$.

Under \textbf{A2}, $|(\boldsymbol{\tilde{b}}_{i.}-\boldsymbol{\bar{b}}_{i.})\boldsymbol{\beta}^*| \geq 2(k+1)\sigma$, $k>0$. For fixed $i$, using reverse triangle inequality similar to \eqref{eq:bf_bound}, $\forall \boldsymbol{\bar{b}}_{i.} \in \mathcal{C}_i$,
\begin{eqnarray}
\label{eq:bi_bar_bound}
\nonumber P(|\eta_i + \boldsymbol{\bar{b}}_{i.}(\boldsymbol{\beta}^*-\boldsymbol{\hat{\beta}_{\mathcal{J}^c}})+(\boldsymbol{\tilde{b}}_{i.}-\boldsymbol{\bar{b}}_{i.})\boldsymbol{\beta}^*| \geq (k+1)\sigma )\\ \geq 1-\left(e^{-k^2 / 2}+\frac{1}{p^2}\right).
\end{eqnarray}
Combining the previous two results with  \eqref{eq:val_indubf}, for fixed $i$:
\begin{eqnarray}\label{eq:val_indu_bound}
P\left(|z_i - \boldsymbol{\bar{b}}_{i.}\boldsymbol{\hat{\beta}_{\mathcal{J}^c}}| \geq (k+1)\sigma \ \forall \boldsymbol{\bar{b}}_{i.} \in \mathcal{C}_i\right) \nonumber \\ \geq 1-\left(e^{-k^2 / 2}+\frac{1}{p^2}\right).
\end{eqnarray}
Since $|z_i-\boldsymbol{\tilde{b}_{i.}}\boldsymbol{\hat{\beta}_{\mathcal{J}^c}}| \leq (k+1)\sigma$ from \eqref{eq:val_no_bound}:
\begin{eqnarray}\label{eq:indubf_truebf}
\nonumber & &P\Big(|z_i-\boldsymbol{\tilde{b}_{i.}}\boldsymbol{\hat{\beta}_{\mathcal{J}^c}}| \leq |z_i - \boldsymbol{\hat{b}}_{i.}\boldsymbol{\hat{\beta}_{\mathcal{J}^c}}| \ \forall \boldsymbol{\hat{b}}_{i.} \in \mathcal{C}_i\Big)\\ &\geq& 1-2\left(e^{-k^2 / 2}+\frac{1}{p^2}\right).
\end{eqnarray}

Recall $d(z_i,\boldsymbol{\tilde{b}_{i.}}, \boldsymbol{\hat{\beta}_{\mathcal{J}^c}})=|z_i-\boldsymbol{\tilde{b}_{i.}}\boldsymbol{\hat{\beta}_{\mathcal{J}^c}}|$ and $d(z_i,\boldsymbol{\bar{b}_{i.}}, \boldsymbol{\hat{\beta}_{\mathcal{J}^c}})=|z_i - \boldsymbol{\bar{b}}_{i.}\boldsymbol{\hat{\beta}_{\mathcal{J}^c}}|$. Combining \eqref{eq:bi_bar_bound} and \eqref{eq:indubf_truebf} via union bound:
\begin{eqnarray}\label{eq:val_error_order}
\nonumber & &P\left(d(z_i,\boldsymbol{\tilde{b}_{i.}}, \boldsymbol{\hat{\beta}_{\mathcal{J}^c}}) \leq \underset{\boldsymbol{\bar{b}_{i.}} \in \mathcal{C}_i}{\min}d(z_i,\boldsymbol{\bar{b}_{i.}}, \boldsymbol{\hat{\beta}_{\mathcal{J}^c}})\right)\\ &\geq& 1-3\left(e^{-k^2 / 2}+\frac{1}{p^2}\right).
\end{eqnarray}
This completes the proof of \textbf{Result (1)}.

\noindent \textbf{Proof of Result (2):} For $i \in \mathcal{J}$ such that $\tilde{\delta}^*_i = 0$, we have  $\boldsymbol{\tilde{b}_{i.}}= \boldsymbol{{b}_{i.}}$. To show $\boldsymbol{\hat{b}_{i.}}= \boldsymbol{{b}_{i.}}$, it suffices that $|z_i - \boldsymbol{{b}}_{i.}\boldsymbol{\hat{\beta}_{\mathcal{J}^c}}| \leq |z_i - \boldsymbol{\bar{b}}_{i.}\boldsymbol{\hat{\beta}_{\mathcal{J}^c}}|$ for all $\boldsymbol{\bar{b}_{i.}} \in \mathcal{C}_i$.

Similar to \eqref{eq:val_truebf} and \eqref{eq:val_indubf}:
\begin{eqnarray}
|z_i - \boldsymbol{{b}}_{i.}\boldsymbol{\hat{\beta}_{\mathcal{J}^c}}| &\!\!\!=\!\!\!&  |\eta_i + \boldsymbol{{b}}_{i.}(\boldsymbol{\beta}^*-\boldsymbol{\hat{\beta}_{\mathcal{J}^c}})| \label{eq:val_nobf_Jc} \\
|z_i - \boldsymbol{\bar{b}}_{i.}\boldsymbol{\hat{\beta}_{\mathcal{J}^c}}| &\!\!\!=\!\!\!& |\eta_i +(\boldsymbol{{b}}_{i.}-\boldsymbol{\bar{b}}_{i.})\boldsymbol{\beta}^* + \boldsymbol{\bar{b}}_{i.}(\boldsymbol{\beta}^*-\boldsymbol{\hat{\beta}_{\mathcal{J}^c}})|\nonumber\\&&\label{eq:val_indubf_Jc}
\end{eqnarray}
Using logic similar to \eqref{eq:indubf_truebf} and $\boldsymbol{\tilde{b}_{i.}}= \boldsymbol{{b}_{i.}}$:
\begin{eqnarray*}
\nonumber & &P\Big(|z_i-\boldsymbol{{b}_{i.}}\boldsymbol{\hat{\beta}_{\mathcal{J}^c}}| \leq |z_i - \boldsymbol{\bar{b}}_{i.}\boldsymbol{\hat{\beta}_{\mathcal{J}^c}}| \ \forall \boldsymbol{\bar{b}}_{i.} \in \mathcal{C}_i\Big)\\ &\geq& 1-2\left(e^{-k^2 / 2}+\frac{1}{p^2}\right).
\end{eqnarray*}
Thus:
\begin{eqnarray}\label{eq:ineq_Jc}
\nonumber P\left(d(z_i,\boldsymbol{{b}_{i.}}, \boldsymbol{\hat{\beta}_{\mathcal{J}^c}}) \leq \underset{\boldsymbol{\hat{b}_{i.}} \in \mathcal{C}_i}{\min}d(z_i,\boldsymbol{\hat{b}_{i.}}, \boldsymbol{\hat{\beta}})\right)\\ \geq 1-2\left(e^{-k^2 / 2}+\frac{1}{p^2}\right).
\end{eqnarray}
This completes the proof. \hfill{$\blacksquare$}

\subsection{Proof of Theorem~2:}\label{sec:proof_exact_sol}
The optimization problem given in Alg.~3 in \cite{Banerjee24} for $h=1/(2\theta(1-\theta))$ is as follows:
\begin{eqnarray}\label{eq:C4W_primal}
\underset{\boldsymbol{W}}{\textrm{minimize}}& &\quad 
\sum_{j=1}^p\boldsymbol{w_{.j}}^{\top}\boldsymbol{w_{.j}} \\ \nonumber
\textrm{subject to}& & 
\mathsf{C0} : \boldsymbol{w_{.j}}^{\top}\boldsymbol{w_{.j}}/n \leq 1+h^2\sqrt{\frac{\log p}{n}} \ \forall \ j \in [p] 
\\ & & \nonumber \mathsf{C1} : \left| \left(\boldsymbol{I_p}-\frac1n\boldsymbol{W^{\top}A}\right)\right|_\infty \leq \mu_1, \\ \nonumber & & \mathsf{C2} : \left|\frac{1}{p}\left(\boldsymbol{I_n}-\frac{1}{n}\boldsymbol{A}\boldsymbol{W}^{\top}\right) \boldsymbol{A}\right|_\infty \leq \mu_2, \\ \nonumber & & \mathsf{C3} : \left|\left(\frac{\boldsymbol{AW^{\top}}}{p}-\boldsymbol{I_n}\right)\right|_{\infty} \leq \mu_3,
\end{eqnarray}
where $\mu_1\triangleq 2h^2 \sqrt{\frac{2 \log p}{n}}, \mu_2 \triangleq 4h^3\sqrt{\frac{\log 2np}{np}}+\frac{1}{n}$, $\mu_3\triangleq 2h^2 \sqrt{\frac{2 \log n}{p}}$, $h=\frac{1}{2\theta(1-\theta)}$ and $\boldsymbol{W}$ is an $n \times p$ matrix. The motivation for all four constraints, as well as for the values of $\mu_1, \mu_2, \mu_3$, has been provided in \cite{Banerjee24} (see Alg.~3). In order to prove that the optimal solution to the above problem is $\boldsymbol{w}_{i.}=\frac{p(1-\mu_3)}{\|\boldsymbol{a}_{i.}\|_2^2}\boldsymbol{a}_{i.}$ for all $i \in [n]$, we first consider a relaxed version of this problem with only the constraint $\mathsf{C3}$. The relaxed problem is given as follows:
\begin{eqnarray}\label{eq:relaxed_opt_W}
    \underset{\boldsymbol{W}}{\min}& &\quad 
\frac{1}{n}\sum_{i=1}^n\boldsymbol{w_{i.}}^{\top}\boldsymbol{w_{i.}}
\\ \nonumber
\textrm{subject to}& &  \left|\left(\frac{\boldsymbol{AW^{\top}}}{p}-\boldsymbol{I_n}\right)\right|_{\infty} \leq \mu_3.
\end{eqnarray}
The objective function in \eqref{eq:C4W_primal} is equivalent to that of \eqref{eq:relaxed_opt_W} as $\sum_{j=1}^p\boldsymbol{w_{.j}}^{\top}\boldsymbol{w_{.j}}=trace(\boldsymbol{W^\top W})=trace(\boldsymbol{W W^\top}) = \sum_{i=1}^n\boldsymbol{w_{i.}}^{\top}\boldsymbol{w_{i.}}$. We write the relaxed problem in this modified (but equivalent) form as it allows separability with respect to the rows of $\boldsymbol{W}$. The equivalent separable optimization problem for \eqref{eq:relaxed_opt_W} is given as follows for all $i \in [n]$:
\begin{eqnarray}\label{eq:separable_relaxed_opt_W}
    \underset{\boldsymbol{w_{i.}}}{\min}& &\quad  \boldsymbol{w_{i.}}^{\top}\boldsymbol{w_{i.}}/p \\ \nonumber \textrm{subject to}& &  \left\| \boldsymbol{w_{i.}}^\top \boldsymbol{A}^{\top}/p - \boldsymbol{e_i}\right\|_{\infty} \leq \mu_3,
\end{eqnarray}
where $\boldsymbol{e_i}$ is the $i^{\textrm{th}}$ row of $\boldsymbol{I_n}$.
We define $\tau(\boldsymbol{A})=\max_{i\neq k}\frac{|\boldsymbol{a}_{i.}^\top \boldsymbol{a}_{k.}|}{\|\boldsymbol{a}_{i.}\|_2^2}$. For an $n\times p$ matrix $\boldsymbol{A}$, let us define the following quantity for all $i \in [n]$,
\begin{equation}\label{eq:Ldef}
L_i := \frac{1}{p}\|\boldsymbol{a}_{i.}\|_2^2.
\end{equation} 
Also, let us define for all $i \ne k \in [n]$,
\begin{equation}\label{eq:nudef}
|\nu_{ik}| := \frac{1}{p}|\boldsymbol{a}_{i.}^\top\boldsymbol{a}_{k.}|.
\end{equation} 
We have from Result (3) of Lemma~7 of \cite{Banerjee24} , 
\begin{equation}\label{eq:max_nu}
   P\left(\underset{i \ne k \in [n]}{\max}|\nu_{ik}| \leq \frac{1}{4\theta^2(1-\theta)^2}2\sqrt{\frac{2\log n}{ p}}\right) \geq 1-\frac{2}{n^2}-\frac{2}{n^3}.
\end{equation}
Furthermore, from \eqref{eq:lower_bound_L_row} of  Lemma~\ref{th:lower_bnd_L_row}, we have,
$$P\left(\!\left(1\!+\!2\sqrt{2}\sqrt{\frac{\log n}{p}}\right)\! \geq\! L_i \!\geq\! \left(1\!-\!2\sqrt{2}\sqrt{\frac{\log n}{p}}\right) \!\right)\!  \geq\! 1\!-\!\frac{2}{n^4}.$$
Since $\mu_3 = \frac{1}{4\theta^2(1-\theta)^2}2\sqrt{2}\sqrt{\frac{\log n}{p}} \geq 2\sqrt{2}\sqrt{\frac{\log n}{p}}$. 
Therefore, 
for $i \in [n]$,
\begin{equation}\label{eq:tail_indiv_L}
    P\left(1+\mu_3\geq L_i \geq 1-\mu_3 \right)  \geq 1-\frac{2}{n^4}.
\end{equation}
Taking union bound over all $i \in [n]$ for \eqref{eq:tail_indiv_L}, we have, 
\begin{equation}\label{eq:tail_L_full}
    P\left(1+\mu_3\geq\underset{i \in [n]}{\min} L_i \geq 1-\mu_3 \right) \geq 1-\frac{2n}{n^4} \geq 1-\frac{2}{n^3}.
\end{equation}
Therefore using \eqref{eq:max_nu} and \eqref{eq:tail_L_full}, with high probability, we have  
\begin{equation}\label{eq:theta_c}
     \tau \leq \frac{\underset{i \ne k}{\max}|\nu_{ik}|}{\underset{i \in [n]}{\min}L_i} \leq  \frac{\mu_3}{1-\mu_3},
\end{equation}
where $\mu_3$ is as defined earlier in \eqref{eq:C4W_primal}.
Note that \eqref{eq:theta_c} implies that $(1-\mu_3)\tau \leq \mu_3$.
Now, we show that $\boldsymbol{w}_{i.}=\frac{p(1-\mu_3)}{\|\boldsymbol{a}_{i.}\|_2^2}\boldsymbol{a}_{i.}$ is the optimal solution of \eqref{eq:separable_relaxed_opt_W}.

\noindent\textbf{Primal feasibility:} The choice  $\boldsymbol{w}_{i.}=\frac{p(1-\mu_3)}{\|\boldsymbol{a}_{i.}\|_2^2}\boldsymbol{a}_{i.}$ is primal feasible. This is because substituting this choice into the constraint from \eqref{eq:separable_relaxed_opt_W}, we have from \eqref{eq:theta_c} that $|(1-\mu_3)\tau| \leq \mu_3$, 
\[ \left\|\frac{(1-\mu_3)p}{p\|\boldsymbol{a}_{i.}\|_2^2}\boldsymbol{a_{i.}}^\top \boldsymbol{A}^{\top} - \boldsymbol{e}_i\right\|_{\infty} \leq \max\{ |\mu_3|, |(1-\mu_3)\tau|\} \leq \mu_3\]
with high probability. For the column $\boldsymbol{a}_{i.}$, the LHS is $\left|(1-\mu_3)p {\boldsymbol{a}}^{\top}_{i.}{\boldsymbol{a}}_{i.}/(p\|\boldsymbol{a}_{i.}\|^2_2)-1\right| = \mu_3$. For other columns, the LHS become $\left|(1-\mu_3)p {\boldsymbol{a}}^{\top}_{i.}{\boldsymbol{a}}_{k.}/(p\|{a}_{i.}\|^2_2)\right| \leq |(1-\mu_3)\tau|$.

\noindent\textbf{Primal objective function value:} The primal objective function value is given by
$$\frac{1}{p}\|\boldsymbol{w}_{i.}\|^2_2 =\dfrac{(1-\mu_3)^2}{(\|\boldsymbol{a}_{i.}\|_2^2/p)^2}\|\boldsymbol{a}_{i.}\|_2^2/p =\dfrac{(1-\mu_3)^2}{\|\boldsymbol{a_{i.}}\|_2^2/p}.$$  

\noindent\textbf{The Fenchel Dual Problem:}
We derive the dual of the problem in \eqref{eq:separable_relaxed_opt_W} using Fenchel duality. For each $i \in [n]$, we are given an optimization problem of the form
\[ \min_{\boldsymbol{w}} f(\boldsymbol{w}) + g_i\left(\frac{1}{p}\boldsymbol{w}^\top \boldsymbol{A}^{\top}\right),\]
where $f$ and $g_i$ are convex. The Fenchel dual for this problem is
\[ \max_{\boldsymbol{u}} -f^*\left(-\frac{1}{p}\boldsymbol{A}\boldsymbol{u}\right) - g^*_i(-\boldsymbol{u})\]
where $f^*$ and $g^*_i$ are the convex conjugates of $f$ and $g_i$, respectively.
\\
Therefore, we take,
\begin{eqnarray*}
    f(\boldsymbol{w}) = \frac{1}{p}\|\boldsymbol{w}\|^2 \quad\textup{and} \\
    g_i(\boldsymbol{w}) = \begin{cases} 0 & \textup{if $\|\boldsymbol{w} - \boldsymbol{e}_i\|_{\infty} \leq \mu_3$}\\ \infty & \textup{otherwise}\end{cases}.
\end{eqnarray*}

Then we have
\begin{eqnarray*}
f^*(\boldsymbol{u}) &=& \sup_{\boldsymbol{w}} \langle \boldsymbol{u},\boldsymbol{w}\rangle - f(\boldsymbol{w}) = \frac{p}{4}\|\boldsymbol{u}\|^2, \\
g^*_i(\boldsymbol{u}) &=& \sup_{\boldsymbol{w}} \langle \boldsymbol{u},\boldsymbol{w}\rangle - g_i(\boldsymbol{w}) = \sup_{\|\boldsymbol{w}-\boldsymbol{e}_i\|_{\infty}\leq \mu_3}\langle \boldsymbol{u},\boldsymbol{w}\rangle \\ &=& y_i + \mu_3\sqrt{1-n/p}\|\boldsymbol{u}\|_1.
\end{eqnarray*}
This gives a dual problem in the form 
\[ \sup_{\boldsymbol{u}} - \frac{1}{4p}\boldsymbol{u}^\top \boldsymbol{A} \boldsymbol{A}^\top\boldsymbol{u} + u_i - \mu_3\|\boldsymbol{u}\|_1.\]

\noindent\textbf{Dual feasibility:}
The point $\boldsymbol{u} = \frac{2(1-\mu_3)}{\|\boldsymbol{a}_{i.}\|_2^2/p}\boldsymbol{e}_j$ is feasible for the dual (trivially). 

\noindent\textbf{Dual objective function value:} The corresponding dual objective function value is
\begin{align*}
&-\frac{1}{4p}\boldsymbol{u}^\top \boldsymbol{A}^\top \boldsymbol{A}\boldsymbol{u} + u_i - \mu_3\|\boldsymbol{u}\|_1 \\ & = -\frac{1}{4p} \|\boldsymbol{a}_{i.}\|^2 \frac{4(1-\mu_3)^2}{(\|\boldsymbol{a_{i.}}\|_2^2/p)^2} +\frac{2(1-\mu_3)}{\|\boldsymbol{a_{i.}}\|_2^2/p} - \mu_3\frac{2(1-\mu_3)}{\|\boldsymbol{a_{i.}}\|_2^2/p}\\
&= -\frac{(1-\mu_3)^2}{\|\boldsymbol{a_{i.}}\|_2^2/p}+2\frac{(1-\mu_3)^2}{\|\boldsymbol{a_{i.}}\|_2^2/p} = \frac{(1-\mu_3)^2}{\|\boldsymbol{a_{i.}}\|_2^2/p}.\end{align*}

This establishes that an optimal solution for the primal of the relaxed problem given in \eqref{eq:relaxed_opt_W} is $\boldsymbol{w}_{i.}=\frac{p(1-\mu_3)}{\|\boldsymbol{a}_{i.}\|_2^2}\boldsymbol{a}_{i.}$ for all $i \in [n]$ and an optimal solution to the dual is $\boldsymbol{u} = \frac{2(1-\mu_3)}{\|\boldsymbol{a}_{i.}\|_2^2/p}\boldsymbol{e}_j$. Since the primal and dual objective function values are equal, we know that these are the optimal solutions.

Now, in order to complete the proof, all we need to show is that the solution $\boldsymbol{w}_{i.}=\frac{p(1-\mu_3)}{\|\boldsymbol{a}_{i.}\|_2^2}\boldsymbol{a}_{i.}$ is feasible for constraints $\mathsf{C0}$, $\mathsf{C1}$ and $\mathsf{C2}$ with high probability.
This holds as when the optimal solution of a relaxed problem belongs to the feasible space of a more constrained problem, then it is also the optimal solution of the constrained problem. In the following, we define $c \triangleq \frac{p(1-\mu_3)}{\|\boldsymbol{a}_{i.}\|_2^2}$. 
\\
\\
\noindent \textbf{Satisfying constraint $\mathsf{C0}$:}
We need to show that the solution $\boldsymbol{w_{i.}} = \frac{p(1-\mu_3)}{\|\boldsymbol{a}_{i.}\|_2^2}\boldsymbol{a_{i.}}$ satisfies the condition $\boldsymbol{w_{.j}}^{\top}\boldsymbol{w_{.j}}/n \leq 1+\frac{1}{4\theta^2(1-\theta)^2}\sqrt{\frac{\log p}{n}} \ \forall \ j \in [p] $.
Therefore substituting $\boldsymbol{w_{i.}} = \frac{p(1-\mu_3)}{\|\boldsymbol{a}_{i.}\|_2^2}\boldsymbol{a_{i.}}$, we have for all $j \in [p]$, with high probability,
\begin{eqnarray}\label{eq:C0_sat}
  \nonumber  \boldsymbol{w_{.j}}^{\top}\boldsymbol{w_{.j}}/n &=& (1-\mu_3)^2 \frac{1}{n}\sum_{i=1}^n \frac{a_{ij}^2}{\|\boldsymbol{a}_{i.}\|_2^4/p^2}\\ &\leq& \frac{(1-\mu_3)^2}{(1-\mu_3)^2}  \left(1+\frac{1}{4\theta^2(1-\theta)^2}\sqrt{\frac{\log p}{n}}\right)
\end{eqnarray}
The last equality holds as $\frac{1}{n}\sum_{i=1}^n {a_{ij}^2} \leq  \left(1+\frac{1}{4\theta^2(1-\theta)^2}\sqrt{\frac{\log p}{n}}\right)$ with high probability as given in Result (4) of Lemma.~7 of \cite{Banerjee24}, and also from \eqref{eq:tail_indiv_L}. 
Therefore, for $\theta \in (0,0.5]$, we have from \eqref{eq:C0_sat},
$ \boldsymbol{w_{.j}}^{\top}\boldsymbol{w_{.j}}/n \leq \left(1+\frac{1}{4\theta^2(1-\theta)^2}\sqrt{\frac{\log p}{n}}\right)$, where $h=\frac{1}{2\theta(1-\theta)}$. 
\\
\\
\noindent \textbf{Satisfying constraint $\mathsf{C1}$:}
We need to show that the solution $\boldsymbol{w}_{i.} \triangleq \frac{p(1-\mu_3)}{\|\boldsymbol{a}_{i.}\|_2^2}\boldsymbol{a}_{i.}$ satisfies the constraint $\mathsf{C1}$, that is we need to show that
$\left\| \left(\boldsymbol{e}_{.j}-\boldsymbol{w_{.j}}^{\top}\boldsymbol{A}/p\right)\right\|_{\infty} \leq \mu_1$ for all $j \in [p]$. 
Let us define the vector $\boldsymbol{v}=\left(\boldsymbol{e}_{.j}-\boldsymbol{w_{.j}}^{\top}\boldsymbol{A}/p\right)$, where $w_{ij}=\frac{p(1-\mu_3)}{\|\boldsymbol{a}_{i.}\|_2^2} a_{ij}$ for all $i \in [n], \, j \in [p]$. Therefore, the $j$th element of $\boldsymbol{v}$ can be written as follows for $v_j>0$,
\begin{eqnarray*}
    v_j &=& \frac{1-\mu_3}{n}\sum_{i=1}^n \frac{a_{ij}^2}{\|\boldsymbol{a}_{i.}\|_2^2/p} -1 \leq \frac{(1-\mu_3)}{1-\mu_3}\frac{1}{n}\sum_{i=1}^n a_{ij}^2 -1 \\ &\leq&  \left(1+\frac{1}{4\theta^2(1-\theta)^2}\sqrt{\frac{\log p}{n}}\right)-1 \leq \mu_1
\end{eqnarray*}
The aforementioned statement holds with high probability. 
The first inequality holds using \eqref{eq:tail_indiv_L} and the second inequality comes from Result (4) of Lemma~7 of \cite{Banerjee24}. 
Now for $v_j<0$, we have with high probability,
\begin{eqnarray*}
    -v_j &=& 1-(1-\mu_3) \frac{1}{n}\sum_{i=1}^n \frac{a_{ij}^2}{\|\boldsymbol{a}_{i.}\|_2^2/p} \leq  1-\frac{1-\mu_3}{1+\mu_3} \frac{1}{n}\sum_{i=1}^n a_{ij}^2 \\ &\leq& 1-\frac{1-\mu_3}{1+\mu_3}\left(1-\frac{1}{4\theta^2(1-\theta)^2}\sqrt{\frac{\log p}{n}}\right) \\ &=& 1-\frac{1-\mu_3}{1+\mu_3} + \frac{1-\mu_3}{1+\mu_3}\frac{1}{4\theta^2(1-\theta)^2}\sqrt{\frac{\log p}{n}} \\ &\leq& \frac{2\mu_3}{1+\mu_3} + \frac{1}{4\theta^2(1-\theta)^2}\sqrt{\frac{\log p}{n}} \\ &\leq& \frac{1}{4\theta^2(1-\theta)^2}2\sqrt{\frac{2\log p}{n}} = \mu_1.
\end{eqnarray*}
The first inequality comes from \eqref{eq:tail_indiv_L} and the second inequality comes from the fact that $-\frac{1}{n}\sum_{i=1}^n {a_{ij}^2} \leq  -\left(1-\frac{1}{4\theta^2(1-\theta)^2}\sqrt{\frac{\log p}{n}}\right)$ with high probability as given in Result (4) of Lemma~7 of \cite{Banerjee24}. The third inequality comes from the fact that $\frac{1-\mu_3}{1+\mu_3} \leq 1$.
The last inequality comes from the fact that $\frac{2\mu_3}{1+\mu_3} \leq 2\mu_3 \leq  \frac{1}{4\theta^2(1-\theta)^2}\sqrt{\frac{\log p}{n}}$ for large enough $n, p (n \ll p)$ because $\mu_3 = O\left(\sqrt{\frac{\log n}{p}}\right) \ll \sqrt{\frac{\log p}{n}}$. Therefore, we have that for $v_j\geq 0$, $v_j \leq \mu_1$ and when $v_j \leq 0$, $v_j \geq -\mu_1$, both with high probability. This implies that with high probability, $|v_j| \leq \mu_1$.

 In case of $l\ne j \in [p]$, the $l^\text{th}$ element of the vector $\boldsymbol{v}$ is as follows
\begin{eqnarray*}
     |v_l| = \left|\frac{1-\mu_3}{n}\sum_{i=1}^n \frac{a_{ij}a_{il}}{\|\boldsymbol{a}_{i.}\|_2^4/p^2} \right| \leq \left|\frac{(1-\mu_3)}{(1-\mu_3)}\frac{1}{n}\sum_{i=1}^n a_{ij}a_{il} \right|  \leq \mu_1.
\end{eqnarray*}
 This holds with high probability. The first inequality comes from \eqref{eq:tail_indiv_L}, the second inequality from from Result (1) of Lemma~7  of \cite{Banerjee24}.
 This implies that the exact solution satisfies constraint $\mathsf{C1}$. 
\\
\\
\noindent \textbf{Satisfying Constraint $\mathsf{C2}$:}
We need to show that the solution ${w}_{ij} \triangleq \frac{p(1-\mu_3)}{\|\boldsymbol{a}_{i.}\|_2^2}{a}_{ij}$ for all $i \in [n], \, j \in [p]$, satisfies constraint $\mathsf{C2}$, that is we need to show that $\left|\frac{1}{p}\left(\boldsymbol{I_n}-\frac{1}{n}\boldsymbol{A}\boldsymbol{W}^{\top}\right) \boldsymbol{A}\right|_\infty \leq \mu_2$.
Let us define the matrix $\boldsymbol{U}=\frac{1}{p}\left(\boldsymbol{I_n}-\frac{1}{n}\boldsymbol{A}\boldsymbol{W}^{\top}\right) \boldsymbol{A}= \frac1p\left(\boldsymbol{A}-\frac{1}{n}\boldsymbol{A}\boldsymbol{W}^{\top}\boldsymbol{A}\right)$. 
For all $i \in [n], j \in [p]$, we expand $u_{ij}$ the same way as done as in Result (2) of Lemma~7 of \cite{Banerjee24} to get, 
\begin{eqnarray}\label{eq:u_ij_struct}
   \nonumber u_{ij}&=& \frac{1}{np}\left(\sum_{\substack{l=1\\l\neq j}}^{p}\sum_{\substack{k=1\\k\neq i}}^{n}a_{il}w_{kl}a_{kj}\right)+ \frac{a_{ij}}{np}\left(\sum_{k=1}^{n}w_{kj}a_{kj}-1\right) \\ &+& \frac{a_{ij}}{np}\left(\sum_{\substack{l=1\\l\neq j}}^{p}w_{il}a_{il}-1\right) +  \frac{p-1}{np}a_{ij}.
\end{eqnarray}
It is enough to show that $\underset{i \ne j}{\max}|u_{ij}| \leq \mu_2$ with high probability. Let us consider the structures on the RHS of \eqref{eq:u_ij_struct} one by one, using the solution for $\boldsymbol{w}_{i,.}$. We have,
\begin{eqnarray}\label{eq:u_struct_1}
 \nonumber & & \left|\frac{1}{np}\left(\sum_{\substack{l=1\\l\neq j}}^{p}\sum_{\substack{k=1\\k\neq i}}^{n}a_{il}w_{kl}a_{kj}\right)\right| \\ \nonumber &=& \left|\frac{1}{np}\left(\sum_{\substack{l=1\\l\neq j}}^{p}\sum_{\substack{k=1\\k\neq i}}^{n}\frac{(1-\mu_3)}{\|\boldsymbol{a}_{i.}\|_2^2/p}a_{il}a_{kl}a_{kj}\right)\right| 
 \\
 &\leq& \left|\frac{1}{np}\left(\sum_{\substack{l=1\\l\neq j}}^{p}\sum_{\substack{k=1\\k\neq i}}^{n}\frac{(1-\mu_3)}{(1-\mu_3)}a_{il}a_{kl}a_{kj}\right)\right| \nonumber\\ &=& \left|\frac{1}{np}\left(\sum_{\substack{l=1\\l\neq j}}^{p}\sum_{\substack{k=1\\k\neq i}}^{n}a_{il}a_{kl}a_{kj}\right)\right|.
\end{eqnarray}

The inequality in \eqref{eq:u_struct_1} holds with high probability using \eqref{eq:tail_indiv_L}.
Next, we have, 
\begin{eqnarray}\label{eq:u_struct_2}
  \nonumber & &  \left|\frac{a_{ij}}{np}\left(\sum_{k=1}^{n}w_{kj}a_{kj}-1\right)\right| \\ \nonumber &=& \left|\frac{a_{ij}}{np}\left(\sum_{k=1}^{n}\frac{(1-\mu_3)}{\|\boldsymbol{a}_{k.}\|_2^2/p}a_{kj}a_{kj}-1\right)\right| \\ &\leq& \left|\frac{a_{ij}}{np}\left(\sum_{k=1}^{n}a_{kj}^2-1\right)\right|.
\end{eqnarray}
Similarly, the last inequality of \eqref{eq:u_struct_2} holds with high probability using \eqref{eq:tail_indiv_L} and follows the same logic as given to ensure that the constraint \textsf{C1} is satisfies.
Using similar logic, we have with high probability,
\begin{equation}\label{eq:u_struct_3}
    \left|\frac{a_{ij}}{np}\left(\sum_{\substack{l=1\\l\neq j}}^{p}w_{il}a_{il}-1\right)\right| \leq  \left|\frac{a_{ij}}{np}\left(\sum_{\substack{l=1\\l\neq j}}^{p}a_{il}^2-1\right)\right|. 
\end{equation}
Therefore, using triangle inequality on \eqref{eq:u_ij_struct} to join together \eqref{eq:u_struct_1}, \eqref{eq:u_struct_2}, and \eqref{eq:u_struct_3}, we have with high probability for all $i \in [n], j \in [p]$,
\begin{eqnarray}\label{eq:u_full}
   \nonumber |u_{ij}| &\leq& \left|\frac{1}{np}\left(\sum_{\substack{l=1\\l\neq j}}^{p}\sum_{\substack{k=1\\k\neq i}}^{n}a_{il}a_{kl}a_{kj}\right)\right|+
    \left|\frac{a_{ij}}{np}\left(\sum_{k=1}^{n}a_{kj}^2-1\right)\right|
    \\ &+&\left|\frac{a_{ij}}{np}\left(\sum_{\substack{l=1\\l\neq j}}^{p}a_{il}^2-1\right)\right|+\left|\frac{p-1}{np}a_{ij}\right|.
\end{eqnarray}
Therefore using Result (2) of Lemma~7 of \cite{Banerjee24}, we have that $\underset{i \ne j}{\max} |u_{ij}| \leq \mu_2$ holds with high probability. This proves that the exact solution satisfies constraint $\mathsf{C2}$ with high probability.
\\
This completes the proof of this theorem. \hfill{$\blacksquare$}

\begin{lemma}\label{th:lower_bnd_L_row}
   Given $\boldsymbol{A}$ is an $n \times p$ matrix with entries drawn i.i.d. from Centered Bernoulli ($\theta$). Let $L_i=\|\boldsymbol{a}_{i.}\|_2^2/p$. For all $i \in [n]$,
    \begin{eqnarray}\label{eq:L_tail_row}
     \nonumber  P\left(\left(1+2\sqrt{2}\sqrt{\frac{\log n}{p}}\right) \geq L_i \geq \left(1-2\sqrt{2}\sqrt{\frac{\log n}{p}}\right) \right) \\ \geq 1-\frac{2}{n^4}.
    \end{eqnarray}
\end{lemma}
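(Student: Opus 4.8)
The plan is to regard $L_i=\|\boldsymbol{a}_{i.}\|_2^{2}/p=\frac{1}{p}\sum_{j=1}^{p}a_{ij}^{2}$ as a normalized sum of i.i.d.\ bounded random variables and to apply a Hoeffding-type tail bound. For a fixed row index $i$, the entries $a_{i1},\dots,a_{ip}$ are independent, since the Bernoulli variates defining distinct columns are independent; hence the summands $a_{i1}^{2},\dots,a_{ip}^{2}$ are i.i.d. The two facts I would record about a generic summand are: (i) $\mathbb{E}[a_{ij}^{2}]=1$, from the second-moment computation for a centered-Bernoulli entry, so that $\mathbb{E}[L_i]=1$; and (ii) $a_{ij}^{2}$ takes values in a bounded interval $[0,M]$, with $M\le 2$ in the relevant range of $\theta$ (in the symmetric case $\theta=1/2$ the summand is in fact constant and $L_i\equiv 1$).

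Given (i) and (ii), I would apply Hoeffding's inequality to the centered i.i.d.\ sum $L_i-1=\frac{1}{p}\sum_{j=1}^{p}\bigl(a_{ij}^{2}-1\bigr)$, obtaining
\[
P\bigl(|L_i-1|\ge t\bigr)\;\le\;2\exp\!\Bigl(-\tfrac{2pt^{2}}{M^{2}}\Bigr)\qquad\text{for every }t>0 .
\]
Taking $t=2\sqrt{2}\,\sqrt{(\log n)/p}$ makes the exponent equal to $16(\log n)/M^{2}$, so the right-hand side is at most $2\,n^{-16/M^{2}}\le 2/n^{4}$ whenever $M\le 2$; this is exactly the asserted bound, the factor $2$ being the union bound over the two one-sided tails. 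If one wishes to avoid any restriction on $\theta$ (or to sharpen the exponent), one may instead use a Bernstein inequality with $\operatorname{Var}(a_{ij}^{2})$ in place of the crude range $M$: since $a_{ij}^{2}-1$ is bounded, its sub-Gaussian (a fortiori sub-exponential) tail is immediate and only the constants change.

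This lemma is the ``row'' counterpart of Result~(4) of Lemma~7 of \cite{Banerjee24}, which controls the column averages $\frac{1}{n}\sum_{i}a_{ij}^{2}$; the present statement follows by the same mechanism with the roles of $n$ and $p$ interchanged, so the bookkeeping mirrors that proof. I expect the only genuine point of care to be tracking the constants: one must check that the boundedness constant $M$ for $a_{ij}^{2}$ is small enough that the Hoeffding exponent at deviation $2\sqrt{2}\,\sqrt{(\log n)/p}$ dominates $4\log n$, which is where the explicit form of the centered-Bernoulli entries (and, if needed, a mild restriction such as $\theta\in[\theta_0,1-\theta_0]$) enters; the concentration step itself is entirely routine.
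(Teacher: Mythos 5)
Your proposal follows essentially the same route as the paper's proof: both view $L_i$ as an average of $p$ i.i.d.\ summands $a_{ij}^2$ with mean $1$, apply a standard concentration inequality for that sum (you use Hoeffding with an explicit range $M$, while the paper invokes the sub-Gaussian norm-concentration bound of Theorem~3.1.1/Eq.~(3.3) of Vershynin with the sub-Gaussian parameters set to $1$), and then substitute $t=2\sqrt{2}\sqrt{(\log n)/p}$ to obtain the $2/n^4$ tail. The only point of concern is the one you yourself flag: under the unit-second-moment normalization $a_{ij}^2$ ranges up to $\max\{\theta/(1-\theta),(1-\theta)/\theta\}$, so your claim $M\le 2$ holds only for $\theta\in[1/3,2/3]$ --- but the paper's assertion that $C_{\max}=\kappa=1$ carries exactly the same implicit restriction, so your constant-tracking is no looser than the paper's.
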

\textbf{Proof of Lemma~\ref{th:lower_bnd_L_row}:}
This proof initially follows similar logic as the proof of Lemma~1 of \cite{Banerjee25FD} with the columns of $\boldsymbol{A}$ being replaced by rows.

We have for all $i \in [n]$, $\frac{\|\boldsymbol{a}_{i.}\|_2^2}{n}=\frac{1}{n}\sum_{j=1}^p a_{ij}^2$. For a given $i \in [n]$, the variables $a_{ij}^2$ are independent for all $j \in [p]$. Hence, using the concentration inequality of Theorem 3.1.1 and Equation (3.3) of \cite{Vershynin2018}, we have for $t>0$\footnote{We have set $c = 1/2$, $\delta := t$ and $K := 2\sqrt{C_{\max}}\kappa$ in Equation (3.3) and the equation immediately preceding it in  \cite{Vershynin2018}},
 \begin{equation}\label{eq:conc_ineq_subG_row}
     P\left(\left|\|\boldsymbol{a}_{i.}\|_2^2/p- E[\|\boldsymbol{a}_{i.}\|_2^2/p]\right| \geq t \right) \leq 2e^{-\frac{pt^2}{2C_{\max}^2\kappa^4}}.
 \end{equation}
Note that since $\boldsymbol{A}$ is Centered Bernoulli ($\theta$), $C_{\max} =\kappa= 1$.
Using $E[\|\boldsymbol{a}_{i.}\|_2^2/n] =1$, \eqref{eq:conc_ineq_subG_row} can be rewritten as follows for $t>0$:
\begin{equation}\label{eq:conc_lower_subG_row}
    P\left(1+t \geq L_i \geq 1- t\right)\geq 1-2e^{-\frac{pt^2}{2}}.
\end{equation}
Putting $t:=2\sqrt{2}\sqrt{\frac{\log n}{p}}$ in \eqref{eq:conc_lower_subG_row}, we obtain:
\begin{equation}\label{eq:lower_bound_L_row}
    P\left(\left(1+2\sqrt{2}\sqrt{\frac{\log n}{p}}\right) \geq L_i \geq \left(1-2\sqrt{2}\sqrt{\frac{\log n}{p}}\right) \right) \geq \frac{2}{n^4}.
\end{equation}
This completes the proof. $\blacksquare$
\begin{lemma}[Exponential Tail Bound for the Standard Normal Distribution] \label{eq:exponential_normal}
Let $ Z \sim \mathcal{N}(0,1) $ be a standard normal random variable. Then for all $ k > 0 $, the following inequality holds:
\[
\mathbb{P}(|Z| > k) \leq 2e^{-k^2/2}.
\]\hfill{$\blacksquare$}
\end{lemma}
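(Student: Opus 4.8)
The plan is to prove this by the Chernoff (exponential Markov) bound. Since the density of $Z$ is symmetric about the origin, $\mathbb{P}(|Z| > k) = 2\,\mathbb{P}(Z > k)$, so it suffices to show $\mathbb{P}(Z > k) \le e^{-k^2/2}$ for every $k > 0$ (which in fact delivers the slightly stronger conclusion $\mathbb{P}(|Z| > k) \le e^{-k^2/2} \le 2e^{-k^2/2}$).

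First I would fix $\lambda > 0$ and apply Markov's inequality to the nonnegative random variable $e^{\lambda Z}$:
\[
\mathbb{P}(Z > k) = \mathbb{P}\!\left(e^{\lambda Z} > e^{\lambda k}\right) \le e^{-\lambda k}\,\mathbb{E}\!\left[e^{\lambda Z}\right].
\]
Next I would invoke the moment generating function of the standard normal, $\mathbb{E}[e^{\lambda Z}] = e^{\lambda^2/2}$, obtained by completing the square in $\frac{1}{\sqrt{2\pi}}\int_{\mathbb{R}} e^{\lambda z - z^2/2}\,dz$. This yields $\mathbb{P}(Z > k) \le e^{\lambda^2/2 - \lambda k}$ for all $\lambda > 0$. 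The exponent is a convex quadratic in $\lambda$ whose minimum over $\lambda > 0$ is attained at $\lambda = k$ (a legitimate choice since $k > 0$), where it equals $-k^2/2$. Substituting $\lambda = k$ gives $\mathbb{P}(Z > k) \le e^{-k^2/2}$, and doubling via the symmetry step above completes the argument.

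There is essentially no obstacle here: the only non-elementary input is the Gaussian MGF, a textbook computation. If one prefers a self-contained argument that avoids the MGF, I would instead bound the tail integral directly: for $k \ge 1/\sqrt{2\pi}$,
\[
\mathbb{P}(Z > k) = \frac{1}{\sqrt{2\pi}}\int_k^\infty e^{-z^2/2}\,dz \le \frac{1}{k\sqrt{2\pi}}\int_k^\infty z\,e^{-z^2/2}\,dz = \frac{1}{k\sqrt{2\pi}}\,e^{-k^2/2} \le e^{-k^2/2},
\]
while for $0 < k < 1/\sqrt{2\pi}$ one simply notes $\mathbb{P}(Z > k) < \tfrac12 < e^{-k^2/2}$ (the latter because $e^{-k^2/2} > e^{-1/(4\pi)} > \tfrac12$ on this range). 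Either route yields $\mathbb{P}(|Z| > k) \le 2e^{-k^2/2}$, establishing the lemma.
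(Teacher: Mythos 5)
Your proposal is correct and follows essentially the same route as the paper's proof: Markov's inequality applied to $e^{\lambda Z}$, the Gaussian MGF $\mathbb{E}[e^{\lambda Z}]=e^{\lambda^2/2}$, optimization at $\lambda=k$, and a final symmetry step. The alternative tail-integral argument you sketch is a valid bonus but not needed; the main argument matches the paper's.
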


\begin{proof}
The result follows by applying the Chernoff bound, which is based on Markov's inequality and the moment-generating function of $ Z $. For any $ \lambda > 0 $, we have:
\[
\mathbb{P}(Z > k) = \mathbb{P}(e^{\lambda Z} > e^{\lambda k}) \leq \frac{\mathbb{E}[e^{\lambda Z}]}{e^{\lambda k}}.
\]
Since $ Z \sim \mathcal{N}(0,1) $, its moment-generating function is:
\[
\mathbb{E}[e^{\lambda Z}] = e^{\lambda^2 / 2}.
\]
Substituting this into the inequality gives:
\[
\mathbb{P}(Z > k) \leq \frac{e^{\lambda^2 / 2}}{e^{\lambda k}} = e^{\lambda^2 / 2 - \lambda k}.
\]
To obtain the tightest bound, we minimize the exponent with respect to $ \lambda $. Differentiating and setting the derivative to zero:
\[
\frac{d}{d\lambda} \left( \frac{\lambda^2}{2} - \lambda k \right) = \lambda - k = 0 \quad \Rightarrow \quad \lambda = k.
\]
Substituting $ \lambda = k $ back into the bound yields:
\[
\mathbb{P}(Z > k) \leq e^{k^2 / 2 - k^2} = e^{-k^2 / 2}.
\]
By symmetry of Gaussian random variable, the proof is completed.\hfill{$\blacksquare$}
\end{proof}

\section{Additional Experimental Results}
We now present additional simulation results to support our proposed technique for MME correction as given in the main paper. Refer to the main paper for various details of experimental settings. We enumerate our additional results here below:

\begin{enumerate}
\item The main paper (Sec. III-A) showed results in terms of sensitivity, specificity and RRMSE for \textsf{SSM} errors. Here we show additional results for permutation errors and \textsf{ASM} errors under the same experimental settings. For permutation errors, the sensitivity and specificity plots are shown in Fig.~\ref{fig:Sens_spec_perm_0.1} and \ref{fig:Sens_spec_perm_0.5} for pooling matrix drawn from Bernoulli(0.1) and Bernoulli(0.5) respectively. RMSE plots are shown in Fig.~\ref{fig:Rmse_perm_0.1} and \ref{fig:Rmse_perm_0.5}. For \textsf{ASM} errors, the plots are shown in Fig.~\ref{fig:Sens_spec_asm_0.1}, \ref{fig:Sens_spec_asm_0.5}, \ref{fig:Rmse_asm_0.1} and \ref{fig:Rmse_asm_0.5}. In all these plots, we observe that the proposed method \textsc{Cape} outperforms various baselines. 

\item So far, we have presented results for a pooling matrix $\boldsymbol{B}$ whose entries are drawn iid from $CB(0.1)$ or $CB(0.5)$. We now present results for $CB(0.3)$ as well. For permutation errors, the results are in Fig.~\ref{fig:Sens_spec_perm_0.3} and Fig.~\ref{fig:Rmse_perm_0.3}. For \textsf{SSM}, the results are in Fig.~\ref{fig:Sens_spec_ssm_0.3} and \ref{fig:Rmse_ssm_0.3}. For \textsf{ASM}, the results are presented in Fig.~\ref{fig:Sens_spec_asm_0.3} and \ref{fig:Rmse_asm_0.3}. Here again, we observe that \textsc{Cape}, the proposed method, outperforms baselines.
\end{enumerate}

\begin{figure*}
   \centering
    \includegraphics[scale=0.2]{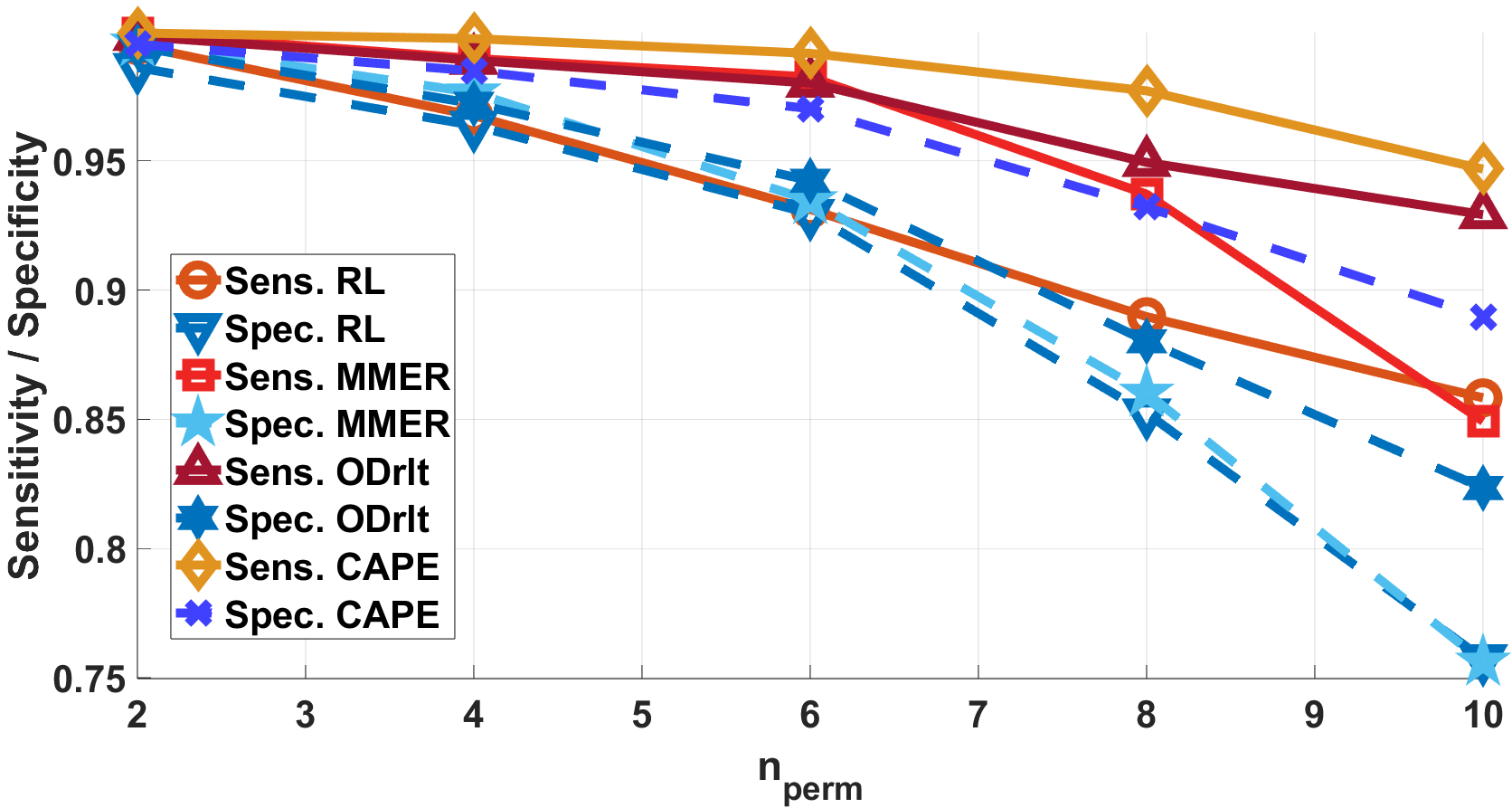}
    \includegraphics[scale=0.2]{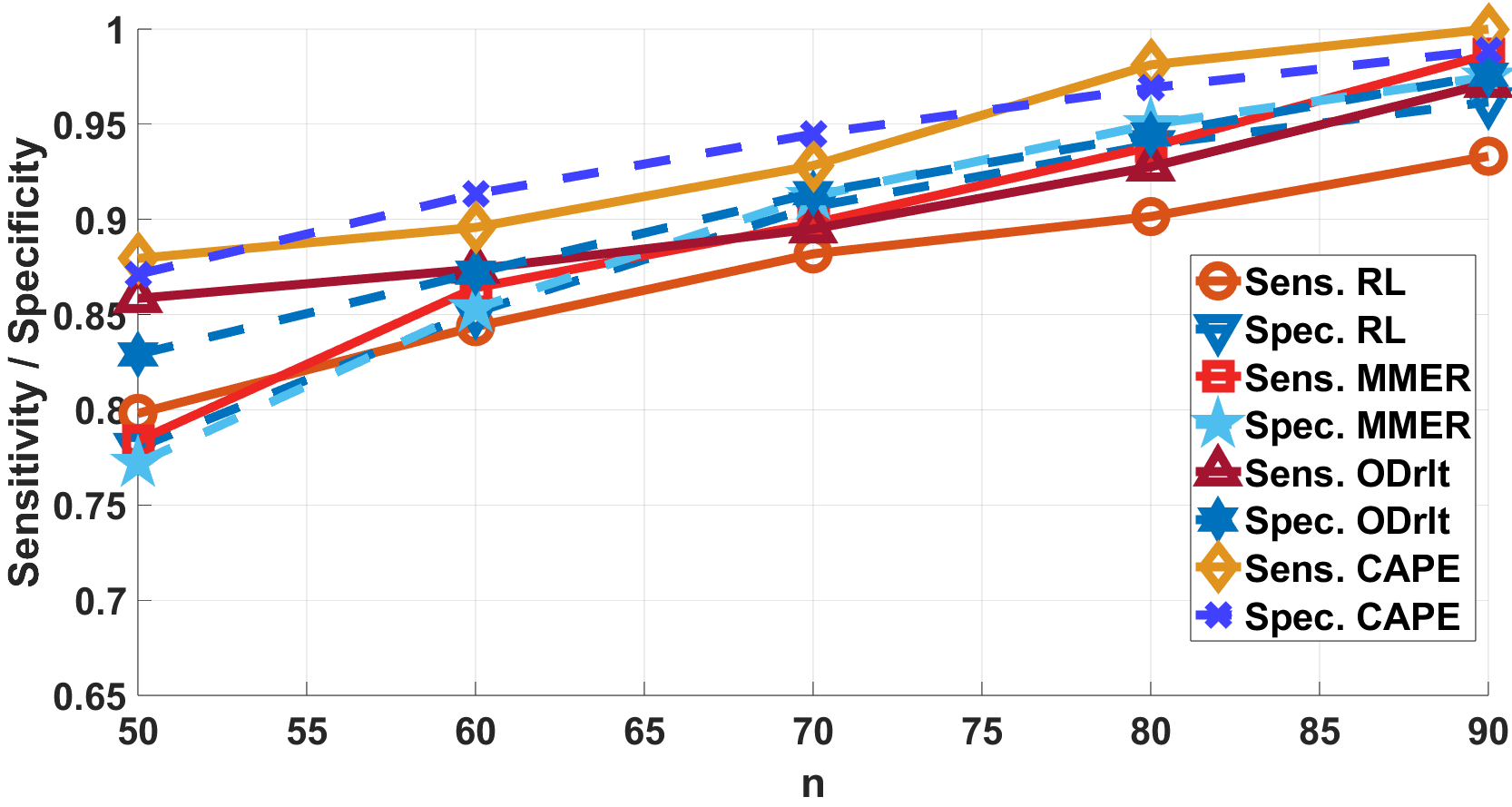}\\
    \includegraphics[scale=0.2]{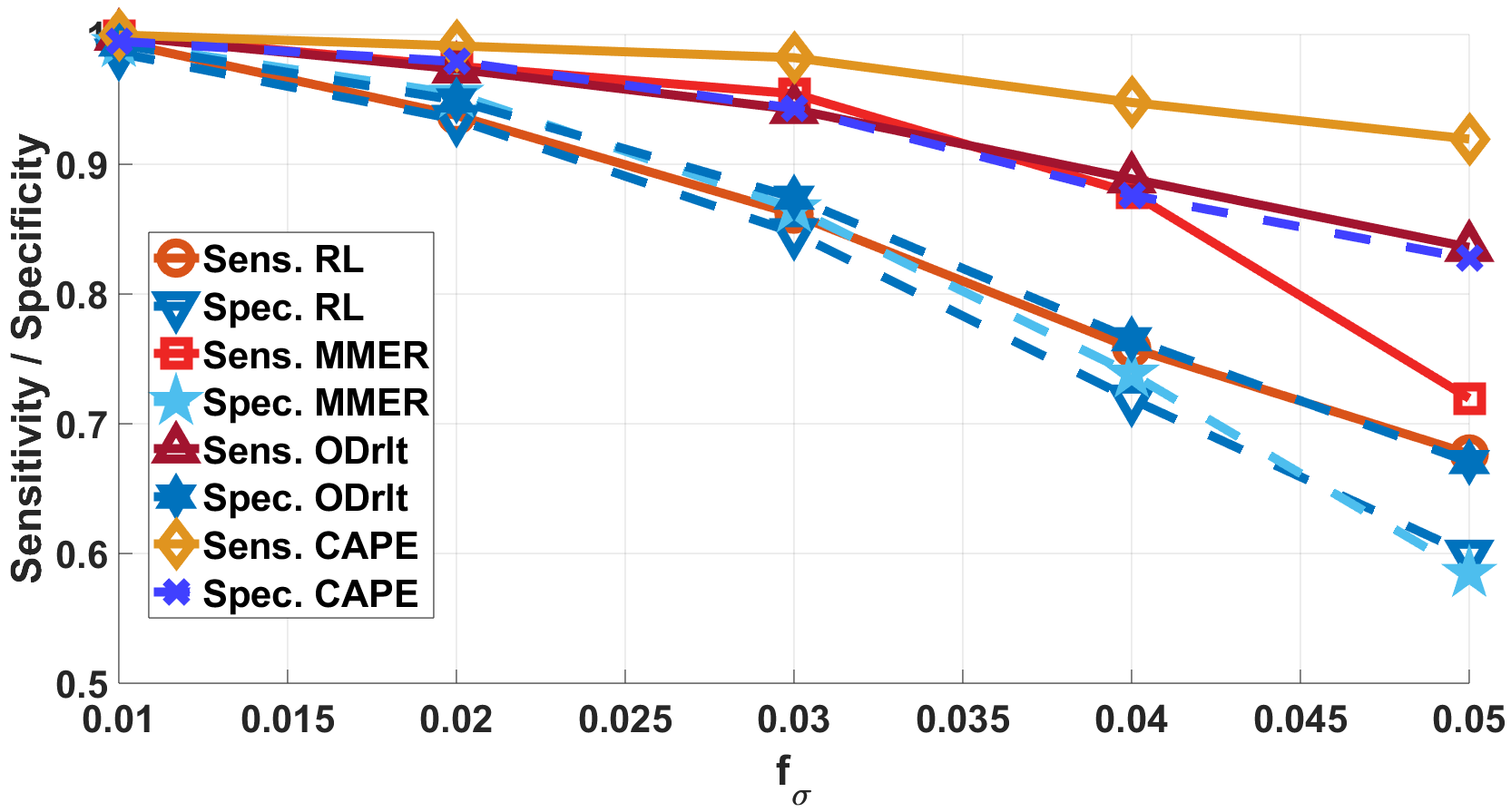}
    \includegraphics[scale=0.2]{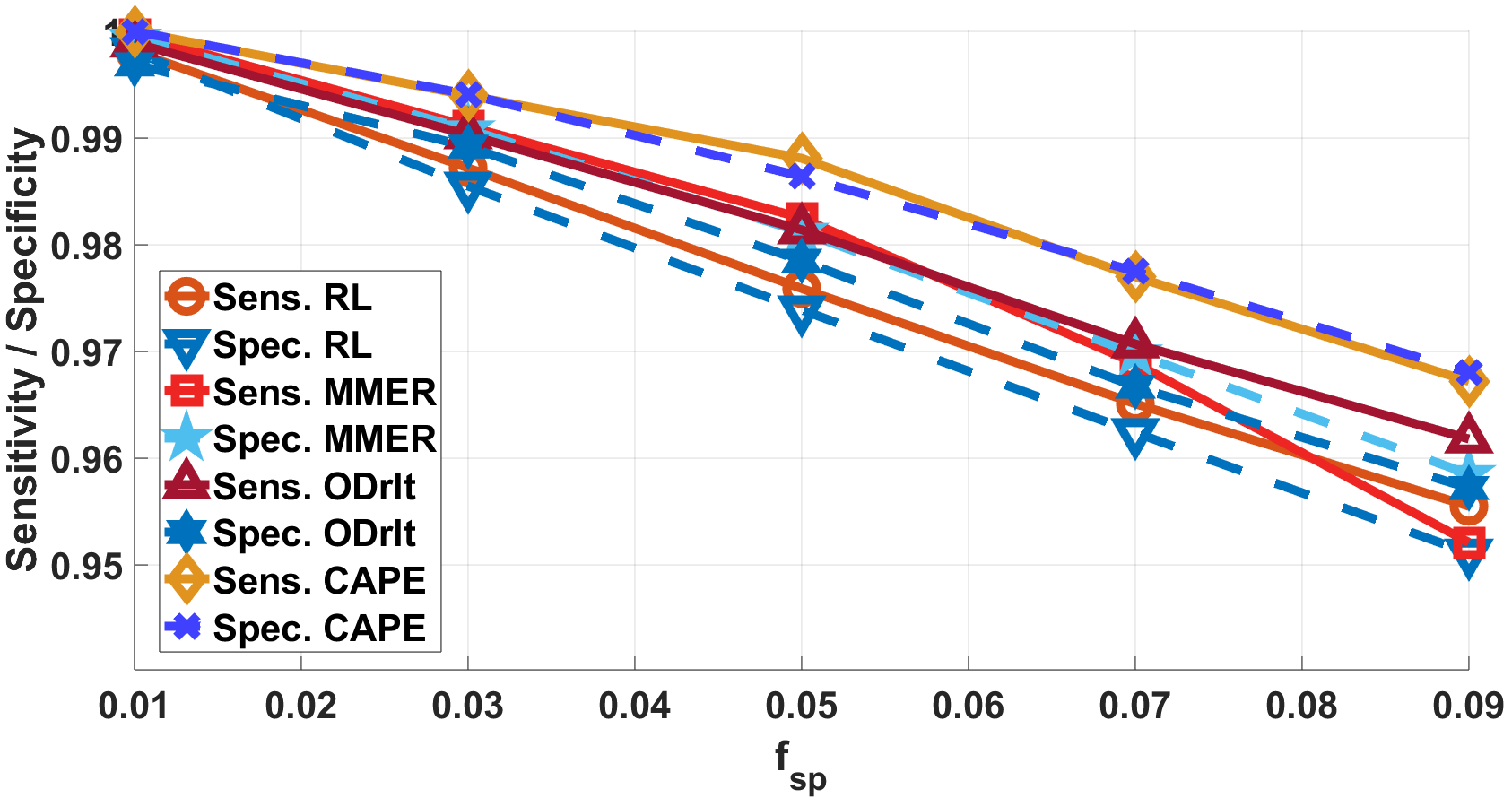}
    \caption{Sensitivity and specificity (averaged over 100 noise realizations, with $\boldsymbol{\beta}^*$, $\boldsymbol{A}$ with rows drawn i.i.d. from $CB(0.1)$, and $\boldsymbol{\delta}^*$ fixed) for detecting non-zero entries of $\boldsymbol{\beta}^*$ under permutation errors using \textsc{Cape}, \textsc{Rl}, \textsc{Mmer}, and \textsc{Odrlt}. Results shown for experiments \textsf{EA}, \textsf{EB}, \textsf{EC}, and \textsf{ED} defined in the main paper.}
    \label{fig:Sens_spec_perm_0.1}
\end{figure*}

\begin{figure*}
   \centering
    \includegraphics[scale=0.2]{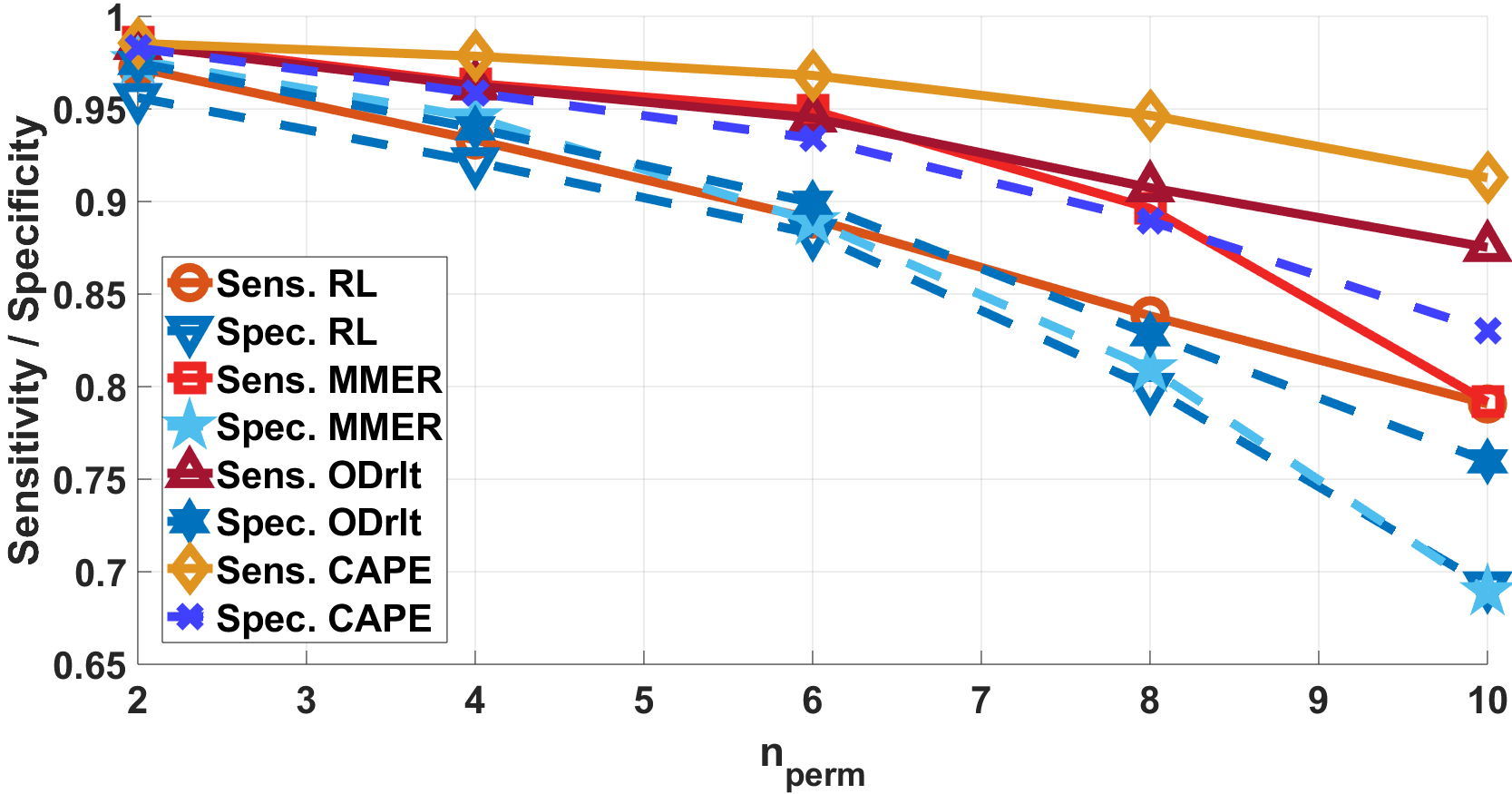}
    \includegraphics[scale=0.2]{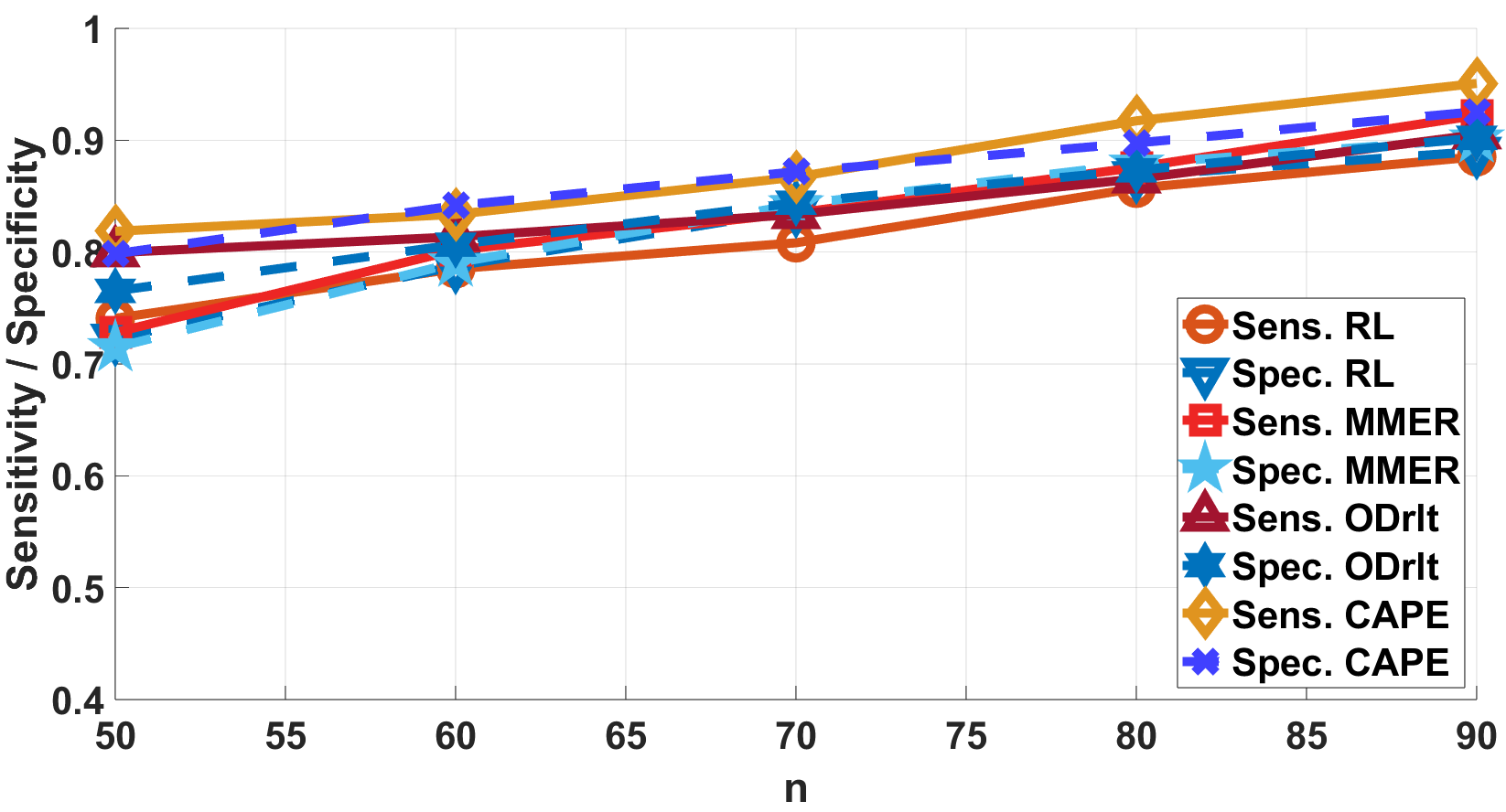}\\
    \includegraphics[scale=0.2]{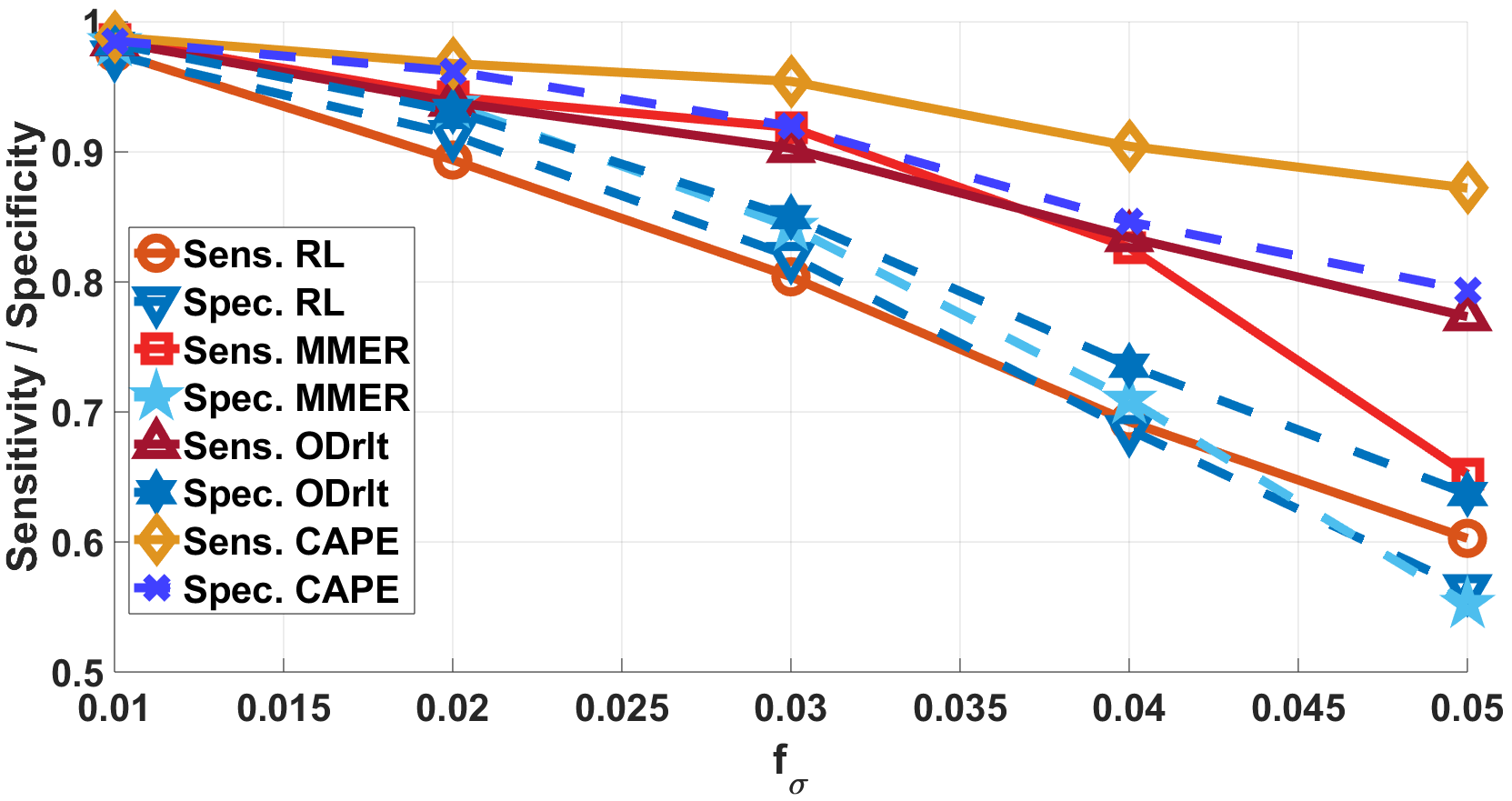}
    \includegraphics[scale=0.2]{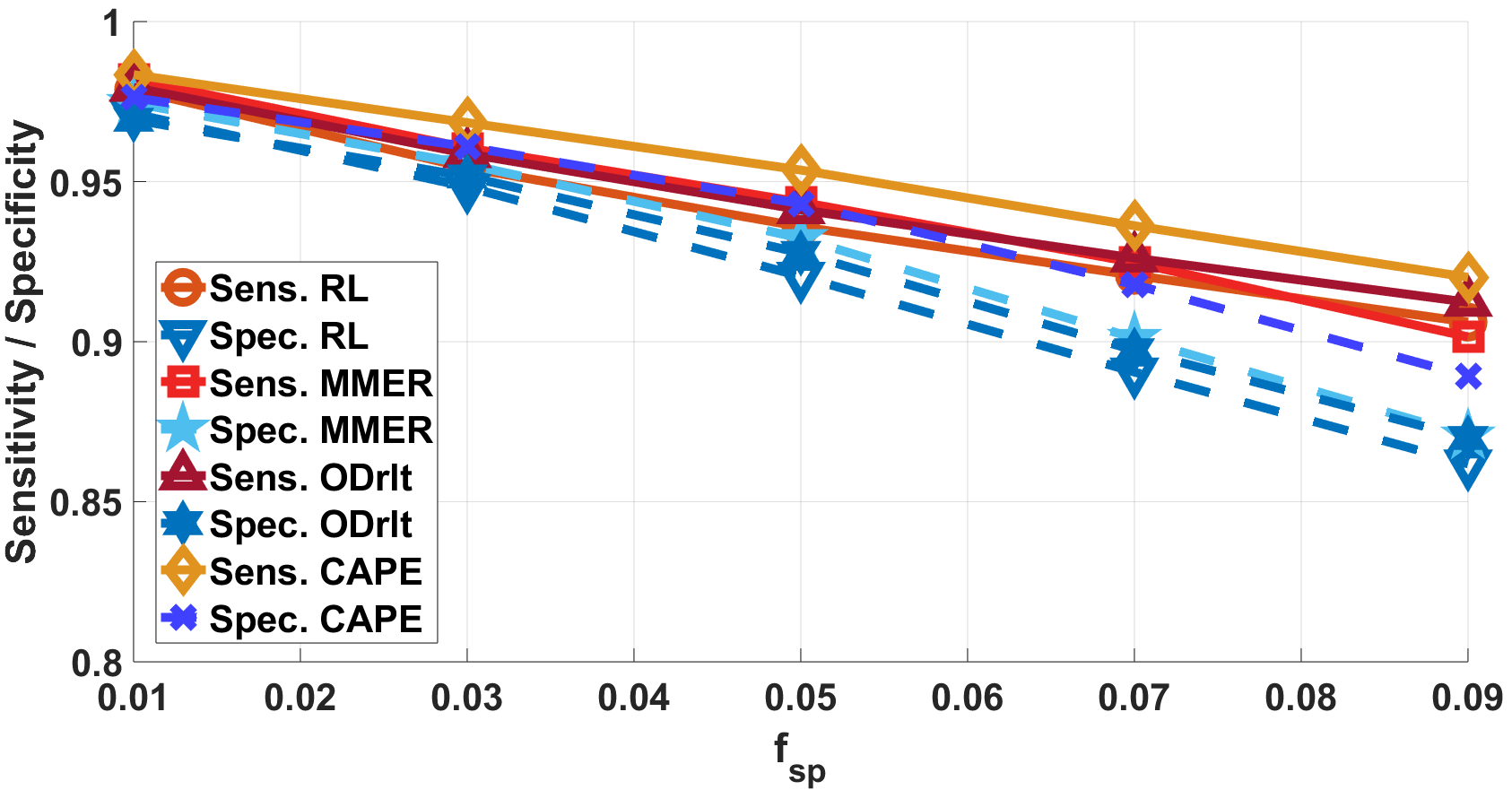}
    \caption{Sensitivity and specificity results under permutation errors for design matrices with rows i.i.d.\ from $CB(0.5)$. Results for experiments \textsf{EA}, \textsf{EB}, \textsf{EC}, and \textsf{ED} defined in the main paper.}
    \label{fig:Sens_spec_perm_0.5}
\end{figure*}

\begin{figure*}
   \centering
    \includegraphics[scale=0.2]{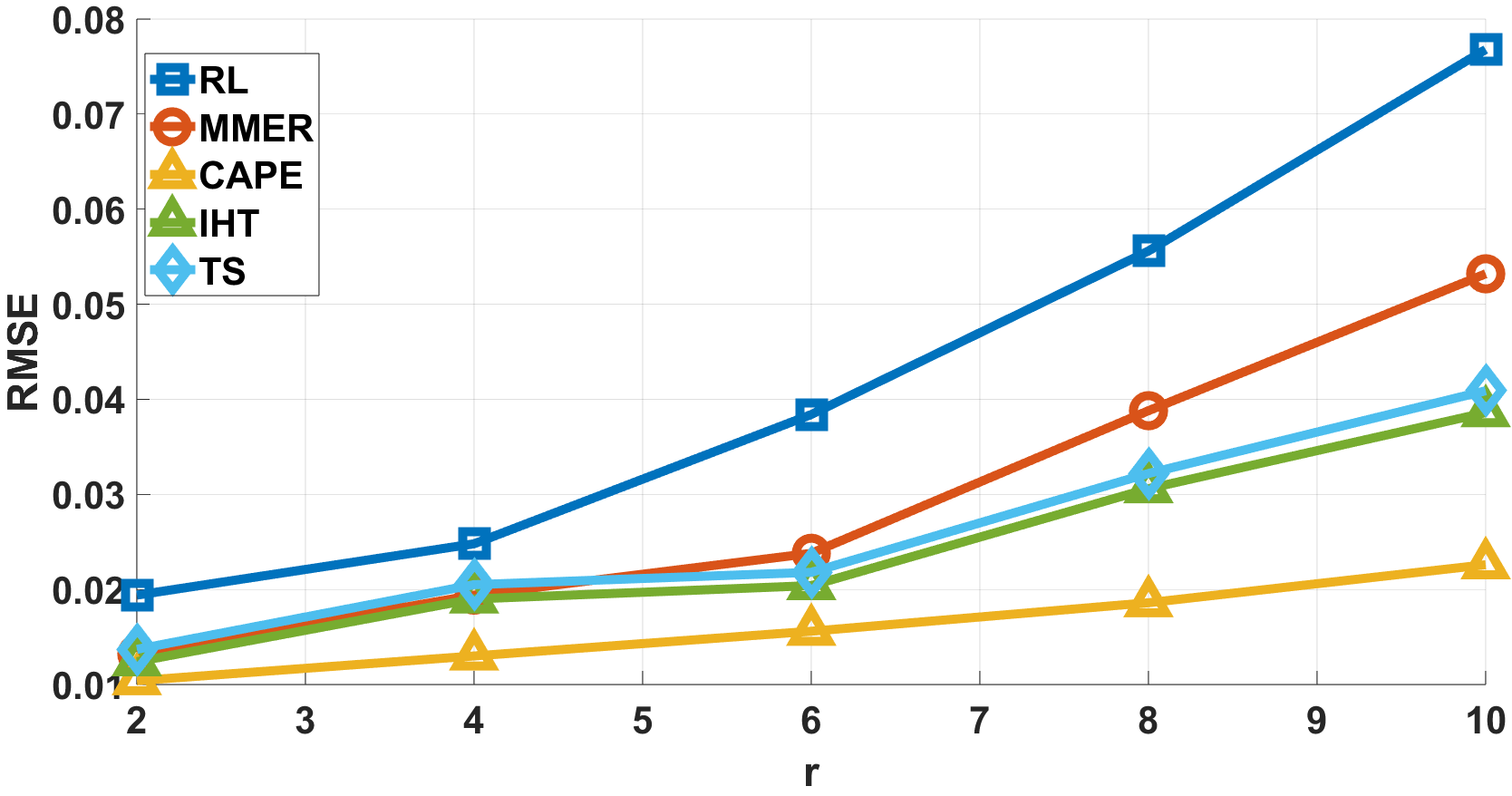}
    \includegraphics[scale=0.2]{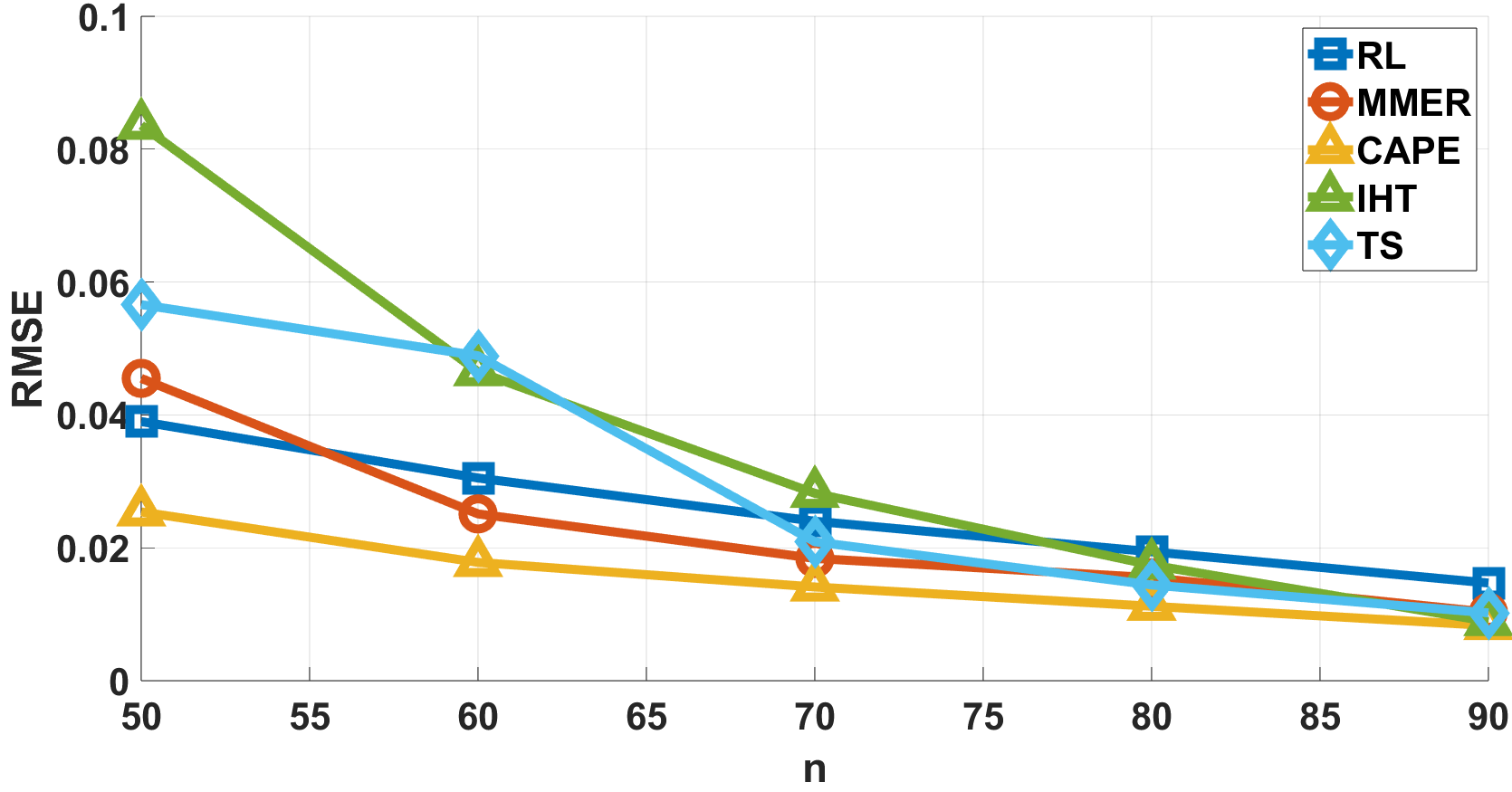}\\
    \includegraphics[scale=0.2]{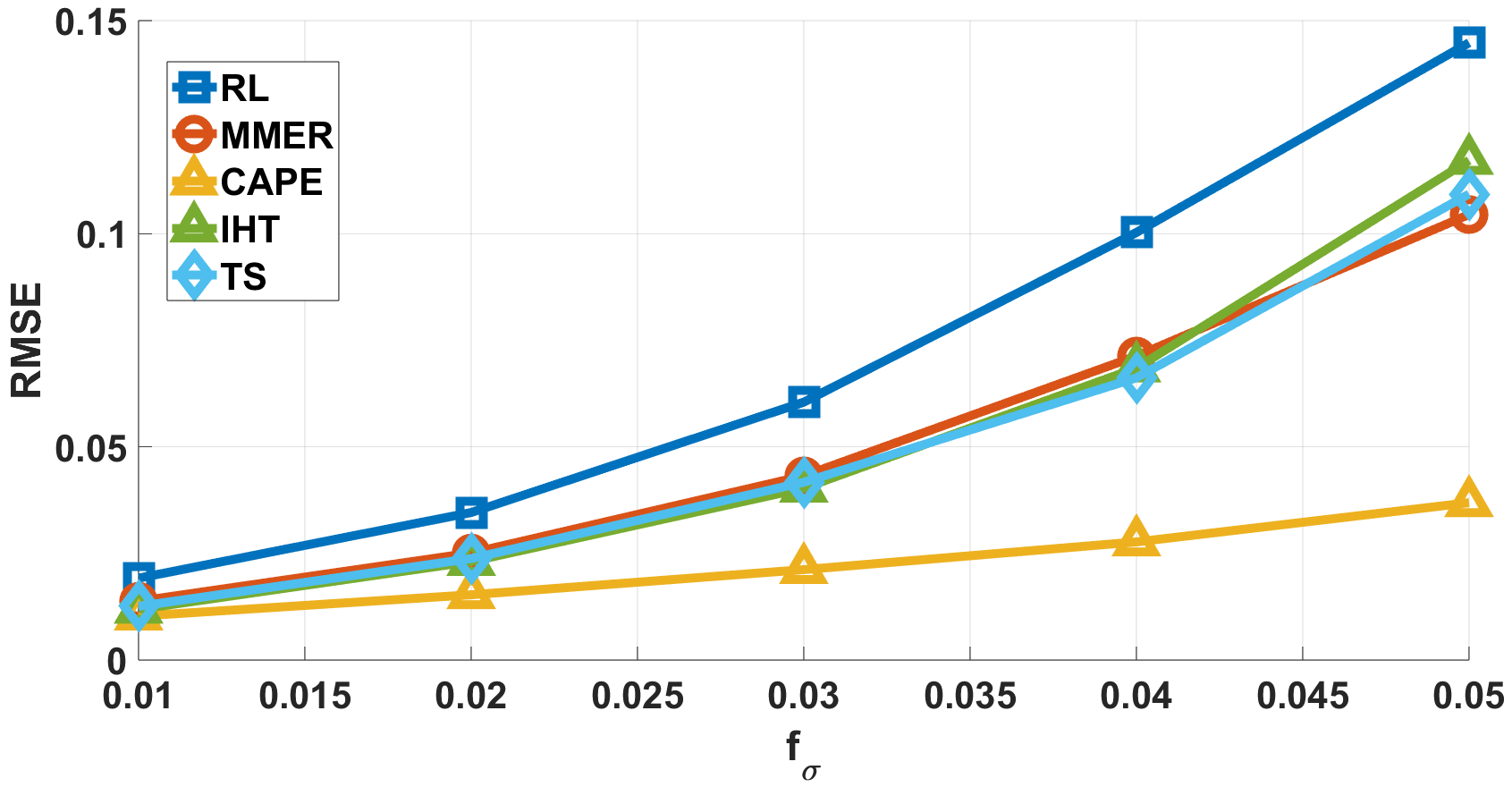}
    \includegraphics[scale=0.2]{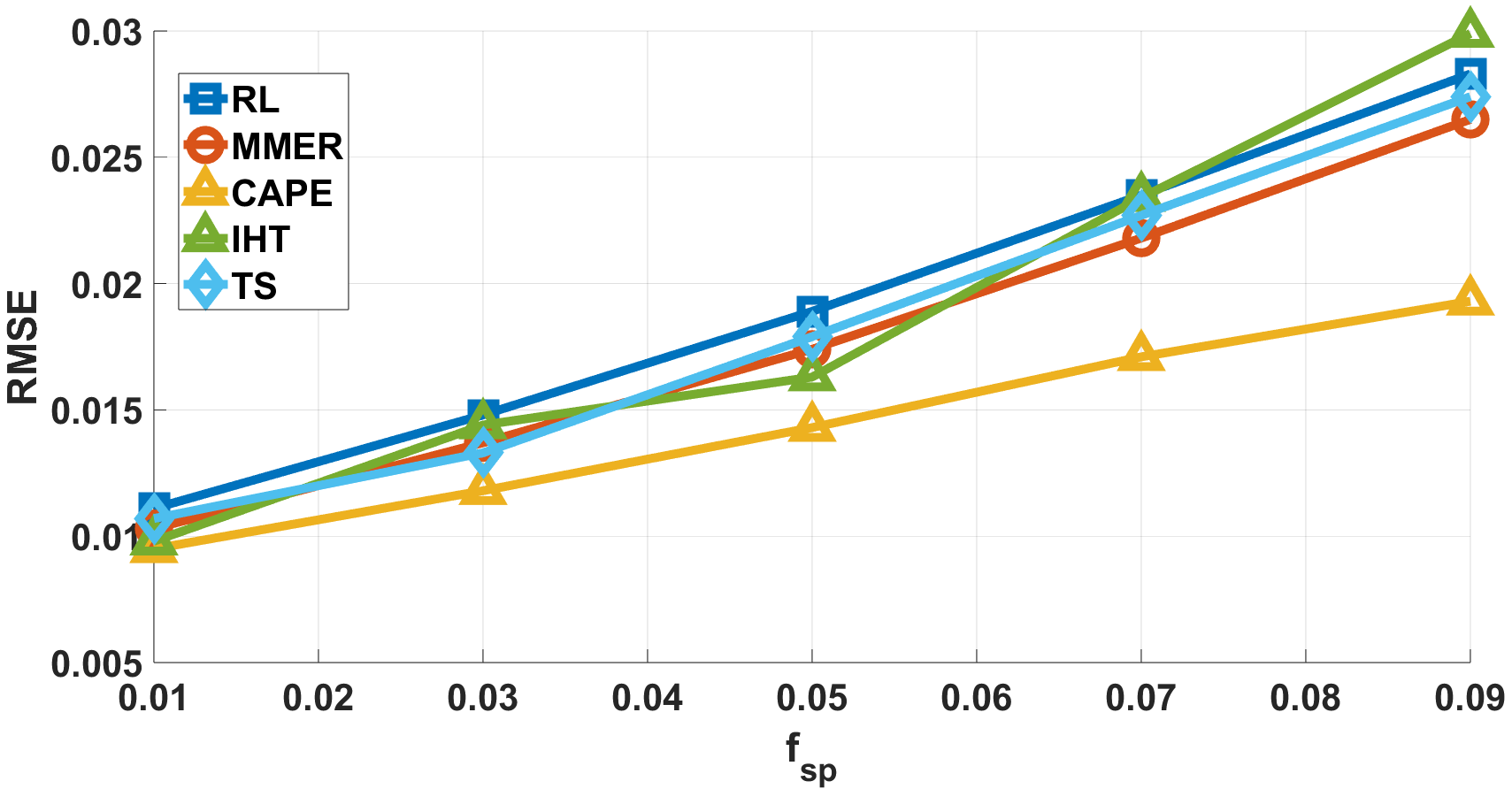}
    \caption{RRMSE (averaged over 100 noise realizations) for \textsc{Cape}, \textsc{Rl}, and \textsc{Mmer} under permutation errors with $\boldsymbol{A}$ from $CB(0.1)$. Results for experiments \textsf{EA}, \textsf{EB}, \textsf{EC}, and \textsf{ED}.}
    \label{fig:Rmse_perm_0.1}
\end{figure*}

\begin{figure*}
   \centering
    \includegraphics[scale=0.2]{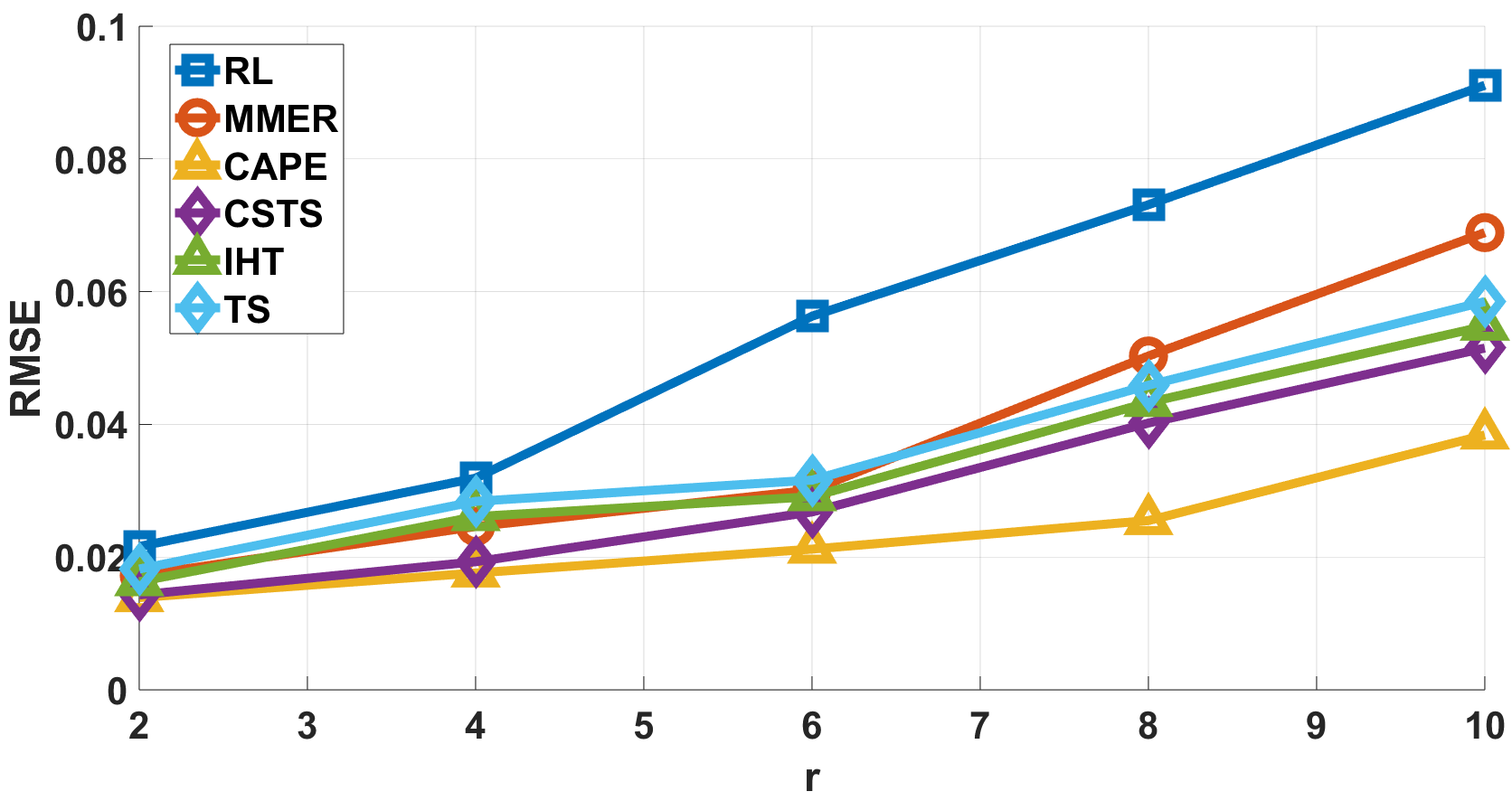}
    \includegraphics[scale=0.2]{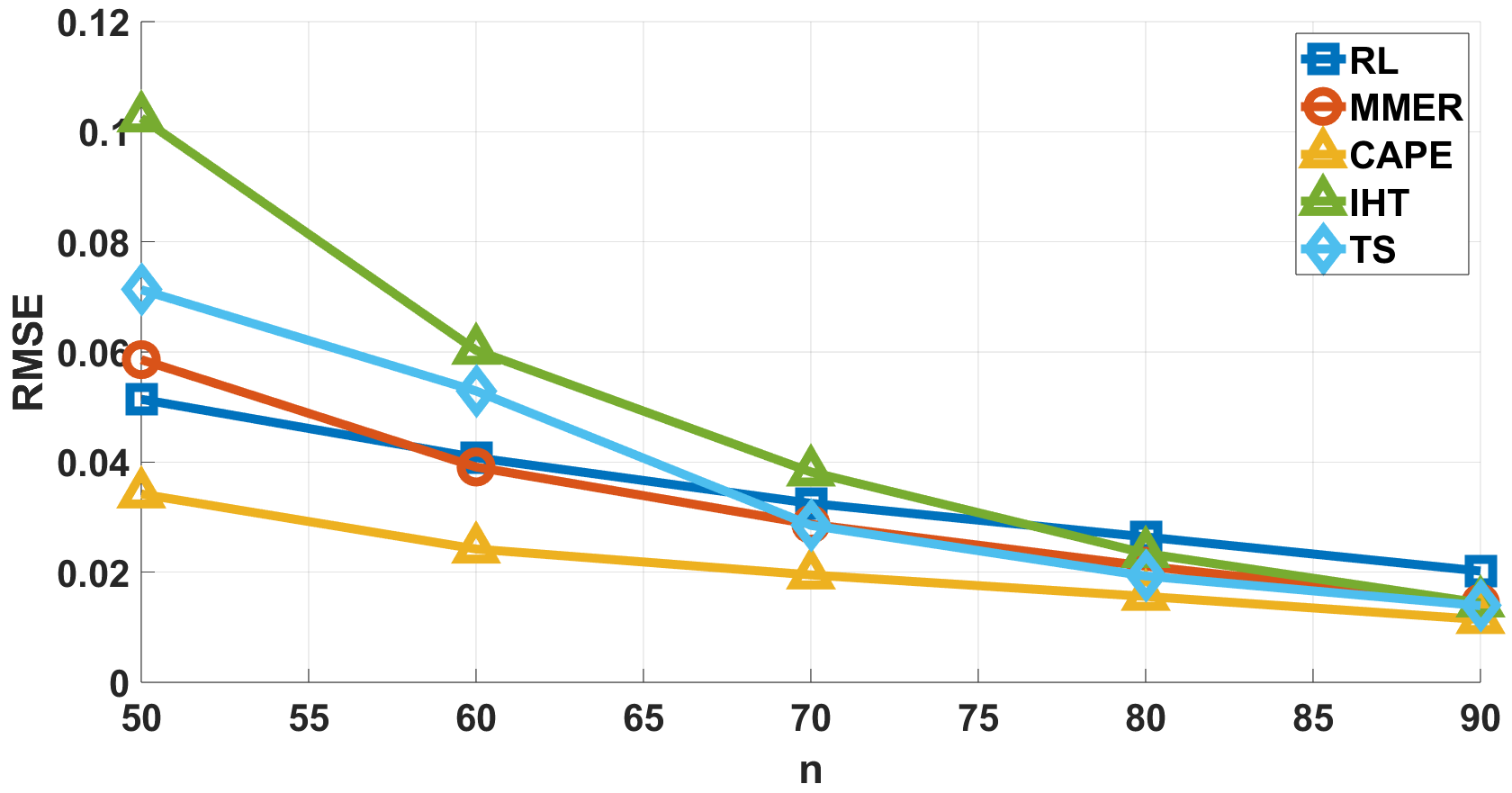}\\
    \includegraphics[scale=0.2]{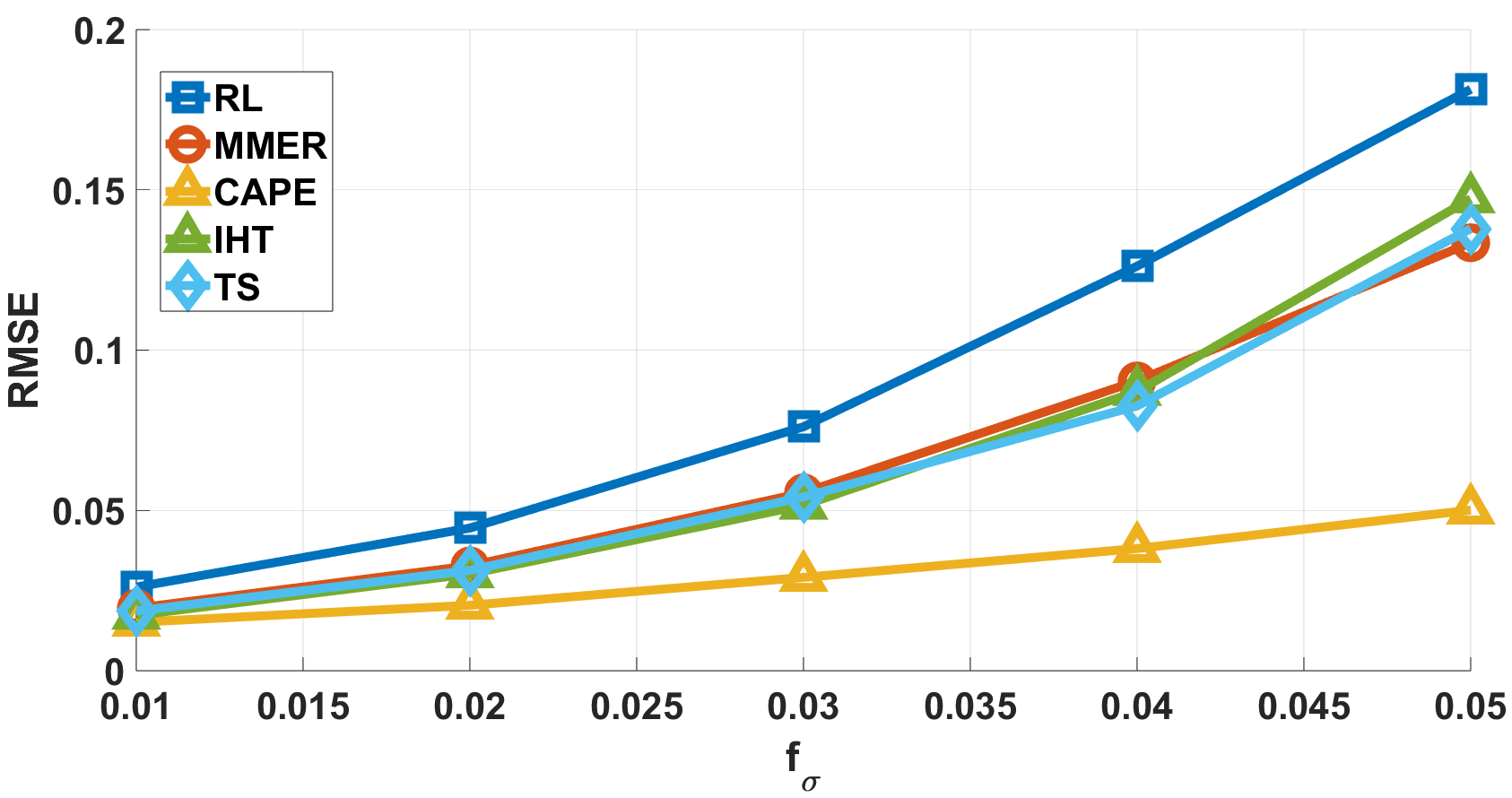}
    \includegraphics[scale=0.2]{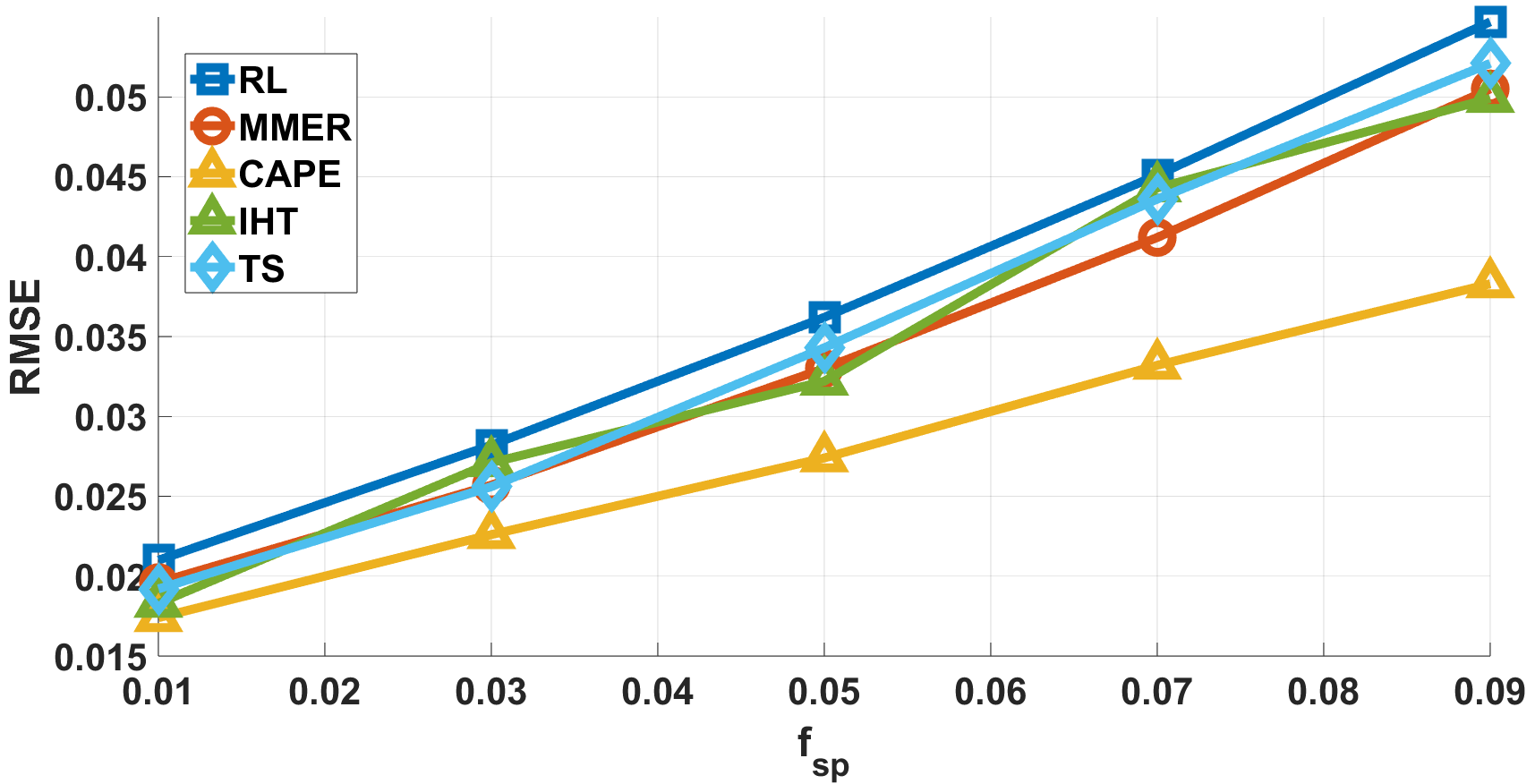}
    \caption{RRMSE comparisons for design matrices drawn from $CB(0.5)$. Results for experiments \textsf{EA}, \textsf{EB}, \textsf{EC}, and \textsf{ED} defined in the main paper.}
    \label{fig:Rmse_perm_0.5}
\end{figure*}

\begin{figure*}
   \centering
    \includegraphics[scale=0.2]{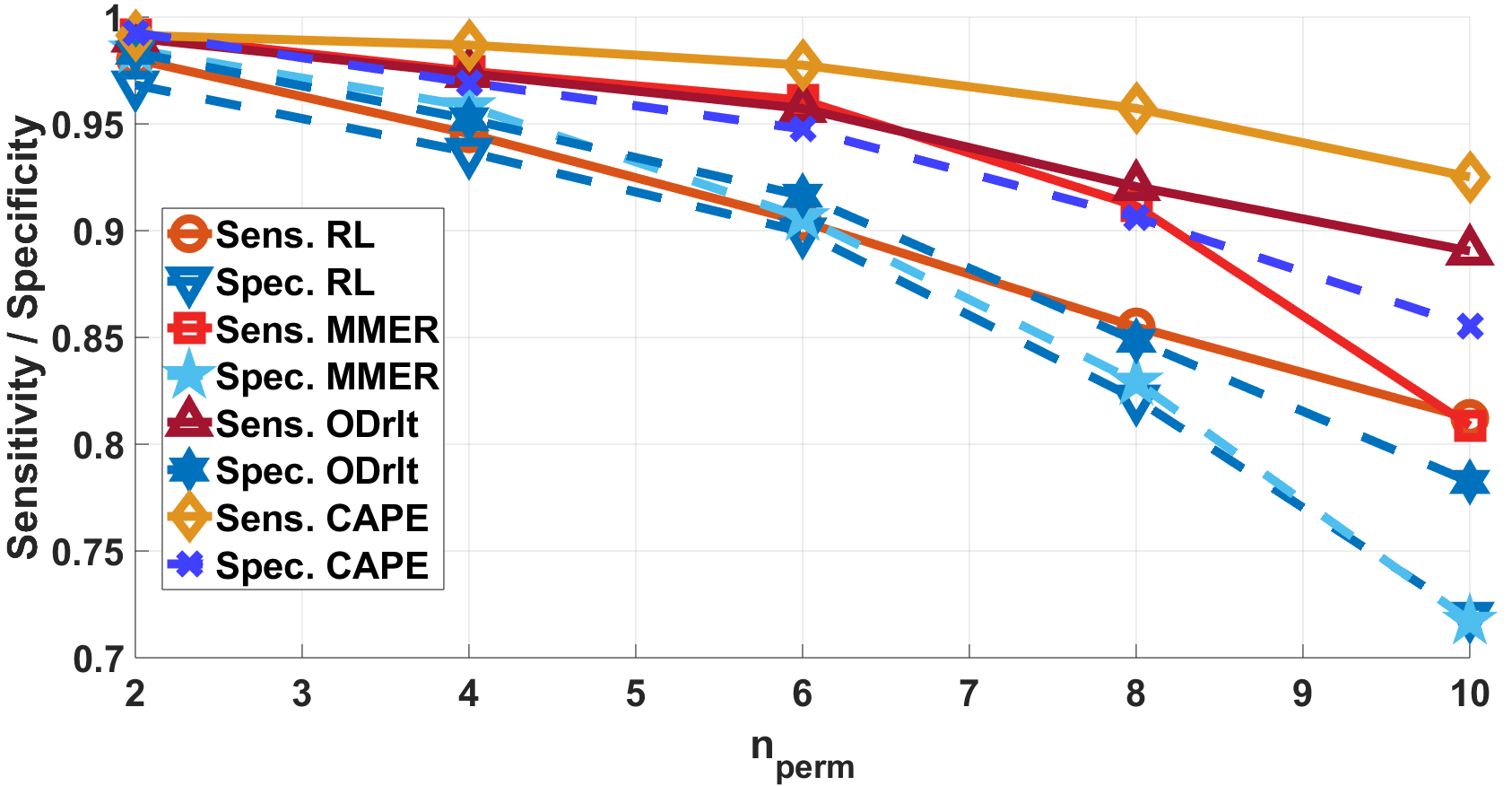}
    \includegraphics[scale=0.2]{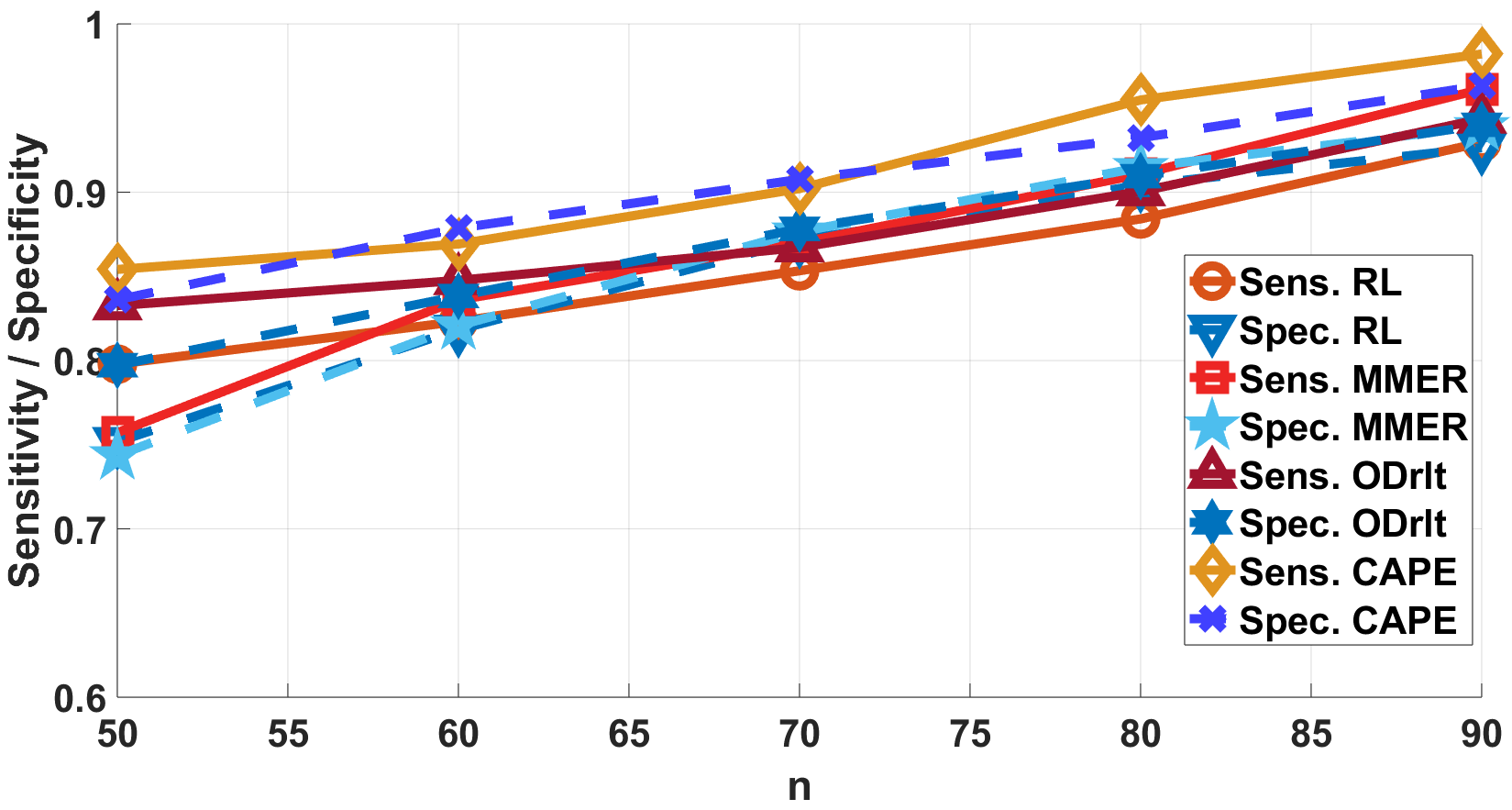}\\
    \includegraphics[scale=0.2]{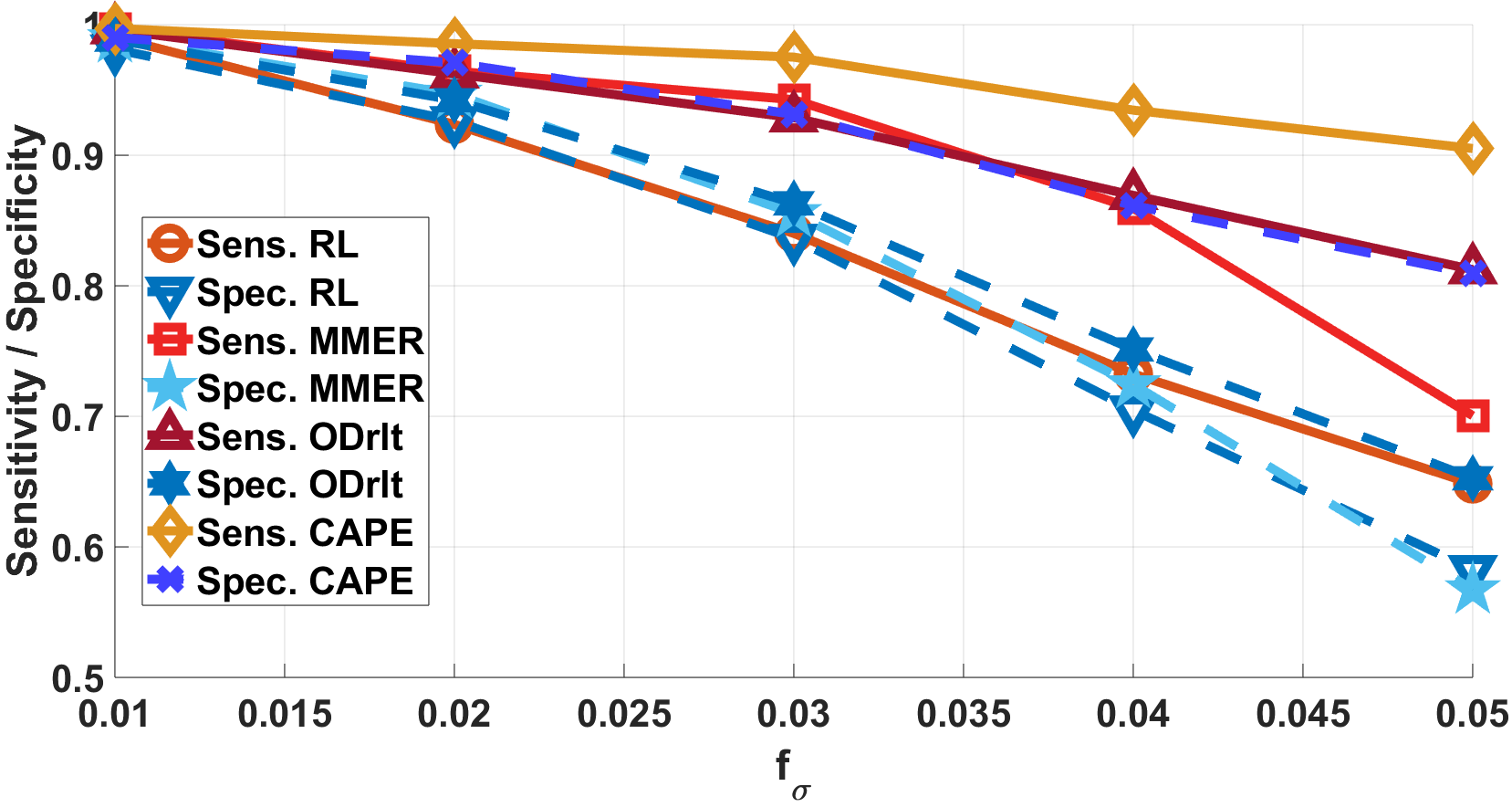}
    \includegraphics[scale=0.2]{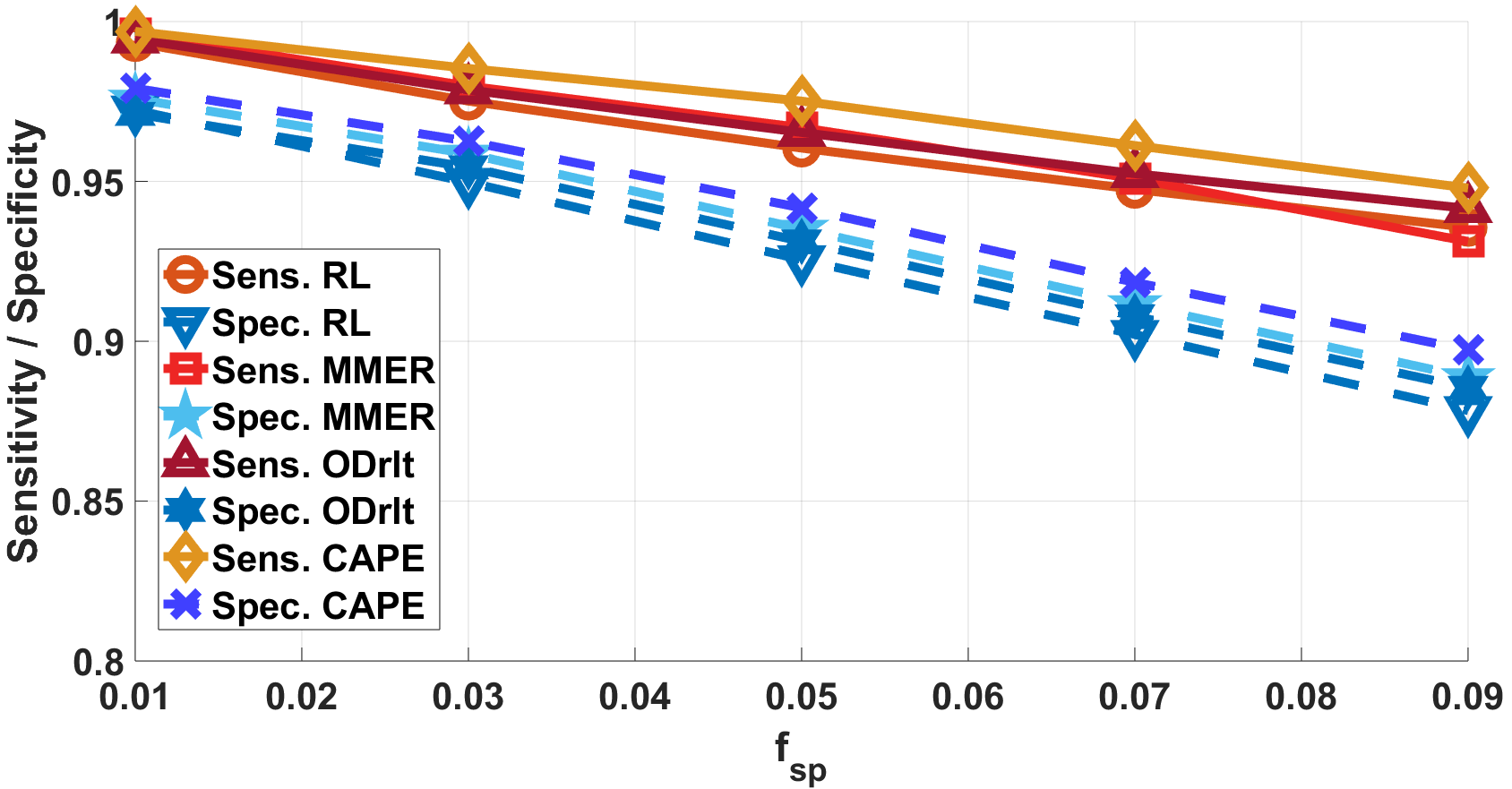}
    \caption{Sensitivity and specificity (averaged over 100 noise realizations) under permutation errors with $\boldsymbol{A}$ from Centered Bernoulli($0.3$). Results for experiments \textsf{EA}, \textsf{EB}, \textsf{EC}, and \textsf{ED} defined in the main paper.}
    \label{fig:Sens_spec_perm_0.3}
\end{figure*}

\begin{figure*}
   \centering
    \includegraphics[scale=0.2]{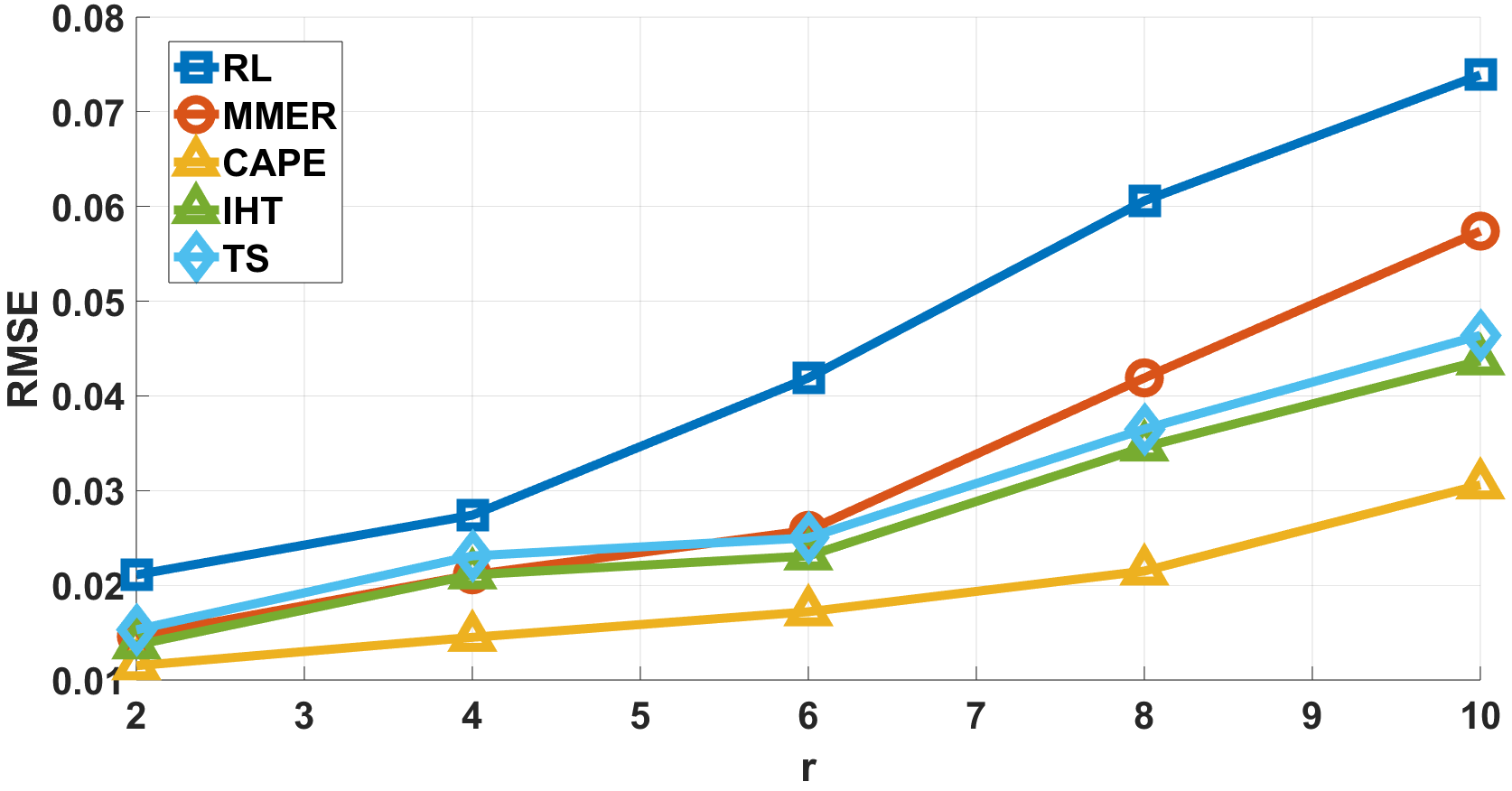}
    \includegraphics[scale=0.2]{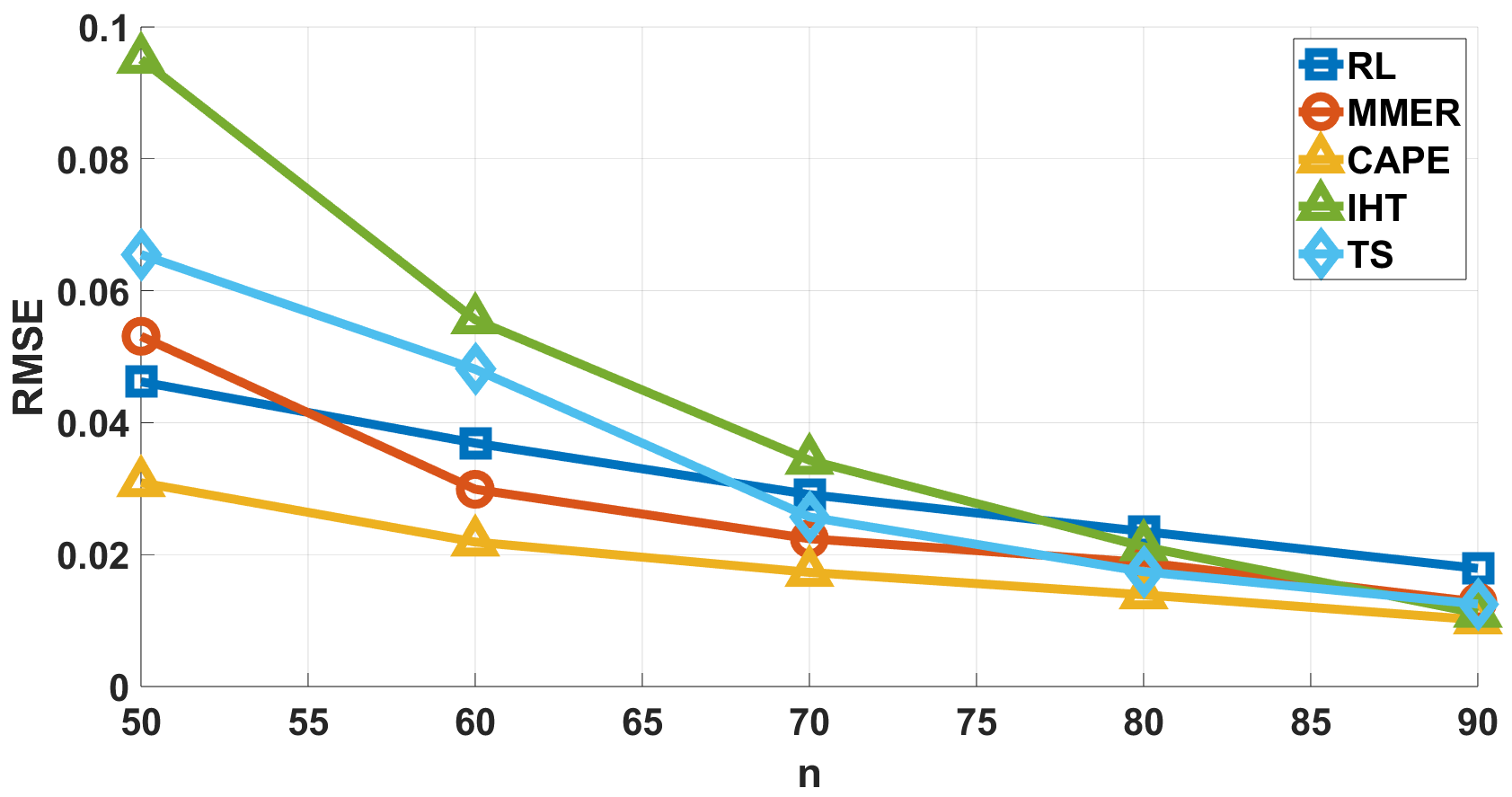}\\
    \includegraphics[scale=0.2]{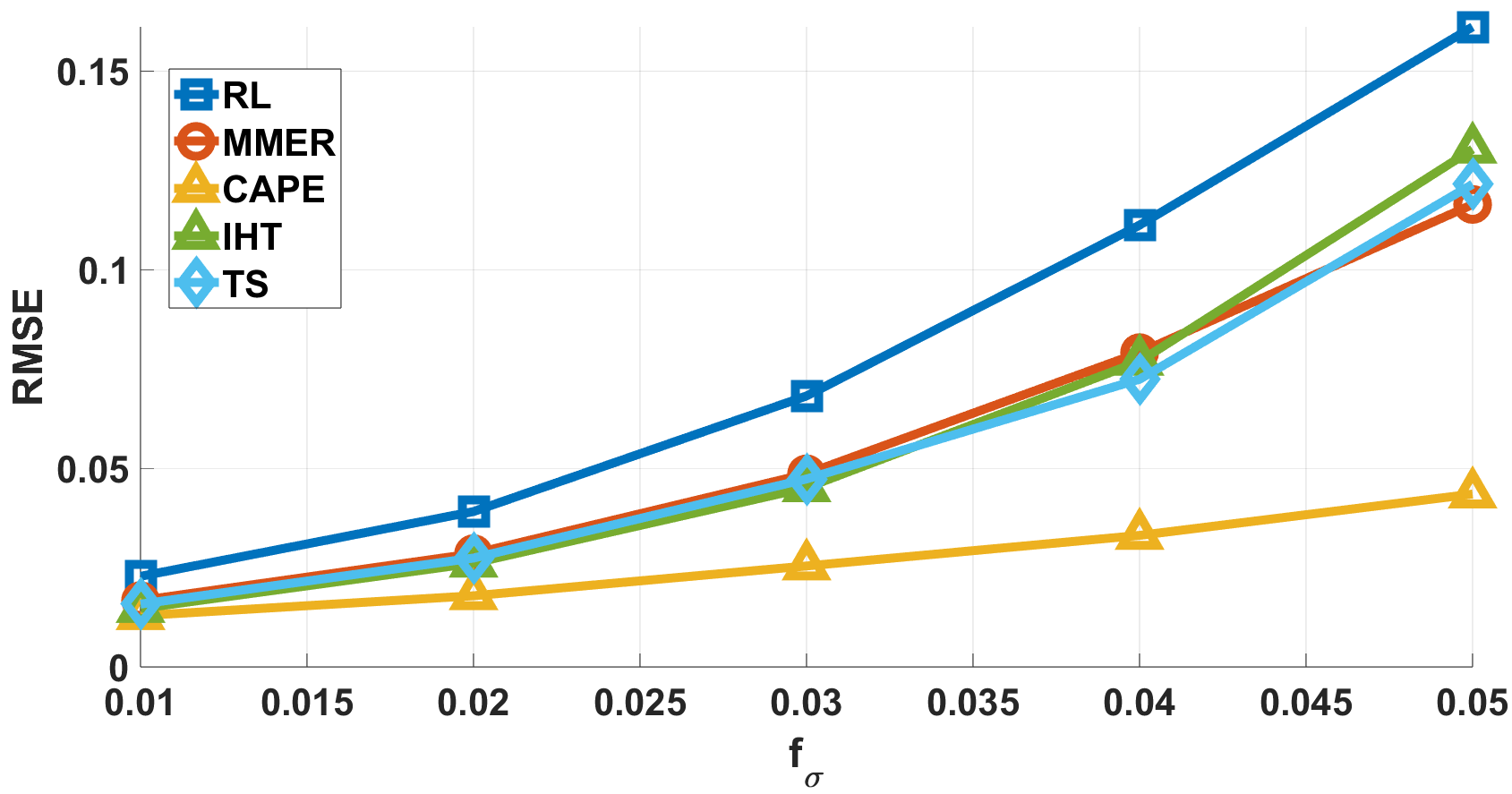}
    \includegraphics[scale=0.2]{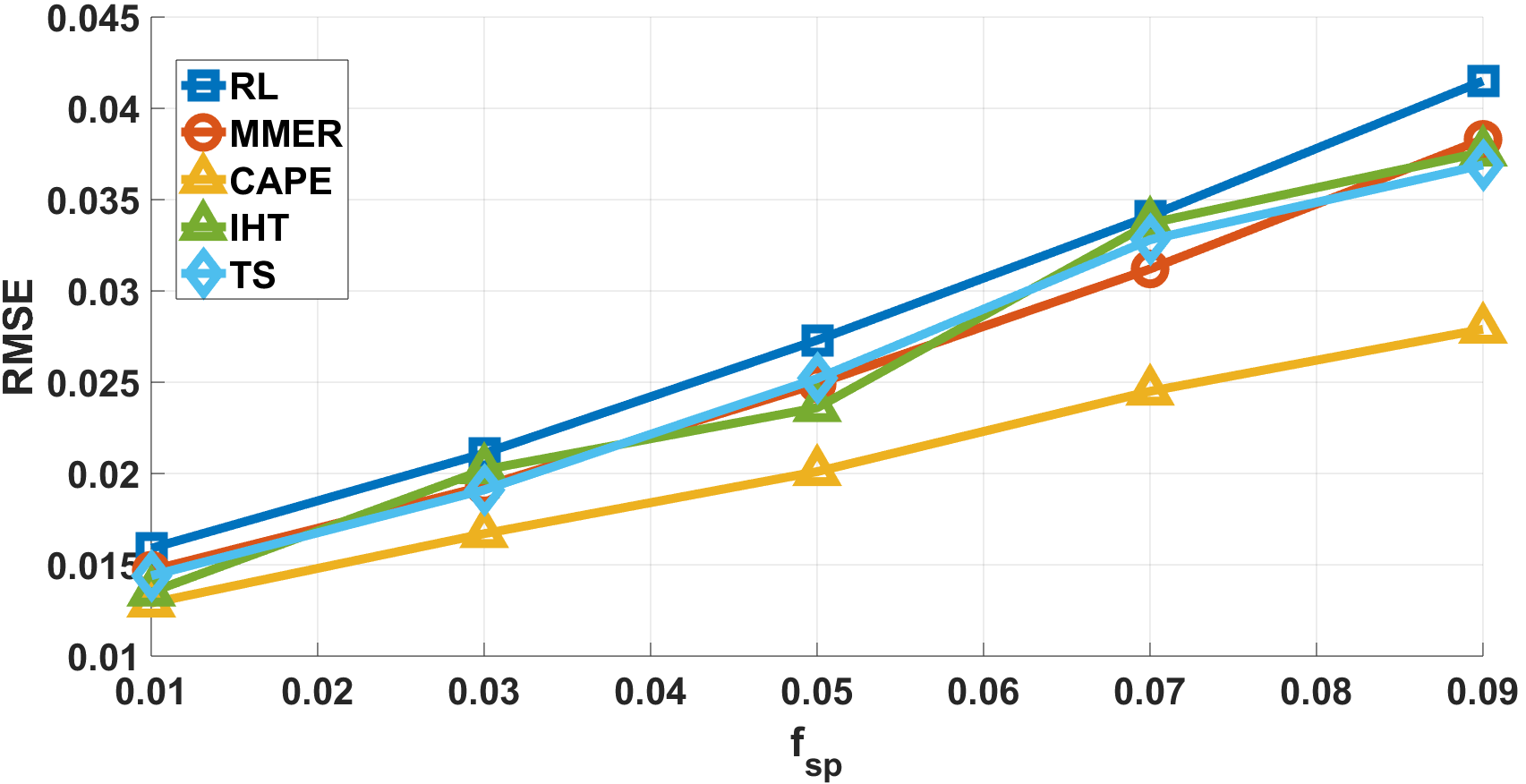}
    \caption{RRMSE results (averaged over 100 noise realizations) for \textsc{Cape}, \textsc{Rl}, and \textsc{Mmer} under permutation errors with $\boldsymbol{A}$ from Centered Bernoulli($0.3$). Results for experiments \textsf{EA}, \textsf{EB}, \textsf{EC}, and \textsf{ED} defined in the main paper.}
    \label{fig:Rmse_perm_0.3}
\end{figure*}

\begin{figure*}[t]
   \centering
    \includegraphics[scale=0.2]{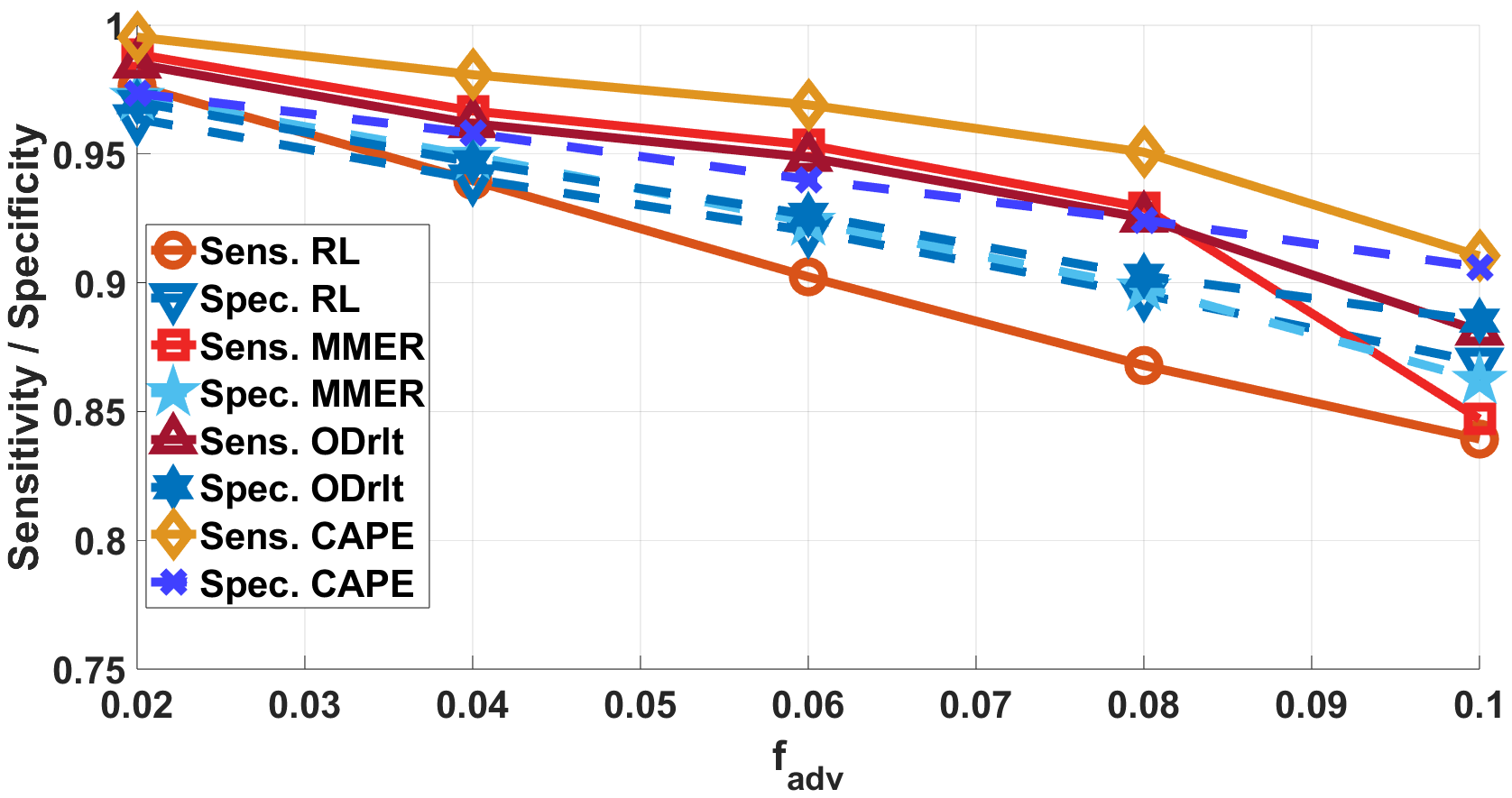}
    \includegraphics[scale=0.2]{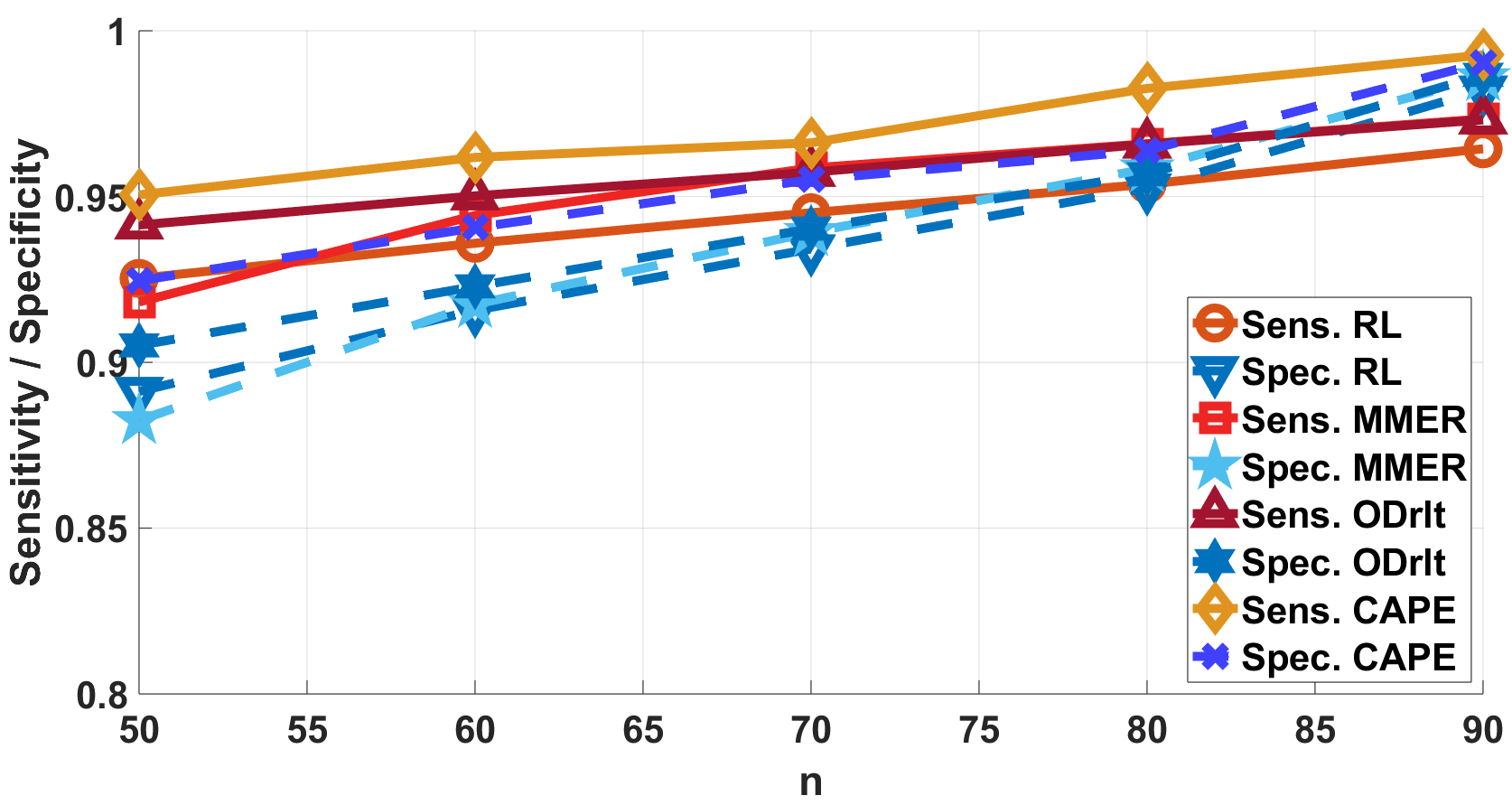}\\
    \includegraphics[scale=0.2]{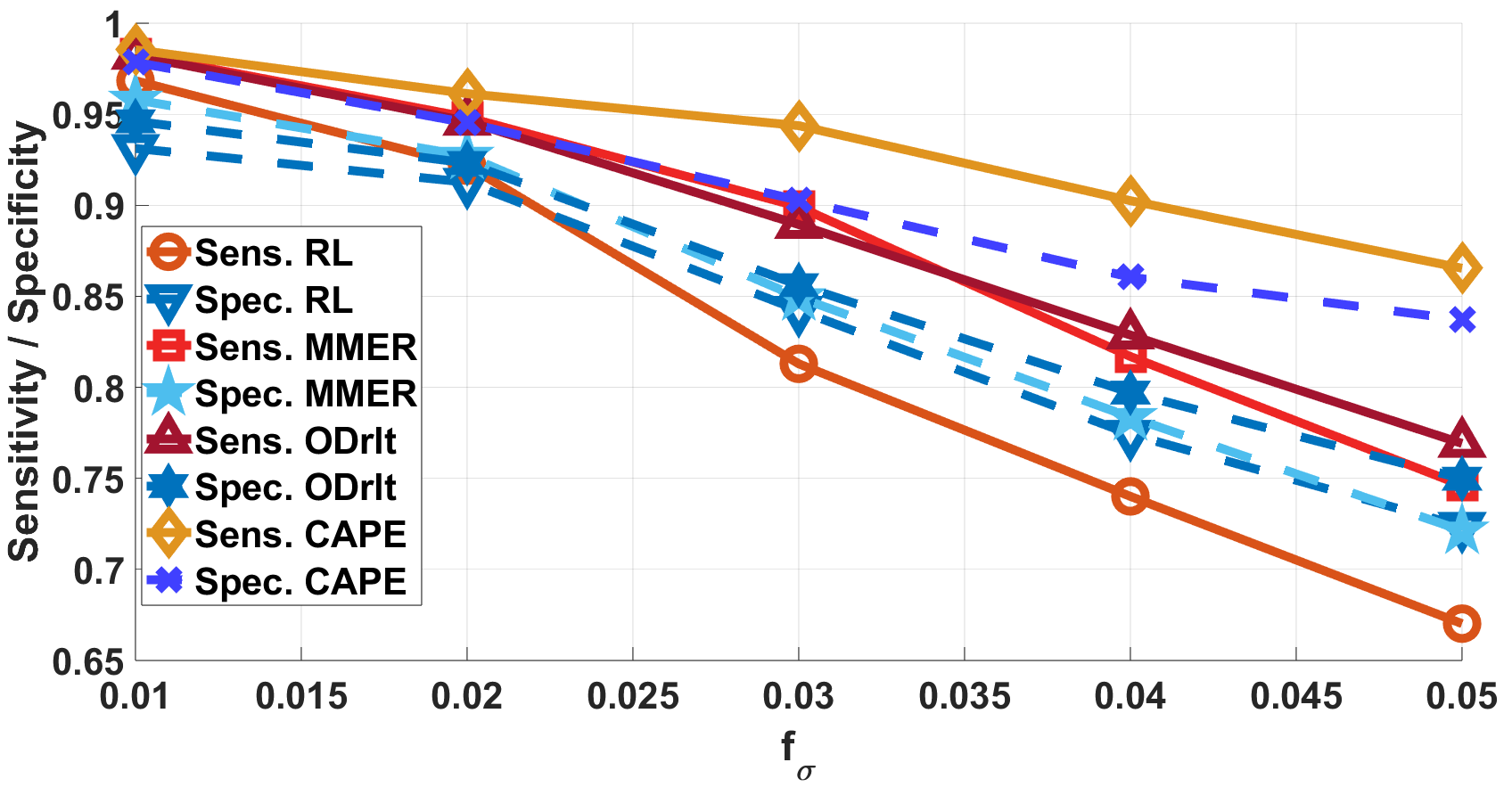}
    \includegraphics[scale=0.2]{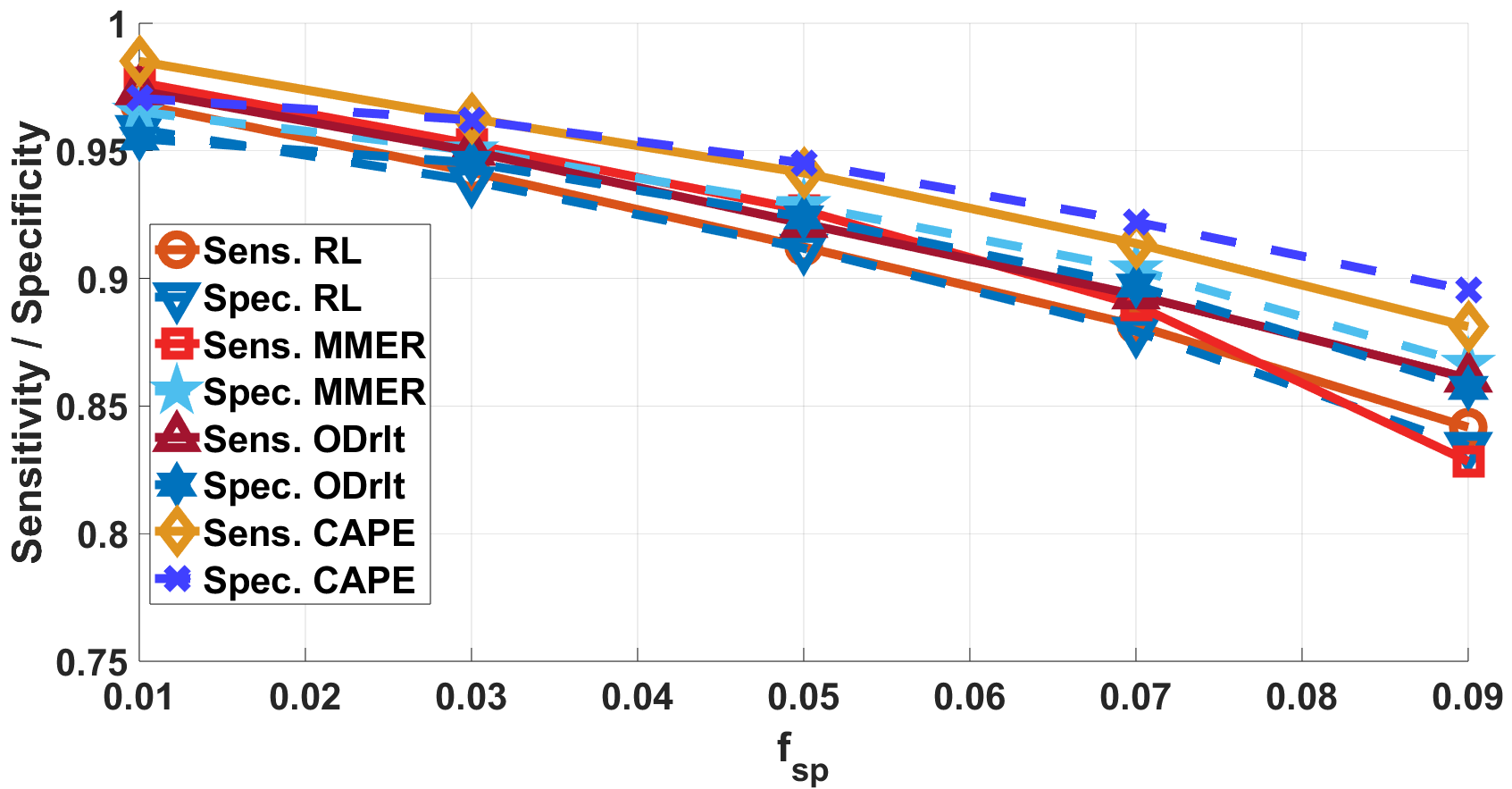}
    \caption{Average sensitivity and specificity (100 noise realizations) under \textsf{SSM} MMEs with $\theta=0.3$. Panels correspond to (\textsf{EA}), (\textsf{EB}), (\textsf{EC}), (\textsf{ED}) defined in the main paper. Methods: \textsc{Cape}, RL, \textsc{Mmer}, \textsc{Odrlt}.}
    \label{fig:Sens_spec_ssm_0.3}
\end{figure*}

\begin{figure*}[t]
   \centering
    \includegraphics[scale=0.19]{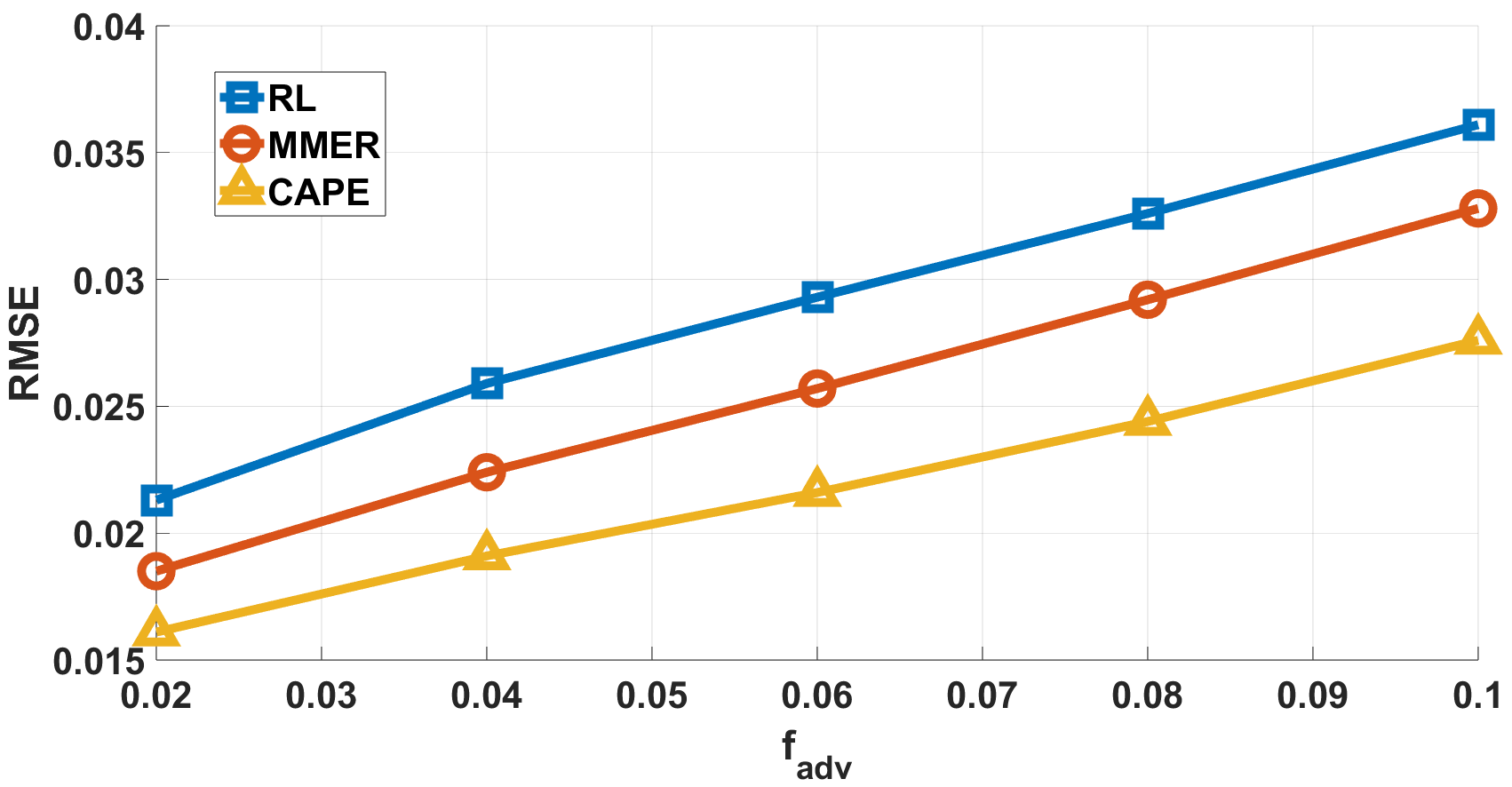}
    \includegraphics[scale=0.19]{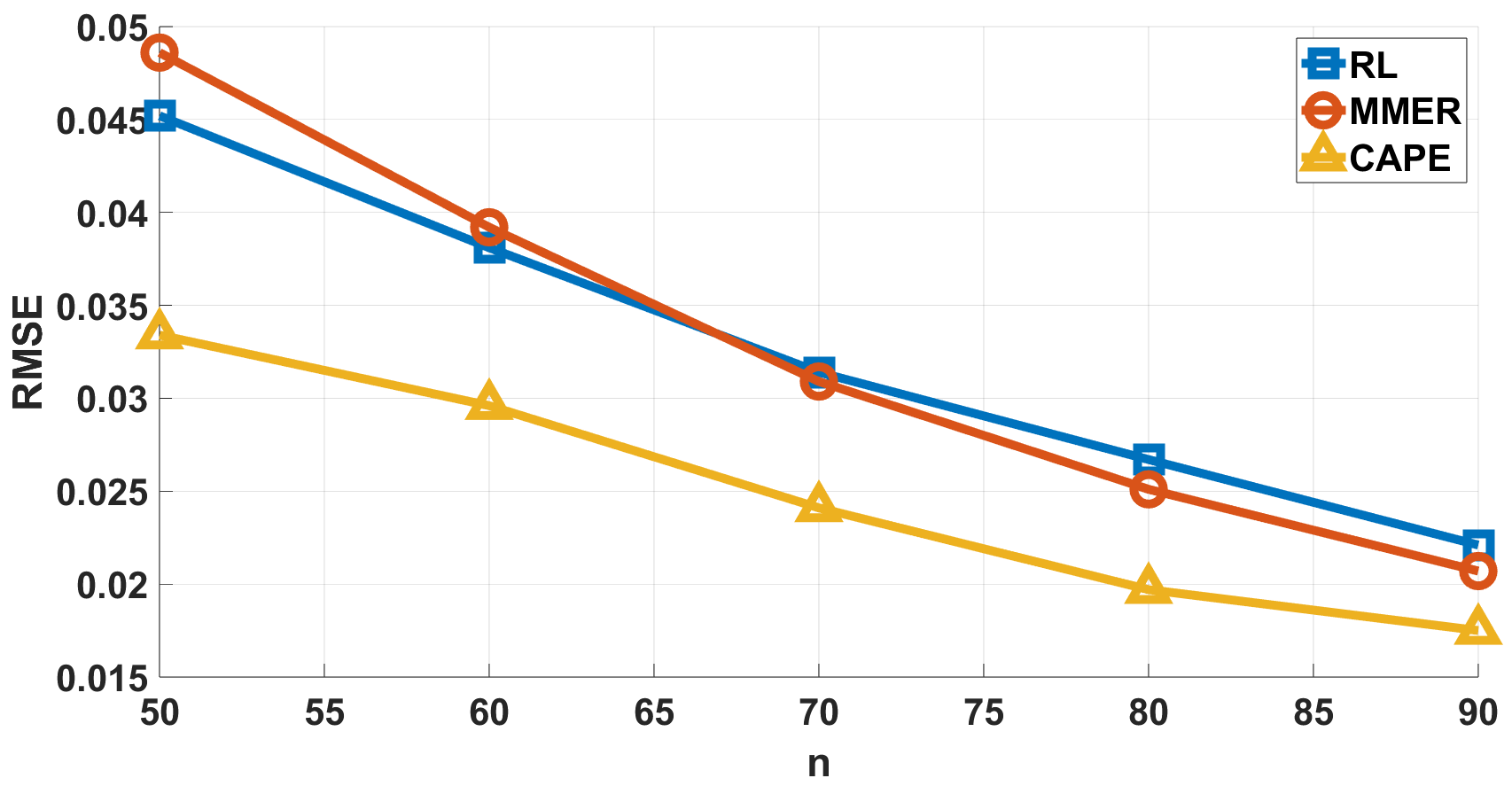}\\
    \includegraphics[scale=0.2]{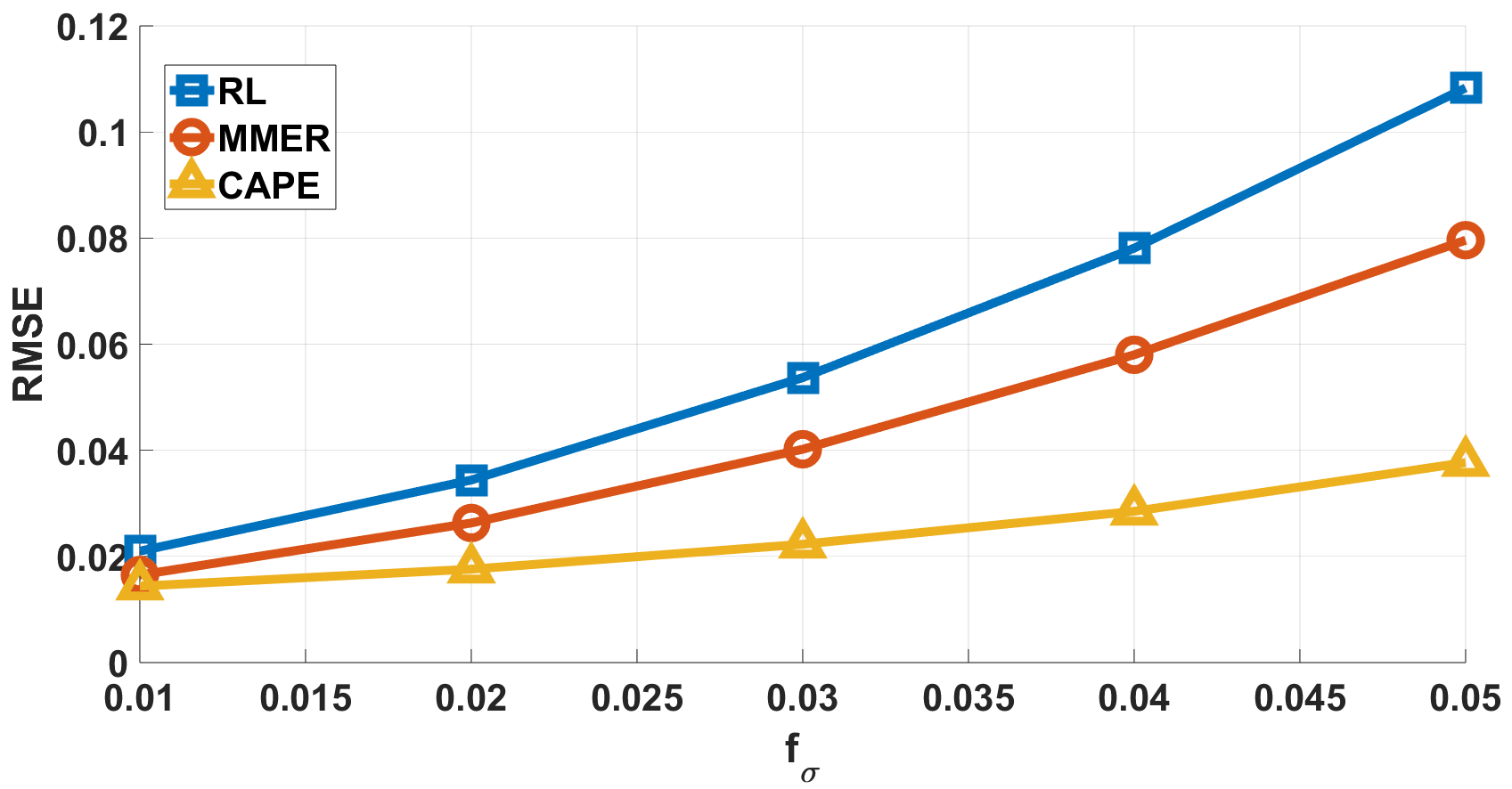}
    \includegraphics[scale=0.2]{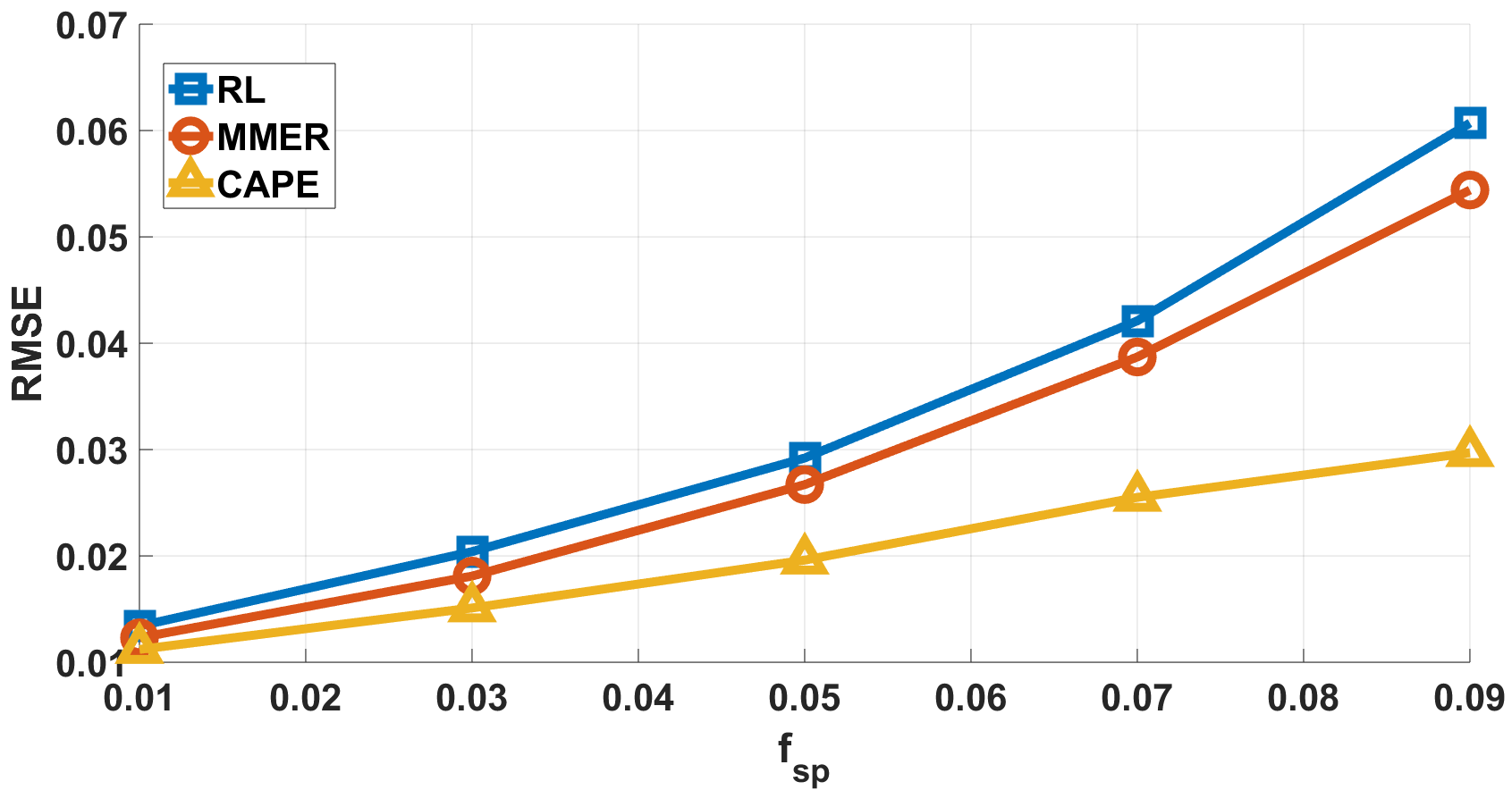}
    \caption{Average RRMSE (100 noise realizations) under \textsf{SSM} mismatches for $\theta=0.3$. Panels correspond to (\textsf{EA}), (\textsf{EB}), (\textsf{EC}), (\textsf{ED}) defined in the main paper. Methods: \textsc{Cape}, RL, \textsc{Mmer}.}
    \label{fig:Rmse_ssm_0.3}
\end{figure*}

\begin{figure*}
   \centering
    \includegraphics[scale=0.2]{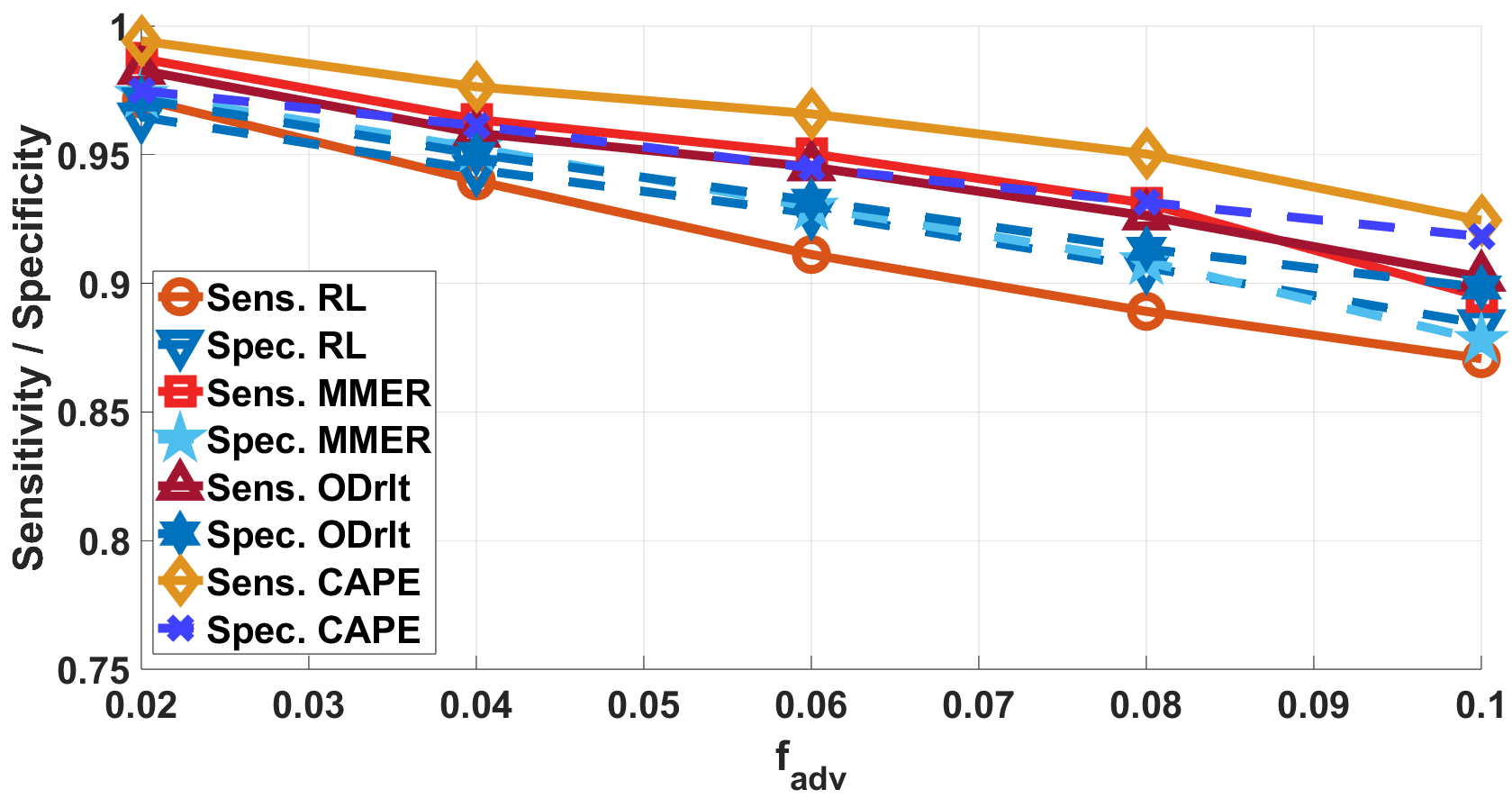}
    \includegraphics[scale=0.2]{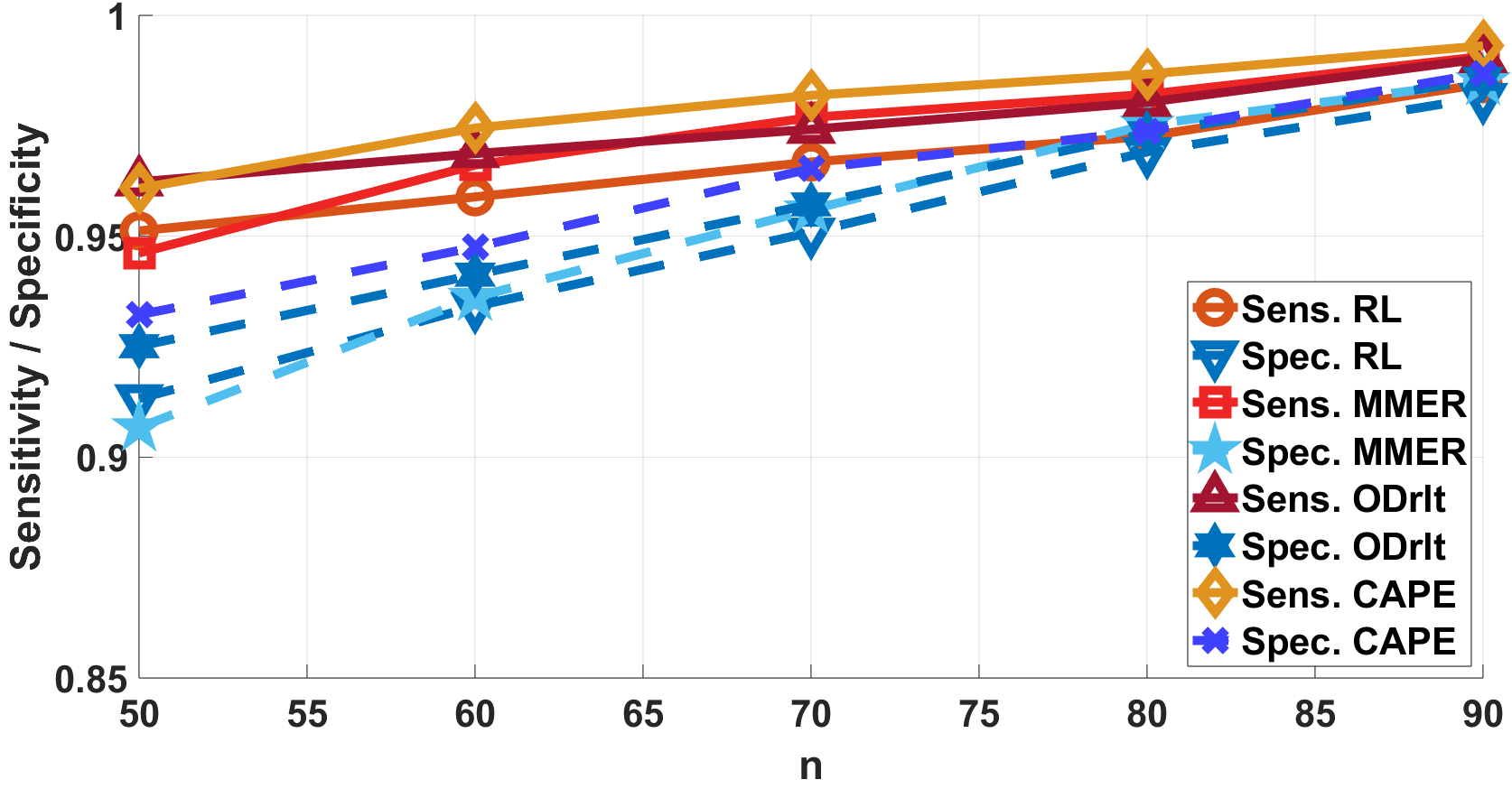}\\
    \includegraphics[scale=0.2]{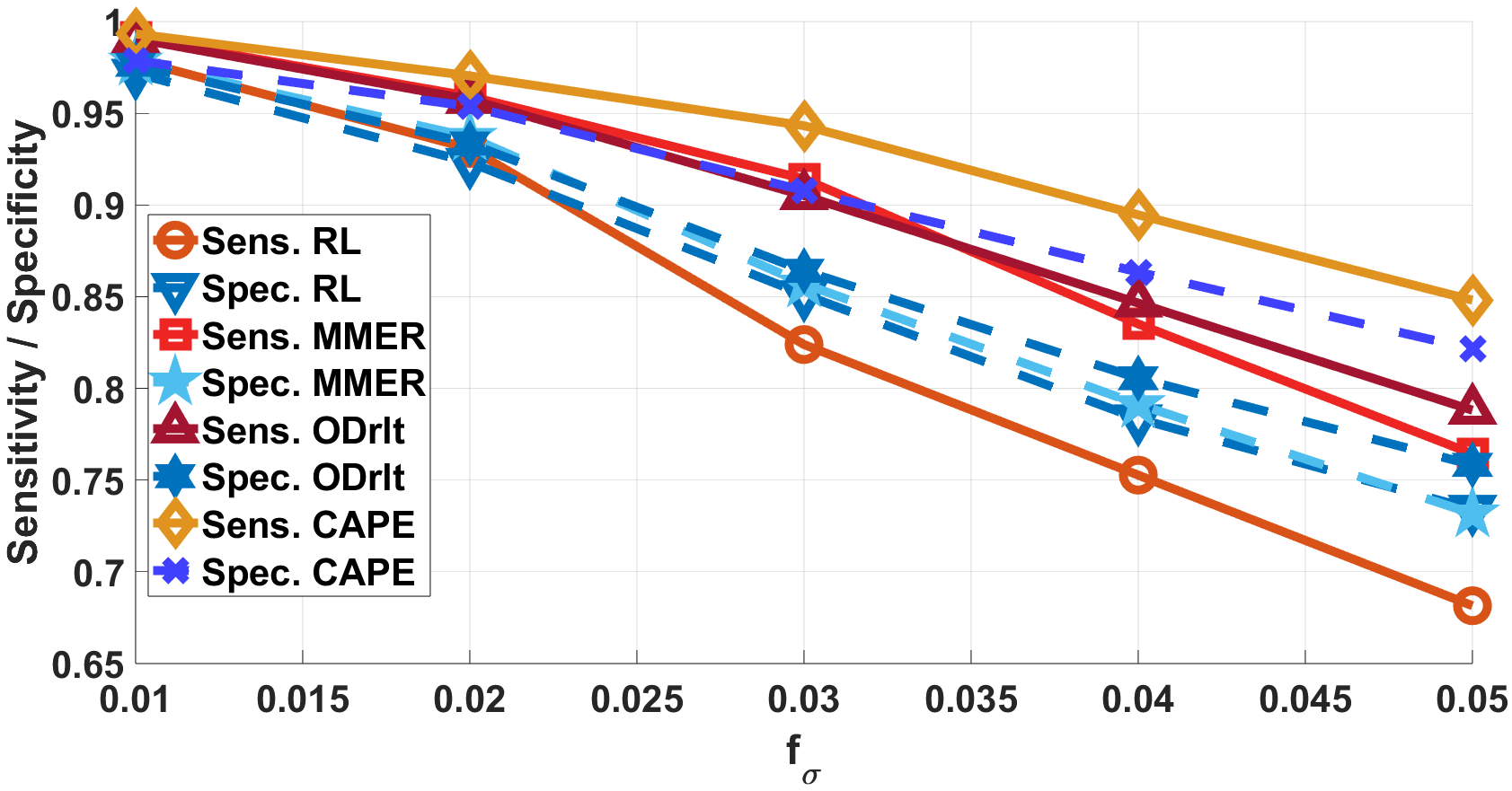}
    \includegraphics[scale=0.2]{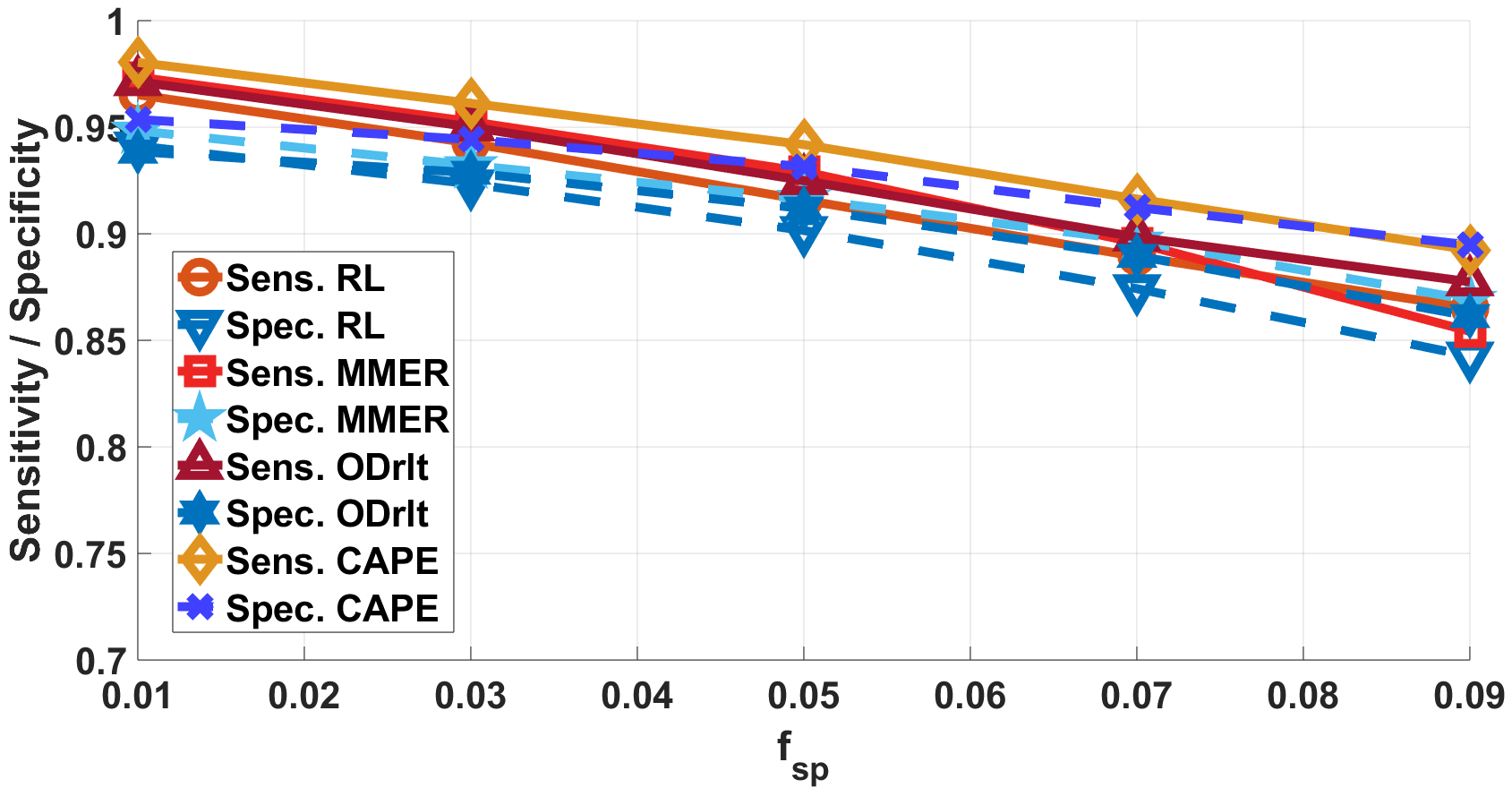}
    \caption{Average Sensitivity and Specificity (over 100 independent noise runs keeping $\boldsymbol{\beta}^*$, $\boldsymbol{A}$, and $\boldsymbol{\delta}^*$ fixed) for detecting defective samples (i.e., non-zero values of $\boldsymbol{\beta}^*$) using
    \textsc{Cape},    Robust \textsc{Lasso} (\textsc{Rl}), \textsc{Mmer}, and \textsc{Odrlt} in the presence of \textsf{ASM} MMes.
    The design matrix $\boldsymbol{A}$ has rows i.i.d. from Centered Bernoulli($0.1$). 
    Left to right, top to bottom: results for experiments (\textsf{EA}), (\textsf{EB}), (\textsf{EC}), (\textsf{ED}) defined in the main paper.}  
    \label{fig:Sens_spec_asm_0.1}    
\end{figure*}

\begin{figure*}
   \centering
    \includegraphics[scale=0.2]{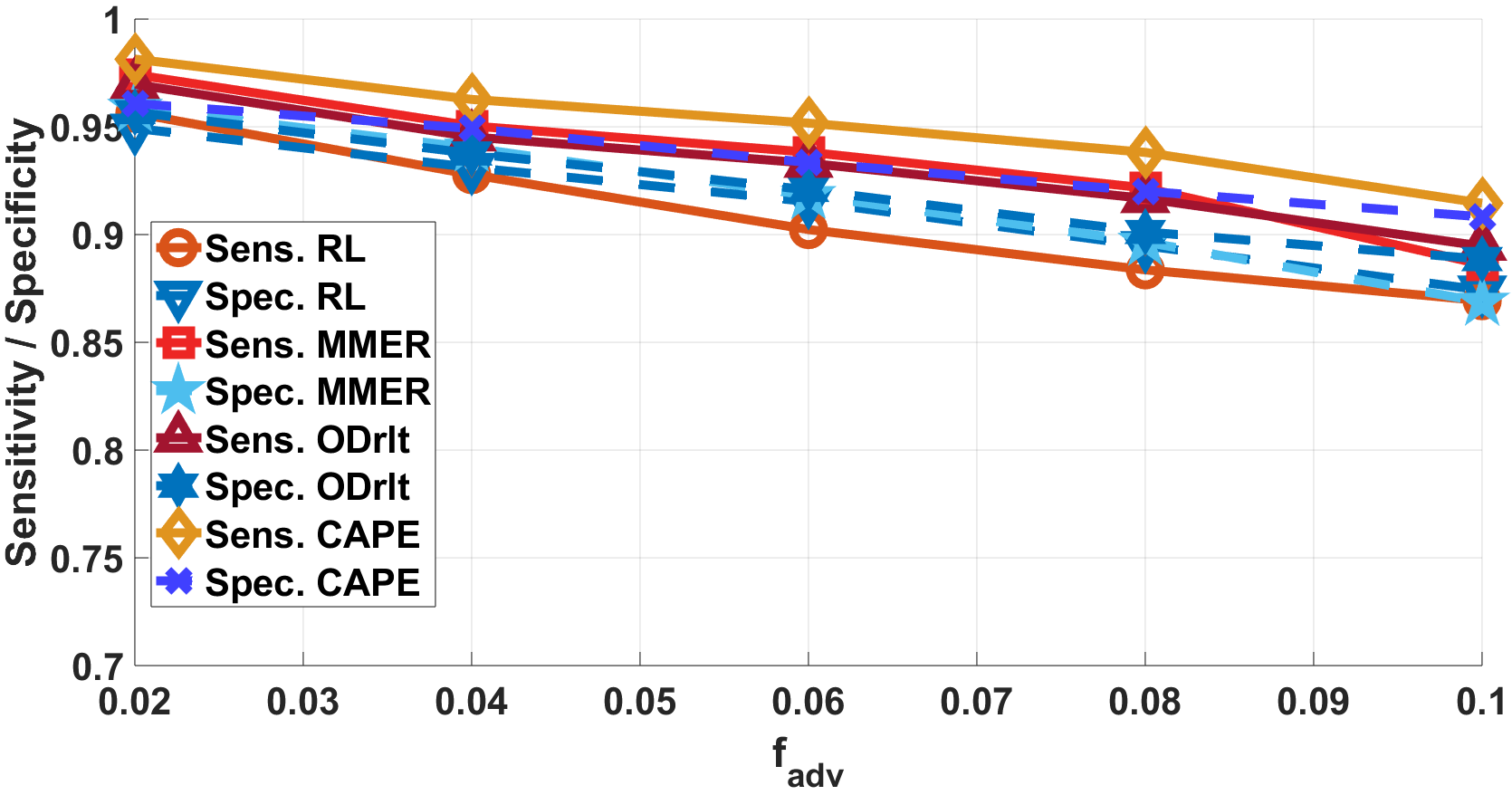}
    \includegraphics[scale=0.2]{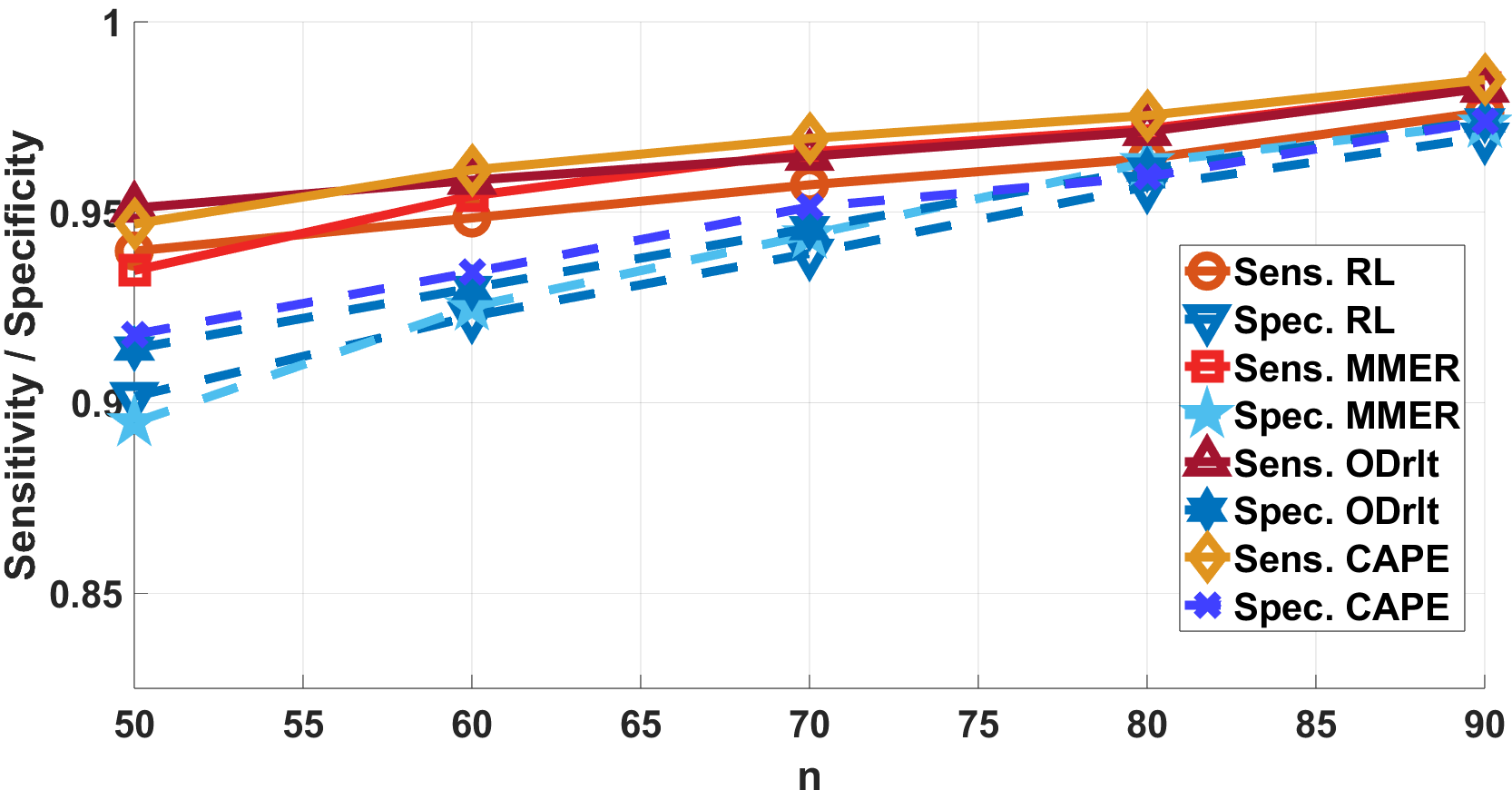}\\
    \includegraphics[scale=0.2]{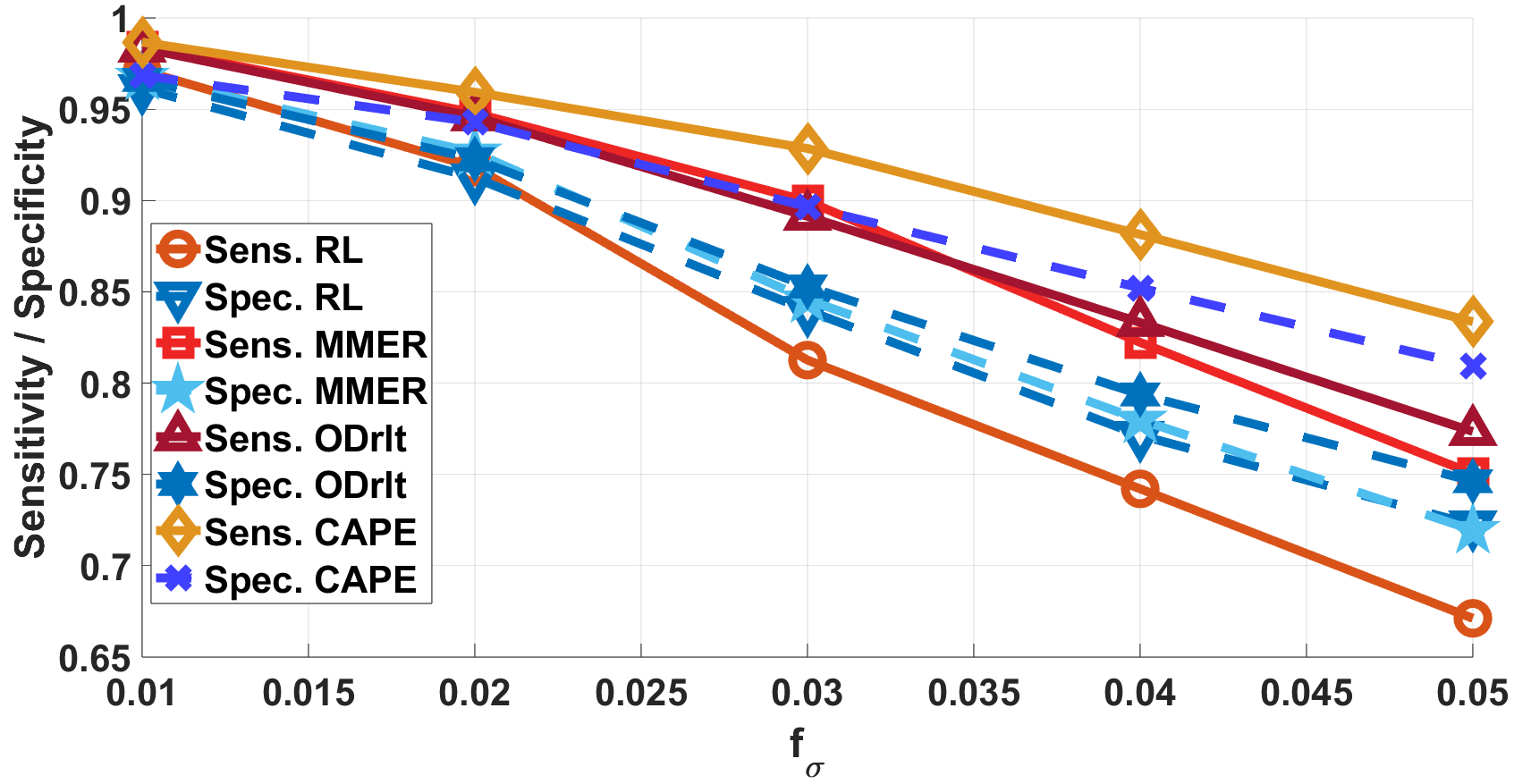}
    \includegraphics[scale=0.2]{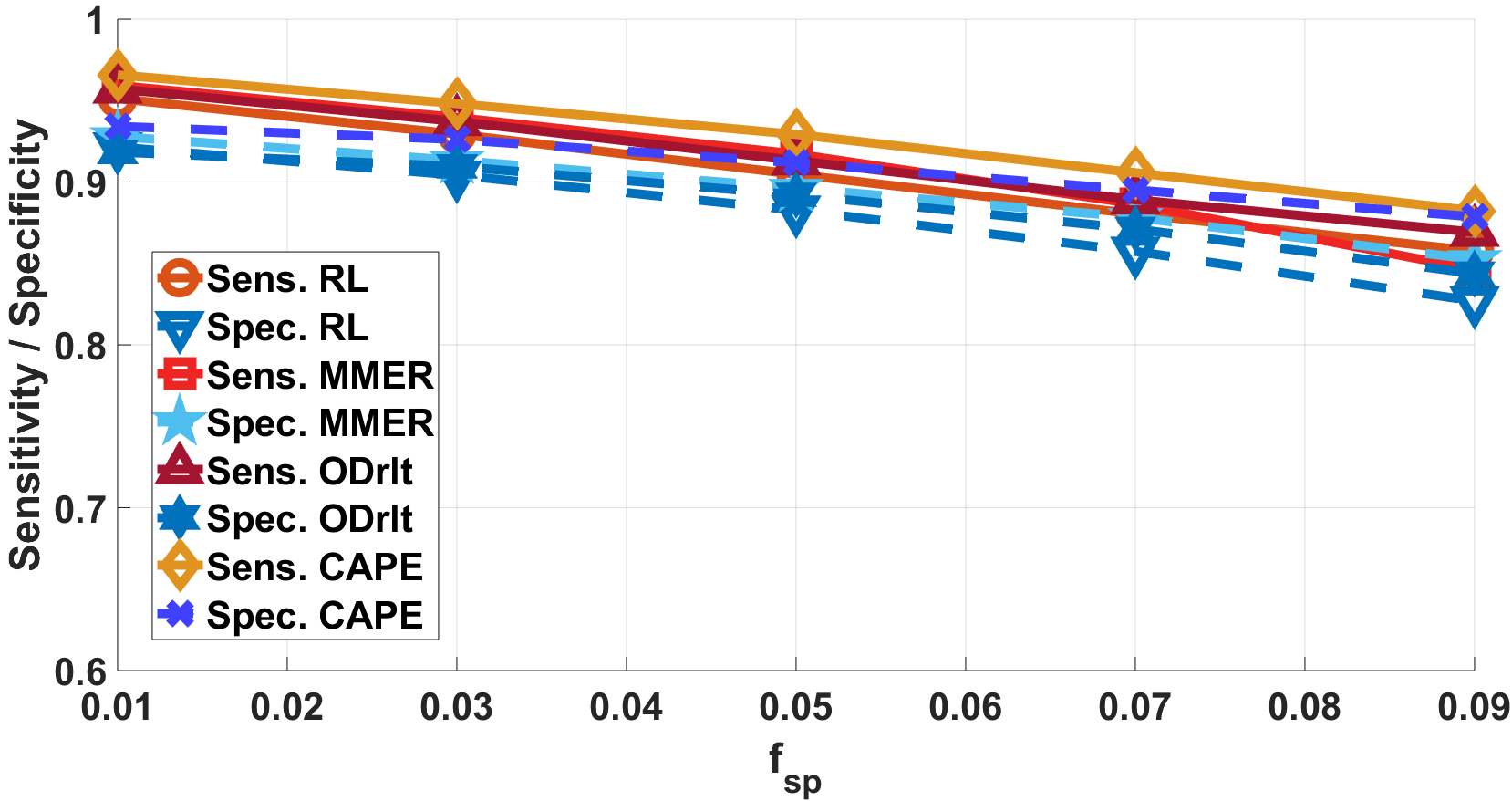}
    \caption{Average Sensitivity and Specificity (over 100 independent noise runs keeping $\boldsymbol{\beta}^*$, $\boldsymbol{A}$, and $\boldsymbol{\delta}^*$ fixed) for detecting defective samples using
    \textsc{Cape},    Robust \textsc{Lasso} (\textsc{Rl}), \textsc{Mmer}, and \textsc{Odrlt} in the presence of \textsf{ASM} MMes.
    The design matrix $\boldsymbol{A}$ has rows i.i.d. from Centered Bernoulli($0.3$). 
    Left to right, top to bottom: results for experiments (\textsf{EA}), (\textsf{EB}), (\textsf{EC}), (\textsf{ED}) defined in the main paper.}  
    \label{fig:Sens_spec_asm_0.3}    
\end{figure*}

\begin{figure*}
   \centering
    \includegraphics[scale=0.2]{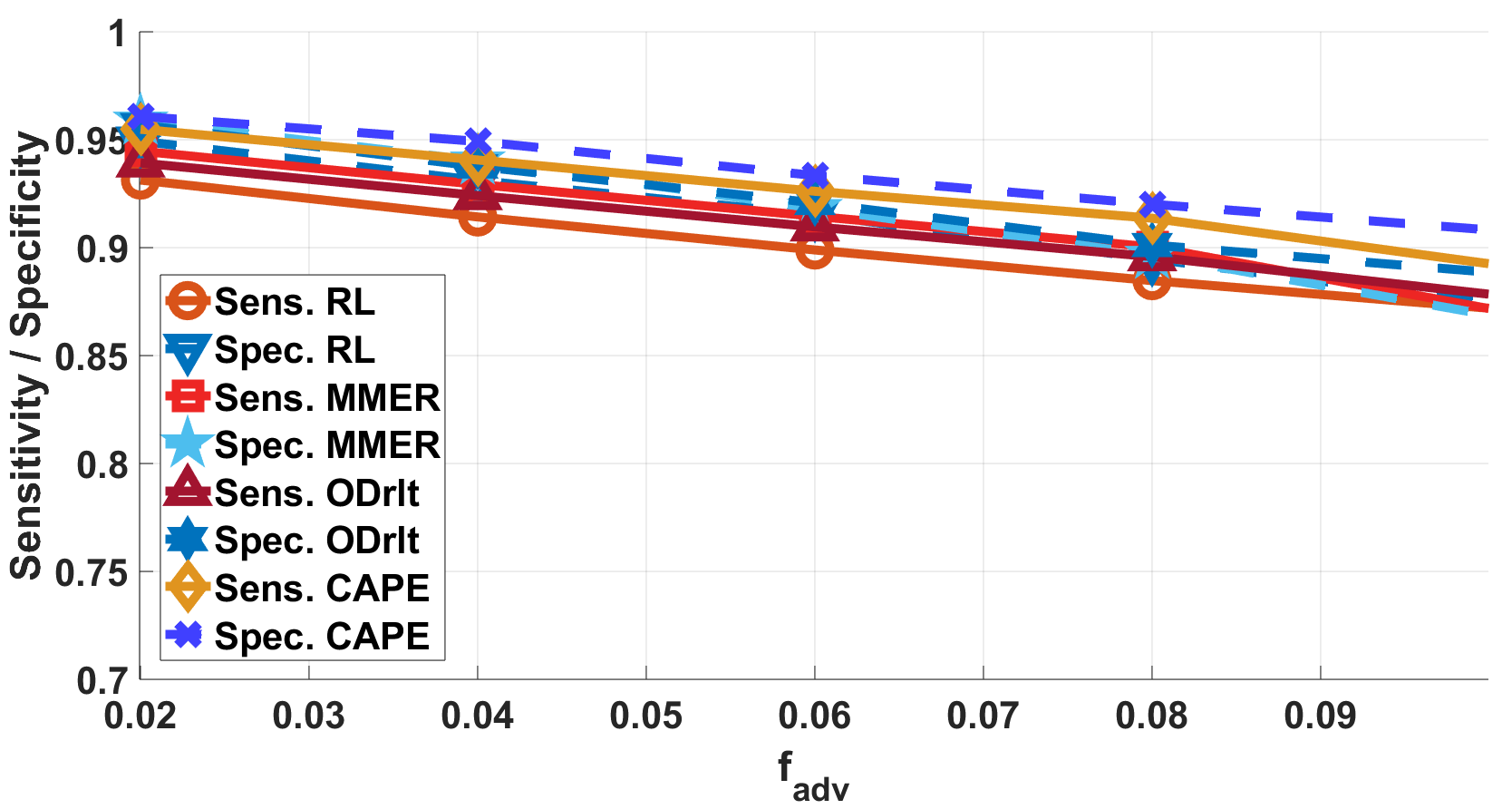}
    \includegraphics[scale=0.2]{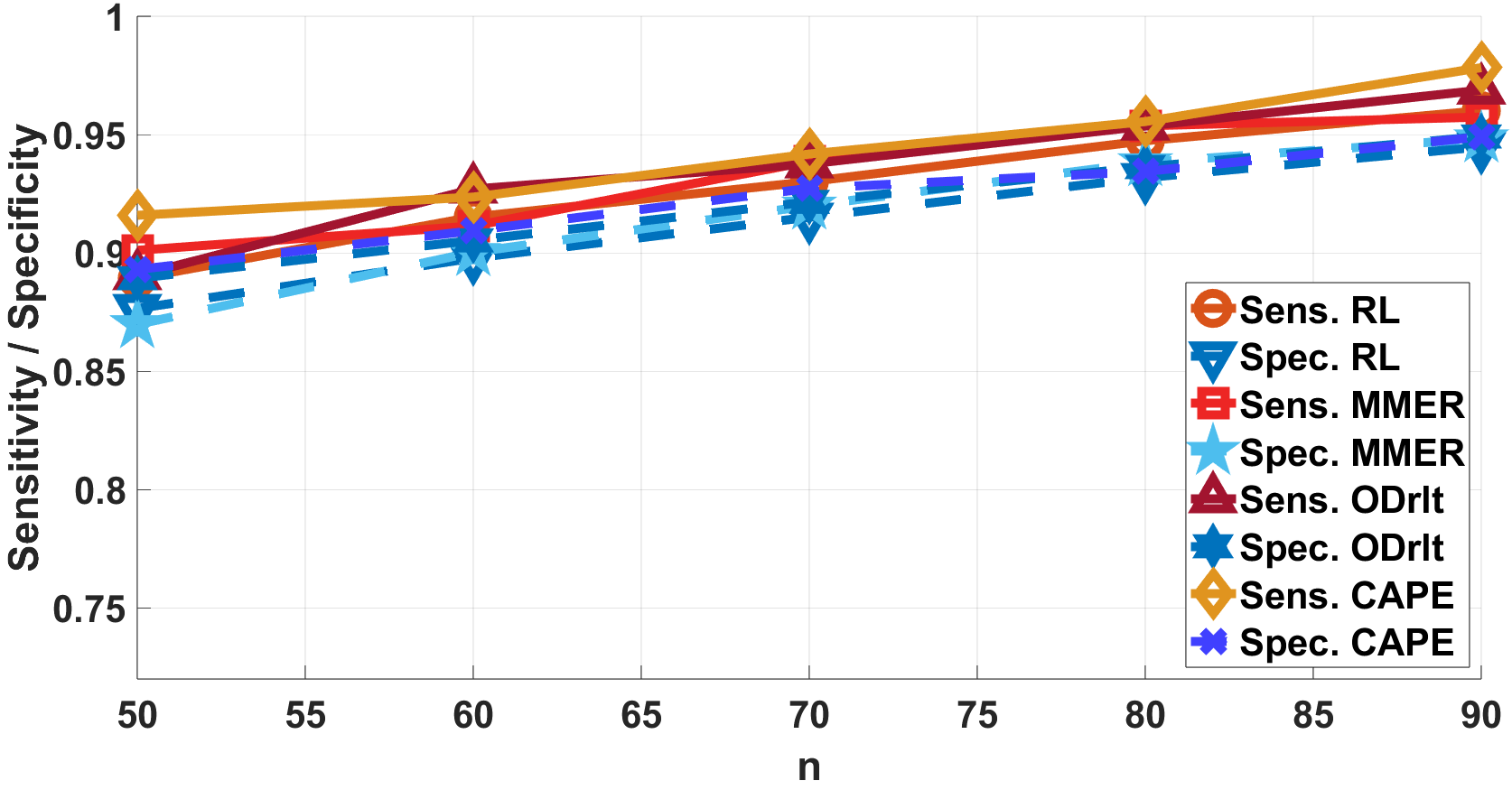}\\
    \includegraphics[scale=0.2]{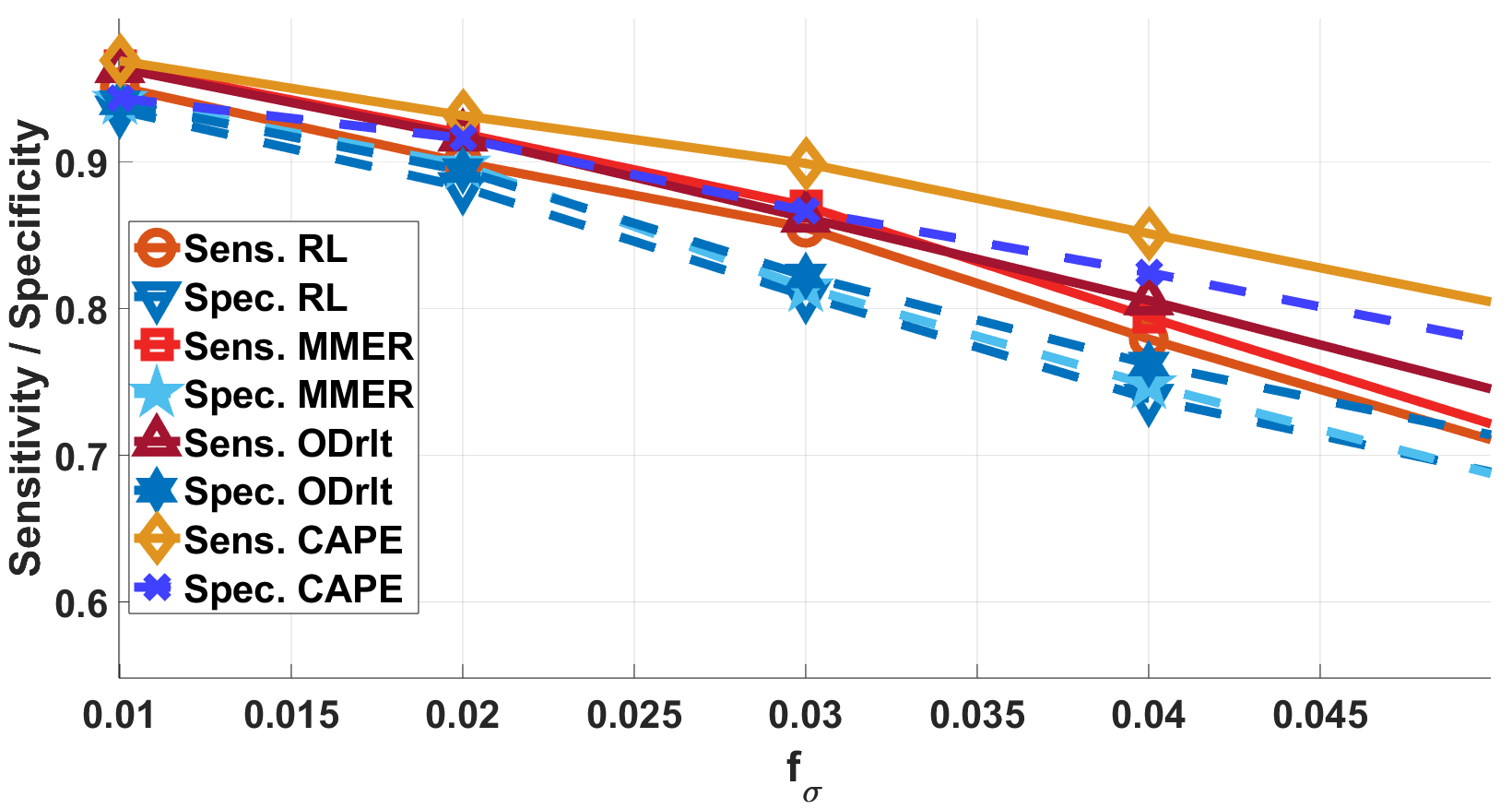}
    \includegraphics[scale=0.2]{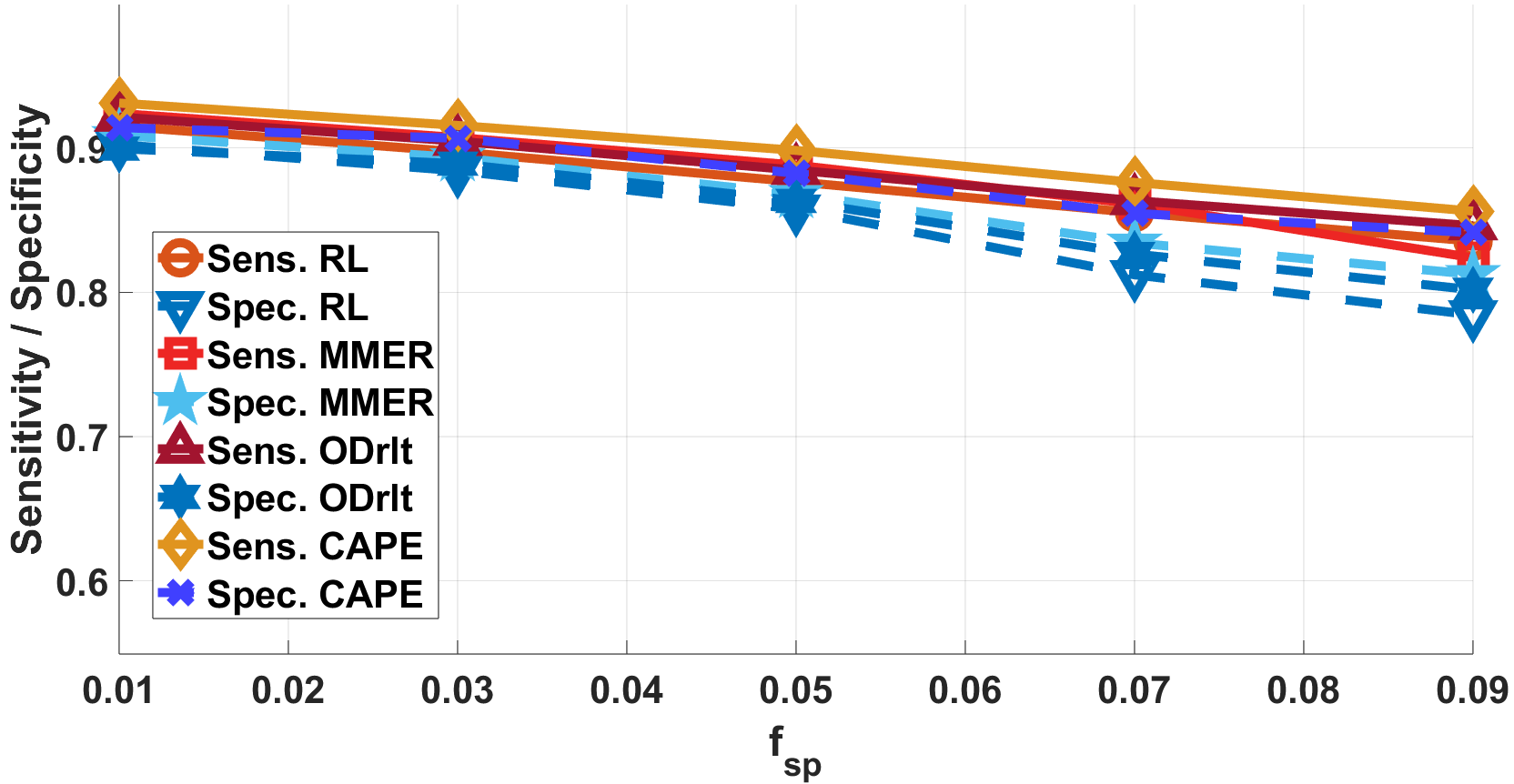}
    \caption{Average Sensitivity and Specificity (over 100 independent noise runs keeping $\boldsymbol{\beta}^*$, $\boldsymbol{A}$, and $\boldsymbol{\delta}^*$ fixed) for detecting defective samples using
    \textsc{Cape},    Robust \textsc{Lasso} (\textsc{Rl}), \textsc{Mmer}, and \textsc{Odrlt} in the presence of \textsf{ASM} MMEs.
    The design matrix $\boldsymbol{A}$ has rows i.i.d. from $CB(0.5)$. Left to right, top to bottom: results for experiments (\textsf{EA}), (\textsf{EB}), (\textsf{EC}), (\textsf{ED}) defined in the main paper.}  
    \label{fig:Sens_spec_asm_0.5}    
\end{figure*}

\begin{figure*}
   \centering
    \includegraphics[scale=0.19]{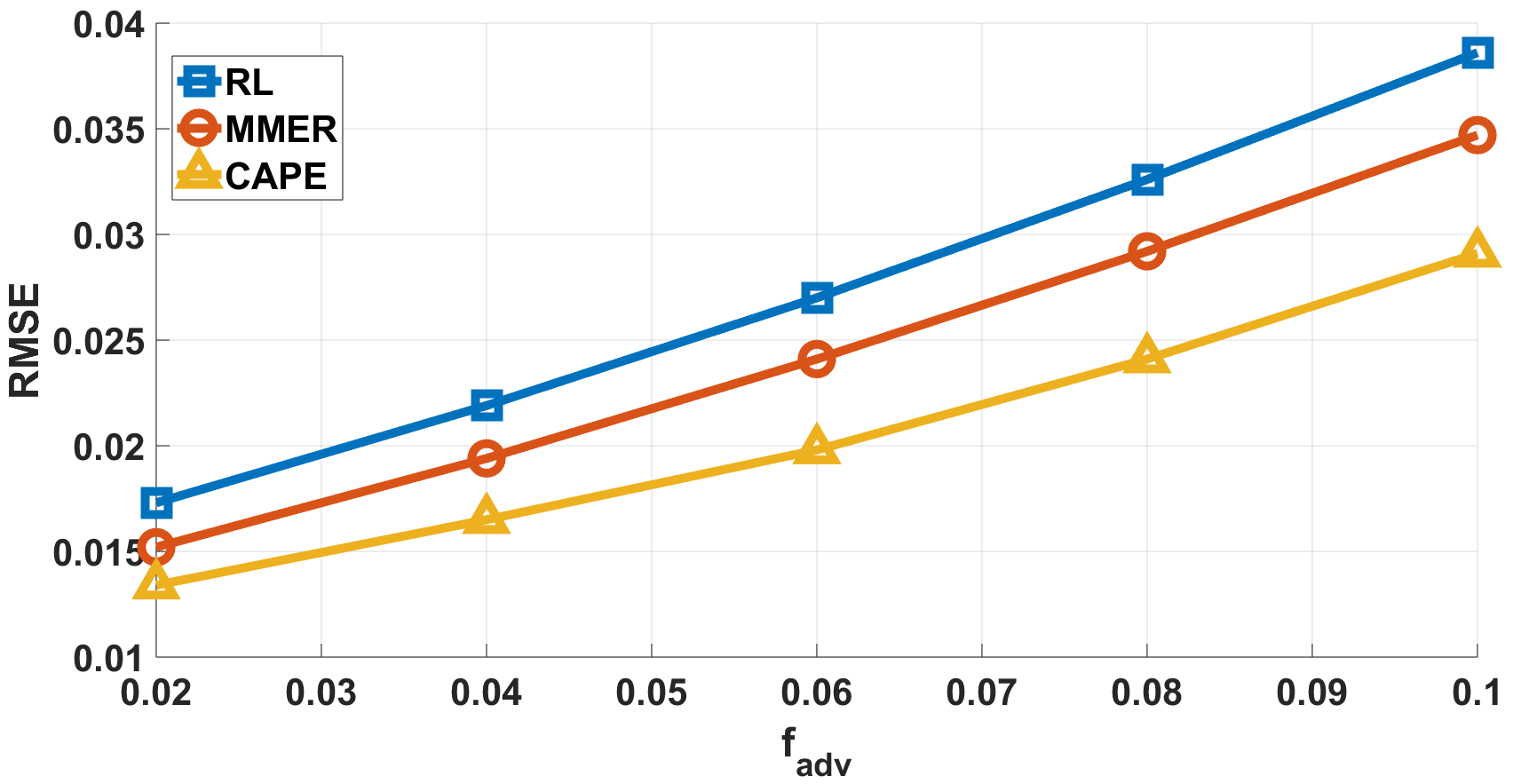}
    \includegraphics[scale=0.19]{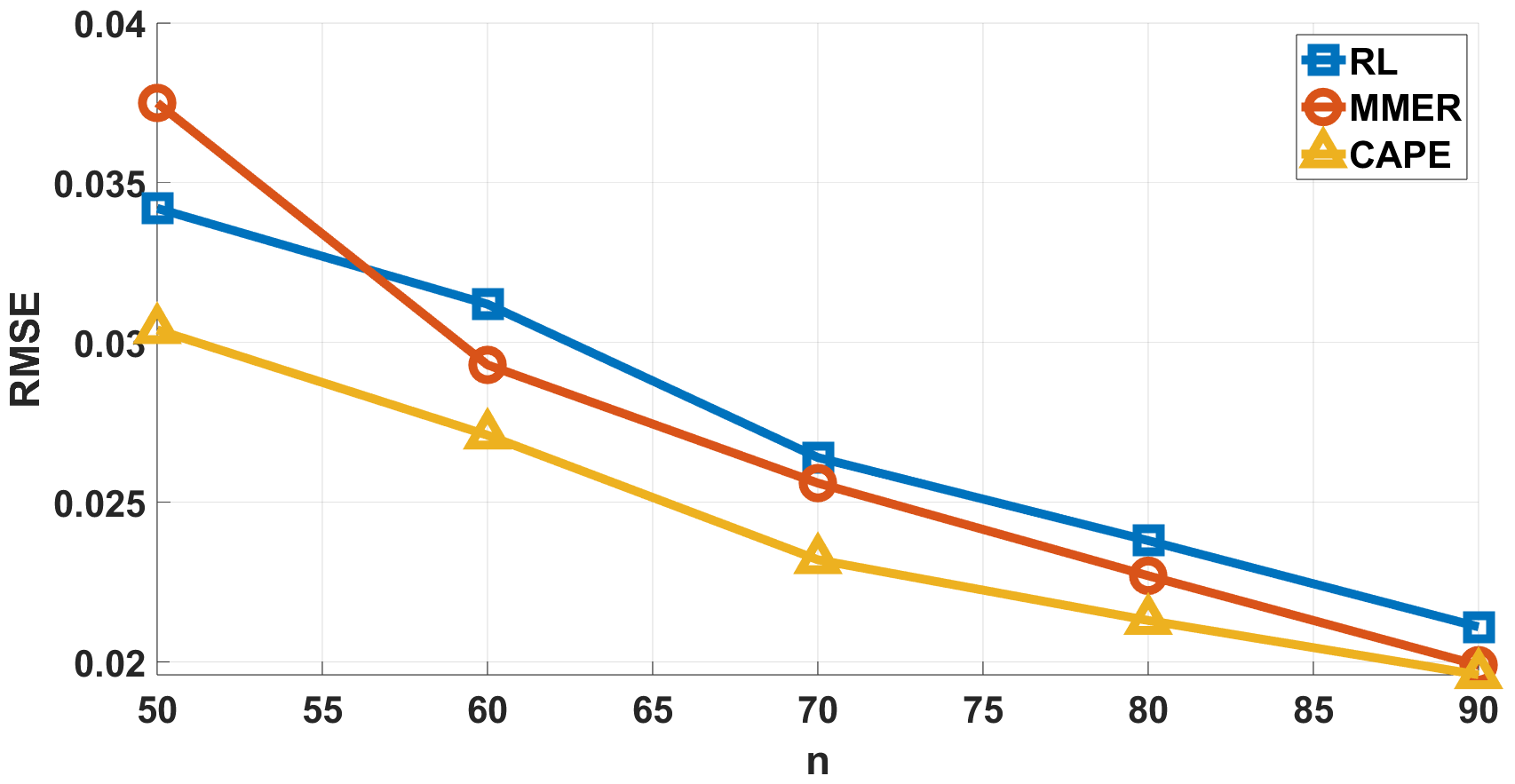}\\
    \includegraphics[scale=0.2]{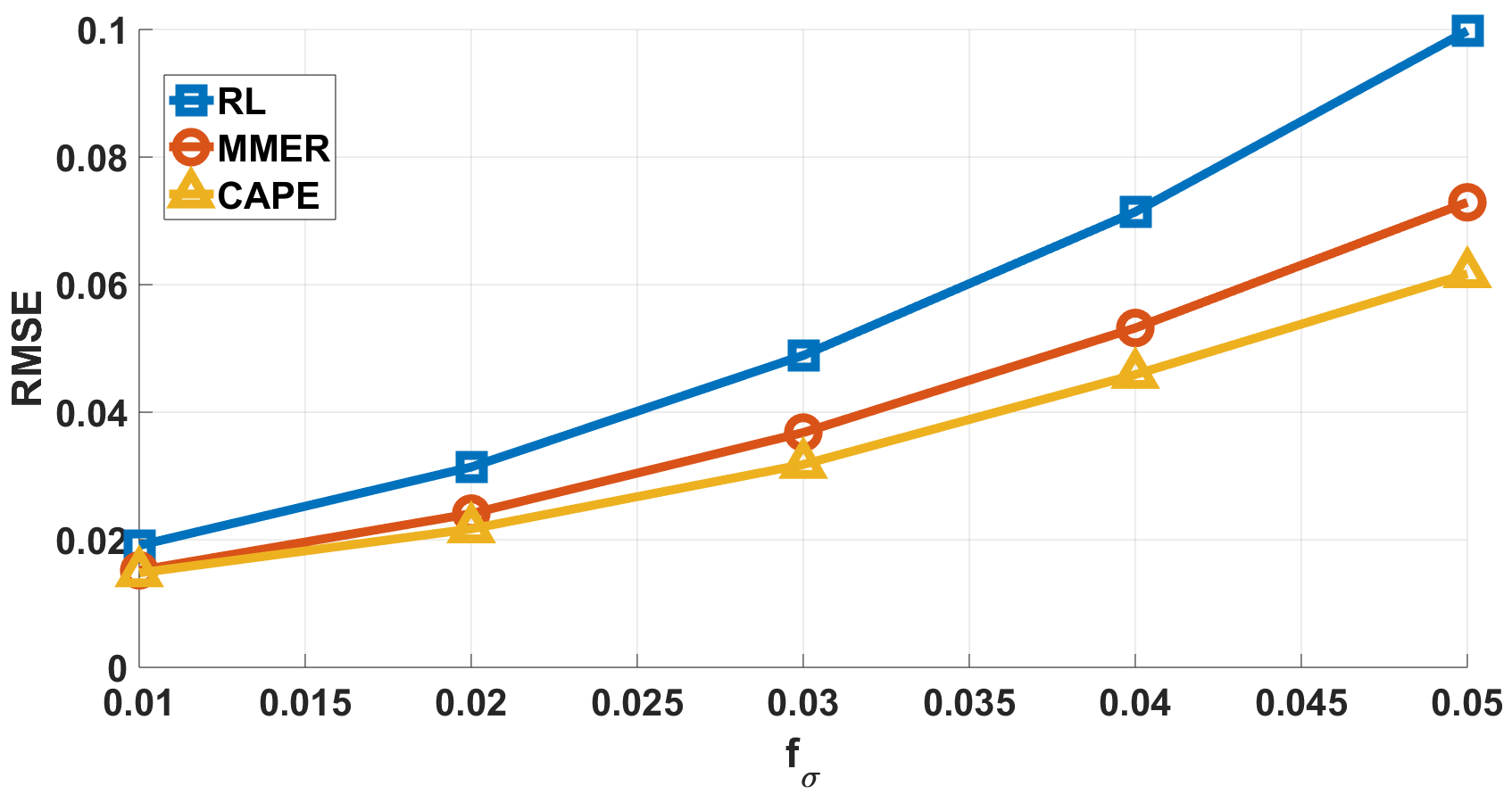}
    \includegraphics[scale=0.2]{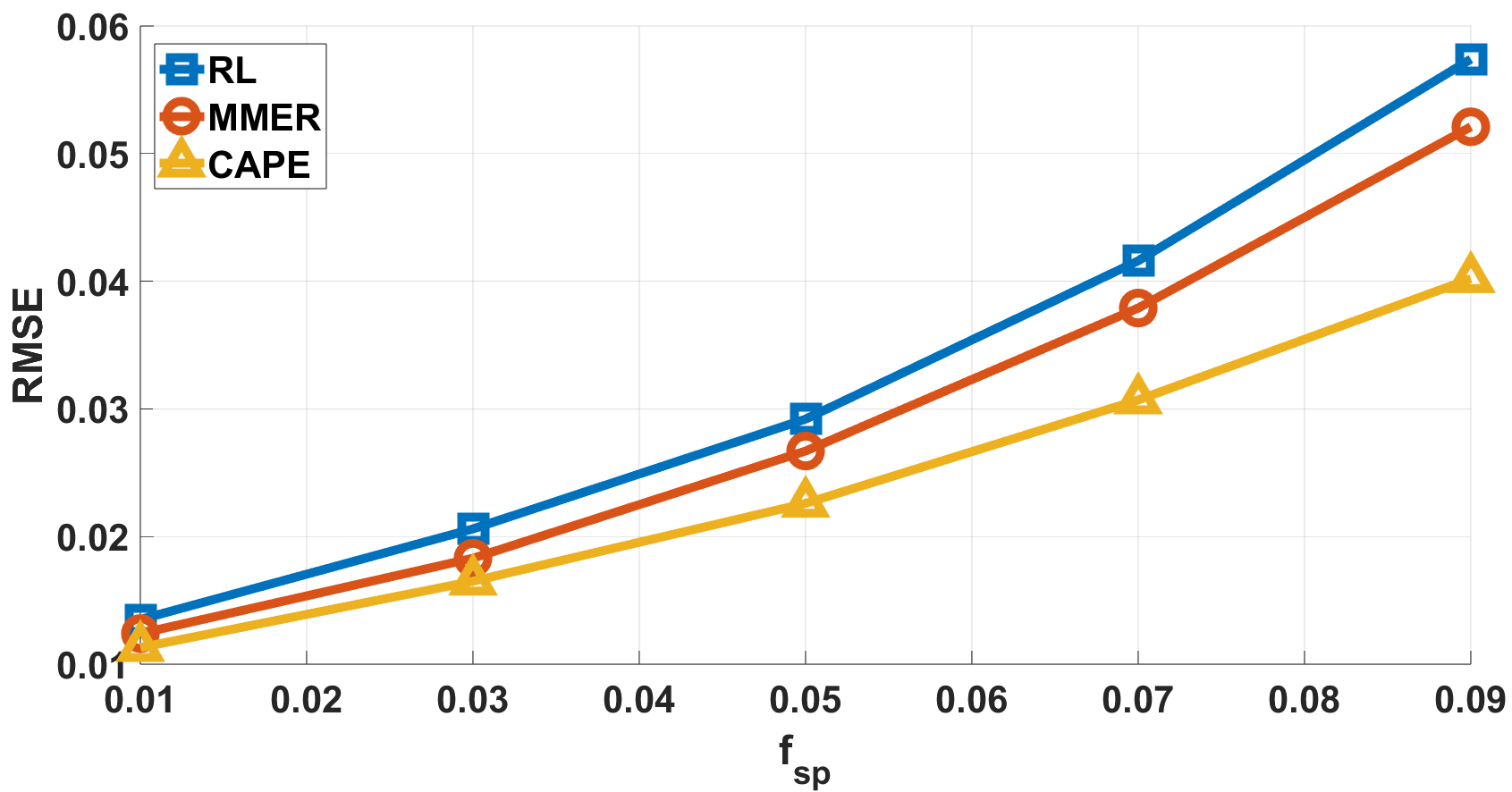}
    \caption{Average RRMSE (over 100 independent noise runs keeping $\boldsymbol{\beta}^*$, $\boldsymbol{A}$, and $\boldsymbol{\delta}^*$ fixed) for detecting defective samples using
    \textsc{Cape},    Robust \textsc{Lasso} (\textsc{Rl}), and \textsc{Mmer} in the presence of \textsf{ASM} MMes.
    The design matrix $\boldsymbol{A}$ has rows i.i.d. from Centered Bernoulli($0.1$). 
    Left to right, top to bottom: results for experiments (\textsf{EA}), (\textsf{EB}), (\textsf{EC}), (\textsf{ED}) defined in the main paper.}  
    \label{fig:Rmse_asm_0.1}    
\end{figure*}

\begin{figure*}
   \centering
    \includegraphics[scale=0.2]{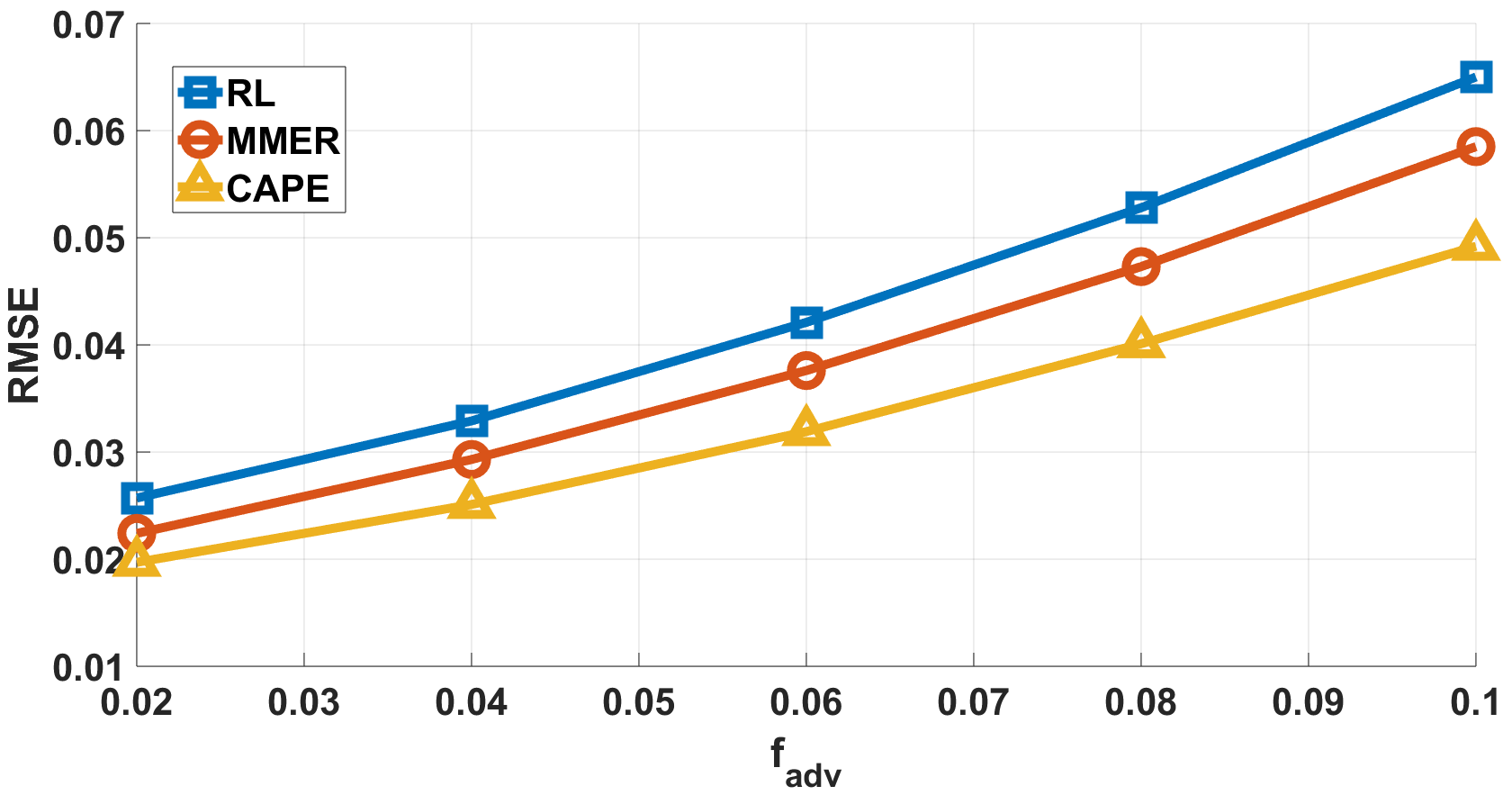}
    \includegraphics[scale=0.2]{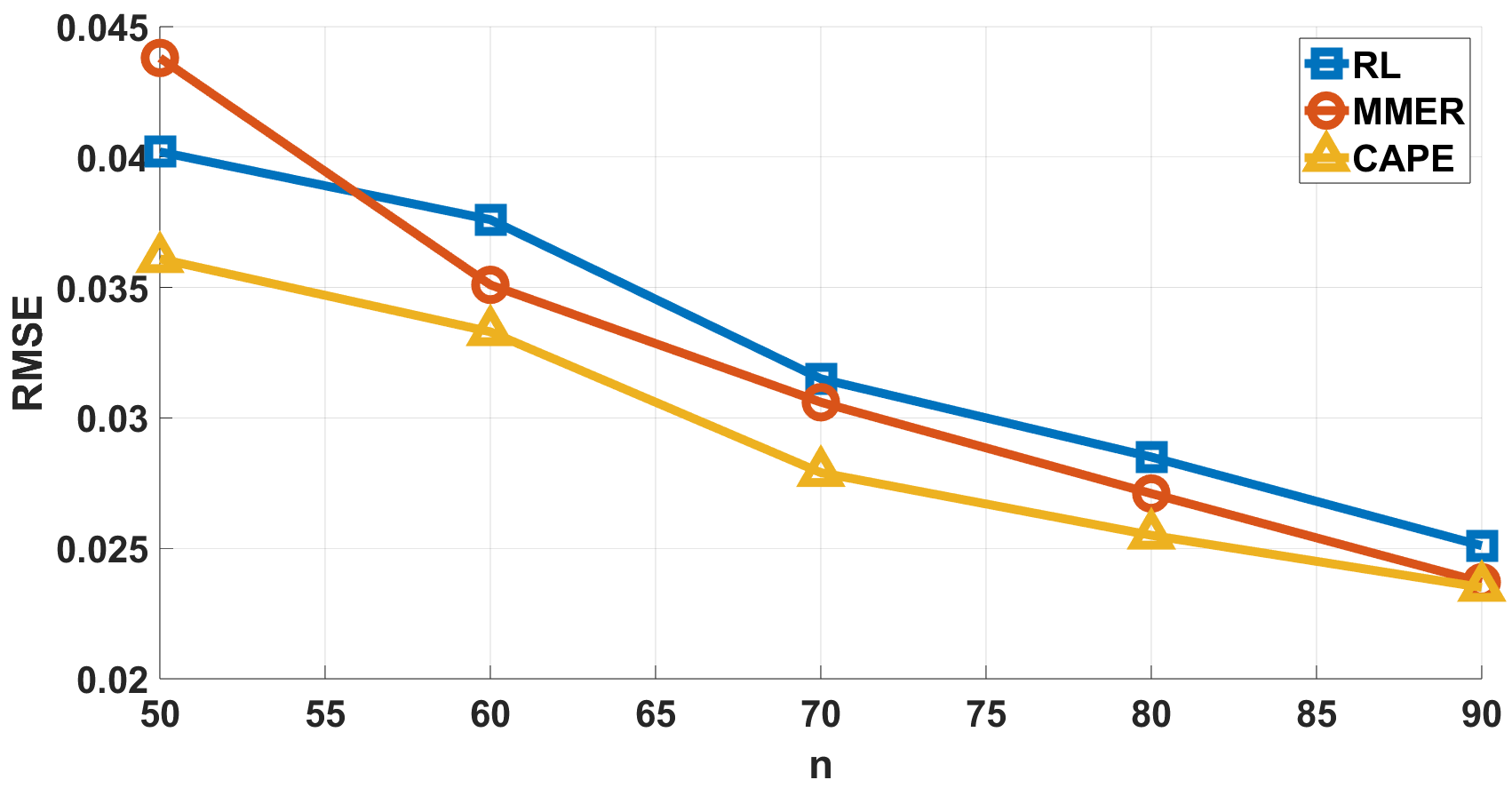}\\
    \includegraphics[scale=0.2]{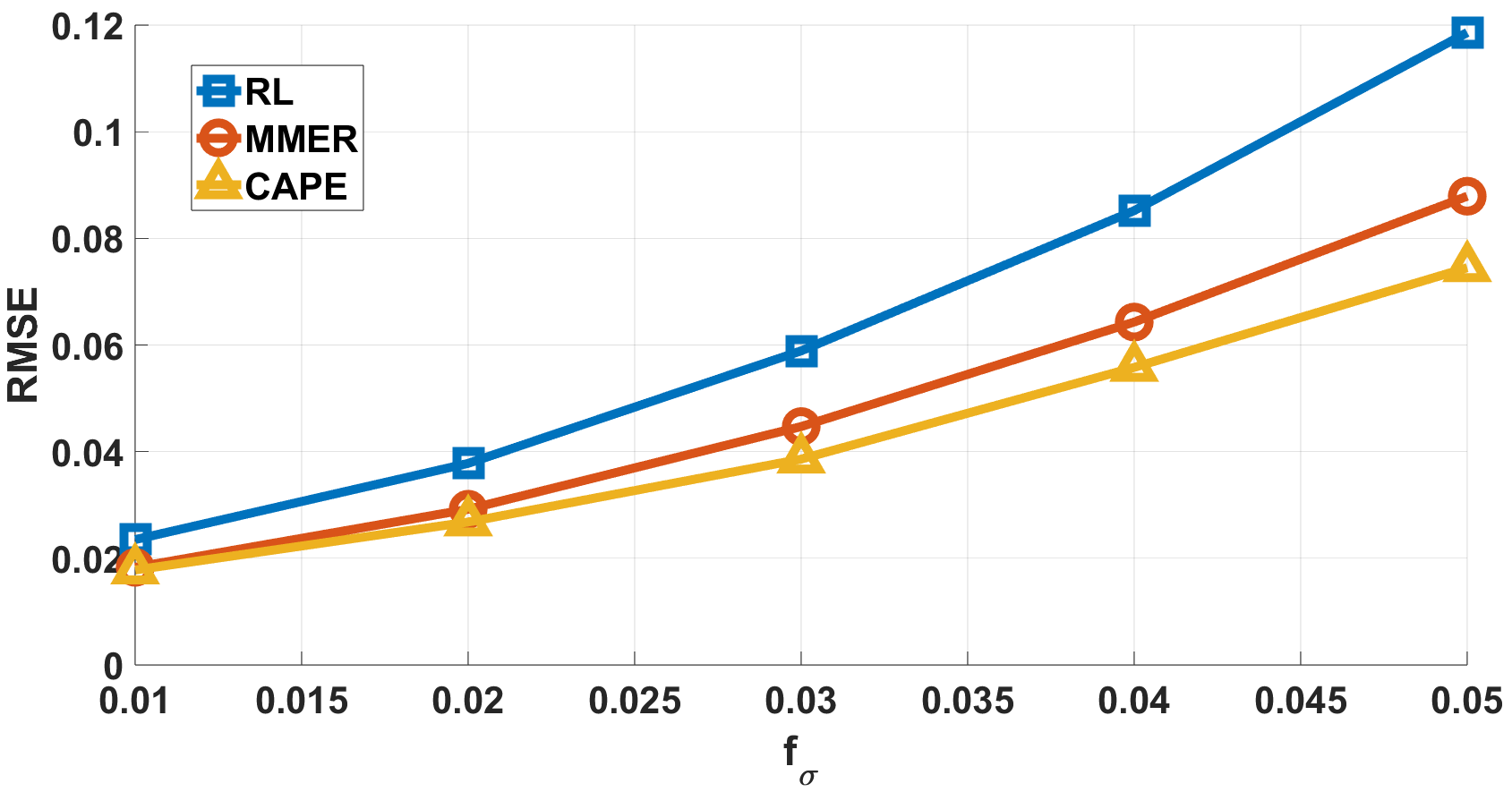}
    \includegraphics[scale=0.2]{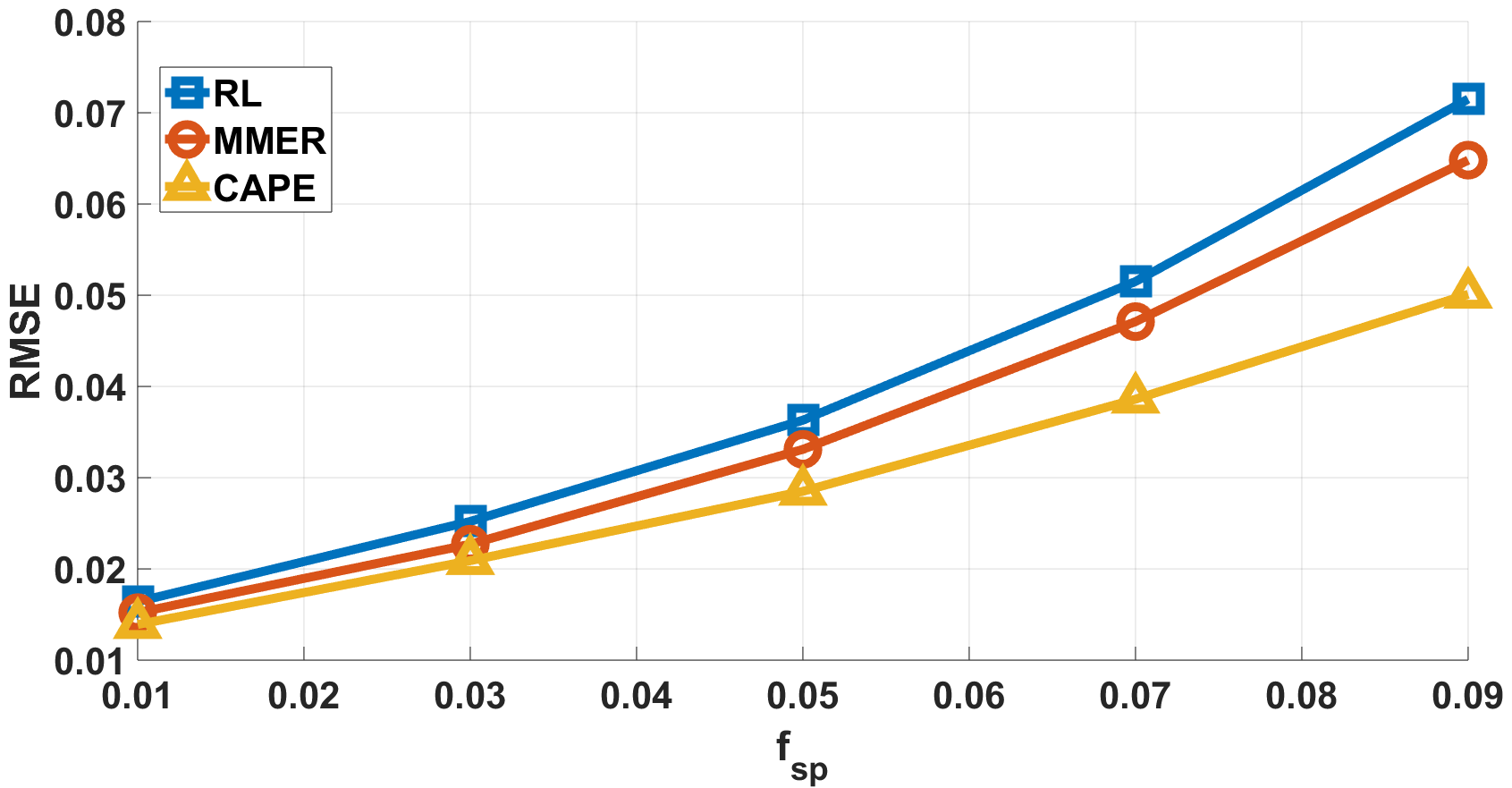}
    \caption{Average RRMSE for $\boldsymbol{\beta}^*$ using
    \textsc{Cape},    Robust \textsc{Lasso} (\textsc{Rl}), and \textsc{Mmer} under \textsf{ASM} MMes.
    The design matrix $\boldsymbol{A}$ has rows i.i.d. from Centered Bernoulli($0.3$). 
    Left to right, top to bottom: results for experiments (\textsf{EA}), (\textsf{EB}), (\textsf{EC}), (\textsf{ED}) defined in the main paper.}  
    \label{fig:Rmse_asm_0.3}    
\end{figure*}

\begin{figure*}
   \centering
    \includegraphics[scale=0.19]{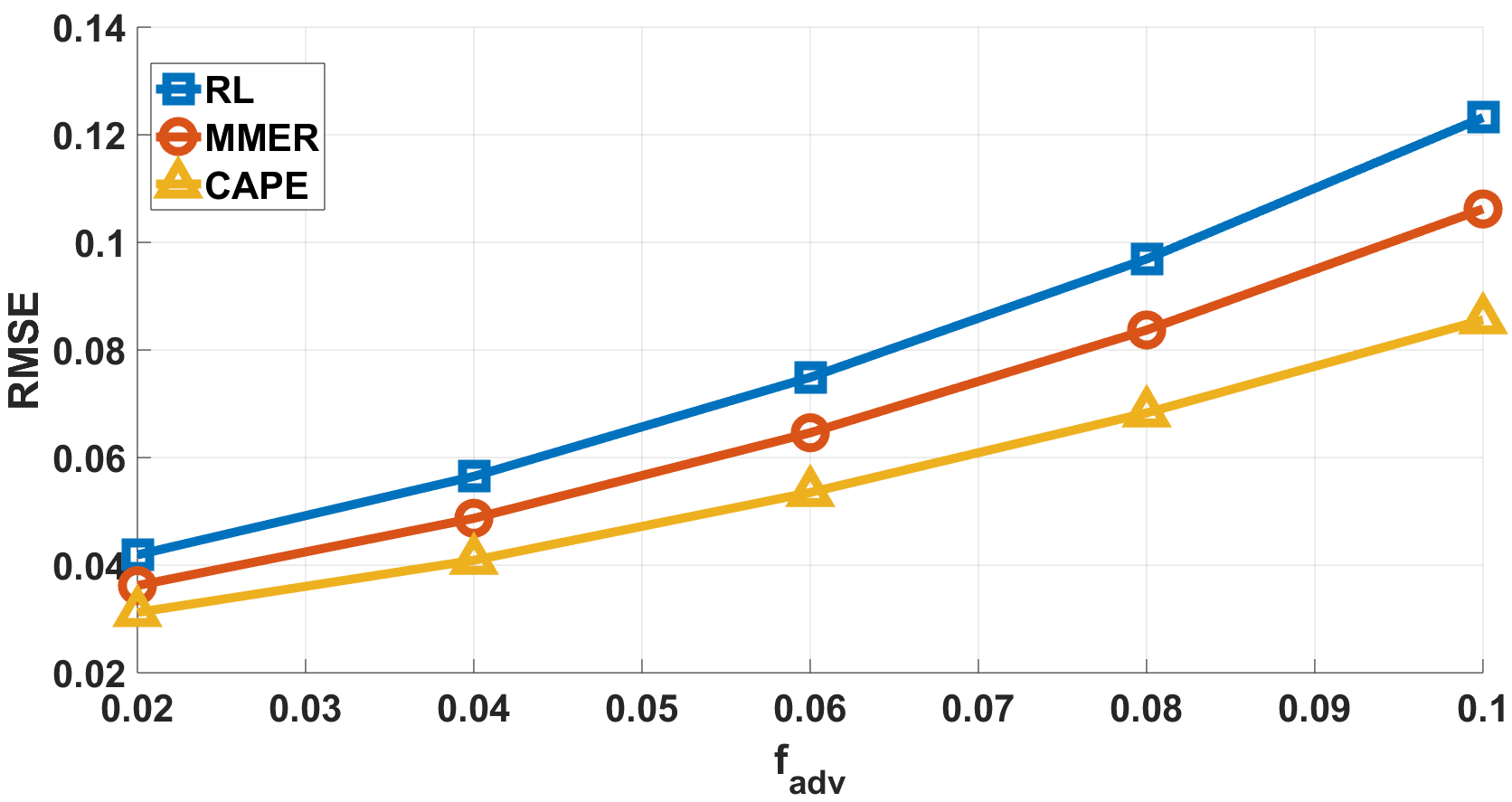}
    \includegraphics[scale=0.19]{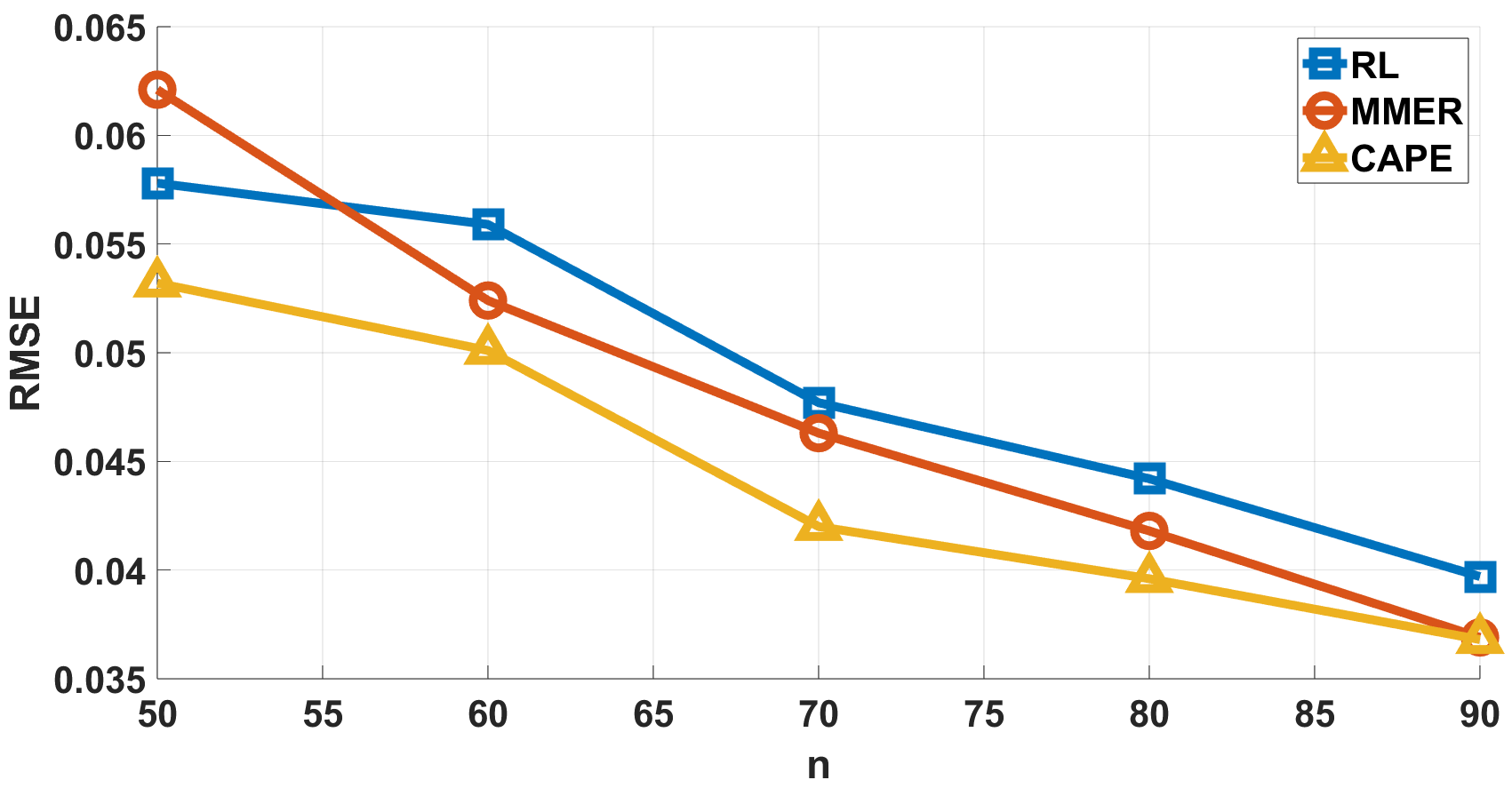}\\
    \includegraphics[scale=0.2]{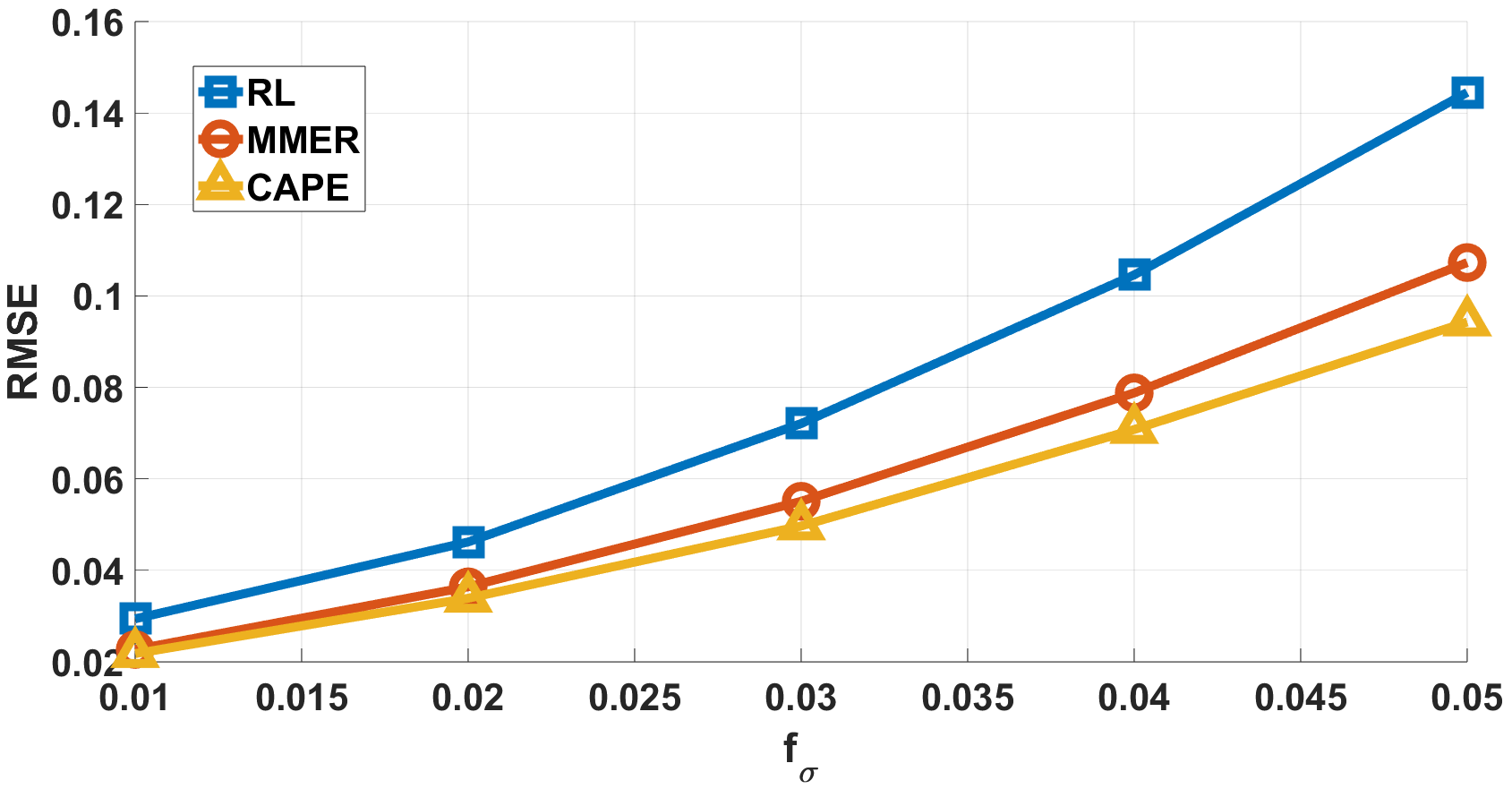}
    \includegraphics[scale=0.2]{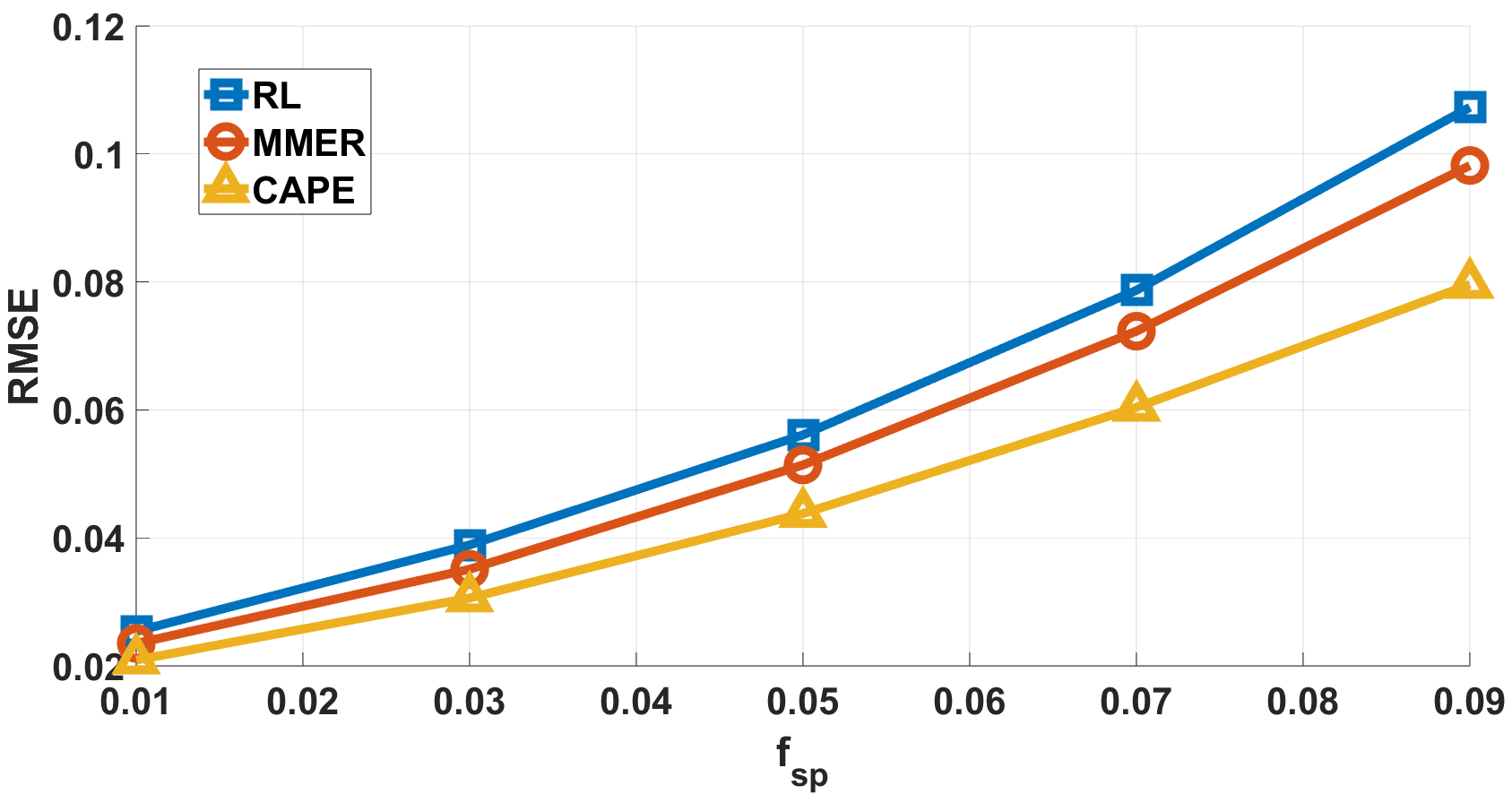}
    \caption{Average RRMSE for $\boldsymbol{\beta}^*$ using
    \textsc{Cape},    Robust \textsc{Lasso} (\textsc{Rl}), and \textsc{Mmer} under \textsf{ASM} MMes.
    The design matrix $\boldsymbol{A}$ has rows i.i.d. from Centered Bernoulli($0.5$). 
    Left to right, top to bottom: results for experiments (\textsf{EA}), (\textsf{EB}), (\textsf{EC}), (\textsf{ED}) defined in the main paper.}  
    \label{fig:Rmse_asm_0.5}    
\end{figure*}

\subsection{Correction of Probabilistic Errors in Group Testing \cite{Cheragchi2011}}
The work of \cite{Cheragchi2011} studies probabilistic and structured errors in the pooling matrix $\boldsymbol{B}$, where a 1-entry may flip to 0 with probability $\vartheta' = 1-\vartheta$, while 0-entries remain unchanged. They propose a combinatorial Distance Decoder (\textsc{Dist-D}) to recover the binary signal vector from binary measurements under such asymmetric errors. Viewing these flips as MMEs, we apply our correction algorithm \textsc{Cape} to correct them and compare its performance against \textsc{Dist-D}.

For a proper comparison between \textsc{Cape} and \textsc{Dist-D} \cite{Cheragchi2011}, we first binarize the real-valued signal $\boldsymbol{\beta}^*$ by setting $x^*_j = 1$ if $\beta^*_j > 0$ and $x^*_j = 0$ otherwise. The method in \cite{Cheragchi2011} requires binary measurements, so we generate $\boldsymbol{u} := \boldsymbol{B}\boldsymbol{x}^*$ using AND/OR operations, with $\boldsymbol{B}$ drawn from a Bernoulli$(0.5)$ model. For \textsc{Cape}, we construct $\boldsymbol{z} = \boldsymbol{B}\boldsymbol{\beta}^* + \boldsymbol{\eta}$, where we add Gaussian noise $\boldsymbol{\eta} \sim \mathcal{N}(0,\sigma^2\boldsymbol{I})$ with $\sigma$ determined by $f_\sigma=0.01$. Thus, \textsc{Dist-D} estimates $\boldsymbol{x}^*$ from $(\boldsymbol{u},\boldsymbol{B})$, while \textsc{Cape} estimates $\boldsymbol{\beta}^*$ from $(\boldsymbol{z},\boldsymbol{B})$.

To evaluate \textsc{Cape} against \textsc{Dist-D}, we compared their Sensitivity and Specificity under two experimental settings:
(\textit{i}) varying the flip-probability parameter $\vartheta$ over the range $[0.75:0.05:0.95]$, and
(\textit{ii}) varying the number of measurements $n$ from $250$ to $450$ in increments of $50$.
Throughout these experiments, we fixed $p=500$ and $s=10$. For \textsc{Dist-D}, the error parameter $e$ was selected according to the formula in Sec.~VI of \cite{Cheragchi2011}. It is important to note that the model in \cite{Cheragchi2011} incorporates no measurement noise in $\boldsymbol{u}$ beyond the bit-flip errors induced by the corruption of $\boldsymbol{B}$.

For the centering step required in \textsc{Cape}, we set $y_i = z_i - \tilde{z}$ for all $i \in [n]$, where $\tilde{z} := \boldsymbol{1}^{\top}\boldsymbol{\beta}^*$ is treated as a noiseless reference measurement. To approximate this quantity, we compute $\boldsymbol{1}^{\top}\boldsymbol{\beta}^*$ fifty times and average the results, which sufficiently suppresses additive noise to regard $\tilde{z}$ as noise-free. Rather than reducing from $2n$ to $n$ measurements, we adopt this centering strategy to demonstrate that mild correlated noise does not impair the correction procedure, and to avoid discarding half of the measurements solely for centering.

\begin{table}[ht]
\centering
\begin{tabular}{|c|c|c|c|c|}
\hline
\rowcolor{Gray}
$\vartheta$ & Sens.\textsc{Cape} & Sens.\textsc{Dist-D} & Spec.\textsc{Cape} & Spec.\textsc{Dist-D}  \\
\hline
0.75 & 0.924 & 0.901 & 0.975 & 0.972 \\
0.80 & 0.967 & 0.941 & 0.986 & 0.984 \\
0.85 & 0.991 & 0.984 & 0.995 & 0.995 \\
0.90 & 0.998 & 0.992 & 0.998 & 0.997 \\
0.95 & 1     & 0.998 & 1 & 1     \\
\hline
\end{tabular}
\caption{Sensitivity (Sens.) and Specificity (Spec.) values for \textsc{Cape} and \textsc{Dist-D} across different $\vartheta$ values where $\vartheta' := 1-\vartheta$ is the probability of a bitflip in $\boldsymbol{B}$. The other parameters are fixed at $p=500,n=400,s=10$.}
\label{tab:sens_spec_theta}
\end{table}

\begin{table}[ht]
\centering
\begin{tabular}{|c|c|c|c|c|}
\hline
\rowcolor{Gray}
$n$ & Sens.\textsc{Cape} & Sens.\textsc{Dist-D} & Spec.\textsc{Cape} & Spec.\textsc{Dist-D}  \\
\hline
250 & 0.854 & 0.791 & 0.902 & 0.891 \\
300 & 0.912 & 0.875 & 0.964 & 0.955 \\
350 & 0.975 & 0.936 & 0.992 & 0.986 \\
400 & 0.998 & 0.989 & 1 & 1     \\
450 & 1     & 1     & 1     & 1     \\
\hline
\end{tabular}
\caption{Sensitivity and Specificity values for \textsc{Cape} and \textsc{Dist-D} across different $n$ values. The other parameters are fixed at $p=500,s=10,\vartheta=0.9$.}
\label{tab:sens_spec_n}
\end{table}

In Tables~\ref{tab:sens_spec_theta} and \ref{tab:sens_spec_n}, we see that the correction algorithm \textsc{Cape} outperforms \textsc{Dist-D} for varying $\vartheta$ and $n$, despite the presence of some additive noise in $\boldsymbol{z}$ for \textsc{Cape}. 

\bibliographystyle{plain}
\bibliography{refs}